\newtheorem{Thm}{Theorem}[section]
\newtheorem{Lemm}[Thm]{Lemma}
\newtheorem{Prop}[Thm]{Proposition}
\theoremstyle{definition}
\newtheorem{Rem}[Thm]{Remark}
\newcommand{\be}{\begin{equation}}
\newcommand{\ee}{\end{equation}}
\newcommand{\ba}{\begin{align}}
\newcommand{\ea}{\end{align}}
\newcommand{\ben}{\begin{equation*}}
\newcommand{\een}{\end{equation*}}
\def\i<#1>{\langle #1 \rangle}
\def\l<#1>{\left\langle #1 \right\rangle}
\def\b<#1>{\big\langle #1 \big\rangle}
\def\B<#1>{\Big\langle #1 \Big\rangle}
\newcommand{\Ti}{8}
\newcommand{\Sp}{8}
\newcommand{\Ct}{5/2}
\newcommand{\dG}{d\vGa}
\newcommand{\im}{\mathrm{i} }
\newcommand{\la}{\langle}
\newcommand{\ra}{\rangle}
\newcommand{\bs}{\boldsymbol}
\newcommand{\ex}{\mathrm{e}}
\newcommand{\h}{\mathfrak{h}}
\newcommand{\Tr}{\mathrm{Tr}}
\newcommand{\D}{\mathrm{dom}}
\newcommand{\BbbR}{\mathbb{R}}
\newcommand{\BbbN}{\mathbb{N}}
\newcommand{\BbbZ}{\mathbb{Z}}
\newcommand{\BbbC}{\mathbb{C}}
\newcommand{\bome}{{\boldsymbol \omega}}
\newcommand{\bphi}{\bs{\varphi}}
\newcommand{\vepsilon}{\varepsilon}
\newcommand{\vphi}{\varphi}
\newcommand{\no}{\nonumber \\}
\newcommand{\vLa}{\varLambda}
\newcommand{\vGa}{\varGamma}
\newcommand{\vSi}{\varSigma}
\newcommand{\supp}{\mathrm{supp}\, }
\newcommand{\inte}{\mathrm{int}\, }
\newcommand{\bn}{\bs{n}}
\newcommand{\vXi}{\varXi}
\newcommand{\N}{\mathcal{N}}
\newcommand{\bN}{\bs{\mathcal{N}}}
\newcommand{\bE}{\bs{\mathcal{E}}}
\newcommand{\bnd}{\bn_{\bs{D}_c}}
\newcommand{\bne}{\bn_{\bs{D}_c,  \rm e}}
\newcommand{\bnp}{\bn_{\bs{D}_c,  \rm p}}
\newcommand{\hn}{\hat{n}}
\newcommand{\vPsi}{\varPsi}
\newcommand{\Eql}{\overset{\triangle}{=}}
\newcommand{\one}{{\mathchoice {\rm 1\mskip-4mu l} {\rm 1\mskip-4mu l}
{\rm 1\mskip-4.5mu l} {\rm 1\mskip-5mu l}}}
\newcommand{\up}{\uparrow}
\newcommand{\down}{\downarrow}
\title{ \sf
Stability of charge density waves in electron-phonon systems
}
\date{}
\author[]{Tadahiro Miyao\thanks{miyao@math.sci.hokudai.ac.jp}}
\affil[]{Department of Mathematics,  Hokkaido University

Sapporo 060-0810,  Japan}
\begin{document}
\maketitle

\begin{abstract}
With mathematical rigor, we demonstrate that electron-phonon interactions enhance the stability of charge density waves in low-temperature phases of many-electron systems.
Our proof method involves an appropriate application of the Pirogov--Sinai theory to  electron-phonon systems. Combining our findings with existing results, we obtain rigorous information regarding the low-temperature phase diagram for half-filled electron-phonon systems.
\end{abstract}

\tableofcontents

\section{Introduction}\label{Sec1}

\subsection{Background}

Previous studies in theoretical physics have suggested that many-electron systems exhibit a variety of phases at low temperatures. 
 The endeavors to rigorously understand the emergence of these phases based on the essential characteristics of many-electron systems,  such as Coulomb interactions among electrons, Fermi statistics, spin, and itinerancy,   still need to be exhaustive.  To make the main results of this paper understandable to the reader, we first review some of the previous studies relevant to this paper among these rigorous studies.

For each $L\in \BbbN$, we set  $\vLa_L=[-L, L-1]^d \cap \BbbZ^d$. 
The subscript $L$ is omitted hereafter unless there is a danger of confusion.
$\vLa$ is commonly considered as the canonical graph, with the vertex set being $\vLa$, and the edge set given by $E_{\vLa}=\{\la x, y\ra : x, y\in \vLa, \|x-y\|=1\}$, where  $\|x\|=\max\{|x_i| : i=1, \dots, d\}$,  the maximum norm of a vector $x$ in $\BbbZ^d$.

We consider a many-electron system on $\vLa$ described by the Hubbard model. The Hamiltonian of the system is given by
\begin{align}
H_{{\rm H}, \vLa}=&-t\sum_{\la x, y\ra\in E_{\vLa}}\sum_{\sigma=\up, \down}  (c_{x, \sigma}^* c_{y, \sigma}+c_{y, \sigma}^* c_{x, \sigma})
+ U \sum_{x \in \vLa}  \hn_{x, \up} \hn_{x, \down}
+W \sum_{\la x, y\ra\in E_{\vLa}} \hn_x \hn_y\no
&-\left(\mu+2dW+\frac{U}{2}\right)\sum_{x\in \vLa} \hn_x.\label{HamiHabbard}
\end{align}
$H_{{\rm H}, \vLa}$ is a self-adjoint operator acting  on the fermionic Fock space defined by
\be
\mathfrak{F}_{\rm e}(\vLa)=\bigoplus_{n=0}^{2|\vLa|} \bigwedge^n \ell^2(\vLa\times \{\up, \down\}),
\ee
where, for a given Hilbert space $\h$, $\bigwedge^n \h$ stands for the $n$-fold antisymmetric tensor product of $\h$, with $\bigwedge^0 \h =\BbbC$.
The electron annihilation- and creation operators are represented by $c_{x, \sigma}$ and $ c_{x, \sigma}^*$, respectively.\footnote{For the definitions of annihilation- and creation operators, see, for example, \cite{Arai2016, Bratteli1997}.} These operators satisfy the following anti-commutation relations:
\begin{align}
\{c_{x, \sigma}, c_{y, \tau}^*\}=\delta_{x, y} \delta_{\sigma, \tau},\quad \{c_{x, \sigma}, c_{y, \tau}\}=0, \label{CARs}
\end{align}
where $\delta_{x, y}$ denotes the Kronecker delta.
In addition, the number operators of the electrons are defined as follows:
\be
\hn_{x, \sigma}=c_{x, \sigma}^* c_{x, \sigma},\quad \hn_x=\hn_{x, \up}+\hn_{x, \down}.
\ee
In the case of $\mu=0$, it is noteworthy that $H_{{\rm H}, \vLa}$ describes a half-filled system.

Let us first consider the case $t=0$. In this case, the ground states of $H_{{\rm H}, \vLa}^{t=0}$ in the parameter region:
\be
S_{{\rm e}, 0}=\left\{
(U, \mu) \in \BbbR^2 : U<2dW ,\ |\mu|<2d \min\left\{W,\ W-\frac{U}{4d} \right\}
\right\}
\ee
 exhibit  long-range orders,  called {\it charge density waves} \cite{Borgs_1996}. 
 
 It is logical to ask whether charge density waves are stable in the case $t\neq 0$.
In their paper \cite{Borgs2000}, Borgs and Kotecky proved that charge density waves are stable in the parameter region:
\be
S_{{\rm e}, \vepsilon}=\left\{
(U, \mu) \in \BbbR^2 : U<2d(W-\vepsilon),\ |\mu|<2d \min\left\{W-\vepsilon,\ W-\vepsilon-\frac{U}{4d}\right\}
\right\},
\ee
 if the temperature is low enough and $\vepsilon>0$ and $|t|$ are sufficiently small.
The proof of this stability in \cite{Borgs2000} is based on the {\it Pirogov--Sinai theory}.
The Pirogov--Sinai theory has emerged as a potent approach in classical statistical mechanics for delineating first-order phase transitions and coexistence of phases at low temperatures \cite{Pirogov1975, sinai1982theory}. Noteworthy research works that utilize the Pirogov--Sinai theory to many-electron systems include \cite{Borgs2000, Datta1996, datta1996low, Datta2000}. Apart from these studies, the theory has also been extensively extended to quantum systems. For instance, refer to \cite{Borgs1996, Koteck1999} and references cited therein.

\subsection{Overview of results}\label{Sec1.2}

Electrons in tangible matter engage in interactions with phonons to varying extents.
Therefore, it is essential to examine how certain phases of many-electron systems are affected by electron-phonon interactions for a deeper understanding of the properties of matter.
The aim of this paper is to demonstrate that the stability of charge density waves is enhanced in the presence of electron-phonon interactions.

Consider the {\it Holstein--Hubbard model}, the most fundamental model describing the interaction between electrons and phonons. 
The Hamiltonian of the model is given by
\begin{align}
H_{{\rm HH}, \vLa}=H_{{\rm H}, \vLa}
+g\sum_{x\in \vLa}\hn_x(b_x+b_x^*)+\omega_0 \sum_{x\in \vLa} b_x^*b_x.
\end{align}
$H_{{\rm HH}, \vLa}$ acts on the following Hilbert space:
\be
\mathfrak{H}_{\vLa}=\mathfrak{F}_{\rm e}(\vLa)\otimes \mathfrak{F}_{\rm p}(\vLa), 
\ee
where $\mathfrak{F}_{\rm p}(\vLa)$ is the bosonic Fock space over $\ell^2(\vLa)$:
\be
\mathfrak{F}_{\rm p}(\vLa)=\bigoplus_{n=0}^{\infty} \otimes_{\rm s}^n \ell^2(\vLa);
\ee
$\otimes_{\rm s}^n \ell^2(\vLa)$ stands for the $n$-fold symmetric tensor product of $\ell^2(\vLa)$, with $\otimes_{\rm s}^0 \ell^2(\vLa)=\BbbC$.
The annihilation- and creation operators of phonons are denoted by $b_x$ and $ b_x^*$, respectively. These satisfy the standard commutation relations:\footnote{
More precisely, these commutation relations are satisfied on  finite particle subspaces of $\mathfrak{F}_{\rm p}(\vLa)$.
For the definition of annihilation- and creation operators, see, for example, \cite{Arai2016, Bratteli1997}.
}
\be
[b_x, b_y^*]=\delta_{x, y},\quad [b_x, b_y]=0.
\ee
The phonons are assumed to be dispersionless with energy $\omega_0>0$.
The parameter $g$ is the strength of the electron-phonon interaction.
By using the Kato--Rellich theorem \cite[Theorem X.12]{Reed1975}, we see that  $H_{{\rm HH}, \vLa}$
is  self-adjoint and bounded from below.
Applying the main theorem of this paper, we can prove the existence of charge density waves in the following region:
\be
S_{{\rm ep}, \vepsilon}=\left\{
(U, \mu) \in \BbbR^2 : U<2d(W-\vepsilon)+\frac{2g^2}{\omega_0} ,\ |\mu|<2d \min\left\{W-\vepsilon,\ W-\vepsilon-\frac{U}{4d}+\frac{g^2}{2d\omega_0}
\right\}
\right\}, \label{Se0}
\ee
 provided that $|t|$ is small enough, $\omega_0$ is large enough and at sufficiently low temperatures.
 The result indicates that the electron-phonon interaction has a  significant stabilizing effect on the charge density waves, as evidenced by the larger size of the region $S_{{\rm ep}, \vepsilon}$ in comparison to $S_{{\rm e}, \vepsilon}$.
Such effects have been anticipated in theoretical physics based on numerical computations and discussions relying on certain approximation theories. One of the primary significances of this paper lies in providing a mathematically rigorous foundation for these seemingly natural predictions.

We shall define certain symbols to provide a more comprehensive overview of the main theorem.
For each  $A=(\tilde{x}_1, \dots, \tilde{x}_{2k})$ $(\tilde{x}_i\in \BbbZ^d\times \{\up, \down\}\times \{-1, 1\})$, we define  $h_{A, {\rm e}}$ as 
\be
h_{A, {\rm e}}=c(\tilde{x}_{1})\cdots c(\tilde{x}_{2k}), 
\ee
where 
\begin{align}
c(x, \sigma, \kappa)
=\begin{cases}
c_{x, \sigma}^* & (\kappa=+1)\\
c_{x, \sigma} & (\kappa=-1).
\end{cases}
\end{align}
Let $\mathcal{A}_0$ denote the set of $A=(\tilde{x}_1, \dots, \tilde{x}_{2k})$ that are characterized by the following conditions: 
There exists a $p \in \{0, 1, \dots, 2k\}$ satisfying:
\begin{itemize}
\item  $\kappa_i=1$ for $1\le i  \le p$, $\kappa_i=-1$ for $i>p$; 
\item  
A certain total order can be established in the set $\BbbZ^d\times \{\up, \down\}$. Under this total order, the following conditions hold true: $(x_1, \sigma_1)>(x_2, \sigma_2)>\cdots> (x_p, \sigma_p)$ and $(x_{p+1}, \sigma_{p+1})<\cdots< (x_{2k}, \sigma_{2k})$.
\end{itemize}
For a given $A=(\tilde{x}_1, \dots, \tilde{x}_{2k})\in \mathcal{A}_0\, (\tilde{x}_i=(x_i, \sigma_i, \kappa_i))$, we define the support of $A$ by
$\supp A=\{x_1, \dots, x_{2k}\}$.
Furthermore, we set 
\be
\mathcal{A}_{\vLa}=\{A\in \mathcal{A}_0 : \supp A\subseteq \vLa\}\label{DefAL}
\ee
 and 
\be
O_{\vLa}:=\left\{ h_{A, \rm e} : A\in \mathcal{A}_{\vLa}
\right\}.
\ee

Let $\mathfrak{A}_{\vLa, \rm e}$ be the ${\rm C}^*$-algebra generated by $O_{\vLa}$.
The algebra of local observables for electrons is defined  by
\be
\mathfrak{A}_{\rm e}=\bigcup_{L\in \BbbN} \mathfrak{A}_{\vLa, \rm e}.
\ee
Given $\vPsi\in \mathfrak{A}_{\rm e}$, one can express it as a linear combination of operators of the form $h_{A, \rm e}$ using the canonical  anti-commutation relations \eqref{CARs}. The operators that appear in this expression are denoted as $h_{A_1, \rm e}, \dots, h_{A_n, \rm e}$. The support of $\vPsi$ is defined by $\supp \vPsi=\bigcup_{i=1}^n\supp A_i$.

Let $H_{{\rm P}, \vLa}$ be the Hamiltonian with the periodic boundary conditions imposed. For a detailed definition, see \eqref{DefHamiP}.
Given an observable $\varPsi\in \mathfrak{A}_{\rm e}$, the thermal expectation value of $\varPsi$ with respect to $H_{{\rm P}, \vLa}$ is defined as 
\be
\la \vPsi\ra^{\rm (P)}_{\beta, \vLa}=\frac{\Tr\left[\vPsi \, \ex^{-\beta H_{\rm P, \vLa}}\right]}{Z_{\rm P, \vLa}}, \quad
Z_{\rm P, \vLa}=\Tr\left[\ex^{-\beta H_{\rm P, \vLa}}\right].
\ee

For any periodic state $\la \cdot \ra$ on $\mathfrak{A}_{\rm e}$,
we define the staggered density as 
\be
\Delta=\lim_{L\to \infty} \frac{1}{|\vLa|} \sum_{x\in \vLa}(-1)^x \la \hn_x\ra.\label{DefDelta}
\ee
Note that the subsequent theorem ensures the exsistence of the limit on the right-hand side.

The spin operators,  $(S_x^{ (1)}, S_x^{(2)}, S_x^{ (3)})$,   at site $x$ are defined to be
\be
S_x^{(i)}=\frac{1}{2} \sum_{\sigma, \sigma\rq{}=\up, \down} c_{x, \sigma}^* (s^{(i)})_{\sigma, \sigma\rq{}}c_{x, \sigma},\quad i=1,2,3,
\ee
where $s^{(i)}\ (i=1, 2, 3)$ are the Pauli matrices:
\be
s^{(1)}=\begin{pmatrix}
0 & 1 \\
1 & 0
\end{pmatrix}, \ \ 
s^{(2)}=\begin{pmatrix}
0 & -\im \no
\im & 0
\end{pmatrix},\ \ 
s^{(3)}=\begin{pmatrix}
1 & 0 \no
0 & -1
\end{pmatrix}.
\ee
$(s^{(i)})_{\sigma, \sigma\rq{}}$ represents the matrix elements of $s^{(i)}$, with the correspondence $\up=1,\down=2$\footnote{Under this convention, for example, $(s^{(1)})_{\up, \up}=(s^{(1)})_{1, 1}=0$ and $ (s^{(1)})_{\up, \down}=(s^{(1)})_{1, 2}=1$.}.

Applying the main theorem of this paper (Theorem \ref{MainThm}) to the Holstein--Hubbard model, we can prove the following theorem:
\begin{Thm}\label{BriefTh}
Suppose that $0<\vepsilon<W$ and $(U, \mu)\in S_{{\rm ep}, \vepsilon}$.  There exist certain constants\footnote{
The constant $t_0$ depends on $d$ and $\beta_0$. Additionally, $\omega_*$ is chosen such that $\omega_* > \beta_0^{-1} \log 2$. For more details, refer to Remark \ref{GeInt} and Subsection \ref{ExHH}.
}  
$0<\beta_0<\infty, 0<\omega_*<\infty$ and $0<t_0<\infty$, such that,  if $\beta \ge \beta_0,\omega_0\ge \omega_*$ and $|t|\le t_0$, then the following {\rm (i)-(iii)} hold: 
\begin{itemize}
\item[\rm (i)] 
Given an arbitrary local observable $\vPsi\in \mathfrak{A}_{\rm e}$, the infinite volume limit: 
\ben
\la \vPsi\ra^{(\rm P)}_{\beta}=\lim_{L\to \infty} \la \vPsi\ra^{(\rm P)}_{\beta, \vLa}
\een
converges. The state $\la \cdot\ra_{\beta}^{(\rm P)}$ on $\mathfrak{A}_{\rm e}$ defined in this way can be represented by the convex combination of two pure states: 
\ben
\la \vPsi\ra^{(\rm P)}_{\beta}=\frac{1}{2} \la \vPsi\ra_{\beta}^{(+)}+\frac{1}{2}\la \vPsi\ra_{\beta}^{(-)}.
\een
The states $\la \cdot \ra_{\beta}^{(\pm)}$ describe charge density waves:
\ben
\la \hn_x\ra_{\beta}^{(+)}=\rho+(-1)^x \Delta,\quad\la \hn_x\ra_{\beta}^{(-)}=\rho-(-1)^x \Delta.
\een
Here, $\Delta^{(+)} = -\Delta^{(-)}= \Delta>0$, where $\Delta^{(+)}$ and $\Delta^{(-)}$ are staggered densities defined with respect to states $\la\cdot\ra_{\beta}^{(\pm)}$ in equation \eqref{DefDelta}.  Additionally, $\rho$ is given as 
\be
\rho=\lim_{L\to \infty} \frac{1}{|\vLa|}\sum_{x\in \vLa} \la \hn_x\ra_{\beta}^{(+)}
\ee
and  coincides with  the  density associated with  $\la\cdot \ra_{\beta}^{(-)}$: $
\rho=\lim_{L\to \infty} \frac{1}{|\vLa|}\sum_{x\in \vLa} \la \hn_x\ra_{\beta}^{(-)}
$.

\item[\rm (ii)]  No magnetic order exists. Namely, $\la S^{(i)}_x\ra_{\beta}^{ (\pm)}=0\ (i=1, 2, 3)$ holds for every $x\in \BbbZ^d$.
\item[\rm (iii)] The two-point correlation function concerning  arbitrary local observables $\vPsi,\varPhi\in \mathfrak{A}_{\rm e}$ decays exponentially. Namely, there are constants $C_{\varPsi, \varPhi}>0$ and $
\xi_{\ell}>0$ such that 
\begin{align*}
\left| \la \vPsi \varPhi\ra_{\beta}^{(\pm)}
-\la \varPsi \ra_{\beta}^{ (\pm)}\la \varPhi\ra_{\beta}^{ (\pm)}
\right|
\le C_{\varPsi, \varPhi}
\exp\left\{
-\frac{\mathrm{dist}(\supp \vPsi,  \supp \varPhi)}{\xi_{\ell}}
\right\},
\end{align*}
where, for any two finite subsets $A$ and $B$ of $\mathbb{Z}^d$, the distance between $A$ and $B$ is defined as $\mathrm{dist}(A, B)=\min\{\|x-y\| : x\in A, y\in B\}$.
\end{itemize}

\end{Thm}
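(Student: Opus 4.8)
The plan is to regard Theorem \ref{BriefTh} as a special case of the main theorem (Theorem \ref{MainThm}): the entire task is to cast the Holstein--Hubbard model into the abstract framework of that theorem and to verify its hypotheses for parameters in $S_{{\rm ep}, \vepsilon}$. The conceptually decisive first step is to eliminate the phonon degrees of freedom at the reference point $t=0$. Since $H_{{\rm HH}, \vLa}$ with $t=0$ commutes with every $\hn_x$, on each joint eigenspace $\{\hn_x=n_x\}_{x\in\vLa}$ the phonon part $\omega_0\sum_x b_x^*b_x+g\sum_x n_x(b_x+b_x^*)$ is a sum of independent shifted harmonic oscillators. Completing the square (equivalently, applying the Lang--Firsov unitary $\exp\{(g/\omega_0)\sum_x \hn_x(b_x-b_x^*)\}$) produces the effective on-site energy $-\tfrac{g^2}{\omega_0}\sum_x \hn_x^2$ together with a configuration-independent phonon free energy. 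Using $\hn_x^2=\hn_x+2\hn_{x,\up}\hn_{x,\down}$, this renormalizes the Hubbard interaction to the more attractive value $U_{\rm eff}=U-\tfrac{2g^2}{\omega_0}$ and shifts the chemical-potential term accordingly, so that the $t=0$ Holstein--Hubbard model has the classical energetics of an effective Hubbard model. It is precisely this attractive renormalization that enlarges the charge-density-wave regime from $S_{{\rm e}, \vepsilon}$ to $S_{{\rm ep}, \vepsilon}$.

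Second, I would establish the ground-state and Peierls structure of this reference model. The candidate ground states are the two staggered (``checkerboard'') configurations with $\hn_x\in\{0,2\}$, and a direct comparison of the effective classical energy of these configurations against the competing uniform singly-occupied state and the period-two patterns with $\hn_x\in\{1,2\}$ and $\hn_x\in\{0,1\}$ shows that the checkerboards are the unique periodic ground states exactly when the renormalized couplings satisfy inequalities of the form defining $S_{{\rm ep}, \vepsilon}$: the bound on $U_{\rm eff}$ rules out the uniform state, while the $\mu$-inequalities rule out the denser and sparser patterns. The strict margin $\vepsilon>0$ upgrades these strict comparisons to a genuine Peierls bound: any local deviation from a checkerboard ground state raises the effective energy by an amount proportional to the size of its boundary, uniformly in the configuration. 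This Peierls condition is the input required by Theorem \ref{MainThm}.

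Third, with the reference model verified I would invoke Theorem \ref{MainThm}. The quantum perturbations --- the hopping $-t\sum c^*c$ and, after the Lang--Firsov dressing, the residual phonon-dressed hopping of the schematic form $t\,c_x^*c_y\,\ex^{(g/\omega_0)(\cdots)}$ --- must be shown small enough that the contour/polymer expansion of the Pirogov--Sinai machinery converges. Here the two smallness hypotheses enter: $|t|\le t_0$ controls the bare hopping, while $\omega_0\ge\omega_*$ (with $\omega_*>\beta_0^{-1}\log 2$) gaps out the phonon fluctuations and renders the single-site bosonic traces and the Duhamel expansion of the dressed hopping summable. I expect this to be the main obstacle: unlike the purely electronic Borgs--Kotecky setting, the bosonic Fock space $\mathfrak{F}_{\rm p}(\vLa)$ is infinite-dimensional, so one must bound the phonon-dressed vertices and the phonon partition function uniformly in $\vLa$, which is exactly the role of the large-$\omega_0$ hypothesis. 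Granting the main theorem, its conclusion yields the convergence of $\la\vPsi\ra^{(\rm P)}_{\beta,\vLa}$, the decomposition $\la\cdot\ra^{(\rm P)}_\beta=\tfrac12\la\cdot\ra^{(+)}_\beta+\tfrac12\la\cdot\ra^{(-)}_\beta$ into the two pure charge-density-wave states exchanged by the unit translation, the staggered densities $\pm\Delta$ with $\Delta>0$, and the exponential clustering of part (iii).

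Finally, statements (ii) and (iii) follow with little extra work. For (ii) I would use that $H_{{\rm HH}, \vLa}$ is invariant under global $\mathrm{SU}(2)$ spin rotations --- the hopping, the $U$-, $W$-, chemical-potential, and phonon terms all commute with the total spin, since $\hn_{x,\up}\hn_{x,\down}=\tfrac12(\hn_x^2-\hn_x)$ and the phonon coupling involves only $\hn_x$. Because the order parameter $\sum_x(-1)^x\hn_x$ is a spin scalar, the symmetry breaking occurs purely in the charge sector; $\mathrm{SU}(2)$ cannot interchange the two pure states (that would flip the sign of the order parameter), so each $\la\cdot\ra^{(\pm)}_\beta$ is itself $\mathrm{SU}(2)$-invariant, forcing $\la S_x^{(i)}\ra^{(\pm)}_\beta=0$ for $i=1,2,3$. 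Statement (iii) is the standard exponential-clustering output of the convergent expansion underlying Theorem \ref{MainThm}, with the correlation length $\xi_\ell$ governed by the convergence radius of that expansion.
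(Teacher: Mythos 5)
Your proposal is correct and takes essentially the same route as the paper: the paper proves Theorem \ref{BriefTh} by casting the Holstein--Hubbard model into the framework of Theorem \ref{MainThm} via the Lang--Firsov transformation (yielding $U_{\rm eff}=U-2g^2/\omega_0$), verifying assumptions (A. 1)--(A. 6) through the classical spin-variable analysis of Subsection \ref{ExHH}, and then invoking the main theorem, with your $\mathrm{SU}(2)$-invariance argument for part (ii) supplying a detail the paper leaves implicit. One minor point: the classical ground-state comparison must also rule out the uniform empty and doubly occupied configurations (the paper's $(s_x,s_y)=(-1,-1)$ and $(1,1)$ cases), which your remark about the ``$\mu$-inequalities'' covers only implicitly.
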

This theorem is a direct consequence of the main theorem (Theorem \ref{MainThm}), which will be presented in Section \ref{Sec2}.
The guiding principle of the proof of Theorem \ref{MainThm} is to extend the method of analysis of many-electron systems by the Pirogov--Sinai theory developed in \cite{Borgs1996, Borgs2000} to electron-phonon systems.
So far, few rigorous studies on the Holstein--Hubbard model have been conducted\footnote{We mention \cite{Freericks1995,Miyao2015, Miyao2016, Miyao2018} as one of the few examples.}.  To the best of the author's knowledge, this is the first rigorous analysis of the Holstein--Hubbard model using the Pirogov--Sinai theory.
The Holstein interaction does not conserve the number of phonons. Therefore, analytical methods for bosonic systems, such as \cite{Borgs1996,Ueltschi2002}, cannot be directly applied, resulting in technical challenges.
For a more detailed explanation of these challenges, refer to Subsection \ref{IntSec4}.
This paper overcomes such challenges by employing specific representations of the correlation functions used in quantum field theory.
The fundamental properties of correlation functions for bosonic field operators used in this paper are summarized in Appendix \ref{AppCor}.

We obtain the following phase diagram by combining the existing results with Theorem \ref{BriefTh}:
\begin{Thm}[Brief summary]\label{Overview}

 Consider the half-filled system: $\mu=0$. In this case, the following {\rm (i)} and {\rm (ii)}  hold:
\begin{itemize}
\item[\rm (i)] The case where $U-2dW-2g^2/\omega_0>0$.

\begin{itemize}
\item[\rm (i-a)] In the ground state, a short-range antiferromagnetic order exists:
Set $S_x^{(\pm)}=S_x^{(1)}\pm \im S_x^{(2)}$. We denote by $\la \cdot \ra_{\rm gs}^{\rm (P)}$ the ground state expectation: $\displaystyle \la \varPsi\ra_{\rm gs}^{\rm (P)}
:=\lim_{L\to \infty} \lim_{\beta\to \infty} \la \varPsi\ra_{\beta, \vLa}^{(\rm P)}
$. It holds that 
 \be
 (-1)^{\|x-y\|}\left\la S_x^{(+)} S_y^{(-)}\right\ra_{\rm gs}^{(\rm P)}\ge 0\quad (x, y\in \BbbZ^d).
 \ee
 
\item[\rm (i-b)] There are no charge density waves at any given temperature.

\end{itemize}
\item[\rm (ii)] The case where $U-2dW-2g^2/\omega_0<0$. If $\beta$ and $\omega_0$ are  sufficiently large and $|t|$ is sufficiently small, then the following hold: 
\begin{itemize}
\item[\rm (ii-a)] There are charge density waves.
\item[\rm (ii-b)] No magnetic order exists.
\end{itemize}
\end{itemize}
\end{Thm}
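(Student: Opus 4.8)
The plan is to dispose of the two regimes by entirely different means: regime (ii) is a direct specialization of Theorem \ref{BriefTh}, while regime (i) is read off from the existing reflection-positivity theory of the half-filled Holstein--Hubbard model. The unifying bookkeeping device is the effective on-site interaction $U_{\rm eff}=U-2g^2/\omega_0$ produced by completing the square in the phonon variables: since $\hn_x^2=\hn_x+2\hn_{x,\up}\hn_{x,\down}$, the shift $b_x\mapsto b_x-(g/\omega_0)\hn_x$ formally removes the linear Holstein coupling, trading the bare $U$ for $U_{\rm eff}$ at the cost of a shift of the chemical potential. This is mirrored at the level of the phase boundaries: the region $S_{{\rm ep},\vepsilon}$ of \eqref{Se0} is exactly the electron-only region $S_{{\rm e},\vepsilon}$ with $U$ replaced by $U_{\rm eff}$, and the two hypotheses are separated precisely by the sign of $U_{\rm eff}-2dW=U-2dW-2g^2/\omega_0$, i.e.\ whether the effective on-site repulsion or the nearest-neighbour repulsion dominates.

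For regime (ii) I would simply verify that the half-filling point $(U,0)$ belongs to $S_{{\rm ep},\vepsilon}$ for a suitable $\vepsilon$. At $\mu=0$ the defining conditions reduce to $U<2d(W-\vepsilon)+2g^2/\omega_0$ together with $W-\vepsilon>0$ and $W-\vepsilon-U/(4d)+g^2/(2d\omega_0)>0$; since $U-2dW-2g^2/\omega_0<0$ one can fix $\vepsilon\in(0,W)$ so small that the first inequality holds, whereupon the third is automatic because $U<2d(W-\vepsilon)+2g^2/\omega_0<4d(W-\vepsilon)+2g^2/\omega_0$. With $(U,0)\in S_{{\rm ep},\vepsilon}$ in hand, Theorem \ref{BriefTh} applies verbatim: its part (i) produces the two charge-density-wave states with $\Delta>0$, which is (ii-a), while parts (ii) and (iii) force the spin one-point functions to vanish and all connected correlations to decay exponentially, giving the absence of magnetic order of (ii-b); the thresholds on $\beta$, $\omega_0$ and $|t|$ are inherited from that theorem.

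For regime (i) the argument is of a completely different nature and rests on the spin-reflection positivity of the half-filled model on the bipartite lattice $\vLa$, in the form of the extension of Lieb's theorem to electron-phonon systems. Integrating out the Gaussian phonons turns the finite-volume partition function into a sum of squares, yielding both the non-degeneracy of the ground state (a spin singlet) and a definite sign structure for its matrix elements; the latter is precisely the Marshall-type inequality $(-1)^{\|x-y\|}\la S_x^{(+)}S_y^{(-)}\ra_{\rm gs}^{(\rm P)}\ge 0$ of (i-a). For (i-b), non-degeneracy forces the ground state to inherit the full sublattice symmetry of $H_{{\rm HH},\vLa}$, so that $\la\hn_x\ra$ is independent of $x$ and the staggered density \eqref{DefDelta} vanishes; the extension of this to positive temperature, i.e.\ the absence of charge-density-wave order at every $\beta$ in the dominant-on-site-repulsion regime $U_{\rm eff}>2dW$, is then quoted from the existing literature.

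Since all the analytically demanding content is concentrated in Theorem \ref{BriefTh} (the Pirogov--Sinai construction) and in the previously established reflection-positivity results, the only genuine obstacle inside the proof of Theorem \ref{Overview} is the faithful transcription of the single scalar condition on $U-2dW-2g^2/\omega_0$ into the hypotheses of those two bodies of theory. The most delicate point is (i-b): one must ensure that the cited absence-of-CDW statement is available at all temperatures and not merely in the ground state, and that the effective interaction $U_{\rm eff}$ and the chemical-potential normalization entering those results match the ones produced here by the phonon completion of the square, so that $\mu=0$ genuinely remains the half-filling point throughout.
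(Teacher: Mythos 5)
Your proposal follows essentially the same route as the paper: regime (i) is disposed of by citing the existing reflection-positivity results for the half-filled Holstein--Hubbard model (the paper cites \cite{Miyao2016, Miyao2019} for (i-a) and \cite{Miyao2015} for (i-b)), while regime (ii) is obtained as an immediate specialization of Theorem \ref{BriefTh}. Your explicit check that $(U,0)\in S_{{\rm ep},\vepsilon}$ for small $\vepsilon$ when $U-2dW-2g^2/\omega_0<0$ is a correct (and welcome) filling-in of the step the paper compresses into ``(ii) immediately follows from Theorem \ref{BriefTh}.''
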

\begin{proof}
(i-a) is proved in \cite{Miyao2016, Miyao2019}. As for (i-b), see \cite{Miyao2015}.
(ii) immediately follows from Theorem \ref{BriefTh}.
\end{proof}

\begin{Rem}
In this paper, we examine the Holstein--Hubbard model on the hypercubic lattice $\BbbZ^d$.
By modifying the lattice, we can showcase that the antiferromagnetic behavior elucidated in Theorem \ref{Overview} (i-a) displays features of long-range order. A notable exemplification of this occurrence can be discerned in the two-dimensional Lieb lattice\footnote{As for the Lieb lattice, see, for example, \cite[Fig. 10.2]{Tasaki2020}.}.
\end{Rem}

\subsection{Organization}
This paper is structured as follows:
In Section  \ref{Sec2}, we first introduce a general model describing electron-phonon interactions.
This comprehensive model encompasses the Holstein--Hubbard model as a specific instance.
Then, some settings and assumptions concerning this model are clarified, and the main theorem of this paper is stated. Theorem \ref{BriefTh}, presented earlier, follows immediately from the main theorem.
Sections  \ref{Sec3} through \ref{Sec5} are devoted to the proof of the main theorem. In Section \ref{Sec3}, the partition function for the model introduced in Section \ref{Sec2} is expressed as a contour model.
To apply the Pirogov--Sinai theory to the contour model, a detailed analysis of contour activities is given in Section \ref{Sec4}. Finally, in Section \ref{Sec5}, we apply the Pirogov--Sinai theory to our model using the results obtained in the previous sections to prove the main theorem.

\subsection*{Acknowledgements}
The author is grateful to Dr. Kameoka for pointing out the error in the definition of the order parameter.
T.M. was supported by JSPS KAKENHI Grant Numbers 18K03315, 20KK0304.\\
\vspace{2mm}

\noindent
{\bf Data Availability}\\
 Data sharing not applicable to this article as no datasets were generated or analysed during
the current study.
\vspace{2mm}

\noindent
{\bf Financial interests}\\
 The author declares he has no financial interests.

\section{Basic setup and main theorem}\label{Sec2}

\subsection{Fundamental properties of the Fock spaces}
For each $B\subseteq \BbbZ^d$, we define
\begin{align}
\mathcal{N}_{B,  \rm e}=
\{0, 1\}^{B\times \{\up, \down\}},\quad
\mathcal{N}_{B,  \rm p}=\BbbZ_+^B,\quad
\mathcal{N}_B= \mathcal{N}_{B,  \rm e}\times \mathcal{N}_{B,  \rm p}, \label{DefNB}
\end{align}
where $\BbbZ_+$ denotes the set of non-negative integers.
The elements of $\mathcal{N}_{B, \rm e}$ are commonly referred to as {\it electron configurations} within the  region $B$, while the elements of $\mathcal{N}_{B, \rm p}$ are referred to as {\it phonon configurations} within the same region. On the other hand, the elements of $\mathcal{N}_B$ are commonly referred to as {\it electron-phonon configurations} within the region $B$.

For any given  $\bn_{\vLa,    \rm e}=(n_{x, \sigma})_{(x, \sigma)\in \vLa\times \{\up, \down\}}\in \mathcal{N}_{\vLa,  \rm e}$, we define the state $\ket{\bn_{\vLa, \rm e}}$ as
\be
\ket{\bn_{\vLa,   \rm e}}=\prod_{x\in \vLa}(c_{x, \sigma}^*)^{n_{x, \sigma}} \ket{\varnothing}_{\vLa, {\rm e}},
\ee
where $\ket{\varnothing}_{\vLa, {\rm e}}$ stands for the Fock vacuum in $\mathfrak{F}_{\rm e}(\vLa)$, and we understand that   $A^0=\one$.
In this manuscript, the product $\prod_{x\in \vLa}A_x$ is defined as follows. Firstly, we assign natural numbers as labels to the elements of $\BbbZ^d$, i.e., $\BbbZ^d=\{x_1, x_2, \dots\}$. Then, we establish a total order on $\BbbZ^d$ based on this labeling such that $x_1<x_2<\cdots$.\footnote{
The definition of $\mathcal{A}_{\vLa}$ in Subsection \ref{Sec1.2} also includes an ordering of $\BbbZ^d\times \{\up, \down\}$. We suppose that  the ordering introduced here is consistent with the ordering introduced in Subsection \ref{Sec1.2}.
} Given this ordering, if $\vLa$ is a subset of $\BbbZ^d$ that can be expressed as $\vLa=\{x_{i_1}, x_{i_2}, \dots, x_{i_{|\vLa|}} : i_1<i_2<\cdots<x_{i_{|\vLa|}}\}$, we define $\prod_{x\in \vLa}A_x$ as $A_{x_{i_1}}A_{x_{i_2}}\cdots A_{x_{i_{|\vLa|}}}$.
With this definition, the set  $\{\ket{\bn_{\vLa,   \rm e}} : \bn_{\vLa, \rm e} \in \mathcal{N}_{\vLa, \rm e}\}$ forms a complete orthonormal system(CONS) of $\mathfrak{F}_{\rm e}(\vLa)$.

Similarly, for each $\bn_{\vLa,   \rm p}=(n_x )_{x\in \vLa}\in \mathcal{N}_{\vLa,  \rm p}$,
we define the state $\ket{\bn_{\vLa, \rm p}}$ as
\be
\ket{\bn_{\vLa,  \rm p}}=\prod_{x\in \vLa} \frac{1}{\sqrt{n_x!}} (b_x^*)^{n_x}\ket{\varnothing}_{\vLa, \rm p}, 
\ee
where $\ket{\varnothing}_{\vLa, {\rm p}}$ represents the Fock vacuum in $\mathfrak{F}_{\rm p}(\vLa)$.
Then the set $\{\ket{\bn_{\vLa,   \rm p}} : \bn_{\vLa, \rm p} \in \mathcal{N}_{\vLa, \rm p}\}$ forms  a CONS of $\mathfrak{F}_{\rm p}(\vLa)$.
Moreover,  for a given $\bn_{\vLa}=(\bn_{\vLa,   \rm e}, \bn_{\vLa,   \rm p}) \in \mathcal{N}_{\vLa}$, we set 
\be
\ket{\bn_{\vLa}}=\ket{\bn_{\vLa,   \rm e}} \otimes \ket{\bn_{\vLa,   \rm p}}.
\ee
Thus,  the set $\{ \ket{\bn_{\vLa}} : \bn_{\vLa} \in \mathcal{N}_{\vLa}\}$ forms a CONS of $\mathfrak{H}_{\vLa}$.

If $\vLa=\vLa_1\sqcup \vLa_2$, the disjoint union of subsets $\vLa_1$ and $\vLa_2$, then we can establish an identification between the fermionic Fock spaces as follows:
\be
\mathfrak{F}_{\rm e}(\vLa)=\mathfrak{F}_{\rm e}(\vLa_1)\otimes \mathfrak{F}_{\rm e}(\vLa_2).
\ee
This identification is made possible by the unitary operator $\iota_{\rm e}: \mathfrak{F}_{\rm e}(\vLa)\to \mathfrak{F}_{\rm e}(\vLa_1)\otimes \mathfrak{F}_{\rm e}(\vLa_2)$,  defined by
\be
\iota_{\rm e} c_{x, \sigma}\iota_{\rm e}^{-1}=
\begin{cases}
c_{x, \sigma}\otimes \mathbbm{1} & (x\in \vLa_1)\\
(-1)^{N_{\vLa_1, \rm e}} \otimes c_{x, \sigma} & (x\in \vLa_2)
\end{cases},\quad \iota_{\rm e}\ket{\varnothing}_{\vLa, {\rm e}}=\ket{\varnothing}_{\vLa_1, {\rm e}}\otimes \ket{\varnothing}_{\vLa_1, {\rm e}},  \label{FacC}
\ee
where   $N_{\vLa, \rm e}$ denotes the total number operator for the electrons, which is defined as $N_{\vLa, \rm e}=\sum_{x\in \vLa}\sum_{\sigma} n_{x, \sigma}$.

In a similar manner, we can also identify the bosonic Fock spaces as follows:
\be
\mathfrak{F}_{\rm p}(\vLa)=\mathfrak{F}_{\rm p}(\vLa_1)\otimes \mathfrak{F}_{\rm p}(\vLa_2).
\ee
This identiﬁcation  is achieved by the unitary operator $\iota_{\rm p}: \mathfrak{F}_{\rm p}(\vLa)\to \mathfrak{F}_{\rm p}(\vLa_1)\otimes \mathfrak{F}_{\rm p}(\vLa_2)$,  defined by 
\be
\iota_{\rm p} b_{x}\iota_{\rm p}^{-1}=
\begin{cases}
b_x\otimes \mathbbm{1} & (x\in \vLa_1)\\
\mathbbm{1} \otimes b_x & (x\in \vLa_2)
\end{cases},\quad \iota_{\rm p}\ket{\varnothing}_{\vLa, {\rm p}}=\ket{\varnothing}_{\vLa_1, {\rm p}}\otimes \ket{\varnothing}_{\vLa_2, {\rm p}}.  \label{FacB}
\ee

Using these identifications, we can see that the Hilbert space $\mathfrak{H}_{\vLa}$ factorizes as
\be
\mathfrak{H}_{\vLa}=\mathfrak{H}_{\vLa_1} \otimes \mathfrak{H}_{\vLa_2}. \label{HilIdn}
\ee
As a result of this factorization, we can express the basis vector $\ket{\bn_{\vLa}}$ in terms of the basis vectors $\ket{\bn_{\vLa_1}}$ and $\ket{\bn_{\vLa_2}}$ as follows:
\be
\ket{\bn_{\vLa}}=\vartheta_{\bn_{\vLa_1}, \bn_{\vLa_2}}\ket{\bn_{\vLa_1}} \otimes \ket{\bn_{\vLa_2}}, \label{FacVec}
\ee
where $\vartheta_{\bn_{\vLa_1}, \bn_{\vLa_2}}\in \{-1, 1\}$.

Assume that $\bn_{\vLa}=(\bn_{\vLa, \rm e}, \bn_{\vLa, \rm p})\in \mathcal{N}_{\vLa}$ is given.
For $B\subseteq \vLa$, let $\mathfrak{A}_{B, {\rm e}}^{(0)}$ be the commutative ${\rm C}^*$-algebra generated by $\{\hn_{x, \sigma} : (x, \sigma)\in B \times \{\up, \down\}\}$.  Note that $\mathfrak{A}_{B, {\rm e}}^{(0)}$ is a $\mathrm{C}^*$-subalgebra of $\mathfrak{A}_{\rm e}$ defined in Subsection \ref{Sec1.2}.
Suppose $\vLa=\vLa_1\sqcup \vLa_2$.  
The restriction of $\bn_{\vLa}$ to $\vLa_i$ ($i=1, 2$) is denoted by $\bn_{\vLa_i}$. Specifically, this is accomplished by defining $\bn_{\vLa_i, \rm e}$ as $(n_{x, \sigma} )_{ (x, \sigma)\in \vLa_i\times \{\uparrow, \downarrow\}}\in \mathcal{N}_{\vLa_i, \rm e}$ and $\bn_{\vLa_i, \rm p}$ as $(n_{x} )_{ x\in \vLa_i}\in \mathcal{N}_{\vLa_i, \rm p}$. Finally, we define $\bn_{\vLa_i}$ as $\bn_{\vLa_i}=(\bn_{\vLa_i, \rm e}, \bn_{\vLa_i, \rm p})$.
For $X\in \mathfrak{A}_{\vLa, \rm e}^{(0)}$, we define the conditional expectation value $\Braket{\bn_{\vLa_1,   \rm e}|X| \bn_{\vLa_1,   \rm e}}\in \mathfrak{A}^{(0)}_{\vLa_2, \rm e}$ as follows:
\begin{align}
\Braket{\bn_{\vLa,   \rm e}|X| \bn_{\vLa,   \rm e}}
= \Braket{ \bn_{\vLa_2,   \rm e}|\Braket{\bn_{\vLa_1,   \rm e}|X| \bn_{\vLa_1,   \rm e}}|\bn_{\vLa_2,   \rm e}}.
\end{align}
We then  define $\mathfrak{A}_{B, {\rm p}}^{(0)}$ as the commutative $*$-algebra generated by $\{b_x^*b_x : x\in B \}$. Note that  operators in $\mathfrak{A}_{B, {\rm p}}^{(0)}$ are typically unbounded.  For $Y\in \mathfrak{A}_{\vLa, {\rm p}}^{(0)}$, we define the conditional expectation value $\Braket{\bn_{\vLa_1,   \rm p}|Y| \bn_{\vLa_1,   \rm p}}\in \mathfrak{A}^{(0)}_{\vLa_2, \rm p}$ as follows:
\begin{align}
\Braket{\bn_{\vLa,   \rm p}|Y| \bn_{\vLa,   \rm p}}
= \Braket{ \bn_{\vLa_2,   \rm p}|\Braket{\bn_{\vLa_1,   \rm p}|Y| \bn_{\vLa_1,   \rm p}}|\bn_{\vLa_2,   \rm p}}.
\end{align}
Finally, we define $\mathfrak{A}_{B}^{(0)}=\mathfrak{A}_{B, {\rm e}}^{(0)}\otimes \mathfrak{A}_{B, {\rm p}}^{(0)}$ where the right-hand side denotes the algebraic tensor product. For $Z\in \mathfrak{A}_{\vLa}^{(0)}$, we can similarly define the conditional expectation value $\Braket{\bn_{\vLa_1}|Z| \bn_{\vLa_1}}\in \mathfrak{A}^{(0)}_{\vLa_2}$.

\subsection{Definition of the model}
In this section, we define a model to describe the electron-phonon system considered in this paper. To do so, we introduce a Hamiltonian that describes a many-electron system:
\be
H_{ \vLa, {\rm e}}=\lambda\sum_{A\in \mathcal{A}_{ \vLa}} t_Ah_{A, {\rm e}}+\sum_{x\in \vLa} \varPhi_{x}(\bs{\hn}).
\ee
Here, we assume that $h_{A, \rm e}\in O_{\vLa}$. To ensure self-adjointness, we assume that
\be
\overline{t}_A=t_{A^*}, 
\ee
where, if we represent $A\in \mathcal{A}_0$ as $A=\{\tilde{x}_1, \dots, \tilde{x}_{2k}\}$, then $A^*$ is defined by
\be
A^*=(\tilde{x}_{2k}^*, \dots, \tilde{x}_{1}^*),\quad
(x, \sigma, \kappa)^*=(x, \sigma, -\kappa).
\ee
The parameter $\lambda \in \BbbR$ represents the strength of the quantum perturbation term.
For each $x\in \vLa$, a real-valued function $\varPhi_x(\bn_{\rm e})$ defined on $\mathcal{N}_{\BbbZ^d, \rm e}$ is given in advance. The operator $\varPhi_x(\bs \hn)$ is defined as replacing the number $n_{x,\sigma}$ with the operator $\hn_{x,\sigma}$.  We assume that there exists a non-negative number $R_0$ such that $\varPhi_x(\bs n)$ depends only on $n_{y,\sigma}$ for $y$ with $\mathrm{dist}(x,y)\le R_0$. We also assume that $t_A$ and $\varPhi_x$ depend on a vector parameter $\nu=(\nu_1,\dots,\nu_{k-1})\in \BbbR^{k-1}$.  For instance, in the case of the   Hubbard model given in \eqref{HamiHabbard}, we have $\nu=(\mu, U, W)$, and $t_A=t\ (A\in E_{\vLa})$ is independent of $\nu$.

Assuming that the Holstein interaction represents the interaction between electrons and phonons,  the Hamiltonian under investigation in this paper is given by
\begin{equation}
 \mathsf{ H}_{\vLa}=H_{\vLa, \rm e}+g\sum_{x\in \vLa}\hn_x(b_x+b_x^*)+\omega_0 N_{\vLa, \rm p}.\label{DefHamiHH}
  \end{equation}
Here, $N_{\vLa, \rm p}$ denotes the number operator for phonons in the region $\vLa$:
\begin{equation} N_{\vLa, \rm p}=\sum_{x\in \vLa} b_x^*b_x. \end{equation}
Using Kato-Rellich's theorem \cite[Theorem X.12]{Reed1975}, it can be proven that $\mathsf{H}_{\vLa}$ is self-adjoint on $\D(N_{\vLa, \rm p})$ and bounded below.
This paper focuses on the Hamiltonian given in \eqref{DefHamiHH}, which describes the interactions between electrons and phonons at the same site. However, it is worth noting that our results can be extended to more general electron-phonon interactions, as discussed in Remark \ref{GeInt}.

Including the Holstein interaction term in the Hamiltonian $\mathsf{H}_{\vLa}$ presents a challenge in analyzing the low-temperature phase as this operator is unbounded and does not conserve the number of phonons. To surmount this obstacle, we propose the {\it Lang--Firsov transformation} to convert $\mathsf{H}_{\vLa}$ into a more analytically tractable Hamiltonian. To this end, we define the self-adjoint operators, $p_x$ and $q_x$, for each $x\in \vLa$, as follows:
\begin{align}
p_x=\frac{\im }{\sqrt{2}}(\overline{b_x^*-b_x}), \ \ \  q_ x=\frac{1}{\sqrt{2}} (\overline{b_x^*+b_x}), \label{Defpq}
\end{align}
where $\overline{A}$ is the closure of $A$. 
It is known that these operators satisfy the standard commutation relation: 
$[q_x, p_y]=\im \delta_{x,y}$. 
Let
\begin{align}
L
=-\im  \alpha \sum_{x\in\vLa }\hn_xp_x\quad \left(\alpha=\frac{\sqrt2 g}{\omega_0}\right).
\end{align}
The Lang--Firsov transformation, which was first introduced in \cite{Lang1963}, is a unitary operator defined by  $\mathcal{U}=\ex^{\im \frac{\pi}{2}N_{\vLa, \rm p}}\ex^{L}$.
We  define the Hamiltonian $H_{\vLa}$ as the Lang--Firsov transformed Hamiltonian: 
$H_{\vLa}=\mathcal{U} \mathsf{H}_{\vLa} \mathcal{U}^{-1}$.
Using
the following formulas:
 \begin{align}
\ex^{\im \frac{\pi}{2}N_{\vLa, \rm p}}q_x\ex^{-\im \frac{\pi}{2}N_{\vLa, \rm p}}
&=p_x, \quad
\ex^{\im \frac{\pi}{2}N_{\vLa, \rm p}}p_x\ex^{-\im \frac{\pi}{2}N_{\vLa, \rm p}}
=-q_x, \label{LF1}\\
\ex^{L}c_{x,\sigma}\ex^{-L}
&=\ex^{ \im \alpha p_x} c_{x,\sigma}, \quad
\ex^{L}b_x \ex^{-L}
=b_x-\frac{\alpha}{\sqrt{2}} n_x,  \label{LF3}
\end{align} 
we can express $H_{\vLa}$ as:
\be
H_{\vLa}=\lambda \sum_{A\in \mathcal{A}_{\vLa}} t_A h_A
+\sum_{x\in \vLa} \varPhi_{{\rm eff}, x}(\bs{\hn})+\omega_0 N_{\vLa, \rm p},
\label{ELHami}
\ee
where 
\begin{align}
h_A&=\ex^{\im \vartheta_A}h_{A, {\rm e}},\\
\vartheta_A&=\sum_{\tilde{x} \in A} \vartheta_{\tilde{x}}, 
\quad
\vartheta_{\tilde{x}}=-\kappa \alpha q_x\quad(\tilde{x}=(x, \sigma, \kappa)); \label{DefTht}
\end{align}
and  the operator $\varPhi_{{\rm eff}, x}(\bs{\hn})$ is  defined  through the effective potential,  given by
\be
 \varPhi_{{\rm eff}, x}(\bs{n}_{\rm e})= \varPhi_{x}(\bs{n}_{\rm e})- \frac{\omega_0 \alpha^2}{2} (n_{x, \up}+n_{x, \down})^2\
\ee
for each $\bn_{\rm e}=\{n_{x, \sigma} : (x, \sigma)\in \BbbZ^d\times \{\up, \down\}\}\in \mathcal{N}_{\BbbZ^d, \rm e}$.
Note that $H_{\vLa}$ acts on the Hilbert space $\mathfrak{H}_{\overline{\vLa}}$. Here, $\overline{B}$ denotes the set obtained by thickening $B$ by a distance of $R_0$: 
\be
\overline{B}=\{x\in \vLa : \mathrm{dist}(x, B) \le R_0\}. \label{DefBarB}
\ee
 We will investigate $H_{\vLa}$ defined in this way in detail below. 
Here, we note that the Lang--Firsov transformation results in the vanishing of the electron-phonon interaction term in \eqref{DefHamiHH}, and instead, the real hopping term $h_{A, \rm e}$ becomes the operator $h_A$ containing a phase consisting of bosonic operators. The electron-phonon interaction in \eqref{DefHamiHH} is an unbounded operator, making it mathematically challenging to handle. However, through the Lang--Firsov transformation, this problem is reduced to the analysis of bounded hopping terms with bosonic phase operators. As we will see later, the transformed Hamiltonian is more amenable to mathematical analysis.

Introducing a new vector parameter $\overline{\nu}=(g, \nu_1, \dots, \nu_{k-1})$, we see that $\varPhi_{{\rm eff}, x}$ and $t_A$ depend on $\overline{\nu}\in \mathscr{U}$, where $\mathscr{U}$ is an open subset of $\BbbR^k$.
 Unless there is a risk of confusion, we will not explicitly indicate the dependence of $\varPhi_{{\rm eff}, x}$ and $t_A$ on $\overline{\nu}$.

The classical part $H_{\vLa}^{(0)}$ of $H_{\vLa}$ is given by
\begin{align}
H_{\vLa}^{(0)}&=\sum_{x\in \vLa}\varPhi_{{\rm eff}, x}(\bs{\hn})+\omega_0 N_{\vLa, \rm p}=\sum_{x\in \vLa}H_x^{(0)},\quad H_x^{(0)}:=\varPhi_{{\rm eff}, x}(\bs{\hn})+\omega_0 b_x^*b_x. \label{Defvtheta}
\end{align}
On the other hand, the quantum part of $H_{\vLa}$ is defined as follows:
\be
H^{(Q)}_{\vLa}=\lambda \sum_{A\in \mathcal{A}_{\vLa}} t_A h_A. \label{DefQPart}
\ee

\subsection{Basic assumptions}\label{Sec2.3}

Let us define several terms according to the textbook \cite{Friedli2017}. Consider the formal classical Hamiltonian: 
\be
H_{\rm e}^{(0)}(\bn_{\rm e})=\sum_{x\in \BbbZ^d} \varPhi_{{\rm eff}, x}(\bn_{\rm e}),\quad \bn_{\rm e}\in \mathcal{N}_{\BbbZ^d, \rm e}. \label{ClElHa}
\ee
Since the right-hand side is an infinite sum, it generally does not converge. How then should we define the ground state configurations of $H^{(0)}_{\rm e}(\bn_{\rm e})$?

Two configurations $\bn_{\rm e}$ and $\tilde{\bn}_{\rm e}\in \mathcal{N}_{\BbbZ^d, {\rm e}}$ are said to be  {\it equal at infinity}  if there exists a finite subset $\vLa\Subset \BbbZ^d$ such that $\bn_{\vLa^c, \rm e}=\tilde{\bn}_{\vLa^c, \rm e }$. Here, $\vLa \Subset \BbbZ^d$ means that $\vLa$ is a finite subset of $\BbbZ^d$. Moreover, for an electron configuration $\bn_{\rm e}=(n_{x, \sigma})_{ (x, \sigma)\in \BbbZ^d\times \{\up, \down\}}\in \mathcal{N}_{\BbbZ^d, \rm e}$ in $\BbbZ^d$, we denote its restriction to $B\subset \BbbZ^d$ by $\bn_{B, \rm e}$, that is, $\bn_{B, \rm e}:=(n_{x, \sigma} )_{ (x, \sigma)\in B\times \{\up, \down\}}\in \mathcal{N}_{B, \rm e}$.

When $\bn_{\rm e}$ and $\tilde{\bn}_{\rm e}$ are equal at infinity, we define the {\it relative Hamiltonian} as follows:
\be
H_{\rm e, rel}^{(0)}(\bn_{\rm e}\, |\, \tilde{\bn}_{\rm e})=\sum_{x\in \BbbZ^d}\left\{\varPhi_{{\rm eff}, x}(\bn_{\rm e})-\varPhi_{{\rm eff}, x}(\tilde{\bn}_{\rm e})\right\}.
\ee
Note that the right-hand side is a finite sum and hence convergent.
An electron configuration $\bs{g}_{\rm e}\in \mathcal{N}_{\BbbZ^d, \rm e}$ is called a {\it ground state configuration} of $H_{\rm e}^{(0)}(\cdot)$ if it satisfies the following condition for all configurations $\bn_{\rm e}$ that are equal at infinity with $\bs{g}_{\rm e}$:
\be
H_{\rm e, rel}^{(0)}(\bn_{\rm e }\, |\, \bs{g}_{\rm e}) \ge 0.
\ee

Hereafter, we assume that the set of ground state configurations of $H_{\rm e}^{(0)}(\cdot)$ consists solely of periodic configurations.
 Additionally, we assume that given any parameter $\overline{\nu}\in \mathscr{U}$, the set $G(\overline{\nu})$ of ground state configurations of $H_{\rm e}^{(0)}(\cdot)$ is a subset of  a collection of periodic configurations $\{\bs{g}^{(1)}_{\rm e}, \dots, \bs{g}^{(r)}_{\rm e}\}$. Consequently, we define the ground state energy of $H_{\rm e}^{(0)}(\cdot)$ as
\be
e_{\rm e}=\min_{\ell} e_{\ell}(\overline{\nu}),
\ee
where
\be
e_{\ell}(\overline{\nu})=\lim_{L\to \infty} \frac{1}{|\vLa|} \sum_{x\in \vLa}\varPhi_{{\rm eff}, x}(\bs{g}^{(\ell)}_{\rm e}).
\ee
Without loss of generality, we can assume that $\varPhi_{{\rm eff}, x}(\bs{g}^{(\ell)}_{\rm e})$ does not depend on $x\in \vLa$.

Let $\bn_{\rm e}$ be a given electron configuration. Then, a site $x$ is said to be {\it in a  configuration $\bs{g}_{\rm e}^{(\ell)}$} if 
$
\bs{g}^{(\ell)}_{U(x), \rm e}=\bn_{U(x), \rm e} 
$
holds, where 
\be
U(x)=\{y\in \BbbZ^d : \mathrm{dsit}(x, y)\le R_0\}. \label{DefU}
\ee
 On the other hand, if $x$ is not in any of the  configuration $\bs{g}^{(1)}_{\rm e}, \dots, \bs{g}^{(r)}_{\rm e}$, it is said to be {\it  in an excited configuration}.

We assume the following conditions for the ground state configurations of $H^{(0)}_{\rm e}$:
\begin{description}
\item[\hypertarget{A1}{(A. 1)}] There is a $\overline{\nu}_0\in \mathscr{U}$ such that $G(\overline{\nu}_0)=\{\bs{g}^{(1)}_{\rm e}, \dots, \bs{g}^{(r)}_{\rm e}\}$ holds.

\item[\hypertarget{A2}{(A. 2)}] The functions $e_{\ell}(\overline{\nu})$ are  $C^1$  in $\mathscr{U}$.
\item[\hypertarget{A3}{(A. 3)}] The matrix
\be
E=\left[
\frac{\partial e_{\ell}(\overline{\nu})}{\partial \overline{\nu}_i}
\right]_{\ell, i}
\ee
has rank $r-1$ for all $\overline{\nu}\in \mathscr{U}$,  and  the inverse matrices of corresponding submatrices are uniformly bounded from above.
\end{description}

 We assume the following conditions for $\varPhi_{{\rm eff}, x}$:
\begin{description}

\item[\hypertarget{A4}{(A. 4)}] 

Let $\bn_{\rm e}$ be a given electron configuration. If $x$ is in an excited configuration, then there exists a positive constant  $\gamma_{\rm e}$ such that
\be
\varPhi_{{\rm eff}, x}(\bn_{\rm e})\ge e_{\rm e}+\gamma_{\rm e}.
\ee
This condition is often referred to as the {\it Peierls condition}.
\item[\hypertarget{A5}{(A. 5)}] 
There exists a constant $C_0>0$ satisfying 
\be
\left|
\frac{\partial}{\partial \overline{\nu}_i} \varPhi_{{\rm eff}, x}(\bn_{\rm e})
\right| \le C_0
\ee
for all $i=1, \dots, k, \overline{\nu}\in \mathscr{U} $,  $x\in \BbbZ^d$ and $\bn_{\rm e} \in \mathcal{N}_{\BbbZ^d, \rm e}$.
\end{description}

We make the following assumptions regarding the quantum perturbation term:
\begin{description}

\item[\hypertarget{A6}{(A. 6)}] 
Let $\bs{t}=\{t_A: A\in \mathcal{A}_0\}$. We assume that $\bs{t}$ is translation invariant, i.e., $t_{\tau_x(A)}=t_A$ for all $x\in \BbbZ^d$ and $A\in \mathcal{A}_0$. 
Here, $\tau_x(A)$ is obtained by translating $A$ in the $x$ direction. More precisely, if we represent $A$ as $A=(\tilde{x}_1, \dots, \tilde{x}_n)\ (\tilde{x_i}=(x_i, \sigma_i, \kappa_i))$, then $\tau_x(A)$ is defined as $(\tilde{y}_1, \dots, \tilde{y}_n)$, where $\tilde{y}_i=(x_i+x, \sigma_i, \kappa_i)$.

Furthermore, the strength of the quantum perturbation is limited in the following sense: for $\gamma\ge 0$, we define the Sobolev norm as
\be
\|\bs{t}\|_{\gamma}:=\sum_{A : x\in \supp A}\left(
|t_A|+\sum_{i=1}^{k}\left|
\frac{\partial}{\partial \overline{\nu}_i} t_A
\right|
\right)\, \ex^{\gamma |\supp A|}.
\ee
Then, for sufficiently large $\gamma_Q>0$, we have $\|\bs{t}\|_{\gamma_Q} <\infty$.

\end{description}

In Subsection \ref{ExHH}, we confirm that the Holstein--Hubbard model discussed in Section \ref{Sec1} indeed satisfies the aforementioned conditions.

\subsection{Boundary conditions}
We consider the free Hamiltonian of the electron-phonon system,  given by
\be
H^{(0)}=H^{(0)}_{\rm e}+\omega_0 \sum_{x\in \BbbZ^d} b_x^*b_x,
\ee
where 
$H_{\mathrm{e}}^{(0)}$ is a formal Hamiltonian defined by replacing the electron configuration $\bn_{\rm e}$ with the family of number operators for electrons $\bs{\hn}=(\hn_{x, \sigma})_{x, \sigma}$ in the definition \eqref{ClElHa} of the classical Hamiltonian $H_{\rm e}^{(0)}(\cdot)$.
$H^{(0)}$ is a formal operator defined by replacing the electron configuration $\bn_{\rm e}$ and the phonon configuration $\bn_{\rm p}$ in the classical Hamiltonian
\be
H^{(0)}(\bn)=H^{(0)}_{\rm e}(\bn_{\rm e})+\omega_0 \sum_{x\in \BbbZ^d} n_{x, {\rm p}}
\quad (\bn=(\bn_{\rm e}, \bn_{\rm p})\in\mathcal{N}_{\BbbZ^d} )
\ee
with the families of number operators for electrons $\bs{\hn}$ and phonons $(b_x^*b_x)_x$, respectively. 
Here, we denote $\bn_{\rm p}=(n_{x, {\rm p}})_{x\in \BbbZ^d}$.
The ground state configurations of $H^{(0)}(\bn)$  are  given by $\bs{g}^{(\ell)}=(\bs{g}^{(\ell)}_{\rm e}, \bs{0})\in \mathcal{N}_{\BbbZ^d}$.\footnote{
To be more precise, firstly, we define the relative Hamiltonian $H^{(0)}_{\rm rel}(\cdot \, |\, \cdot)$ with respect to $H^{(0)}(\cdot)$ and conduct a discussion similar to that in Subsection \ref{Sec2.3}, which reveals that $\bs{g}^{(\ell)}$ corresponds to the ground state configuration of $H^{(0)}(\cdot)$.} Therefore, for $B\Subset \BbbZ^d$, we have $\ket{\bs{g}^{(\ell)}_B}=\ket{\bs{g}_{B, {\rm e}}^{(\ell)}}\otimes \ket{\varnothing}_{B, \rm p}$.
The Hamiltonian with  boundary conditions $\ell$ is defined as:
\be
H_{\ell , \vLa}=H^{(Q)}_{\vLa}+H_{\ell , \vLa}^{(0)}, \quad H_{\ell , \vLa}^{(0)}=\Braket{\bs{g}^{(\ell )}_{\partial \vLa}|H_{\vLa}^{(0)}|\bs{g}^{(\ell )}_{\partial \vLa} }, 
\ee
where, $\partial \vLa=\{x\in \BbbZ^d : x\notin \vLa,\, \mathrm{dist}(x, \vLa) \le R_0\}$ and $\Braket{\bs{g}^{(\ell )}_{\partial \vLa}|H_{\vLa}^{(0)}|\bs{g}^{(\ell )}_{\partial \vLa} }$ represents a  conditional expectation value. 
Recall that $H^{(Q)}_{\vLa}$ is defined in \eqref{DefQPart}.
 In this case, $H_{\ell , \vLa}$ acts on the Hilbert space $\mathfrak{H}_{\vLa}$.

For each $x\in \vLa$, we define
\begin{align}
\varPhi^{(\ell)}_{{\rm eff}, x}(\bn_{ \rm e})
=\begin{cases}
\varPhi_{{\rm eff}, x}(\bn_{U(x),  \rm e}) & \mbox{if $U(x) \setminus \vLa=\varnothing$}\\
\varPhi_{{\rm eff}, x}\left(\bn_{\bn_{U(x)\cap \vLa}, \rm e}\times \bs{g}^{(\ell)}_{U(x) \setminus \vLa, \rm e} \right) & \mbox{if $U(x) \setminus \vLa\ne \varnothing$}.
\end{cases}
\end{align}
Here, the product of two configurations is defined as follows: Given $A, B\Subset\BbbZ^d\ (A\cap B=\varnothing)$,  and two configurations $\bn_{A, \rm e}=(n^A_{x, \sigma} )_{ (x, \sigma)\in A\times \{\uparrow, \downarrow\}}\in \mathcal{N}_{A, \rm e}$ and $ \bn_{B, \rm e}=(n^B_{x, \sigma} )_{ (x, \sigma)\in B\times \{\uparrow, \downarrow\}}\in \mathcal{N}_{B, \rm e}$, we define their product $\bn_{A, \rm e}\times \bn_{B, \rm e}=(n^{A\sqcup B}_{x, \sigma})_{(x, \sigma)\in A\sqcup B\times \{\uparrow, \downarrow\}}\in \mathcal{N}_{A\sqcup B, \rm e}$ as
\begin{align}
n^{A\sqcup B}_{x, \sigma}=
\begin{cases}
n^A_{x, \sigma} & \mbox{ if $x\in A$}\\
n^B_{x, \sigma} &\mbox{ if $x\in B$.}
 \end{cases}\label{DefProdN}
\end{align}
Then, we can express the Hamiltonian with  boundary conditions $\ell$  as
\be
H_{\ell, \vLa}=H_{\ell, \vLa, \rm e}+\omega_0 N_{\vLa, \rm p},\quad
H_{\ell, \vLa, \rm e}=\sum_{x\in \vLa}\varPhi^{(\ell)}_{{\rm eff}, x}(\bs{\hat{n}}_{ \rm e}).
\ee
Note that $\ex^{-\beta H_{\ell, \vLa}}$ is a  trace class operator for every $\beta>0$.\footnote{
For example, this fact can be easily proven using the Golden--Thompson inequality \cite[Theorem 8.3]{simon2005}.
}
The thermal expectation value under  boundary conditions $\ell$ is defined as
\be
\la \vPsi\ra^{(\ell)}_{\beta, \vLa}=\frac{\Tr\left[\vPsi\, \ex^{-\beta H_{{\ell}, \vLa}}\right]}{Z_{{\ell}, \vLa}},\quad
Z_{{\ell}, \vLa}=\Tr\left[ \ex^{-\beta H_{{\ell}, \vLa}}\right],
\ee
where $\vPsi$ is a local observable.

Next, let us explain the  Hamiltonian with  periodic boundary conditions. We define the Hamiltonian on the torus $\vLa_{\rm P}=(\BbbZ/L\BbbZ)^d$ by
\be
H_{{\rm P}, \vLa}=\lambda\sum_{A : \supp A\subset \vLa_{\rm P}} t_A h_A
+\sum_{x\in \vLa_{\rm P}}H_x^{(0)}. \label{DefHamiP}
\ee
Here, the sum over the first term on the right-hand side is taken over $A$ such that $\supp A$ does not wrap around $\vLa_{\rm P}$. The thermal expectation value under periodic boundary conditions is defined as
\be
\la \vPsi\ra^{(\rm P)}_{\beta, \vLa}=\frac{\Tr\left[\vPsi\, \ex^{-\beta H_{{\rm P}, \vLa}}\right]}{Z_{{\rm P}, \vLa}},\quad
Z_{{\rm P}, \vLa}=\Tr\left[ \ex^{-\beta H_{{\rm P}, \vLa}}\right],
\ee
where  $\vPsi$ is a local observable.

\begin{Rem}\label{Remlamb}
\upshape
The parameter $\lambda$ representing the strength of quantum perturbation can be extended to complex numbers. Indeed, $\sum_{A\in \mathcal{A}_{\vLa}} t_A h_A$ is a bounded self-adjoint  operator, and both $H^{(0)}_{\ell, \vLa}$ and $\sum_{x\in \vLa_{\rm P}}H_x^{(0)}$ are self-adjoint operators bounded from below. Hence, for all $\lambda\in \BbbC$, $-H_{{\rm P}, \vLa}$ and $-H_{\ell, \vLa}$ are $m$-accretive. Therefore, by Lumer--Phillips theorem (see, for example, \cite[Theorem 6.12]{Schmdgen2012}), $\ex^{-\beta H_{{\rm P}, \vLa}}$ and $\ex^{-\beta H_{\ell, \vLa}}$ can be defined. Furthermore, detailed analysis in Sections \ref{Sec3} and \ref{Sec4} demonstrates that these operators are trace-class.
\end{Rem}

\subsection{Main theorem}
The main theorem of this paper is as follows.
\begin{Thm}\label{MainThm}
For $d\ge 2$, let us assume the conditions \hyperlink{A1}{\bf (A. 1)}--\hyperlink{A6}{\bf (A. 6)}. Then, there exist constants $0<\beta_0<\infty$, $0<\omega_*<\infty$ and $0<\gamma_Q<\infty$ such that, for all $\beta\ge \beta_0$, $\omega_0\ge\omega_*$, $\gamma\ge \gamma_Q$, and $\lambda\in \BbbC$ satisfying
\be
|\lambda| \le \lambda_0:=\frac{1}{\ex\, \beta_0 \|T\|_{\gamma}},
\ee
there exist positive constants $\xi_{\ell}$ and continuously differentiable functions $f_{\ell}(\overline{\nu})\ (\ell=1, \dots, r)$, such that as long as
\be
\mathrm{Re}f_{\ell}(\overline{\nu})-\min_m \mathrm{Re} f_m(\overline{\nu})=0,
\ee
the following {\rm (i)}-{\rm (viii)} hold:

\begin{itemize}
\item[\rm (i)] There exists an infinite volume free energy corresponding to $Z_{\ell, \vLa}$, which is equal to $f_{\ell}$:
\be
f_{\ell}(\overline{\nu})=-\frac{1}{\beta} \lim_{L\to \infty} \frac{1}{|\vLa|} \log Z_{\ell, \vLa}.
\ee
\item[\rm (ii)] For any local observable $\varPsi\in \mathfrak{A}_{\rm e}$, the infinite volume limit
\be
\la \varPsi\ra^{(\ell)}_{\beta}=\lim_{L\to \infty} \la \varPsi\ra^{(\ell)}_{\beta, \vLa}
\ee
exists.
\item[\rm (iii)]
For any local observables $\varPsi, \varPhi\in \mathfrak{A}_{\rm e}$, their two-point correlation function decays exponentially as follows:
\begin{align}
\left| \la \vPsi \varPhi\ra^{(\ell)}_{\beta} 
-\la \varPsi \ra^{(\ell)}_{\beta} \la \varPhi\ra^{(\ell)}_{\beta}
\right|
\le C_{\varPsi, \varPhi}
\exp\left\{
-\frac{\mathrm{dist}(\supp \vPsi,  \supp \varPhi)}{\xi_{\ell}}
\right\}.
\end{align}

\item[\rm (iv)]
For any given local observable $\vPsi\in \mathfrak{A}_{\rm e}$, we define its ground state expectation value as follows:
\be
\la \vPsi\ra_{\rm gs}^{(\ell)}=\lim_{L\to \infty} \lim_{\beta\to \infty} \la \vPsi\ra_{\beta, \vLa}^{(\ell)}.
\ee
It follows that the limit on the right-hand side exists. Additionally, we have the following inequality:
\be
\left|
\la \vPsi\ra^{(\ell)}_{\rm gs}-\la \vPsi\ra_{\beta, \vLa}^{(\ell)}
\right|
\le C_{\varPsi} \exp\left[- \zeta \min\left\{
 \frac{\beta}{2\beta_0},\ \mathrm{dist}(\supp \vPsi, \partial \vLa)
 \right\}
 \right],
\ee
where $C_{\varPsi}$ and $\zeta$ are positive constants.
\item[\rm (v)]
Let $P_x^{(m)}$ be the projection onto the local ground state configuration ${\bs g}_{U(x)}^{(m)}$, given by $P_x^{(m)}=\ket{\bs{g}_{U(x)}^{(m)}}\!\bra{\bs{g}_{U(x)}^{(m)}}$. There exists a  positive constant  $C>0$ such that  the following inequality holds:
\be
\left|
\la P_x^{(m)} \ra^{(\ell)}_{\beta}-\delta_{\ell, m}
\right|\le C\,  \ex^{-\gamma}.
\ee

\item[\rm (vi)]There exists a certain constant $C>0$ such that the following holds:
\begin{align}
|f_{\ell}(\overline{\nu})-e_{\ell}(\overline{\nu})| \le C\, \ex^{-\gamma},\quad
\left|
\frac{\partial}{\partial\overline{\nu}_i}\{f_{\ell}(\overline{\nu})-e_{\ell}(\overline{\nu})\}
\right|\le C \, \ex^{-\gamma}
\end{align}
for all $\overline{\nu}\in \mathscr{U}$ and $\ell\in \{1, \dots, r\}$.
\item[\rm (vii)]
For any local observable $\varPsi\in \mathfrak{A}_{\rm e}$, the following infinite volume limit exists under periodic boundary conditions:
\be
\la \vPsi\ra^{(\rm P)}_{\beta}=\lim_{L\to \infty} \la \vPsi\ra_{\beta, \vLa}^{(\rm P)}.
\ee
Furthermore, one obtains 
\be
\la \vPsi\ra^{(\rm P)}_{\beta}=\sum_{\ell\in \mathcal{Q}}\frac{1}{|\mathcal{Q}|} \la \vPsi\ra^{(\ell)}_{\beta}, 
\ee
where
$
\mathcal{Q}=\{\ell\in \{1, \dots, r\} : \mathrm{Re}f_{\ell}(\overline{\nu})-\min_m \mathrm{Re} f_m(\overline{\nu})=0\}
$.
\item[\rm (viii)] 
Suppose that the family of local observables $\{Q_{\vLa} : L\in \BbbN\}$ is extensive in the following sense:
\begin{itemize}
\item[\rm 1.] 
Each $Q_{\vLa}$ can be  expressed in the form $Q_{\vLa}=\sum_{x\in \vLa} Q_{x, \vLa}\quad (Q_{x, \vLa}\in \mathfrak{A}_{\vLa, \rm e})$. Furthermore, $\sup_{x, L}|\supp Q_{x, \vLa}|<\infty$ and $\sup_{x, L}\|Q_{x, \vLa}\|<\infty$.
\item[\rm 2.] $[Q_{\vLa}, \ex^{-\beta H_{\ell, \vLa}}]=0$ for all $\beta \ge 0$.
\item[\rm 3.] 
Each local   state $\ket{\bs{g}_{\vLa, \rm e}^{(\ell)}}$ is an eigenvector of $Q_{\vLa}$ and the corresponding eigenvalue is of the form $\rho_{\vLa}^{(\ell)} |\vLa|$. Furthermore, there exists an infinite volume limit: $\displaystyle \rho_{\rm cl}^{(\ell)}=\lim_{L\to \infty} \rho_{\vLa}^{(\ell)}$.
\end{itemize}

Then we have the following:
\begin{itemize}
\item[\rm (viii-a)]
Regarding the ground state expectations of $Q_{\vLa}$, the following holds:
\be
\rho_{\rm cl}^{(\ell)}=\lim_{L\to \infty} \frac{1}{|\vLa|} \la Q_{\vLa}\ra_{\rm gs}^{(\ell)}.
\ee
\item[\rm (viii-b)] For each  $\beta>0$, we define
\be
\rho^{(\ell)}(\beta)=\lim_{L\to \infty} \frac{1}{|\vLa|}\la Q_{\vLa}\ra_{\beta, \vLa}^{(\ell)}.
\ee
Then, for all $\beta \ge \beta_0$, there exists a positive constant $C$ such that the following holds:
\be
\left|
\rho^{(\ell)}(\beta)-\rho_{\rm cl}^{(\ell)}
\right|\le C \, \ex^{- c\beta}.
\ee

\end{itemize}
\end{itemize}
\end{Thm}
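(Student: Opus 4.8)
The plan is to realise $Z_{\ell,\vLa}=\Tr[\ex^{-\beta H_{\ell,\vLa}}]$ as a contour (polymer) model and then invoke the abstract Pirogov--Sinai machinery. First I would expand $\ex^{-\beta H_{\ell,\vLa}}$ in a Dyson--Duhamel series in the quantum perturbation $H^{(Q)}_{\vLa}=\lambda\sum_A t_A h_A$ around the diagonal classical part $H^{(0)}_{\ell,\vLa}$. Since $H^{(0)}_{\ell,\vLa}$ is diagonal in the occupation basis $\{\ket{\bn_\vLa}\}$, each term of the series is a product of classical Boltzmann weights $\ex^{-\tau_j H^{(0)}}$ interspersed with hopping operators $h_A=\ex^{\im\vartheta_A}h_{A,\rm e}$. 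Taking the trace, the fermionic part produces a sum over closed configurations weighted by the factors $\lambda t_A$, while the phonon part produces expectations of products $\prod_j\ex^{\im\vartheta_{A_j}}$ which I would evaluate through the Gaussian structure of the free phonon Gibbs state; these are precisely the bosonic correlation functions collected in the appendix. Grouping the excited sites and the hopping ``vertices'' into maximal connected components defines the contours, and I expect the partition function to factorise as a ground-state weight times a product over contour activities.

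Second, I would bound the contour activities to verify the Peierls/convergence hypothesis of Pirogov--Sinai. Each contour $\vGa$ carries (a) a classical energy suppression of order $\ex^{-\beta\gamma_{\rm e}|\vGa|}$ from the Peierls condition (A.4), (b) a quantum smallness factor of order $(|\lambda|\,\|\bs t\|_\gamma)^{\#\text{vertices}}$ from the Dyson series and the Sobolev bound (A.6), and (c) a bosonic factor coming from the phonon correlation functions. The decisive point is that (c) stay uniformly bounded: here $\omega_0$ large suppresses the phonon fluctuations entering the phases $\vartheta_A$ (built from $-\kappa\alpha q_x$, with $\alpha=\sqrt2 g/\omega_0\to0$), and the requirement $\omega_*>\beta_0^{-1}\log2$ keeps the single-site phonon normalisation $(1-\ex^{-\beta\omega_0})^{-1}$ close to one. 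Combining (a)--(c) I expect a bound $|\,\text{activity}(\vGa)\,|\le\ex^{-\tau|\vGa|}$ with $\tau$ as large as desired once $\beta\ge\beta_0$, $\omega_0\ge\omega_*$ and $|\lambda|\le\lambda_0$, which is exactly the small-activity regime the theory requires.

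Third, with the Peierls bound in hand I would run the standard Pirogov--Sinai construction. Truncating the activities for each boundary condition $\ell$ yields truncated free energies whose cluster expansion converges; the limit gives the functions $f_\ell(\overline\nu)$ of part (i). Their $C^1$ dependence on $\overline\nu$ follows by differentiating the convergent expansion term by term, using (A.2) and (A.5) to control $\partial_{\overline\nu_i}\varPhi_{\rm eff,x}$ and $\partial_{\overline\nu_i}t_A$, while (A.3) guarantees that $\{\mathrm{Re}f_\ell=\min_m\mathrm{Re}f_m\}$ is a genuine $C^1$ manifold of the expected codimension. The stable phases are the $\ell\in\mathcal Q$, and for such $\ell$ the convergent expansion immediately yields the infinite-volume state (ii) and its exponential clustering (iii) from the exponential decay of truncated correlations; the comparison $|f_\ell-e_\ell|\le C\ex^{-\gamma}$ and its derivative version (vi), since all contour corrections are $O(\ex^{-\gamma})$; and the local order estimate (v), $\la P^{(m)}_x\ra^{(\ell)}_\beta=\delta_{\ell,m}+O(\ex^{-\gamma})$, because a deviation from the ground state $m=\ell$ at $x$ forces a contour through $x$. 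For the periodic ensemble (vii) I would show that on the torus the weights of the coexisting phases become asymptotically equal, so $\la\cdot\ra^{(\rm P)}_\beta$ is the uniform mixture over $\mathcal Q$; this is the quantum analogue of the usual periodic-boundary argument.

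Finally, part (iv) is obtained by estimating the difference between the finite-$\beta$ and the $\beta\to\infty$ expansions, where the two competing scales $\beta/2\beta_0$ and $\mathrm{dist}(\supp\vPsi,\partial\vLa)$ enter through the temporal and spatial extent a contour must have to connect $\supp\vPsi$ to the imaginary-time boundary or to $\partial\vLa$; and part (viii) follows by applying (iv) to $Q_\vLa/|\vLa|$. Indeed, since $Q_\vLa$ commutes with $\ex^{-\beta H_{\ell,\vLa}}$ and acts on $\ket{\bs g^{(\ell)}_{\vLa,\rm e}}$ by $\rho^{(\ell)}_\vLa|\vLa|$, its normalised thermal average equals its classical value plus contour corrections, giving (viii-a) and the exponential rate in (viii-b). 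I expect the main obstacle throughout to be the control of the bosonic factor (c) in step two: because the Holstein coupling does not conserve phonon number, the phonon trace cannot be reduced to a finite-dimensional sum as in the purely fermionic analyses of Borgs--Kotecky, and the activities must instead be estimated through the explicit phonon correlation functions, whose uniform bounds are the technical heart that forces the largeness assumption on $\omega_0$.
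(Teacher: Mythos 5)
Your proposal is correct and follows essentially the same route as the paper: a Duhamel expansion of $\ex^{-\beta H_{\ell,\vLa}}$ around the classical part leading to a $(d+1)$-dimensional space-time contour model, Peierls-type activity bounds in which the phonon contributions are controlled through the explicit Gaussian correlation functions of Appendix \ref{AppCor} (with $\omega_0$ large taming the factors $(1-\ex^{-\tilde{\beta}\omega_0})^{-1}$ from contours winding around the thermal torus), and then the Pirogov--Sinai machinery of Borgs--Koteck\'y--Ueltschi and Borgs--Koteck\'y to extract (i)--(viii). You also correctly identify the technical heart — the non-conservation of phonon number forcing the use of bosonic correlation-function bounds rather than finite-dimensional traces — which is precisely what Sections \ref{Sec3} and \ref{Sec4} of the paper are devoted to.
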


\begin{Rem}\label{GeInt}
\rm 
\begin{itemize}
\item
The constants $\beta_0$, $\gamma_Q$ and $\omega_*$  are selected to fulfill the conditions outlined in Propositions \ref{ActBd1} and \ref{DelRhoEst}, as well as Lemma \ref{PartialQ}.
\item
The method presented in this paper can also be applied to more generalized electron-phonon interactions, such as:
\be
\sum_{x, y\in \vLa} g_{x, y} \hn_x(b_y+b_y^*),
\ee
where $g_{x, y}\in \BbbR$ satisfies the following conditions: (i) translational invariance: $g_{x, y}=g_{x-y, o}$, (ii) $g_{x, y}$ is of finite range: $g_{x, y}=0$ for $x, y\in \BbbZ^d$ with $\|x-y\|>R_0$. Theorem \ref{MainThm} holds for such generalized interactions, although the proof becomes more complex.
\end{itemize}
\end{Rem}

\subsection{Example: Holstein--Hubbard Model}\label{ExHH}
For the reader's convenience, here we provide an overview of how the Holstein--Hubbard model discussed in Section \ref{Sec1} indeed satisfies the conditions \hyperlink{A1}{\bf (A. 1)}--\hyperlink{A6}{\bf (A. 6)}.
For $\bn_{\rm e}=(n_{x, \sigma})_{ x\in \vLa, \sigma\in \{\up, \down\} }\in \mathcal{N}_{\BbbZ^d, {\rm e}}$, the classical Hamiltonian $H_{\rm e}^{(0)}(\bn_{\rm e})$  is formally given by:
\be
H_{\rm e}^{(0)}(\bn_{\rm e})=
U_{\rm eff}\sum_{x\in \BbbZ^d}n_{x, \up} n_{x, \down}
+W \sum_{\la x, y\ra} n_x n_y
-\left(\mu+2dW+\frac{U_{\rm eff}}{2}\right)\sum_{x\in \BbbZ^d}n_x.
\ee
Here, $n_x=n_{x, \up}+n_{x, \down}$, and $U_{\rm eff}=U-\omega_0\alpha^2$.
Also, $\sum_{\la x, y\ra}$ denotes the sum over all edges of $\BbbZ^d$, i.e., $E_{\BbbZ^d}:=\{\la x, y\ra \in \BbbZ^d\times \BbbZ^d: \|x-y\|=1\}$.
To reveal the ground state configuration of $H_{\rm e}^{(0)}$, it is convenient to introduce classical spin variables $s_x=n_x-1$. Note that $s_x$ takes values $-1, 0, 1$. Rewriting $H_{\rm e}^{(0)}$ in terms of $s_x$, we get:
\be
H_{\rm e}^{(0)}(\bs{s})=\frac{U_{\rm eff}}{2}\sum_{x\in \BbbZ^d} s_x^2+W\sum_{\la x, y \ra} s_x s_y-\mu\sum_{x\in \BbbZ^d} s_x.
\ee
Ignoring the constant term diverging with $|\vLa|$\footnote{As the ground state configuration is determined by the relative Hamiltonian, such constant terms can be ignored.}, $H_{\rm e}^{(0)}$ can be expressed as:
\be
H_{\rm e}^{(0)}(\bs s)=\sum_{\la x, y\ra} h(s_x, s_y), 
\ee
where
\be
h(s_x, s_y)=Ws_xs_y+\frac{U_{\rm eff}}{4d}(s_x^2+s_y^2)-\frac{\mu}{2d}(s_x+s_y).
\ee
Now, let us define $ \varPhi_{{\rm eff}, x}(\bs{n}_{\rm e})$ as
\be
\varPhi_{{\rm eff}, x}(\bs{n}_{\rm e})=\frac{1}{2}\sum_{y: \|x-y\|=1} h(s_x, s_y),
\ee
and with this, we can express $H_{\rm e}^{(0)}=\sum_{x\in \BbbZ^d} \varPhi_{{\rm eff}, x}$. Here, the symbol $\sum_{y : \|x-y\|=1}$ represents the sum over the given $x$'s nearest neighboring lattice points.

In the paper \cite{Borgs_1996}, Borgs {\it et al.} elucidate the phase diagram concerning the ground state configurations of $H_{\rm e}^{(0)}$. The obtained results are depicted in Figure \ref{Phase}.
\begin{figure}[h]
 \begin{center}
 %\includegraphics[scale=0.35]{IEE.png}
%\includegraphics[scale=0.5]{8.pdf}
% \ctikzfig{Graph1}
  \begin{tikzpicture}[scale=1]

\draw[->,>=stealth,very thick] (-6,0)--(4.5,0) node[right]{$U_{\rm eff}/4d$}; %x軸
\draw[->,>=stealth,very thick] (0,-4)--(0,4) node[left]{$\mu/2d$}; %y軸
%\draw (0,0) node[below left]{O}; %原点
\draw [domain=0:1,smooth] plot (\x,-\x+2);
\draw [domain=0:1,smooth] plot (\x,\x-2);
\draw [domain=0:1,smooth] (1, -1)--(1, 1);
\draw [domain=0:1,smooth] (-5.5, 2)--(0, 2);
\draw [domain=0:1,smooth] (-5.5, -2)--(0, -2);
\draw [domain=0:2,smooth] plot (\x,\x+2);
\draw [domain=1:4,smooth] plot (\x,\x);
\draw [domain=0:2,smooth] plot (\x,-\x-2);
\draw [domain=1:4,smooth] plot (\x,-\x);
\draw (1,0)node[below right]{$W/2$};
\draw (0,2)node[above left]{$W$};
\draw (0,-2)node[below left]{$-W$};
\draw (0,1)node[left]{$W/2$};
\draw (0,-1)node[left]{$-W/2$};
\draw (-3,3)node[left]{\Large$H_2$};
\draw (-3,-3)node[left]{\Large$H_0$};
\draw (-3,-0.5)node[left]{\Large$S_{\rm ep, 0}$};
\draw (3, 3)node[left]{\Large$S_{\rm ep, +}$};
\draw (3, -3)node[left]{\Large$S_{\rm ep, -}$};
\draw (3.8, -0.5)node[left]{\Large$H_1$};
\end{tikzpicture}
  \caption{
  Phase diagram of the ground state configurations for
  $H_{\rm e}^{(0)}$.
  }\label{Phase}
        \end{center}
 \end{figure}
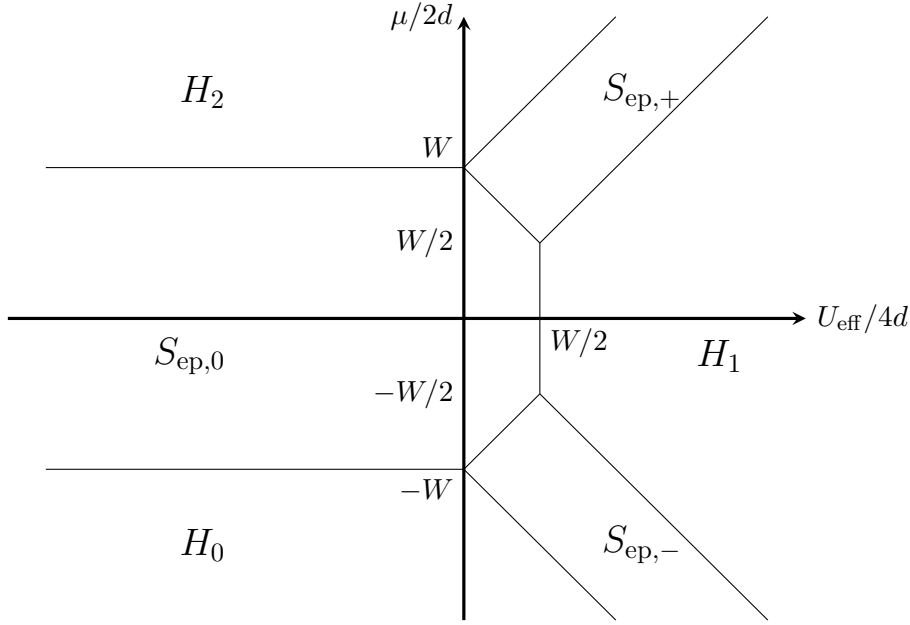
Let us  briefly explain the discussion in \cite{Borgs_1996} below.
Considering translational symmetry, the phase diagram can be constructed by examining pairs of spins $(s_x, s_y)$ that minimize the values of $h(s_x, s_y)$. We refer to $x\in \BbbZ^d$ as an even site (resp., odd site) when $\sum_{i=1}^d|x_i|$ is even (resp., odd). 
Let us  fix a pair of sites $(x, y)$, with $x$ being an even site and $y$ being an odd site. In this case, the possible combinations of $(s_x, s_y)$ are:
\be
(s_x, s_y)=(1, 1), (1, 0), (0, 0), (-1, 0), (1, -1), (-1, -1).
\ee
Note that pairs obtained by swapping the values of $s_x$ and $s_y$ are physically equivalent, so we do not distinguish them for a while.
The values of $h(s_x, s_y)$ corresponding to these configurations are as follows:
\begin{description}
\item$(s_x, s_y)=(1, 1)$:     $h(s_x, s_y)=W+\frac{U_{\rm eff}}{2d}-\frac{\mu}{d}$.
\item $(s_x, s_y)=(1, 0)$:     $h(s_x, s_y)=\frac{U_{\rm eff}}{4d}-\frac{\mu}{2d}$.
\item $(s_x, s_y)=(0, 0)$:    $h(s_x, s_y)=0$.
\item $(s_x, s_y)=(-1, 0)$:   $h(s_x, s_y)=\frac{U_{\rm eff}}{4d}+\frac{\mu}{2d}$.
\item $(s_x, s_y)=(1, -1)$:   $h(s_x, s_y)=-W+\frac{U_{\rm eff}}{2d}$.
\item $(s_x, s_y)=(-1, -1)$:  $h(s_x, s_y)=W+\frac{U_{\rm eff}}{2d}+\frac{\mu}{d}$.
\end{description}

By considering the minimum values of the obtained $h(s_x, s_y)$, we obtain Figure \ref{Phase}. In each region of Figure \ref{Phase}, the ground state configurations of $H_{\rm e}^{(0)}$ are obtained by periodically extending the pair configurations $(s_x, s_y)$ that minimize $h(s_x, s_y)$ over the entire $\BbbZ^d$.
The open regions 
$S_{\rm ep, +}, S_{\rm ep, -}, S_{\rm ep, 0}$ correspond to $(s_x, s_y)=(1, 0), (-1, 0), (-1,  1)$, respectively.
The open regions $H_0, H_1, H_2$ correspond to $(s_x, s_y)=(-1, -1), (0, 0), (1, 1)$, respectively. 

In $H_0, H_1, H_2$, the ground states of $H_{\rm e}^{(0)}$ do not exhibit charge density waves. On the other hand, in the open regions $S_{\rm ep, +}, S_{\rm ep, -}$, and $S_{\rm ep, 0}$, charge density waves appear in the ground state of $H_{\rm e}^{(0)}$.

In the pair spin configurations defining the regions $H_1, S_{\rm ep, +}, S_{\rm ep, -}$, it is essential to note that either $s_x=0$ or $s_y=0$. Returning to the definition of $s_x$, the case $s_x=0$ corresponds to two possibilities: $n_{x, \up}=1, n_{x, \down}=0$, or $n_{x, \up}=0, n_{x, \down}=1$. Of course, the same applies to $s_y=0$.
This observation indicates that the ground state configurations in these regions have infinite degeneracy. Consequently, the analysis within these regions falls outside the scope of the main theorem.

Let us delve into the region $\mathscr{U}=S_{\rm ep, \vepsilon}$  given by \eqref{Se0}.  Note that $S_{\rm ep, \vepsilon}$ is a region obtained by ``narrowing down" the region $S_{\rm ep, 0}$ by an amount $\vepsilon$.
 In this region,  one ground state configuration is such that $s_x=1$ for all even sites and $s_x=-1$ for all odd sites. Another configuration is obtained by swapping the values of $s_x$ on even and odd sites for the first configuration. Therefore, the ground state in this region is two-fold degenerate, hence $r=2$.
We note that our parameters are given by $\overline{\nu}=(g, \mu, U, W)$, and $e_{\rm e}=-W+\frac{U_{\rm eff}}{2d}$.
From this, conditions \hyperlink{A1}{\bf (A. 1)} to \hyperlink{A4}{\bf (A. 4)} are almost evident. For \hyperlink{A5}{\bf (A. 5)}, we note that 
$\left|
\frac{\partial}{\partial g} \varPhi_{{\rm eff}, x}(\bn_{\rm e})
\right|=c |g|$.
Therefore, when defining 
$S_{\rm ep, \vepsilon}$, it is necessary to fix $g_0>0$
 arbitrarily and restrict $|g|<g_0$.
Verifying the fulfillment of assumption \hyperlink{A6}{\bf (A. 6)} for every $\gamma_Q>0$ in the nearest-neighbor hopping of electrons, as given in \eqref{HamiHabbard}, is straightforward.
Furthermore, it is easily seen that $\|\bs{t}\|_{\gamma}=2d\,  \ex^{2\gamma}$ and $\lambda_0=(2d \beta_0\ex^{2\gamma_Q+1})^{-1}$.

 In regions $H_0$ and $H_2$, the ground state configurations are unique. Therefore, the main theorem can be applied to these regions, enabling the proof of the absence of charge density waves in the low-temperature phase.

\section{Connection to contour models}\label{Sec3}
\subsection{The  strategy for the proof of Theorem \ref{MainThm}}\label{StrMnTh}
Our approach to establish the main theorem involves the application of the Pirogov--Sinai theory to the Hamiltonian governing the electron-phonon interacting system as defined in \eqref{ELHami}.

In general, the Pirogov--Sinai theory is one of the few mathematical frameworks capable of constructing the phase diagram of the low-temperature phase when the phase diagram of the ground state of the model under consideration is well understood.
For a classical spin model on a $d$-dimensional hypercubic lattice, the Pirogov--Sinai theory proceeds in two steps:
\begin{description}
\item[\bf (C. 1)] Express the partition function of the model (or thermal expectations of physical observables) in terms of a contour model in $d$-dimensional space.
\item[\bf (C. 2)] Apply powerful analytical techniques, such as cluster expansions, to the obtained contour model to reveal the characteristics of the low-temperature phase.
\end{description}
Understanding of the Pirogov--Sinai theory in classical systems has significantly progressed, with an abundance of references available. Among them, the exposition in \cite{Friedli2017} is particularly well-suited for an introduction to this theory.

In the quantum systems on a $d$-dimensional hypercubic lattice, the Pirogov--Sinai theory is applicable when the quantum system Hamiltonian under consideration, as described in Section \ref{Sec2}, can be separated into a classical part and a quantum perturbation part.
The prerequisite for the application of this theory is a well-understood phase diagram of the ground state of the classical part of the Hamiltonian. In this scenario, even in the presence of quantum perturbations, it becomes feasible to construct the phase diagram of the low-temperature phase.
The Pirogov--Sinai theory for quantum systems consists of the following steps:
\begin{description}
\item[\hypertarget{Q1}{(Q. 1)}] Express the partition function of the Hamiltonian under consideration   in terms of a classical contour model in a $d+1$-dimensional space-time.
\item[\hypertarget{Q2}{(Q. 2)}] Apply and extend the existing Pirogov--Sinai theory for classical systems to the obtained space-time contour model, thereby analyzing the low-temperature phase.
\end{description}
Regarding step \hyperlink{Q2}{\bf (Q. 2)}, it is  worth noting that the foundation for this extension has already been established in \cite{Borgs1996}.

The objective of this section is to judiciously extend the methodology from the paper \cite{Borgs1996, Borgs2000} to achieve step \hyperlink{Q1}{\bf (Q. 1)} in the context of electron-phonon interacting systems. The electron-phonon interaction model considered in this paper has a more intricate quantum perturbation component compared to models analyzed in previous studies \cite{Borgs2000, Borgs1996, Datta1996, datta1996low, Ueltschi2002}. Consequently, it is essential to acknowledge the difficulty in constructing an appropriate contour model.
In Section \ref{Sec4}, we demonstrate that the space-time contour model constructed in this section satisfies the fundamental assumptions for applying the theory in \cite{Borgs1996}.

\subsection{Contour representation of the partition  function}\label{Sec3.1}

Let $M\in \BbbN$ and $\tilde{\beta}=\beta/M$. The partition function can be  expressed as follows:
\be
Z_{\ell, \vLa}=\Tr\left[T_{\vLa}^M\right],\quad T_{\vLa}=\ex^{-\tilde{\beta} H_{\ell , \vLa}}.\label{PartitionF}
\ee
We will choose the value of $M$ later so that it satisfies the appropriate conditions.
By using the Duhamel formula\footnote{By Remark \ref{Remlamb}, the formula \eqref{DefDuha} holds for complex numbers $\lambda$ as well.}, one obtains
\begin{align}
T_{\vLa}=\sum_{\bs{m}\in \vGa_{\vLa}} \frac{(-\lambda \bs{t})^{\bs{m}}}{\bs{m}!} \int d\bs{\tau}^{\bs{m}}
T_{\vLa}(\bs{\tau},  \bs{m}). \label{DefDuha}
\end{align}
Note that the right-hand side of \eqref{DefDuha} converges under the operator norm topology.
We shall provide explicit definitions for the symbols appearing in the above formula. Firstly, we define $\vGa_{\vLa}$ as $\vGa_{\vLa}=\BbbZ_+^{\mathcal{A}_{\vLa}}$. For every $\bs{m}$ belonging to $\vGa_{\vLa}$, we define the support of $\bs{m}$ as $\supp \bs{m}:=\{A\in\mathcal{A}_{\vLa} : m_A\neq 0\}$,
where $\mathcal{A}_{\vLa}$ is defined by \eqref{DefAL}. Additionally, we let $\mathfrak{S}_n$ denote the symmetric group on $n$ objects.
Expressing $\supp \bs{m}$ as $\mathcal{A}=\{A_1, \dots, A_k\}$ and $\bs{\tau}$ as $\{\tau_A^1, \dots, \tau_A^{m_A} : A\in \mathcal{A}\}$ with $\tau_A^i\in [0, \tilde{\beta}]$, we can define $T_{\vLa}(\bs{\tau}, \bs{m})$ as follows: Firstly, we select $\pi\in \mathfrak{S}_{|\bs{m}|}$ that satisfies the following conditions:
\begin{align}
(s_1, \dots, s_{|\bs{m}|})=\pi(\tau_{A_1}^1, \dots, \tau_{A_1}^{m_{A_1}},\dots, \tau_{A_k}^1, \dots, \tau_{A_k}^{m_{A_k}}), \quad
s_1\le s_2 \le \cdots \le s_{|\bs{m}|}. \label{PiS}
\end{align}
With $\pi$ selected in this manner, we define $\tilde{h}_i$ as
$
(\tilde{h}_1, \dots, \tilde{h}_{|\bs{m}|}):=\pi(h_{A_1}, \dots, h_{A_1}, \dots, h_{A_k}, \dots, h_{A_k})
$ and then define $T_{\vLa}(\bs{\tau}, \bs{m})$ as
\be
T_{\vLa}(\bs{\tau}, \bs{m})=\tilde{h}_1(s_1) \tilde{h}_2(s_2)\cdots \tilde{h}_{|\bs{m}|}(s_{|\bs{m}|})\, \ex^{-\tilde{\beta} H_{\ell, \vLa}^{(0)}},\quad \tilde{h}_i(s)=\ex^{-s H_{\ell , \vLa}^{(0)}}\tilde{h}_i\, \ex^{s H_{\ell , \vLa}^{(0)}}.
\ee
In addition, to simplify notation, we introduce the following symbols:
\be
\int d\bs{\tau}^{\bs m}=\prod_{A \in \supp \bs{m}} \int_0^{\tilde{\beta}} d\tau_{A}^{1}\cdots \int_0^{\tilde{\beta}} d\tau_{A}^{m_{A}}, \quad
\bs{m}!=\prod_{A \in \mathcal{A}_{\vLa}} m_{A}!,\quad
(-\lambda \bs{t})^{\bs m}=\prod_{A \in \mathcal{A}_{\vLa}} (-\lambda t_{A})^{m_{A}}.
\ee
Given the above setup, the partition function can be expressed as follows:
\begin{align}
Z_{\ell, \vLa}&= \sum_{\bs{n}_{\vLa}\in \N_{\vLa}}\Bigg[
 \prod_{t=1}^M \sum_{\bs{m}_t\in \vGa_{\vLa}}\frac{(-\lambda \bs{t})^{\bs{m}_t}}{\bs{m}_t!} \int d\bs{\tau}_t^{\bs{m}_t}\Bigg]
\bra{\bs{n}_{\vLa}} T_{\vLa}(\bs{\tau}_1,  \bs{m}_1) \cdots  T_{\vLa}(\bs{\tau}_M,  \bs{m}_M)
\ket{\bs{n}_{\vLa}}.\label{PartFu1}
\end{align}
As seen in Section \ref{Sec4}, the right-hand side of the above equation absolutely converges when the parameters satisfy appropriate conditions.
Therefore, the interchange of the sum and integral that was implicitly performed in deriving the equation \eqref{PartFu1} is justified.
To simplify the description further, we define the symbol: 
\be
\int_{\varXi_B} \mathscr{D}(\omega):=\sum_{{\mathcal{A}=\{A_1, \dots, A_k\}}\atop{\cup_{j=1}^k A_j=B}} \sum_{\bs{m}: \supp \bs{m}=\mathcal{A}}\Bigg[
\frac{(-\lambda \bs{t})^{\bs m}}{\bs{m}!} \int d\bs{\tau}
\Bigg], 
\ee
where 
\begin{align}
\varXi_B:=\left\{\omega=(\bs{\tau}, \bs{m}) : \bigcup_{A\in \supp \bs{m}}A=B, \ \mbox{$\bs{\tau}$ satisfies \eqref{PiS}}\right\}.
\end{align}
Then we can represent $T_{\vLa}$ in the following manner:
\be
T_{\vLa}=\sum_{B\subseteq \vLa} T_{\vLa}(B),\label{TLDec}
\ee
where 
\begin{align}
T_{\vLa}(B)
=\int_{\varXi_B} \mathscr{D}(\omega)T_{\vLa}(\omega).
\end{align}
It is worth noting that if $B, B\rq{}\subseteq\vLa$ satisfy  $\overline{B}\cap \overline{B}\rq{}=\varnothing$, then we have
\be
[T_{\vLa}(B), T_{\vLa}(B\rq{})]=0.
\ee
Here, recall that $\overline{B}$ is defined in \eqref{DefBarB}.

Given an electron-phonon configuration $\bn_{\vLa}=(\bn_{\rm e}, \bn_{\rm p})\in \mathcal{N}_{\vLa}$ in $\vLa$,   we denote the energy associated with the configuration $\bn_B$ for each region $B\subseteq \vLa$ as $E(\bs{n}_{B})$: 
\be
E(\bs{n}_{B})= \sum_{x\in B}\left\{ \varPhi^{(\ell)}_{{\rm eff}, x}(\bn_{{\rm e}})+\omega_0 n_{x, \rm p}\right\}.
\ee
Moreover, 
we define $H_{\ell, B}^{(0)}(\bn_{\overline{\partial B}})$  by means of the conditional expectation value as:
\be
H_{\ell, B}^{(0)}(\bn_{\overline{\partial B}})=\Braket{\bn_{\overline{\partial B}}|H_{\ell, \overline{B}}^{(0)}|\bn_{\overline{\partial B}}},\quad  H_{\ell, \overline{B}}^{(0)}=\sum_{x\in \overline{B}}\{ \varPhi_{{\rm eff}, x}^{(\ell)}(\bs{\hn}_{\rm e})+\omega_0 b_x^*b_x\}, 
\ee
 where 
  $\overline{\partial B}
:=\{x\notin B : \mathrm{dist}(x, B) \le 2R_0\}$. 
Note that 
$H_{\ell, B}^{(0)}(\bn_{\overline{\partial B}})
$ 
belongs to $\mathfrak{A}^{(0)}_B$.
 We readily confirm that 
\be
H_{\ell, \vLa}^{(0)}\ket{\bn_{\vLa}}=\left\{
E(\bn_{\vLa \setminus \overline{B}})+H_{\ell, B}^{(0)}(\bn_{\overline{\partial B}})
\right\}\ket{\bn_{\vLa}}.
\ee
For any given $\omega=(\bs{\tau}, \bs{m})\in \varXi_B$, we define
\begin{align}
\mathcal{T}_B(\omega,   \bn_{\overline{\partial B}})&=\tilde{h}_{1, \bn_{\overline{\partial B}}}(s_1) \tilde{h}_{2, \bn_{\overline{\partial B}}}(s_2)\cdots \tilde{h}_{|\bs{m}|, \bn_{\overline{\partial B}}}(s_{|\bs{m}|})\, \ex^{-\tilde{\beta} H_{\ell, B}^{(0)}(\bn_{\overline{\partial B}})}, \\
 \tilde{h}_{i, \bn_{\overline{\partial B}}}(s)&=\ex^{-s H_{\ell , B}^{(0)}(\bn_{\overline{\partial B}})}\tilde{h}_i\, \ex^{s H_{\ell , B}^{(0)}(\bn_{\overline{\partial B}})}.
\end{align}
Based on the above observation, we conclude that
\be
T_{\vLa}(\omega) \ket{\bn_{\vLa}}=\ex^{-\tilde{\beta}E(\bs{n}_{\vLa\setminus \overline{B}})}\mathcal{T}_B(\omega,   \bn_{\overline{\partial B}})\ket{\bn_{\vLa}} \quad (\omega\in \varXi_B).
\ee
Thus, by utilizing the identity  $\mathbbm{1}=\sum_{\bn_{\vLa}\in \mathcal{N}_{\vLa}} P_{\bn_{\vLa}}\ (P_{\bn_{\vLa}}=\ket{\bs{n}_{\vLa}}\!\bra{\bs{n}_{\vLa}})$, we obtain the following expression:
\begin{align}
T_{\vLa}(B)
=\sum_{\bs{n}_{\vLa}\in \N_{\vLa}} \ex^{-\tilde{\beta} E(\bs{n}_{\vLa\setminus \overline{B}})}
\mathcal{T}_B(\bn_{\overline{\partial B}}) P_{\bn_{\vLa}},   \label{TLB}
\end{align}
where
\be
\mathcal{T}_B(\bn_{\overline{\partial B}})=\int_{\vXi_B}\mathscr{D}(\omega) \mathcal{T}_B(\omega,   \bn_{\overline{\partial B}}). 
\ee

Thus, by setting $\varSigma=(B, \bs{n}_{\vLa})$, and using equations \eqref{TLDec} and \eqref{TLB}, we can express $T_{\vLa}$ as follows:
\begin{equation}
T_{\vLa}=\sum_{\varSigma} K_{\vLa}(\varSigma), \label{TSigma}
\end{equation}
where 
\be
K(\vSi)=
K(B, \bs{n}_{\vLa})=
\ex^{-\tilde{\beta}E(\bs{n}_{\vLa\setminus \overline{B}})}
\mathcal{T}_B(\bn_{\overline{\partial B}}) P_{\bn_{\vLa}}.
\ee
In particular, when $B=\varnothing$, we have
\begin{equation}
K(\varnothing, \bs{n}_{\vLa})=\ex^{-\tilde{\beta}E(\bs{n}_{\vLa})}P_{\bn_{\vLa}}
=\ex^{-\tilde{\beta} H_{\ell, \vLa}^{(0)}} P_{\bn_{\vLa}}.
\end{equation}
Therefore, using equation \eqref{PartFu1}, we obtain the following expression for the partition function:
\begin{equation}
Z_{\ell, \vLa}=\Bigg[\prod_{t=1}^M\sum_{\vSi_t}\Bigg] W(\vSi_1, \dots, \vSi_M),
\quad W(\vSi_1, \dots, \vSi_M)=\Tr\big[
K_{\vSi_1}\cdots K_{\vSi_M}
\big]. \label{ZSi}
\end{equation}

\begin{comment}
Next, we introduce the concept of a ``space-time lattice":
\be
\mathbb{L}=\BbbZ^d \times \{1, \dots, M\}_{\rm P},\quad \mathbb{L}_{\vLa}=\vLa \times \{1, \dots, M\}_{\rm P}.
\ee
Furthermore, we define
\be
\mathbb{T}=\BbbR^d\times [0, M]_{\rm P},\quad\mathbb{T}_{\vLa}=\{x\in \BbbR^d : \mathrm{dist}(x, \vLa)\le 1/2\}\times [0, M]_{\rm P}, 
\ee
where the subscript $\rm P$ indicates that we are imposing periodicity, i.e., $M+a \equiv a\ (a>0)$.
\end{comment}

To construct the contour representation of the partition function $Z_{\ell, \vLa}$, 
we will now introduce the concept of a ``space-time lattice"  by $\mathbb{L}=\BbbZ^d \times \{1, \dots, M\}_{\rm P}$, where  $M$   coresponds  to  the temporal extent of the lattice. Additionally, we introduce a sub-lattice obtained by restricting the spatial component  of the space-time lattice $\mathbb{L}$ to $\vLa$: $\mathbb{L}_{\vLa}=\vLa \times \{1, \dots, M\}_{\rm P}$.

To extend the lattices to continuous space, we define $\mathbb{T}=\BbbR^d\times [0, M]_{\rm P}$, which represents the space-time continuum, and $\mathbb{T}_{\vLa}=\{x\in \BbbR^d : \mathrm{dist}(x, \vLa)\le 1/2\}\times [0, M]_{\rm P}$. Note that the subscript $\rm P$ indicates that we are imposing periodicity, i.e., $M+a \equiv a\ (a>0)$.

We denote the elementary cube centered at the site $(x,t)$ in the space-time lattice as $C(x,t)$. To be more precise, $C(x, t)$ is defined as 
 $C(x, t)=\{(y, s)\in\mathbb{T}_{\vLa} : 
|t-s|\le 1/2, |x_i-y_i|\le 1/2\ (i=1, \dots, d)
 \}$.  
Let $\mathsf{E}$ denote the collection of all sets obtained by assembling the elementary cubes centered at the space-time lattice points of $\mathbb{L}$.
For every $\bs{B}\in \mathsf{E}$, 
we denote the set of centers of elementary cubes contained in the set $\bs{B}$ as $[\bs{B}]$, that is,  
\be[\bs{B}] = \bs{B} \cap \mathbb{L}.\ee
The spatial and temporal projections of $[\bs{B}]$ are defined as follows:
\begin{align}
[\bs{B}]_{\rm S}&=\{x\in \BbbZ^d : \mbox{$(x, t)\in [\bs{B}]$ holds for some $t\in \mathbb{M}$}\}, \\
[\bs{B}]_{\rm T}&=\{t\in \mathbb{M} : \mbox{$(x, t)\in [\bs{B}]$ holds for some $x\in \BbbZ^d$}\},
\end{align}
where $\mathbb{M}=\{1, 2, \dots, M\}$.

The cross-sections of $[\bs{B}]$ in the space-time lattice are defined as follows:
\begin{align}
\mathsf{S}_{x_0}(\bs{B}) = \{(x_0, t)\in \mathbb{L} : (x_0, t) \in \bs{B}\}, \quad
\mathsf{T}_{t_0}(\bs{B}) = \{(x, t_0)\in \mathbb{L} : (x, t_0) \in \bs{B}\}.
\end{align}
Therefore, the projections of the cross-sections of $[\bs{B}]$ are given by:
\begin{align}
[\mathsf{S}_{x_0}(\bs{B})]_{\rm T} = \{t \in \mathbb{M} : (x_0, t) \in\mathsf{S}_{x_0}(\bs{B})\}, \quad
[\mathsf{T}_{t_0}(\bs{B})]_{\rm S} = \{x \in \BbbZ^d : (x, t_0) \in \mathsf{T}_{t_0}(\bs{B})\}.
\end{align}
For the reader's clarity, the symbols introduced so far are visually explained in Figure \ref{Grph1}.

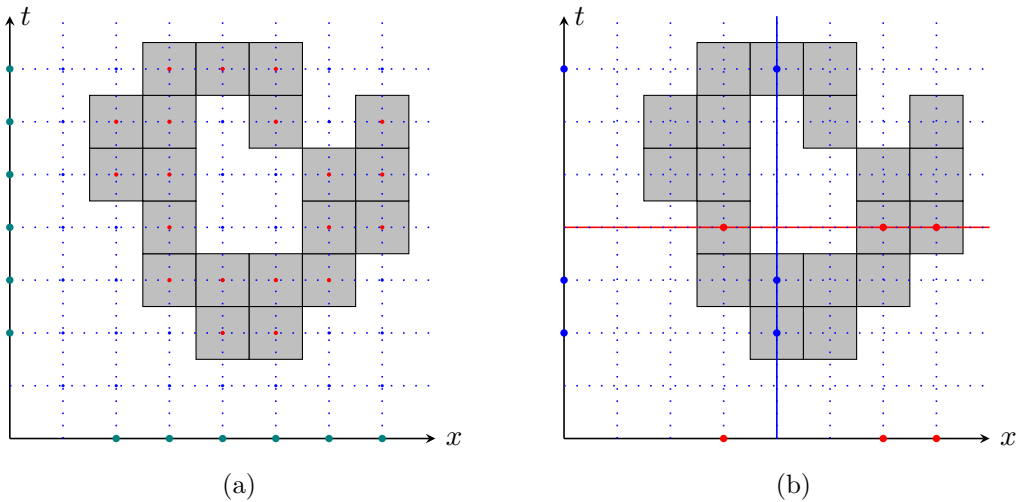
\begin{figure}[h]
  \begin{minipage}[b]{0.45\linewidth}
    \centering
     \begin{tikzpicture}[scale=0.7]
%箱0行目１
\coordinate(Z11)at(\Ct+1,\Ct-1);
\coordinate(Z12)at(\Ct+1,\Ct);
\coordinate(Z13)at(\Ct+2,\Ct);
\coordinate(Z14)at(\Ct+2,\Ct-1);
\filldraw[fill=lightgray](Z11)--(Z12)--(Z13)--(Z14)--cycle;
%箱0行目２
\coordinate(Z21)at(\Ct+2,\Ct-1);
\coordinate(Z22)at(\Ct+2,\Ct);
\coordinate(Z23)at(\Ct+3,\Ct);
\coordinate(Z24)at(\Ct+3,\Ct-1);
\filldraw[fill=lightgray](Z21)--(Z22)--(Z23)--(Z24)--cycle;

%箱１行目１
\coordinate(A11)at(\Ct,\Ct);
\coordinate(A12)at(\Ct,\Ct+1);
\coordinate(A13)at(\Ct+1,\Ct+1);
\coordinate(A14)at(\Ct+1,\Ct);
\filldraw[fill=lightgray](A11)--(A12)--(A13)--(A14)--cycle;
%箱１行目２
\coordinate(A21)at(\Ct+1,\Ct);
\coordinate(A22)at(\Ct+1,\Ct+1);
\coordinate(A23)at(\Ct+2,\Ct+1);
\coordinate(A24)at(\Ct+2,\Ct);
\filldraw[fill=lightgray](A21)--(A22)--(A23)--(A24)--cycle;
%箱１行目３
\coordinate(A31)at(\Ct+2,\Ct);
\coordinate(A32)at(\Ct+2,\Ct+1);
\coordinate(A33)at(\Ct+3,\Ct+1);
\coordinate(A34)at(\Ct+3,\Ct);
\filldraw[fill=lightgray](A31)--(A32)--(A33)--(A34)--cycle;
%箱１行目4
\coordinate(A41)at(\Ct+2+1,\Ct);
\coordinate(A42)at(\Ct+2+1,\Ct+1);
\coordinate(A43)at(\Ct+3+1,\Ct+1);
\coordinate(A44)at(\Ct+3+1,\Ct);
\filldraw[fill=lightgray](A41)--(A42)--(A43)--(A44)--cycle;

%箱2行目１
\coordinate(B11)at(\Ct,\Ct+1);
\coordinate(B12)at(\Ct,\Ct+2);
\coordinate(B13)at(\Ct+1,\Ct+2);
\coordinate(B14)at(\Ct+1,\Ct+1);
\filldraw[fill=lightgray](B11)--(B12)--(B13)--(B14)--cycle;
%箱2行目２
%\coordinate(B21)at(\Ct+1,\Ct+1);
%\coordinate(B22)at(\Ct+1,\Ct+2);
%\coordinate(B23)at(\Ct+2,\Ct+2);
%\coordinate(B24)at(\Ct+2,\Ct+1);
%\filldraw[fill=lightgray](B21)--(B22)--(B23)--(B24)--cycle;
%箱2行目３
\coordinate(B31)at(\Ct+4,\Ct+1);
\coordinate(B32)at(\Ct+4,\Ct+2);
\coordinate(B33)at(\Ct+5,\Ct+2);
\coordinate(B34)at(\Ct+5,\Ct+1);
\filldraw[fill=lightgray](B31)--(B32)--(B33)--(B34)--cycle;
%箱2行目4
\coordinate(B41)at(\Ct+3,\Ct+1);
\coordinate(B42)at(\Ct+3,\Ct+2);
\coordinate(B43)at(\Ct+4,\Ct+2);
\coordinate(B44)at(\Ct+4,\Ct+1);
\filldraw[fill=lightgray](B41)--(B42)--(B43)--(B44)--cycle;

%箱3行目１
\coordinate(C11)at(\Ct-1,\Ct+2);
\coordinate(C12)at(\Ct-1,\Ct+3);
\coordinate(C13)at(\Ct,\Ct+3);
\coordinate(C14)at(\Ct,\Ct+2);
\filldraw[fill=lightgray](C11)--(C12)--(C13)--(C14)--cycle;
%箱3行目２
\coordinate(C21)at(\Ct,\Ct+2);
\coordinate(C22)at(\Ct,\Ct+3);
\coordinate(C23)at(\Ct+1,\Ct+3);
\coordinate(C24)at(\Ct+1,\Ct+2);
\filldraw[fill=lightgray](C21)--(C22)--(C23)--(C24)--cycle;
%箱3行目３
\coordinate(C31)at(\Ct+3,\Ct+2);
\coordinate(C32)at(\Ct+3,\Ct+3);
\coordinate(C33)at(\Ct+4,\Ct+3);
\coordinate(C34)at(\Ct+4,\Ct+2);
\filldraw[fill=lightgray](C31)--(C32)--(C33)--(C34)--cycle;
%箱3行目4
\coordinate(C41)at(\Ct+3+1,\Ct+2);
\coordinate(C42)at(\Ct+3+1,\Ct+3);
\coordinate(C43)at(\Ct+4+1,\Ct+3);
\coordinate(C44)at(\Ct+4+1,\Ct+2);
\filldraw[fill=lightgray](C41)--(C42)--(C43)--(C44)--cycle;

%箱4行目１
\coordinate(D11)at(\Ct-1,\Ct+3);
\coordinate(D12)at(\Ct-1,\Ct+4);
\coordinate(D13)at(\Ct,\Ct+4);
\coordinate(D14)at(\Ct,\Ct+3);
\filldraw[fill=lightgray](D11)--(D12)--(D13)--(D14)--cycle;
%箱4行目２
\coordinate(D21)at(\Ct+4-2,\Ct+3);
\coordinate(D22)at(\Ct+4-2,\Ct+4);
\coordinate(D23)at(\Ct+5-2,\Ct+4);
\coordinate(D24)at(\Ct+5-2,\Ct+3);
\filldraw[fill=lightgray](D21)--(D22)--(D23)--(D24)--cycle;
%箱4行目３
\coordinate(D31)at(\Ct+4,\Ct+3);
\coordinate(D32)at(\Ct+4,\Ct+4);
\coordinate(D33)at(\Ct+5,\Ct+4);
\coordinate(D34)at(\Ct+5,\Ct+3);
\filldraw[fill=lightgray](D31)--(D32)--(D33)--(D34)--cycle;
%箱4行目4
\coordinate(D41)at(\Ct,\Ct+3);
\coordinate(D42)at(\Ct,\Ct+4);
\coordinate(D43)at(\Ct+1,\Ct+4);
\coordinate(D44)at(\Ct+1,\Ct+3);
\filldraw[fill=lightgray](D41)--(D42)--(D43)--(D44)--cycle;

%箱5行目１
\coordinate(E11)at(\Ct,\Ct+3+1);
\coordinate(E12)at(\Ct,\Ct+4+1);
\coordinate(E13)at(\Ct+1,\Ct+4+1);
\coordinate(E14)at(\Ct+1,\Ct+3+1);
\filldraw[fill=lightgray](E11)--(E12)--(E13)--(E14)--cycle;
%箱5行目２
\coordinate(E21)at(\Ct+1,\Ct+3+1);
\coordinate(E22)at(\Ct+1,\Ct+4+1);
\coordinate(E23)at(\Ct+2,\Ct+4+1);
\coordinate(E24)at(\Ct+2,\Ct+3+1);
\filldraw[fill=lightgray](E21)--(E22)--(E23)--(E24)--cycle;
%箱4行目３
%\coordinate(E31)at(\Ct+4,\Ct+3);
%\coordinate(E32)at(\Ct+4,\Ct+4);
%\coordinate(E33)at(\Ct+5,\Ct+4);
%\coordinate(E34)at(\Ct+5,\Ct+3);
%\filldraw[fill=lightgray](E31)--(E32)--(E33)--(E34)--cycle;
%箱5行目4
\coordinate(E41)at(\Ct+2,\Ct+3+1);
\coordinate(E42)at(\Ct+2,\Ct+4+1);
\coordinate(E43)at(\Ct+3,\Ct+4+1);
\coordinate(E44)at(\Ct+3,\Ct+3+1);
\filldraw[fill=lightgray](E41)--(E42)--(E43)--(E44)--cycle;

\draw[->,>=stealth,semithick] (0,0)--(\Ti,0)node[right]{$x$}; %x軸
\draw[->,>=stealth,semithick] (0,0)--(0,\Sp)node[right]{$t$}; %y軸
%方眼用紙設定
\draw[name path=X1,loosely dotted,  semithick, blue] (1,0) -- (1,\Ti);
\draw[name path=X2,loosely dotted, semithick, blue] (2,0) -- (2,\Ti);
\draw[name path=X3,loosely dotted, semithick, blue] (3,0) -- (3,\Ti);
\draw[name path=X4,loosely dotted, semithick, blue] (5,0) -- (5,\Ti);
\draw[name path=X5,loosely dotted, semithick, blue] (4,0) -- (4,\Ti);
\draw[name path=X6,loosely dotted, semithick, blue] (6,0) -- (6,\Ti);
\draw[name path=X7,loosely dotted, semithick, blue] (7,0) -- (7,\Ti);
\draw[name path=Y1,loosely dotted, semithick, blue] (0,1) -- (\Sp,1);
\draw[name path=Y2, loosely dotted, semithick, blue] (0,2) -- (\Sp,2);
\draw[name path=Y3, loosely dotted, semithick, blue] (0,3) -- (\Sp,3);
\draw[name path=Y4,loosely dotted, semithick, blue] (0,4) -- (\Sp,4);
\draw[name path=Y5,loosely dotted, semithick, blue] (0,5) -- (\Sp,5);
\draw[name path=Y6,loosely dotted, semithick, blue] (0,6) -- (\Sp,6);
\draw[name path=Y7,loosely dotted, semithick, blue] (0,7) -- (\Sp,7);

\foreach\Q in { 1,2,3, 4, 5, 6, 7}\foreach\P in { 1,2,3, 4, 5, 6, 7}\fill[blue](\P,\Q)circle(0.03);
\foreach\Q in { 5, 6}\fill[red](2,\Q)circle(0.04);
\foreach\Q in { 3,4, 5, 6, 7}\fill[red](3,\Q)circle(0.04);
\foreach\Q in { 2,3, 7}\fill[red](4,\Q)circle(0.04);
\foreach\Q in { 2,3, 6, 7}\fill[red](5,\Q)circle(0.04);
\foreach\Q in { 3, 4, 5}\fill[red](6,\Q)circle(0.04);
\foreach\Q in { 4, 5, 6}\fill[red](7,\Q)circle(0.04);

\foreach\P in { 2,3, 4, 5, 6, 7}\fill[teal](\P,0)circle(0.07);
\foreach\Q in { 2,3, 4, 5, 6, 7}\fill[teal](0,\Q)circle(0.07);

\end{tikzpicture}
    \subcaption{}\label{a}
  \end{minipage}
  \begin{minipage}[b]{0.45\linewidth}
    \centering
      \begin{tikzpicture}[scale=0.7]
%箱0行目１
\coordinate(Z11)at(\Ct+1,\Ct-1);
\coordinate(Z12)at(\Ct+1,\Ct);
\coordinate(Z13)at(\Ct+2,\Ct);
\coordinate(Z14)at(\Ct+2,\Ct-1);
\filldraw[fill=lightgray](Z11)--(Z12)--(Z13)--(Z14)--cycle;
%箱0行目２
\coordinate(Z21)at(\Ct+2,\Ct-1);
\coordinate(Z22)at(\Ct+2,\Ct);
\coordinate(Z23)at(\Ct+3,\Ct);
\coordinate(Z24)at(\Ct+3,\Ct-1);
\filldraw[fill=lightgray](Z21)--(Z22)--(Z23)--(Z24)--cycle;

%箱１行目１
\coordinate(A11)at(\Ct,\Ct);
\coordinate(A12)at(\Ct,\Ct+1);
\coordinate(A13)at(\Ct+1,\Ct+1);
\coordinate(A14)at(\Ct+1,\Ct);
\filldraw[fill=lightgray](A11)--(A12)--(A13)--(A14)--cycle;
%箱１行目２
\coordinate(A21)at(\Ct+1,\Ct);
\coordinate(A22)at(\Ct+1,\Ct+1);
\coordinate(A23)at(\Ct+2,\Ct+1);
\coordinate(A24)at(\Ct+2,\Ct);
\filldraw[fill=lightgray](A21)--(A22)--(A23)--(A24)--cycle;
%箱１行目３
\coordinate(A31)at(\Ct+2,\Ct);
\coordinate(A32)at(\Ct+2,\Ct+1);
\coordinate(A33)at(\Ct+3,\Ct+1);
\coordinate(A34)at(\Ct+3,\Ct);
\filldraw[fill=lightgray](A31)--(A32)--(A33)--(A34)--cycle;
%箱１行目4
\coordinate(A41)at(\Ct+2+1,\Ct);
\coordinate(A42)at(\Ct+2+1,\Ct+1);
\coordinate(A43)at(\Ct+3+1,\Ct+1);
\coordinate(A44)at(\Ct+3+1,\Ct);
\filldraw[fill=lightgray](A41)--(A42)--(A43)--(A44)--cycle;

%箱2行目１
\coordinate(B11)at(\Ct,\Ct+1);
\coordinate(B12)at(\Ct,\Ct+2);
\coordinate(B13)at(\Ct+1,\Ct+2);
\coordinate(B14)at(\Ct+1,\Ct+1);
\filldraw[fill=lightgray](B11)--(B12)--(B13)--(B14)--cycle;
%箱2行目２
%\coordinate(B21)at(\Ct+1,\Ct+1);
%\coordinate(B22)at(\Ct+1,\Ct+2);
%\coordinate(B23)at(\Ct+2,\Ct+2);
%\coordinate(B24)at(\Ct+2,\Ct+1);
%\filldraw[fill=lightgray](B21)--(B22)--(B23)--(B24)--cycle;
%箱2行目３
\coordinate(B31)at(\Ct+4,\Ct+1);
\coordinate(B32)at(\Ct+4,\Ct+2);
\coordinate(B33)at(\Ct+5,\Ct+2);
\coordinate(B34)at(\Ct+5,\Ct+1);
\filldraw[fill=lightgray](B31)--(B32)--(B33)--(B34)--cycle;
%箱2行目4
\coordinate(B41)at(\Ct+3,\Ct+1);
\coordinate(B42)at(\Ct+3,\Ct+2);
\coordinate(B43)at(\Ct+4,\Ct+2);
\coordinate(B44)at(\Ct+4,\Ct+1);
\filldraw[fill=lightgray](B41)--(B42)--(B43)--(B44)--cycle;

%箱3行目１
\coordinate(C11)at(\Ct-1,\Ct+2);
\coordinate(C12)at(\Ct-1,\Ct+3);
\coordinate(C13)at(\Ct,\Ct+3);
\coordinate(C14)at(\Ct,\Ct+2);
\filldraw[fill=lightgray](C11)--(C12)--(C13)--(C14)--cycle;
%箱3行目２
\coordinate(C21)at(\Ct,\Ct+2);
\coordinate(C22)at(\Ct,\Ct+3);
\coordinate(C23)at(\Ct+1,\Ct+3);
\coordinate(C24)at(\Ct+1,\Ct+2);
\filldraw[fill=lightgray](C21)--(C22)--(C23)--(C24)--cycle;
%箱3行目３
\coordinate(C31)at(\Ct+3,\Ct+2);
\coordinate(C32)at(\Ct+3,\Ct+3);
\coordinate(C33)at(\Ct+4,\Ct+3);
\coordinate(C34)at(\Ct+4,\Ct+2);
\filldraw[fill=lightgray](C31)--(C32)--(C33)--(C34)--cycle;
%箱3行目4
\coordinate(C41)at(\Ct+3+1,\Ct+2);
\coordinate(C42)at(\Ct+3+1,\Ct+3);
\coordinate(C43)at(\Ct+4+1,\Ct+3);
\coordinate(C44)at(\Ct+4+1,\Ct+2);
\filldraw[fill=lightgray](C41)--(C42)--(C43)--(C44)--cycle;

%箱4行目１
\coordinate(D11)at(\Ct-1,\Ct+3);
\coordinate(D12)at(\Ct-1,\Ct+4);
\coordinate(D13)at(\Ct,\Ct+4);
\coordinate(D14)at(\Ct,\Ct+3);
\filldraw[fill=lightgray](D11)--(D12)--(D13)--(D14)--cycle;
%箱4行目２
\coordinate(D21)at(\Ct+4-2,\Ct+3);
\coordinate(D22)at(\Ct+4-2,\Ct+4);
\coordinate(D23)at(\Ct+5-2,\Ct+4);
\coordinate(D24)at(\Ct+5-2,\Ct+3);
\filldraw[fill=lightgray](D21)--(D22)--(D23)--(D24)--cycle;
%箱4行目３
\coordinate(D31)at(\Ct+4,\Ct+3);
\coordinate(D32)at(\Ct+4,\Ct+4);
\coordinate(D33)at(\Ct+5,\Ct+4);
\coordinate(D34)at(\Ct+5,\Ct+3);
\filldraw[fill=lightgray](D31)--(D32)--(D33)--(D34)--cycle;
%箱4行目4
\coordinate(D41)at(\Ct,\Ct+3);
\coordinate(D42)at(\Ct,\Ct+4);
\coordinate(D43)at(\Ct+1,\Ct+4);
\coordinate(D44)at(\Ct+1,\Ct+3);
\filldraw[fill=lightgray](D41)--(D42)--(D43)--(D44)--cycle;

%箱5行目１
\coordinate(E11)at(\Ct,\Ct+3+1);
\coordinate(E12)at(\Ct,\Ct+4+1);
\coordinate(E13)at(\Ct+1,\Ct+4+1);
\coordinate(E14)at(\Ct+1,\Ct+3+1);
\filldraw[fill=lightgray](E11)--(E12)--(E13)--(E14)--cycle;
%箱5行目２
\coordinate(E21)at(\Ct+1,\Ct+3+1);
\coordinate(E22)at(\Ct+1,\Ct+4+1);
\coordinate(E23)at(\Ct+2,\Ct+4+1);
\coordinate(E24)at(\Ct+2,\Ct+3+1);
\filldraw[fill=lightgray](E21)--(E22)--(E23)--(E24)--cycle;
%箱4行目３
%\coordinate(E31)at(\Ct+4,\Ct+3);
%\coordinate(E32)at(\Ct+4,\Ct+4);
%\coordinate(E33)at(\Ct+5,\Ct+4);
%\coordinate(E34)at(\Ct+5,\Ct+3);
%\filldraw[fill=lightgray](E31)--(E32)--(E33)--(E34)--cycle;
%箱5行目4
\coordinate(E41)at(\Ct+2,\Ct+3+1);
\coordinate(E42)at(\Ct+2,\Ct+4+1);
\coordinate(E43)at(\Ct+3,\Ct+4+1);
\coordinate(E44)at(\Ct+3,\Ct+3+1);
\filldraw[fill=lightgray](E41)--(E42)--(E43)--(E44)--cycle;

\draw[->,>=stealth,semithick] (0,0)--(\Ti,0)node[right]{$x$}; %x軸
\draw[->,>=stealth,semithick] (0,0)--(0,\Sp)node[right]{$t$}; %y軸
%スライスX
\draw[semithick, blue] (4,0) -- (4,\Ti);
%スライスY
\draw[semithick, red] (0, 4) -- (\Sp,4);

%方眼用紙設定
\draw[name path=X1,loosely dotted,  semithick, blue] (1,0) -- (1,\Ti);
\draw[name path=X2,loosely dotted, semithick, blue] (2,0) -- (2,\Ti);
\draw[name path=X3,loosely dotted, semithick, blue] (3,0) -- (3,\Ti);
\draw[name path=X4,loosely dotted, semithick, blue] (5,0) -- (5,\Ti);
\draw[name path=X5,loosely dotted, semithick, blue] (4,0) -- (4,\Ti);
\draw[name path=X6,loosely dotted, semithick, blue] (6,0) -- (6,\Ti);
\draw[name path=X7,loosely dotted, semithick, blue] (7,0) -- (7,\Ti);
\draw[name path=Y1,loosely dotted, semithick, blue] (0,1) -- (\Sp,1);
\draw[name path=Y2, loosely dotted, semithick, blue] (0,2) -- (\Sp,2);
\draw[name path=Y3, loosely dotted, semithick, blue] (0,3) -- (\Sp,3);
\draw[name path=Y4,loosely dotted, semithick, blue] (0,4) -- (\Sp,4);
\draw[name path=Y5,loosely dotted, semithick, blue] (0,5) -- (\Sp,5);
\draw[name path=Y6,loosely dotted, semithick, blue] (0,6) -- (\Sp,6);
\draw[name path=Y7,loosely dotted, semithick, blue] (0,7) -- (\Sp,7);

%\foreach\Q in { 1,2,3, 4, 5, 6, 7}\foreach\P in { 1,2,3, 4, 5, 6, 7}\fill[blue](\P,\Q)circle(0.03);
%\foreach\Q in { 5, 6}\fill[red](2,\Q)circle(0.04);
%\foreach\Q in { 3,4, 5, 6, 7}\fill[red](3,\Q)circle(0.04);
\foreach\Q in { 2,3, 7}\fill[blue](4,\Q)circle(0.07);
%\foreach\Q in { 2,3, 6, 7}\fill[red](5,\Q)circle(0.04);
%\foreach\Q in { 3, 4, 5}\fill[red](6,\Q)circle(0.04);
%\foreach\Q in { 4, 5, 6}\fill[red](7,\Q)circle(0.04);

\foreach\P in { 3, 6, 7}\fill[red](\P,4)circle(0.07);

\foreach\P in { 3, 6, 7}\fill[red](\P,0)circle(0.07);
\foreach\Q in { 2, 3,  7}\fill[blue](0,\Q)circle(0.07);

\end{tikzpicture}
    \subcaption{}\label{b}
  \end{minipage}
  \caption{(a) Example of $\bs B$ in $1+1$ dimensional space-time. Each gray box represents $C(x, t)$. Blue dots denote space-time lattice points, while red dots represent $[{\bs B}]$. Green dots along the $x$-axis (resp. $t$-axis) correspond to $[{\bs B}]_{\rm S}$ (resp. $[{\bs B}]_{\rm T}$).
  (b) 
The blue dots along the blue line represent the cross-section of $\bs B$ with the plane $x=4$, and the blue dots on the $t$-axis denote the projection $[\mathsf{S}_4({\bs B})]_{\rm T}$ onto the time axis of this cross-section. Similarly, the red dots along the red line represent the cross-section with the plane $t=4$, and the red dots on the $x$-axis denote the projection $[\mathsf{T}_4({\bs B})]_{\rm S}$ onto the $x$-axis of this cross-section.
  }\label{Grph1}
\end{figure}

Suppose that we are given  $\bs{\vSi}_{\mathbb{L}_{\vLa}}=(\vSi_1, \dots, \vSi_M)$,  where  $\vSi_t=(D_q^{(t)}, \bn_{\vLa}^{(t)})$.\footnote{
Using the symbols in \eqref{TSigma} and \eqref{ZSi}, one should write $\vSi_t=(B^{(t)}, \bn_{\vLa}^{(t)})$. However, for the convenience of the following description, the symbol $D_q^{(t)}$ will be used instead of $B^{(t)}$.
} Then, we define $\sigma{(x, t)}$ as follows:
\begin{align}
\sigma_{(x, t)}=
\begin{cases}
g^{(\ell)}_x & (x\notin \vLa)\\
n_{\vLa}^{(t)}(x) & (x\in \vLa).
\end{cases}
\end{align}
Here, we write  $\bn_{\vLa}^{(t)}$ as  $\bn_{\vLa}^{(t)}=\left( n_{\vLa}^{(t)}(x) \right)_{ x\in \vLa}$.

We say that an  elementary cube $C(x, t)$ is {\it in the $\ell$-ground state} if $\sigma_{(y, t)}$ coincides with  the  ground state configuration $\bs{g}^{(\ell)}$ for all $y\in U(x)$, where $U(x)$ is defined by \eqref{DefU}. 
Irrespective of the value of  $\ell$, an elementary cube being in the $\ell$-ground state is commonly referred to as an elementary cube in the ground state.  Conversely, if $C(x, t)$ is not in the ground state, we say that it is {\it in an excited state}  or simply an {\it excited cube}.

An excited cube $C(x,t)$  is said to be {\it in a quantum excitation state} if and only if $x$ belongs to $D_q^{(t)}$. Conversely, if an excited  cube $C(x,t)$  is not in a quantum excitation state, it is referred to as being {\it in a classical excitation state}.

We can  can decompose  $\mathbb{T}_{\vLa}$ into a union of two disjoint sets: the set $\mathbb{D}$ consisting of excited  cubes, and the set $\mathbb{G}$ consisting of elementary  cubes in the ground state. That is, $\mathbb{T}_{\vLa}=\mathbb{D}\sqcup \mathbb{G}$.

 Let $\mathcal{C}(\mathbb{D})$ denote the set of connected components of $\mathbb{D}$, so that we can write $\mathbb{D}=\bigsqcup_{\bs{D}\in \mathcal{C}(\mathbb{D})} \bs{D}$. 
 Each $\bs{D}\in \mathcal{C}(\mathbb{D})$ can be uniquely determined by  $(D^{(1)}, \dots, D^{(M)})$, where $D^{(t)}$ is defined by $D^{(t)}=[\mathsf{T}_t(\bs{D})]_{\rm S}$.
  Therefore, one can identify $\bs{D}$ with its time slice $\prod_{t\in \mathbb{M}}D^{(t)}$, which we denote as $\bs{D}\Eql \prod_{t\in \mathbb{M}}D^{(t)}$. For each $t\in \mathbb{M}$, we define $D_c^{(t)}=D^{(t)} \setminus D_q^{(t)}$. Let $\bs{D}_q\Eql\prod_{t\in \mathbb{M}}D_q^{(t)}$ and $\bs{D}_c\Eql\prod_{t\in \mathbb{M}}D_c^{(t)}$.
 $\bs{D}_q$ is the set consisting of elementary cubes in quantum excitation state; $\bs{D}_c$ is the set  consisting of elementary cubes in  classical excitation state.
  It is clear that $\bs{D}=\bs{D}_q\sqcup \bs{D}_c$. See Figure \ref{Sec3D}.
\begin{figure}[h]
 \begin{center}
 %\includegraphics[scale=0.35]{IEE.png}
%\includegraphics[scale=0.5]{8.pdf}
% \ctikzfig{Graph1}
  \begin{tikzpicture}[scale=0.7]
%箱0行目１
\coordinate(Z11)at(\Ct+1,\Ct-1);
\coordinate(Z12)at(\Ct+1,\Ct);
\coordinate(Z13)at(\Ct+2,\Ct);
\coordinate(Z14)at(\Ct+2,\Ct-1);
\filldraw[fill=gray](Z11)--(Z12)--(Z13)--(Z14)--cycle;
%箱0行目２
\coordinate(Z21)at(\Ct+2,\Ct-1);
\coordinate(Z22)at(\Ct+2,\Ct);
\coordinate(Z23)at(\Ct+3,\Ct);
\coordinate(Z24)at(\Ct+3,\Ct-1);
\filldraw[fill=gray](Z21)--(Z22)--(Z23)--(Z24)--cycle;

%箱１行目１
\coordinate(A11)at(\Ct,\Ct);
\coordinate(A12)at(\Ct,\Ct+1);
\coordinate(A13)at(\Ct+1,\Ct+1);
\coordinate(A14)at(\Ct+1,\Ct);
\filldraw[fill=gray](A11)--(A12)--(A13)--(A14)--cycle;
%箱１行目２
\coordinate(A21)at(\Ct+1,\Ct);
\coordinate(A22)at(\Ct+1,\Ct+1);
\coordinate(A23)at(\Ct+2,\Ct+1);
\coordinate(A24)at(\Ct+2,\Ct);
\filldraw[fill=gray](A21)--(A22)--(A23)--(A24)--cycle;
%箱１行目３
\coordinate(A31)at(\Ct+2,\Ct);
\coordinate(A32)at(\Ct+2,\Ct+1);
\coordinate(A33)at(\Ct+3,\Ct+1);
\coordinate(A34)at(\Ct+3,\Ct);
\filldraw[fill=gray](A31)--(A32)--(A33)--(A34)--cycle;
%箱１行目4
\coordinate(A41)at(\Ct+2+1,\Ct);
\coordinate(A42)at(\Ct+2+1,\Ct+1);
\coordinate(A43)at(\Ct+3+1,\Ct+1);
\coordinate(A44)at(\Ct+3+1,\Ct);
\filldraw[fill=lightgray](A41)--(A42)--(A43)--(A44)--cycle;

%箱2行目１
\coordinate(B11)at(\Ct,\Ct+1);
\coordinate(B12)at(\Ct,\Ct+2);
\coordinate(B13)at(\Ct+1,\Ct+2);
\coordinate(B14)at(\Ct+1,\Ct+1);
\filldraw[fill=gray](B11)--(B12)--(B13)--(B14)--cycle;
%箱2行目２
\coordinate(B21)at(\Ct+1,\Ct+1);
\coordinate(B22)at(\Ct+1,\Ct+2);
\coordinate(B23)at(\Ct+2,\Ct+2);
\coordinate(B24)at(\Ct+2,\Ct+1);
\filldraw[fill=gray](B21)--(B22)--(B23)--(B24)--cycle;
%箱2行目３
\coordinate(B31)at(\Ct+4,\Ct+1);
\coordinate(B32)at(\Ct+4,\Ct+2);
\coordinate(B33)at(\Ct+5,\Ct+2);
\coordinate(B34)at(\Ct+5,\Ct+1);
\filldraw[fill=lightgray](B31)--(B32)--(B33)--(B34)--cycle;
%箱2行目4
\coordinate(B41)at(\Ct+3,\Ct+1);
\coordinate(B42)at(\Ct+3,\Ct+2);
\coordinate(B43)at(\Ct+4,\Ct+2);
\coordinate(B44)at(\Ct+4,\Ct+1);
\filldraw[fill=lightgray](B41)--(B42)--(B43)--(B44)--cycle;
%箱2行目5
\coordinate(B51)at(\Ct+2,\Ct+1);
\coordinate(B52)at(\Ct+2,\Ct+2);
\coordinate(B53)at(\Ct+3,\Ct+2);
\coordinate(B54)at(\Ct+3,\Ct+1);
\filldraw[fill=gray](B51)--(B52)--(B53)--(B54)--cycle;

%箱3行目１
\coordinate(C11)at(\Ct-1,\Ct+2);
\coordinate(C12)at(\Ct-1,\Ct+3);
\coordinate(C13)at(\Ct,\Ct+3);
\coordinate(C14)at(\Ct,\Ct+2);
\filldraw[fill=gray](C11)--(C12)--(C13)--(C14)--cycle;
%箱3行目２
\coordinate(C21)at(\Ct,\Ct+2);
\coordinate(C22)at(\Ct,\Ct+3);
\coordinate(C23)at(\Ct+1,\Ct+3);
\coordinate(C24)at(\Ct+1,\Ct+2);
\filldraw[fill=gray](C21)--(C22)--(C23)--(C24)--cycle;
%箱3行目３
\coordinate(C31)at(\Ct+3,\Ct+2);
\coordinate(C32)at(\Ct+3,\Ct+3);
\coordinate(C33)at(\Ct+4,\Ct+3);
\coordinate(C34)at(\Ct+4,\Ct+2);
\filldraw[fill=lightgray](C31)--(C32)--(C33)--(C34)--cycle;
%箱3行目4
\coordinate(C41)at(\Ct+3+1,\Ct+2);
\coordinate(C42)at(\Ct+3+1,\Ct+3);
\coordinate(C43)at(\Ct+4+1,\Ct+3);
\coordinate(C44)at(\Ct+4+1,\Ct+2);
\filldraw[fill=lightgray](C41)--(C42)--(C43)--(C44)--cycle;
%箱3行目３
\coordinate(C51)at(\Ct+1,\Ct+2);
\coordinate(C52)at(\Ct+1,\Ct+3);
\coordinate(C53)at(\Ct+2,\Ct+3);
\coordinate(C54)at(\Ct+2,\Ct+2);
\filldraw[fill=gray](C51)--(C52)--(C53)--(C54)--cycle;
%箱3行目4
\coordinate(C61)at(\Ct+2,\Ct+2);
\coordinate(C62)at(\Ct+2,\Ct+3);
\coordinate(C63)at(\Ct+3,\Ct+3);
\coordinate(C64)at(\Ct+3,\Ct+2);
\filldraw[fill=lightgray](C61)--(C62)--(C63)--(C64)--cycle;

%箱4行目１
\coordinate(D11)at(\Ct-1,\Ct+3);
\coordinate(D12)at(\Ct-1,\Ct+4);
\coordinate(D13)at(\Ct,\Ct+4);
\coordinate(D14)at(\Ct,\Ct+3);
\filldraw[fill=gray](D11)--(D12)--(D13)--(D14)--cycle;
%箱4行目２
\coordinate(D21)at(\Ct+4-2,\Ct+3);
\coordinate(D22)at(\Ct+4-2,\Ct+4);
\coordinate(D23)at(\Ct+5-2,\Ct+4);
\coordinate(D24)at(\Ct+5-2,\Ct+3);
\filldraw[fill=lightgray](D21)--(D22)--(D23)--(D24)--cycle;

%箱4行目３
\coordinate(D31)at(\Ct+1,\Ct+3);
\coordinate(D32)at(\Ct+1,\Ct+4);
\coordinate(D33)at(\Ct+2,\Ct+4);
\coordinate(D34)at(\Ct+2,\Ct+3);
\filldraw[fill=gray](D31)--(D32)--(D33)--(D34)--cycle;
%箱4行目4
\coordinate(D41)at(\Ct,\Ct+3);
\coordinate(D42)at(\Ct,\Ct+4);
\coordinate(D43)at(\Ct+1,\Ct+4);
\coordinate(D44)at(\Ct+1,\Ct+3);
\filldraw[fill=gray](D41)--(D42)--(D43)--(D44)--cycle;

%追加の箱4行目
\coordinate(D51)at(\Ct+4-1,\Ct+3);
\coordinate(D52)at(\Ct+4-1,\Ct+4);
\coordinate(D53)at(\Ct+5-1,\Ct+4);
\coordinate(D54)at(\Ct+5-1,\Ct+3);
\filldraw[fill=lightgray](D51)--(D52)--(D53)--(D54)--cycle;

%箱5行目１
\coordinate(E11)at(\Ct,\Ct+3+1);
\coordinate(E12)at(\Ct,\Ct+4+1);
\coordinate(E13)at(\Ct+1,\Ct+4+1);
\coordinate(E14)at(\Ct+1,\Ct+3+1);
\filldraw[fill=gray](E11)--(E12)--(E13)--(E14)--cycle;
%箱5行目２
\coordinate(E21)at(\Ct+1,\Ct+3+1);
\coordinate(E22)at(\Ct+1,\Ct+4+1);
\coordinate(E23)at(\Ct+2,\Ct+4+1);
\coordinate(E24)at(\Ct+2,\Ct+3+1);
\filldraw[fill=gray](E21)--(E22)--(E23)--(E24)--cycle;
%箱4行目３
%\coordinate(E31)at(\Ct+4,\Ct+3+1);
%\coordinate(E32)at(\Ct+4,\Ct+4+1);
%\coordinate(E33)at(\Ct+5,\Ct+4+1);
%\coordinate(E34)at(\Ct+5,\Ct+3+1);
%\filldraw[fill=lightgray](E31)--(E32)--(E33)--(E34)--cycle;
%箱5行目4
\coordinate(E41)at(\Ct+2,\Ct+3+1);
\coordinate(E42)at(\Ct+2,\Ct+4+1);
\coordinate(E43)at(\Ct+3,\Ct+4+1);
\coordinate(E44)at(\Ct+3,\Ct+3+1);
\filldraw[fill=lightgray](E41)--(E42)--(E43)--(E44)--cycle;

%\coordinate(E51)at(\Ct+3,\Ct+3+1);
%\coordinate(E52)at(\Ct+3,\Ct+4+1);
%\coordinate(E53)at(\Ct+4,\Ct+4+1);
%\coordinate(E54)at(\Ct+4,\Ct+3+1);
%\filldraw[fill=lightgray](E51)--(E52)--(E53)--(E54)--cycle;

%切り口
%\draw[-,thick, teal] (5,0)--(5,\Sp)node[right]{$x=5$}; 

\draw[-, semithick, blue] (3.5,2)--(5.5,2); 
\draw[-, semithick, blue] (2.5,3)--(6.5,3); 
\draw[-, semithick, blue] (2.5,4)--(7.5,4);
\draw[-, semithick, blue] (1.5,5)--(7.5,5);
\draw[-, semithick, blue] (1.5,6)--(6.5,6);
\draw[-, semithick, blue] (2.5,7)--(5.5,7);

\draw[->,>=stealth,semithick] (0,0)--(\Ti,0)node[right]{$x$}; %x軸
\draw[->,>=stealth,semithick] (0,0)--(0,\Sp)node[right]{$t$}; %y軸
%方眼用紙設定
\draw[name path=X1,loosely dotted,  semithick, blue] (1,0) -- (1,\Ti);
\draw[name path=X2,loosely dotted, semithick, blue] (2,0) -- (2,\Ti);
\draw[name path=X3,loosely dotted, semithick, blue] (3,0) -- (3,\Ti);
\draw[name path=X4,loosely dotted, semithick, blue] (5,0) -- (5,\Ti);
\draw[name path=X5,loosely dotted, semithick, blue] (4,0) -- (4,\Ti);
\draw[name path=X6,loosely dotted, semithick, blue] (6,0) -- (6,\Ti);
\draw[name path=X7,loosely dotted, semithick, blue] (7,0) -- (7,\Ti);
\draw[name path=Y1,loosely dotted, semithick, blue] (0,1) -- (\Sp,1);
\draw[name path=Y2, loosely dotted, semithick, blue] (0,2) -- (\Sp,2);
\draw[name path=Y3, loosely dotted, semithick, blue] (0,3) -- (\Sp,3);
\draw[name path=Y4,loosely dotted, semithick, blue] (0,4) -- (\Sp,4);
\draw[name path=Y5,loosely dotted, semithick, blue] (0,5) -- (\Sp,5);
\draw[name path=Y6,loosely dotted, semithick, blue] (0,6) -- (\Sp,6);
\draw[name path=Y7,loosely dotted, semithick, blue] (0,7) -- (\Sp,7);

\foreach\Q in { 1,2,3, 4, 5, 6, 7}\foreach\P in { 1,2,3, 4, 5, 6, 7}\fill[blue](\P,\Q)circle(0.03);
\foreach\Q in { 5, 6}\fill[red](2,\Q)circle(0.04);
\foreach\Q in { 3,4, 5, 6, 7}\fill[red](3,\Q)circle(0.04);
\foreach\Q in { 2,3, 4,5,6, 7}\fill[red](4,\Q)circle(0.04);
\foreach\Q in { 2,3, 4}\fill[red](5,\Q)circle(0.04);
\foreach\Q in {  5, 6, 7}\fill[green](5,\Q)circle(0.04);
\foreach\Q in { 3, 4, 5, 6}\fill[green](6,\Q)circle(0.04);
\foreach\Q in { 4, 5}\fill[green](7,\Q)circle(0.04);

%\foreach\P in { 2,3, 4, 5, 6, 7}\fill[teal](\P,0)circle(0.07);
%\foreach\Q in { 2,3, 4, 5, 6, 7}\fill[teal](0,\Q)circle(0.07);

\end{tikzpicture}
  \caption{The colored boxes represent $\bs D$. The light gray region represents $\bs D_{c}$, and the dark gray represents $\bs D_q$. The red and green lattice points correspond to lattice points $[\bs D_{q}]$ and $[\bs D_c]$, respectively. The lattice points on the cross-section of $\bs D$ by the horizontal line represent $D^{(t)}$.
  }\label{Sec3D}
        \end{center}
 \end{figure}
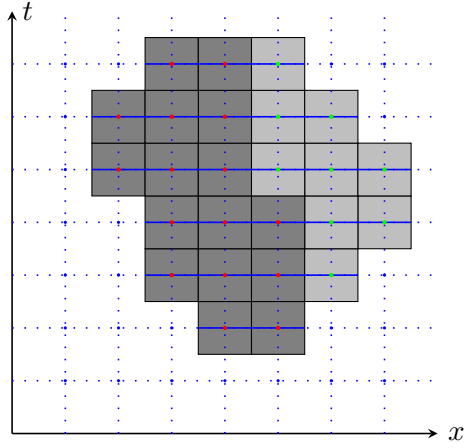

For each  value of $\ell$,  we define the set  consisting of elementary cubes in the $\ell$-ground state  $\bs{g}^{(\ell)}$ as $V_{\ell}$. Thus, we have $\mathbb{G}=\sqcup_{\ell} V_{\ell}$.  Considering each connected component $\bs{D}$ of excited cubes, it can be observed that every boundary face $F$ of $\bs{D}$ is in contact with cubes in a specific $\ell^{\prime}$-ground state, which is represented by $\alpha_{\bs{D}}(F)=\ell^{\prime}$. The function $\alpha_{\bs{D}}$ maps the set of boundary faces of $\bs{D}$ into $\{1, \dots, r\}$, and is termed as the {\it label} of the boundary of $\bs{D}$.

A {\it contour} $Y=(\bs{D}, \alpha)$ is defined as a pair comprising of the connected component of excited cubes $\bs{D}\in \mathcal{C}(\mathbb{D})$, and the label of the boundary of $\bs{D}$, which is denoted as $\alpha(\cdot)=\alpha_{\bs{D}}(\cdot)$. The connected component $\bs{D}$ of the contour $Y$ is denoted as $\supp Y$. 
The interior $\inte Y$ of a contour $Y$ is defined as the union of finite connected components of $\mathbb{T}\setminus \supp Y$.
We say that $Y$ is an {\it $\ell$-contour} if the boundary of $V(Y)=\supp Y\cup \inte Y$ contacts cubes in the $\ell$- ground state, i.e., $\alpha_Y(F_{ V(Y)})=\ell$ is satisfied, where $F_{V(Y)}$ represents the boundary face of $V(Y)$.

Two contours $Y_1$ and $Y_2$ are said to be {\it compatible} if $\supp Y_1 \cap \supp Y_2=\varnothing$.
Let $\mathbb{Y}=\{Y_1, \dots, Y_n\}$ be a set of contours that are mutually compatible. 
We define $Y_i$ as an {\it external contour}  if $\supp Y_i\cap V(Y_j)=\varnothing$ holds for all $j\neq i$.
We say that $\mathbb{Y}$ is {\it  matching}  if the label $\alpha$ is constant on each boundary face of every  connected component of $\mathbb{T}_{\vLa}\setminus (\supp Y_1\cup  \dots \cup \supp Y_n)$.
So far, various symbols related to contours have been introduced. By revisiting the definitions while examining Figure \ref{Grph3} and its caption, readers will likely gain a clear understanding of their meanings.\footnote{
In this paper, considering contour models in space-time with dimensions $d+1$ where $d$ is $2$ or above, it is not straightforward to draw contour diagrams easily. However, if one has a good understanding of the case in $1+1$ dimensions, subsequent discussions should be easily comprehensible.}

For each $\bs{\vSi}_{\mathbb{L}_{\vLa}}$, there corresponds a set of mutually compatible and matching contours $\mathbb{Y}=\{Y_1, \dots, Y_n\}$.
Conversely, given a set of mutually compatible and matching contours $\mathbb{Y}=\{Y_1, \dots, Y_n\}$, there may exist multiple $\bs{\vSi}_{\mathbb{L}_{\vLa}}$ that correspond to $\mathbb{Y}$. We denote the collection of such $\bs{\vSi}_{\mathbb{L}_{\vLa}}$ by $\mathscr{S}(\mathbb{Y})$.

\begin{figure}[h]
 \begin{center}
  \includegraphics[scale=0.7]{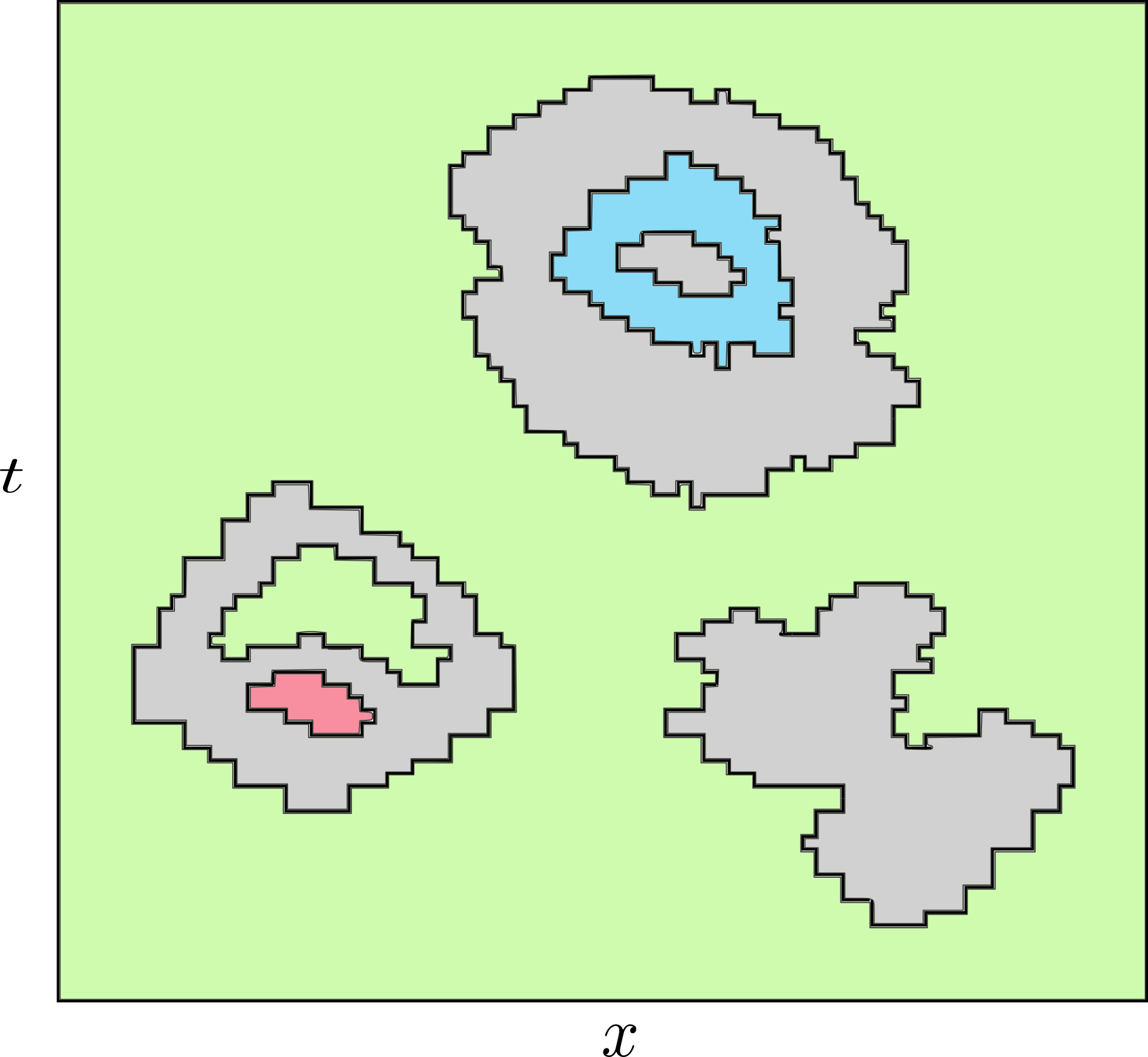}
%\includegraphics[scale=0.5]{8.pdf}
% \ctikzfig{Graph1}
  %\input{graph1}
  \caption{
An example of $\mathbb{Y}=\{Y_1, Y_2, Y_3, Y_4\}$  in $1+1$-dimensional space-time. The gray regions represent $\mathrm{supp}Y_i\ (i=1, \dots, 4)$  in $1+1$-dimensional space-time. The red, blue, and green regions represent the ground state configurations of the classical part $H^{(0)}$. The narrower green region along with the red and blue regions represents $\mathrm{int} Y_i$. Each of the red, blue, and green regions is numbered as 1, 2, and 3, respectively.
When the contour corresponding to the region floating in the blue region is denoted as $Y_1$, then $Y_2, Y_3, Y_4$ are all $3$-contours, and $\mathbb{Y}$ is a set of mutually compatible and matching contours. 
  }\label{Grph3}
        \end{center}
 \end{figure}

For every $\bs{\vSi}_{\mathbb{L}_{\vLa}}\in \mathscr{S}(\mathbb{Y})$, the contribution from $\mathbb{G}$ in $W(\vSi_1, \dots, \vSi_M)$ is given by $\ex^{-\tilde{\beta} \sum_{m}e_{m}|V_{m}|}$, i.e.,
\be
W(\vSi_1, \dots, \vSi_M)=\ex^{-\tilde{\beta}\sum_{m}e_{m}|V_{m}|} w(\bs{\vSi}_{\mathbb{L}_{\vLa}}), 
\ee
where
$|V_m|$ denotes  the number of  elementary cubes contained in $V_m$, and 
 $w(\bs{\vSi}_{\mathbb{L}_{\vLa}})$ represents the remaining contribution.
Hence we arrive at  the following expression of the partition function:
\begin{align}
Z_{\ell, \vLa}=\sum_{\mathbb{Y}}\ex^{-\tilde{\beta}\sum_{m}e_{m}|V_{m}|} \rho(\mathbb{Y}),\quad
\rho(\mathbb{Y})=\sum_{\bs{\vSi}_{\mathbb{L}_{\vLa}}\in \mathscr{S}(\mathbb{Y})}w(\bs{\vSi}_{\mathbb{L}_{\vLa}}),  
\end{align}
where $\sum_{\mathbb{Y}}$ denotes the sum over a collection of contours that are mutually compatible, matching, and such that all the external contours are $\ell$-contours.
The quantity $\rho(\mathbb{Y})$ associated with a collection of contours $\mathbb{Y}$ is referred to as the {\it activity of $\mathbb{Y}$}.

For each 
$\bs{D} \Eql \prod_{t\in \mathbb{M}} D^{(t)}  \in \mathsf{E}$, we define
\be
\bs{\mathcal{N}}_{\bs{D}_{ \rm \sharp} }=\prod_{t\in \mathbb{M}} \mathcal{N}_{D^{(t)}_{\rm \sharp}}\quad(\sharp=c, q),\quad
\bs{\mathcal{N}}_{\bs{D} }=\prod_{t\in \mathbb{M}} \mathcal{N}_{D^{(t)}}.
\ee
Here, $\mathcal{N}_{D^{(t)}_{\rm \sharp}}$ is defined by choosing $B=D_{\sharp}^{(t)}$ in the definition \eqref{DefNB} of $\mathcal{N}_B$.
We represent an element of $\bs{\mathcal{N}}_{\bs{D}}$ as $\bn_{\bs{D}}=(\bn_{D^{(t)}})_{t\in \mathbb{M}}$.
Elements of $\bs{\mathcal{N}}_{\bs{D}_{\rm \sharp }}$ are also represented in the same way.
 Each connected component of $\mathbb{G}$ in $\mathbb{T}$ can be expressed in the form of $\bs{g}^{(\ell)}_{\bs B}=\left(\bs{g}^{(\ell)}_{B^{(t)}}\right)_{t\in \mathbb{M}}$. Here, $\bs{B} $ belongs to $\mathsf{E}$, and we identify it with $\prod_{t\in \mathbb{M}}B^{(t)}$, where $B^{(t)}=[\mathsf{T}_t(\bs{B})]_{\rm S}$.

\subsection{Factorization property of the contour activities}

Suppose that only a single $\ell$-contour $Y=(\bs{D}, \alpha)$ is present in $\mathbb{T}_{\vLa}$. Let $\inte_m Y$ be the union of connected components of $\inte Y$ that are in the $m$-ground state. Then, we have $\inte Y=\sqcup_m \inte_m Y$. As a result, we can represent the ground state configuration in $\mathbb{T}_{\vLa}\setminus \supp Y$ as follows:
\be
\bs{g}^{(*)}_{\mathbb{T}_{\vLa} \setminus \bs{D}}
:=
\bs{g}_{\mathbb{T}_{\vLa} \setminus V(Y)}^{(\ell)} \times
\left(
\prod_m \bs{g}_{\inte_m Y}^{(m)}
\right).
\ee
Here, we employ the identification $\mathbb{T}_{\vLa} \setminus \bs{D} \Eql  \prod_{t \in \mathbb{M}} \vLa \setminus D^{(t)}$. Moreover, the product of configurations is defined as in \eqref{DefProdN}. Let us now consider the corresponding $\mathscr{S}(\{Y\})$. 
For any $\bs{\vSi}_{\mathbb{L}_{\vLa}}=\left( (D_q^{(t)}, \bn_{\vLa}^{(t)})\right)_{ t\in \mathbb{M}}\in \mathscr{S}(\{Y\})$, we express $\bn_{\bs D}\in \bs{\mathcal{N}}_{\bs D}$ and $\bn_{\bs{D}_{\sharp}}\in \bs{\mathcal{N}}_{\bs{D}_{\sharp}}\, (\sharp=c, q)$ as follows:
\begin{align}
\bn_{\bs{D}}=\left(
n_{D^{(t)}}^{(t)}(x) \right)_{ (x, t)\in [\bs{D}]
}, \quad
\bn_{\bs{D}_{\sharp}}=\left(
n_{D_{\sharp}^{(t)}}^{(t)}(x) \right)_{ (x, t)\in [\bs{D}_{\sharp}]}.
\end{align}

In addition, we define $\bE_{\bs{D}_c}$ as the set of configurations $\bn_{\bs{D}_c}$ that are determined by $\bs{\vSi}_{\mathbb{L}_{\vLa}}\in \mathscr{S}(\{Y\})$ such that the union of elementary cubes in a classical excitation state contained in $\supp Y$ is equal to $\bs{D}_c$.
Then, $\rho(Y):=\rho(\{Y\})$ is given by
\begin{align}
\rho(Y)
=&
\sum_{\bs{D}_q \subseteq \bs{D}} \sum_{\bn_{\bs{D}_q}\in \bN_{\bs{D}_q}} \sum_{\bn_{\bs{D}_c}\in \bE_{\bs{D}_c}} \ex^{-\tilde{\beta}  \bs{E}(\bn_{\bs{D}_c })}\times  \no
&\times \prod_{t=1}^{M}\Braket{\bs{g}^{(*)}_{\vLa\setminus D^{(t)}}\times   \bn_{D^{(t)}}^{(t)}| \mathcal{T}_{D_q^{(t+1)}}\left(\bn_{\overline{\partial D}_q^{(t+1)}}\right) |\bs{g}^{(*)}_{\vLa\setminus D^{(t+1)}}\times  \bn_{D^{(t+1)}}^{(t+1)}},  \label{DefRho1}
\end{align}
where $\bs{E}(\bn_{\bs{D}_c })=\sum_{t=1}^M E\left(\bn_{D^{(t)} \setminus \overline{D}_q^{(t)} }\right)$.
We call $\rho(Y)$ the {\it  activity of the contour $Y$}.

\begin{Prop}
For each collection of mutually compatible and matching contours $\mathbb{Y}=\{Y_1, \dots, Y_n\}$,  its   activity satisfies the following factorization property:
\be
\rho(Y_1, \dots, Y_n)=\rho(Y_1) \cdots \rho(Y_n).
\ee
Therefore, we can express the partition function $Z_{\ell, \vLa}$ as
\be
Z_{\ell, \vLa}=\sum_{\mathbb{Y}}\ex^{-\tilde{\beta} \sum_{m}e_{m}|V_{m}|} \rho(Y_1) \cdots \rho(Y_n).
\ee
\end{Prop}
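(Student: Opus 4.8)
The plan is to reduce the claimed identity to two independent factorizations: that of the summand $w(\bs{\vSi}_{\mathbb{L}_{\vLa}})$ and that of the index set $\mathscr{S}(\mathbb{Y})$ over which one sums in $\rho(\mathbb{Y})=\sum_{\bs{\vSi}_{\mathbb{L}_{\vLa}}\in\mathscr{S}(\mathbb{Y})}w(\bs{\vSi}_{\mathbb{L}_{\vLa}})$. Recall that $w(\bs{\vSi}_{\mathbb{L}_{\vLa}})$ is obtained from $W(\vSi_1,\dots,\vSi_M)=\Tr[K_{\vSi_1}\cdots K_{\vSi_M}]$ by extracting the ground-state factor $\ex^{-\tilde\beta\sum_m e_m|V_m|}$, which carries no cross-contour dependence. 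Since $Y_1,\dots,Y_n$ are mutually compatible, the supports $\supp Y_1,\dots,\supp Y_n$ are pairwise disjoint; since $\mathbb{Y}$ is matching, each connected component of $\mathbb{T}_{\vLa}\setminus(\supp Y_1\cup\dots\cup\supp Y_n)$ carries a single well-defined ground-state label, so that the configuration $\bs{g}^{(*)}$ on the complement of each $\supp Y_i$ is consistently determined and every thickened support $\overline{\supp Y_i}$ is separated from $\overline{\supp Y_j}$ $(j\neq i)$ by cubes in a pure ground state.

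First I would establish the factorization of the trace. At each time slice $t$ the transfer operator decomposes via \eqref{TLDec}, and the localized operators $\mathcal{T}_{D_q^{(t)}}(\cdot)$ attached to the quantum-excitation parts of distinct contours act on disjoint thickened regions. Using the spatial factorization \eqref{HilIdn} of $\mathfrak{H}_{\vLa}$ together with the commutation relation $[T_{\vLa}(B),T_{\vLa}(B')]=0$, valid whenever $\overline{B}\cap\overline{B'}=\varnothing$, I would show that each matrix element appearing in \eqref{DefRho1} splits into a product of inner products, one per contour. Here the matching condition is essential: the boundary data $\bn_{\overline{\partial D}_q^{(t+1)}}$ feeding each $\mathcal{T}_{D_q^{(t+1)}}$ is fixed by the surrounding ground-state configuration, so the conditional expectations defining adjacent contours do not interfere and may be assigned to commuting tensor factors.

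Next I would factorize the configuration sum. The set $\mathscr{S}(\mathbb{Y})$ consists precisely of those $\bs{\vSi}_{\mathbb{L}_{\vLa}}$ whose associated excited region equals $\supp Y_1\cup\dots\cup\supp Y_n$ with the prescribed labels. Because the supports are disjoint and the intervening cubes are forced into fixed ground states, the internal choices $(\bs{D}_q^{(i)},\bn_{\bs{D}^{(i)}})$ within each $\supp Y_i$ are mutually unconstrained, giving $\mathscr{S}(\mathbb{Y})\cong\prod_{i=1}^n\mathscr{S}(\{Y_i\})$ as a product of constraint sets; hence the nested sums over $\bs{D}_q$, $\bn_{\bs{D}_q}$ and $\bn_{\bs{D}_c}$ in \eqref{DefRho1} decouple. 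Combining this with the trace factorization yields $\rho(Y_1,\dots,Y_n)=\rho(Y_1)\cdots\rho(Y_n)$, after which the displayed expression for $Z_{\ell,\vLa}$ follows by substitution into the contour representation already derived.

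The main obstacle I anticipate is controlling the fermionic sign factors $\vartheta_{\bn_{\vLa_1},\bn_{\vLa_2}}\in\{-1,1\}$ from \eqref{FacVec} and the Jordan--Wigner strings $(-1)^{N_{\vLa_1,\rm e}}$ in \eqref{FacC}, which a priori obstruct a naive tensor-product splitting of the trace. The crucial verification is that every $h_A\in O_{\vLa}$ contains an even number $2k$ of fermionic operators, so every operator $\mathcal{T}_{D_q^{(t)}}(\cdot)$ built from products of such terms preserves fermion parity and commutes with the local parity operators; this forces the Jordan--Wigner strings to cancel between disjoint spatial blocks, so that the trace genuinely factorizes. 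The bosonic phase operators $\ex^{\im\vartheta_A}$ pose no difficulty, since $\vartheta_A$ is supported on $\supp A$, which lies inside a single contour, and is therefore automatically assigned to the correct tensor component.
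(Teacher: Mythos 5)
Your proposal is correct and follows essentially the same route as the paper's proof: both rest on the spatial tensor factorization $\mathfrak{H}_{\vLa}=\mathfrak{H}_{\vLa_1}\otimes\mathfrak{H}_{\vLa_2}$ of \eqref{HilIdn} together with \eqref{FacC}, \eqref{FacB} and \eqref{FacVec}, on the disjointness of the supports of compatible contours, and on the resulting decoupling of the sums over the internal data $(\bs{D}_q,\bn_{\bs D})$ of each contour. Your explicit check that every $h_A$ contains an even number of fermionic operators, so that the Jordan--Wigner strings $(-1)^{N_{\vLa_1,\rm e}}$ cancel between disjoint spatial blocks, makes rigorous precisely the step the paper compresses into the remark that ``analogous arguments to those used for ordinary quantum spin systems are applicable.''
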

\begin{proof}
 We can prove the assertion in the proposition by utilizing \eqref{HilIdn}, which states that 
$
\mathfrak{H}_{\vLa}=\mathfrak{H}_{\vLa_1}\otimes \mathfrak{H}_{\vLa_2}. 
$
For simplicity, we consider the case where $n=2$: 
$\mathbb{Y}=\{Y_1, Y_2\}$ with 
$Y_i=(\bs{D}_i, \alpha_{\bs{ D}_i})\ (i=1, 2)$.
Fix $\bs{\vSi}_{\mathbb{L}_{\vLa}}\in \mathscr{S}(\mathbb{Y})$ arbitrarily.
Corresponding to $\bs{\vSi}_{\mathbb{L}_{\vLa}}$, we can determine the pairs $(\bs{D}_{q, 1}, \bs{D}_{c, 1})$ and $(\bs{D}_{q, 2}, \bs{D}_{c, 2})$.
Here, $\bs{D}_{q, i}$ (resp. $\bs{D}_{c, i}$) is the union of elementary cubes in $\bs{D}_i$ associated with quantum (resp. classical) excitations.

Let $\bn_{\vLa}\in \mathcal{N}_{\vLa}$. 
By applying \eqref{FacVec}, we obtain 
$
P_{\bn_{\vLa}}=P_{\bn_{ B}}\otimes P_{\bn_{\vLa\setminus B}}
$
 for each $B\subseteq \vLa$
which implies that 
\be
K(\vSi)=
K(B, \bs{n}_{\vLa})=\ex^{-\tilde{\beta}E(\bs{n}_{\vLa\setminus \overline{B}})}
\mathcal{T}_B(\bn_{\overline{\partial B}})P_{\bn_B} \otimes P_{\bs{n}_{\vLa\setminus B}}.
\ee
Using  \eqref{FacC} and \eqref{FacB}, we can identify the operator $\mathcal{T}_B(\bn_{\overline{\partial B}})$ acting on $\mathfrak{H}_{\vLa}$ with $\mathcal{T}_B(\bn_{\overline{\partial B}})\otimes \one_{\vLa\setminus B}$ under the identification $\mathfrak{H}_{\vLa}=\mathfrak{H}_B\otimes \mathfrak{H}_{\vLa\setminus B}$. Here, $\one_{\vLa\setminus B}$ denotes the identity operator on $\mathfrak{H}_{\vLa\setminus B}$.
Therefore, analogous arguments to those used for ordinary quantum spin systems are applicable.

Let $V_m$ be a collection of cubes in the $m$-ground state, and assume that its time slice representation $V_m\Eql \prod_{t=1}^M V_m^{(t)}$ is given.
For each $t\in \mathbb{M}$, $\vLa$ can be decomposed as  $\vLa=D^{(t)}_1\sqcup D^{(t)}_2\sqcup \left(\sqcup_{m}V_{m}^{(t)}\right)$, which implies the following expression of $K(\vSi^{(t)})$:
\begin{align}
K(\vSi^{(t)})
=&\ex^{-\tilde{\beta} E\left(\bn_{(D_{1}^{(t)}\setminus \overline{D}_{q, 1}^{(t)})\sqcup (D_{2}^{(t)}\setminus D_{q, 2}^{(t)})}\right) }
  \ex^{-\tilde{\beta} \sum_{m} e_{m} |V_{m}^{(t)}|}\times\no
&\times \left[ \mathcal{T}_{D_{q, 1}^{(t)}}\left(\bn_{\overline{\partial D}_{q, 1}^{(t)}}\right) P_{\bn_{D_{q, 1}^{(t)}}}\otimes P_{\bn_{D_{c, 1}^{(t)}}}
\right] \otimes  \left[\mathcal{T}_{D_{q, 2}^{(t)}}\left(\bn_{\overline{\partial D}_{q, 2}^{(t)}}\right) P_{\bn_{D_{q, 2}^{(t)}}}\otimes  P_{\bn_{D_{c, 2}^{(t)}}}
\right]\otimes
\no
&\otimes
\left[ P_{g^{(1)}_{V_{1}^{(t)}}}\otimes \cdots \otimes P_{g^{(r)}_{V_{r}^{(t)}}}
\right]
\end{align}
Hence, we obtain the desired factorization:
$
\rho(Y_1, Y_2)=\rho(Y_1) \rho(Y_2)
$.
\end{proof}

\subsection{A useful formula for the contour activities}
In this subsection,  we derive a useful representation of $\rho(Y)$ given by \eqref{DefRho1}. For this purpose, we introduce some symbols. The classical part of the Hamiltonian, $H_{\ell,\vLa}^{(0)}$, can be decomposed as the sum of the electron interaction term, $H_{\ell,\vLa,\rm e}^{(0)}$, and the phonon energy term,  $H_{\ell, \vLa, \rm p}^{(0)}=\omega_0N_{\vLa, \rm p}$: $H_{\ell, \vLa}^{(0)}=H_{\ell, \vLa, \rm e}^{(0)}+H_{\ell, \vLa, \rm p}^{(0)}$.  
Correspondingly, $H_{\ell, B}^{(0)}(\bn_{\overline{\partial B}})$ can be also split as follows:
\be
H_{\ell, B}^{(0)}(\bn_{\overline{\partial B}})=H_{\ell, B, \rm e}^{(0)}(\bn_{\overline{\partial B}, \rm e})+H_{\ell, B, \rm p}^{(0)}(\bn_{\overline{\partial B}, \rm p}), \label{DecH0}
\ee
where  we define $H_{\ell, B, \rm e}^{(0)}(\bn_{\overline{\partial B}, \rm e}) \in \mathfrak{A}_{B, \rm e}^{(0)}$ using the conditional expectation value as follows:
\be
H_{\ell, B, \rm e}^{(0)}(\bn_{\overline{\partial B}, \rm e})=\big\la \bn_{\overline{\partial B}, \rm e}\big|
H_{\ell, \overline{B}, \rm e}^{(0)} \big|\bn_{\overline{\partial B}, \rm e}\big\ra.
\ee
Due to the absence of interactions between different sites, the conditional expectation value  of the phonon energy term takes a simple form as follows:
\be
H_{\ell, B, \rm p}^{(0)}(\bn_{\overline{\partial B}, \rm p})=\omega_0 N_{B, \rm p}+\omega_0 \sum_{x\in \partial B}n_{x, \rm p}.
\ee

Utilizing \eqref{DecH0}, we can represent $\mathcal{T}_B(\omega, \bn_{\overline{\partial B}})$ for every $\omega=(\bs{\tau}, \bs{m})\in \vXi_B$ in the following manner:
\begin{align}
&\mathcal{T}_B(\omega, \bn_{\overline{\partial B}})\no
 =& \tilde{h}_{1, {\rm e},  \bn_{\overline{\partial B}, \rm e}}(s_1)\cdots \tilde{h}_{|\bs{m}|, {\rm e},  \bn_{\overline{\partial B}, \rm e}}(s_{|\bs{m}|}) \ex^{-\tilde{\beta} H_{\ell , B,  \rm e}^{(0)}(\bn_{\overline{\partial B}, \rm e})}
\otimes
\tilde{\varTheta}_1(s_1)\cdots \tilde{\varTheta}_{|\bs{m}|}(s_{|\bs{m}|})\ex^{-\tilde{\beta} H_{\ell , B,  \rm p}^{(0)}(\bn_{\overline{\partial B}, \rm p})}
\no
=:&\mathcal{T}_{B, \rm e}(\omega, \bn_{\overline{\partial B}, \rm e}) \otimes \mathcal{T}_{B, \rm p}(\omega, \bn_{\overline{\partial B}, \rm p}).
\end{align}
Here, if we let $\varTheta_i=\ex^{\im \varPhi_{A_i}}$, then $\tilde{\varTheta}_i$ is defined as $(\tilde{\varTheta}_1, \dots, \tilde{\varTheta}_{|\bs{m}|})=\pi(\varTheta_1, \dots, \varTheta_1, \dots, \varTheta_{|\bs{m}|}, \dots, \varTheta_{|\bs{m}|})$, where the definitions of $s_1, \dots, s_{|\bs{m}|}$ and $\pi$ can be found in \eqref{PiS}. The same applies for the definition of $\tilde{h}_{{\rm e}, i}$.
In addition, the following definitions are employed:
\be
\tilde{h}_{i, {\rm e}, \bn_{\overline{\partial B}, \rm e}}(s)=\ex^{-s H_{\ell , B,  \rm e}^{(0)}(\bn_{\overline{\partial B}, \rm e})}\tilde{h}_{i, {\rm e}}\ex^{s H_{\ell , B,  \rm e}^{(0)}(\bn_{\overline{\partial B}, \rm e})},\quad
\tilde{\varTheta}_i(s)= \ex^{-s H_{\ell , B,  \rm p}^{(0)}(\bn_{\overline{\partial B}, \rm p})} \tilde{\varTheta}_i\ex^{s H_{\ell , B,  \rm p}^{(0)}(\bn_{\overline{\partial B}, \rm p})}.
\ee

We express the electron-phonon ground state configuration of $\bs{g}_B^{(*)}$ as $\bs{g}_B^{(*)}=(\bs{g}_{B, {\rm e}}^{(*)}, \bs{g}^{(*)}_{B, {\rm p}})$, where the configuration $\bs{g}^{(*)}_{B, {\rm p}}$  corresponds to the Fock vacuum: $\ket{\bs{g}^{(*)}_{B, {\rm p}}}=\ket{\varnothing}_{B, \rm p}$. Then  we have 
\begin{align}
&\Braket{\bs{g}^{(*)}_{\vLa\setminus D^{(t)}}\times   \bn_{D^{(t)}}^{(t)}| \mathcal{T}_{D_q^{(t+1)}}\left(\bn_{\overline{\partial D}_q^{(t+1)}}\right) |\bs{g}^{(*)}_{\vLa\setminus D^{(t+1)}}\times  \bn_{D^{(t+1)}}^{(t+1)}}\no
=&\int_{\varXi_{D_q^{(t+1)}}}\mathscr{D}(\omega^{(t+1)})\Braket{\bs{g}^{(*)}_{\vLa\setminus D^{(t)}} \times  \bn_{D^{(t)}}^{(t)}| \mathcal{T}_{D_q^{(t+1)}}\left(\omega^{(t+1)},  \bn_{\overline{\partial D}_q^{(t+1)}}\right)
 |\bs{g}^{(*)}_{\vLa\setminus D^{(t+1)}}\times  \bn_{D^{(t+1)}}^{(t+1)}} \no
=&\int_{\varXi_{D_q^{(t+1)}}}\mathscr{D}(\omega^{(t+1)})\Braket{\bs{g}^{(*)}_{\vLa\setminus D^{(t)},  \rm e} \times  \bn_{D^{(t)},  \rm e}^{(t)}| 
\mathcal{T}_{D_q^{(t+1)}, \rm e}\left(\omega^{(t+1)},  \bn_{\overline{\partial D}_q^{(t+1)}, \rm e}\right)
 |\bs{g}^{(*)}_{\vLa\setminus D^{(t+1)},  \rm e}\times  \bn_{D^{(t+1)},  \rm e}^{(t+1)}}\times\no
 &\times \Braket{\bs{g}^{(*)}_{\vLa\setminus D^{(t)},  \rm p}\times  \bn_{D^{(t)},  \rm p}^{(t)}| 
 \mathcal{T}_{D_q^{(t+1)}, \rm p}\left(\omega^{(t+1)}, \bn_{\overline{\partial D}_q^{(t+1)}, \rm p}\right)
 |\bs{g}^{(*)}_{\vLa\setminus D^{(t+1)},  \rm p}\times  \bn_{D^{(t+1)},  \rm p}^{(t+1)}}.
\end{align}
For the sake of simplicity of notation, we define
\be
\int_{\bs{\varXi}_{\bs{D}_q}} \mathscr{D}(\bome) := \prod_{t=1}^M \int_{\varXi_{D_q^{(t)}}} \mathscr{D}(\omega^{(t)}),
\ee
where $\bs{\varXi}_{\bs{D}_q}=\prod_{t=1}^M \varXi_{D_q^{(t)}}$ and $\bome=(\omega^{(t)})_{t=1}^M$.
Using this notation, we can write $\rho(Y)$ as follows:
\begin{align}
\rho(Y)
=& \sum_{\bs{D}_q \subseteq \bs{D}} \sum_{\bn_{\bs{D}_q}} \sum_{\bn_{\bs{D}_c}} \int_{\bs{\varXi}_{\bs{D}_q}} \mathscr{D}(\bome)\,  \ex^{-\beta \bs{E}(\bn_{\bs{D}_ c})}\times \no
&\times \prod_{t=1}^M \Braket{\bs{g}^{(*)}_{\vLa\setminus D^{(t)},  \rm e} \times  \bn_{D^{(t)},  \rm e}^{(t)}| 
\mathcal{T}_{D_q^{(t+1)}, \rm e}\left(\omega^{(t+1)},  \bn_{\overline{\partial D}_q^{(t+1)}, \rm e}\right)
 |\bs{g}^{(*)}_{\vLa\setminus D^{(t+1)},  \rm e}\times  \bn_{D^{(t+1)},  \rm e}^{(t+1)}}\times\no
 &\times \Braket{\bs{g}^{(*)}_{\vLa\setminus D^{(t)},  \rm p}\times  \bn_{D^{(t)},  \rm p}^{(t)}| 
 \mathcal{T}_{D_q^{(t+1)}, \rm p}\left(\omega^{(t+1)},  \bn_{\overline{\partial D}_q^{(t+1)}, \rm p}\right)
 |\bs{g}^{(*)}_{\vLa\setminus D^{(t+1)},  \rm p}\times  \bn_{D^{(t+1)},  \rm p}^{(t+1)}}
\no
=&\int_{\bs{\varPi}_Y} \bs{\mathscr{D}}(\bs{\vphi})\, \ex^{-\tilde{\beta}\bs{E}(\bn_{\bs{D}_ c})} S_{\rm e}(\bs{\vphi}_{\rm e})S_{\rm p}(\bs{\vphi}_{\rm p}).
\label{RhoEP}
\end{align}
Let us explain the symbols appearing in \eqref{RhoEP}. Firstly, 
\be
\int_{\bs{\varPi}_Y} \bs{\mathscr{D}}(\bs{\vphi}):=\sum_{\bs{D}_q \subseteq \bs{D}} \sum_{\bn_{\bs{D}_q}} \sum_{\bn_{\bs{D}_c}} \int_{\bs{\varXi}_{\bs{D}_q}} \mathscr{D}(\bome).
\ee
Secondly, $S_{\flat }(\bphi_{\flat })\ (\flat =\rm e, p)$ are given by 
\begin{align}
S_{\flat }(\bphi_{\flat })&=\prod_{t=1}^M 
\Braket{\bs{g}^{(*)}_{\vLa\setminus D^{(t)},  \rm \flat } \times  \bn_{D^{(t)},  \rm \flat }^{(t)}| \mathcal{T}_{D_q^{(t+1)}, \flat}\left(\omega^{(t+1)},  \bn_{\overline{\partial D}_q^{(t+1)}, \flat}\right)
 |\bs{g}^{(*)}_{\vLa\setminus D^{(t+1)},  \rm \flat }\times  \bn_{D^{(t+1)},  \rm \flat }^{(t+1)}}.
\end{align}
Thirdly, we define
\be
\bs{\varPi}_Y=\left\{\bphi=(\bs{D}_q, \bs{D}_c, \bome, \bn_{\bs D}) :  \mbox{$(\bs{D}, \bn_{\bs D})\in \mathscr{S}(\{Y\})$ and $\bome\in \bs{\varXi}_{\bs{D}_q}$}\right\}, 
\ee
and, finally,  for each $\bphi\in \bs{\varPi}_Y$, $\bphi_{\rm e}$ and $\bphi_{\rm p}$ are defined by 
\be
\bphi_{\flat}=(\bs{D}_{q}, \bs{D}_{c}, \bome, \bn_{\bs D, \flat})\quad (\flat={\rm e, p}).
\ee
Here, for each $\bn_{\bs D}=\left((\bn^{(t)}_{D^{(t)}, \rm e}, \bn^{(t)}_{D^{(t)}, \rm p})\right)_{t=1}^M \in \bs{\mathcal{N}_D}$, we define $\bn_{\bs D, \flat}=(\bn^{(t)}_{D^{(t)}, \flat})_{t=1}^M\ (\flat={\rm e, p})$ with $\bn^{(t)}_{D^{(t)}, \flat}\in \mathcal{N}_{D^{(t)}, \flat}$.
In the subsequent sections, readers will find the expression \eqref{RhoEP} useful when evaluating contour activities.

\section{Estimates of the contour activities}\label{Sec4}

\subsection{The objective and outline of Section \ref{Sec4}}\label{IntSec4}
As explained in Subsection \ref{StrMnTh}, we prove the main theorem (Theorem \ref{MainThm}) by accomplishing steps \hyperlink{Q1}{\bf (Q. 1)} and \hyperlink{Q2}{\bf (Q. 2)}. In Section \ref{Sec3}, we have already achieved \hyperlink{Q1}{\bf (Q. 1)}. In this section, we complete \hyperlink{Q2}{\bf (Q. 2)}. As mentioned in the previous section, the foundation of the Pirogov--Sinai theory for the space-time contour model has been laid in \cite{Borgs1996}. The primary goal of this section is to verify that several technical conditions concerning the contour model, necessary for applying the results of \cite{Borgs1996}, are indeed satisfied. Refer to Propositions \ref{ActBd1}, \ref{RhoEst2}, \ref{DelRhoEst}, \ref{ExpRhoPsi} for details. Once these propositions are incorporated, the main theorem can be proven by applying various theorems from \cite{Borgs1996}.

In order to establish these propositions, a detailed analysis of the quantum perturbation arising from electron-phonon interaction is essential.
The quantum perturbation term 
$H_{\vLa}^{\rm (Q)}$
  incorporates unitary operators, such as 
$\ex^{\im \vartheta_A}$, determined by the bosonic field operators. 
 As these factors are unitary operators, one might believe that the analytical methods employed for the Hubbard model in \cite{Borgs_1996} can be readily applied. However, the bosonic Fock space associated with each site is an infinite-dimensional Hilbert space, and various operators describing bosons are unbounded. Therefore, using such a simplistic approach, it is not possible to rigorously analyze the electron-phonon interacting system. A meticulous analysis of vacuum expectations and thermal expectations for the bosonic field operators given in \eqref{DefU} is imperative.
 
Fortuitously, extensive analyses pertaining to these expectations of  bosonic field operators have been conducted in the realm of quantum field theory \cite{Arai2022}. By leveraging this wealth of knowledge, we can accomplish \hyperlink{Q2}{\bf (Q. 2)}. The insights into the expectations of the bosonic field operators utilized in this paper are succinctly summarized in Appendix \ref{AppCor}.

\subsection{The case where $Y$ does not wind around $\mathbb{T}_{\vLa}$ }\label{Sec4.2}
In this subsection, we consider the case where  the contour $Y$ does not wind around the thermal torus $\mathbb{T}_{\vLa}$.
Refer to Figure \ref{Grph4}.
When $Y$ winds around $\mathbb{T}_{\vLa}$, the mathematical treatment is somewhat different, and this will be discussed further in Subsection \ref{Sect4.3}.
\begin{figure}[htbp]
    \centering
     \includegraphics[scale=0.6]{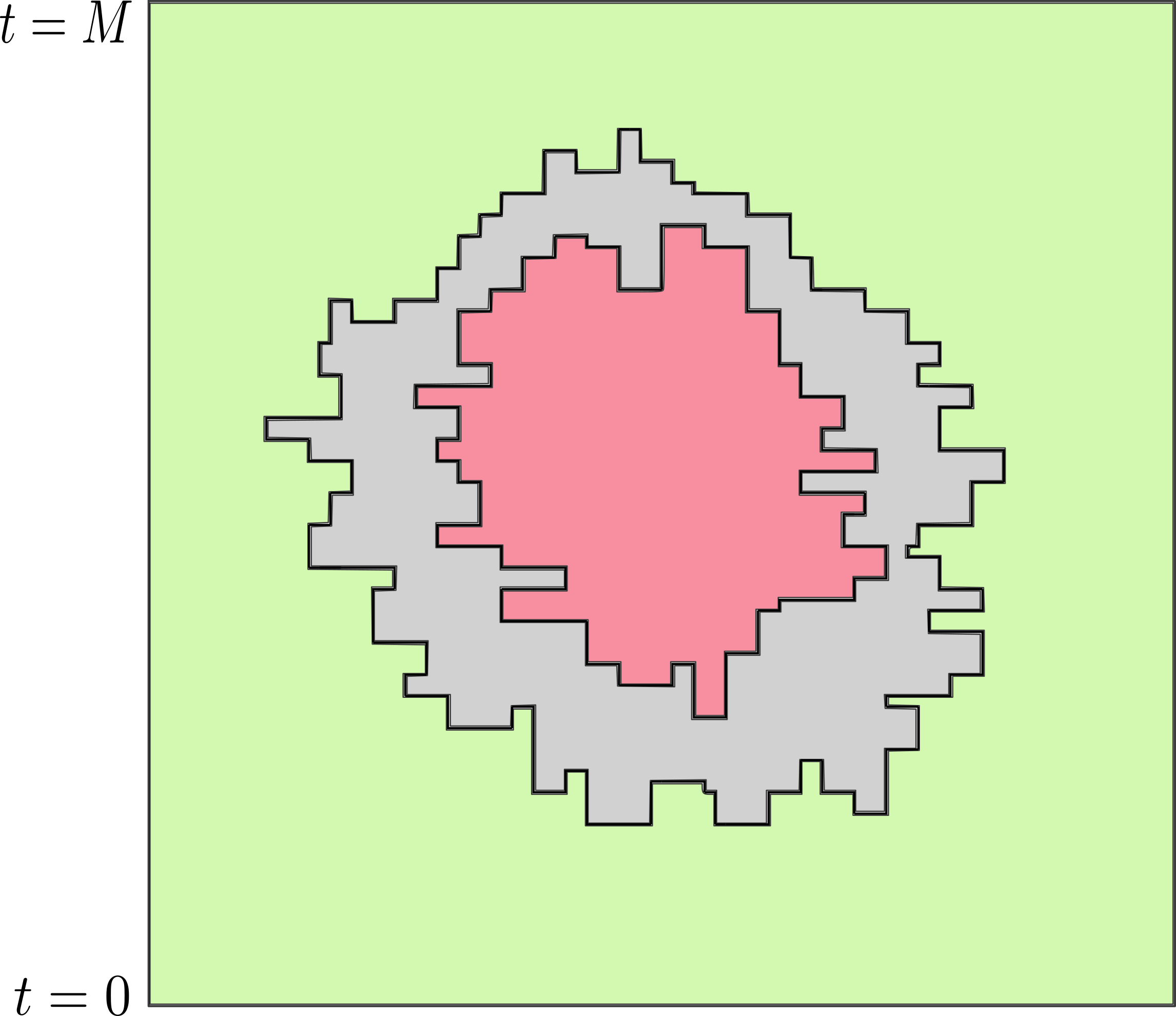}
    \caption{Image depicting a contour not winding around $\mathbb{T}_{\vLa}$. (To be precise,  
    section of the support of a contour not winding around $\mathbb{T}_{\vLa}$ on a specific plane defined by $x_i = \rm const.$)
    }\label{Grph4}
\end{figure}

Fix $\bphi=(\bs{D}_q, \bs{D}_c, \bome, \bn_{\bs D})\in \bs{\varPi}_Y$, arbitrarily. Then, for each $x\in [\bs{D}]_{\mathsf{S}} $, we define 
$\bphi(x)=\left(\bs{D}_q(x), \bs{D}_c(x), \bome(x), \bn_{\bs{D}}(x)\right)$ as follows: 
\begin{align}
\bs{D}_{\sharp}(x)&=
\left( \bigcup_{t\in \mathbb{M}} C(x, t) \right) \cap \bs{D}_{\sharp}
 \ (\sharp=c, q), \\
\bn_{\bs{D}}(x)&=\left(
n_{D^{(t)}}^{(t)}(x) \right)_{ t\in [\mathsf{S}_{x}(\bs{D})]_{\rm T}
},\\
\bome(x)&=\left(\omega^{(t)}(x) \right)_{
t\in [\mathsf{S}_x(\bs{D}_{q})]_{\rm T}
 },\quad
\omega^{(t)}(x) =\left\{ (\bs{m}_A^{(t)},  \bs{\tau}^{(t)}_A) :x\in \supp A\right\}.
\end{align}
Depending on $x$, $\bs{D}_{\sharp}(x)$ could be an empty set. If $\bs{D}_{q}(x)=\varnothing$, then we set $\bome(x)=\varnothing$.
See Figure \ref{PicDs}.
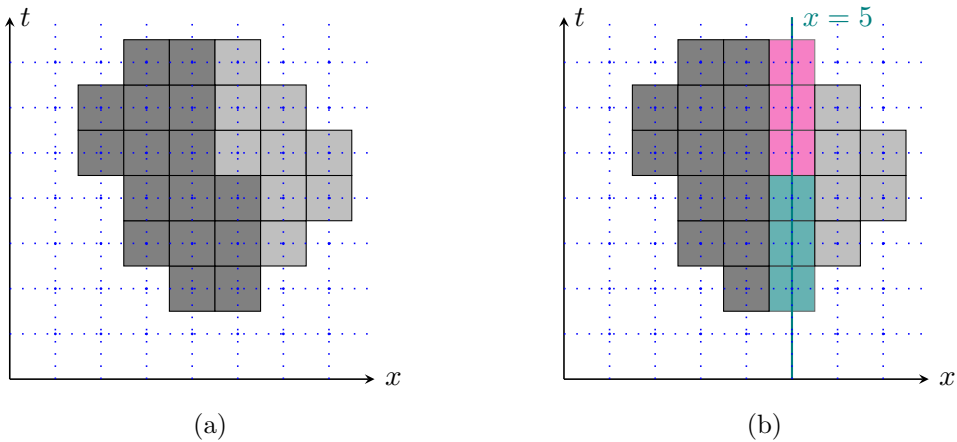
\begin{figure}[htbp]
  \begin{minipage}[b]{0.45\linewidth}
    \centering
     \begin{tikzpicture}[scale=0.6]
%箱0行目１
\coordinate(Z11)at(\Ct+1,\Ct-1);
\coordinate(Z12)at(\Ct+1,\Ct);
\coordinate(Z13)at(\Ct+2,\Ct);
\coordinate(Z14)at(\Ct+2,\Ct-1);
\filldraw[fill=gray](Z11)--(Z12)--(Z13)--(Z14)--cycle;
%箱0行目２
\coordinate(Z21)at(\Ct+2,\Ct-1);
\coordinate(Z22)at(\Ct+2,\Ct);
\coordinate(Z23)at(\Ct+3,\Ct);
\coordinate(Z24)at(\Ct+3,\Ct-1);
\filldraw[fill=gray](Z21)--(Z22)--(Z23)--(Z24)--cycle;

%箱１行目１
\coordinate(A11)at(\Ct,\Ct);
\coordinate(A12)at(\Ct,\Ct+1);
\coordinate(A13)at(\Ct+1,\Ct+1);
\coordinate(A14)at(\Ct+1,\Ct);
\filldraw[fill=gray](A11)--(A12)--(A13)--(A14)--cycle;
%箱１行目２
\coordinate(A21)at(\Ct+1,\Ct);
\coordinate(A22)at(\Ct+1,\Ct+1);
\coordinate(A23)at(\Ct+2,\Ct+1);
\coordinate(A24)at(\Ct+2,\Ct);
\filldraw[fill=gray](A21)--(A22)--(A23)--(A24)--cycle;
%箱１行目３
\coordinate(A31)at(\Ct+2,\Ct);
\coordinate(A32)at(\Ct+2,\Ct+1);
\coordinate(A33)at(\Ct+3,\Ct+1);
\coordinate(A34)at(\Ct+3,\Ct);
\filldraw[fill=gray](A31)--(A32)--(A33)--(A34)--cycle;
%箱１行目4
\coordinate(A41)at(\Ct+2+1,\Ct);
\coordinate(A42)at(\Ct+2+1,\Ct+1);
\coordinate(A43)at(\Ct+3+1,\Ct+1);
\coordinate(A44)at(\Ct+3+1,\Ct);
\filldraw[fill=lightgray](A41)--(A42)--(A43)--(A44)--cycle;

%箱2行目１
\coordinate(B11)at(\Ct,\Ct+1);
\coordinate(B12)at(\Ct,\Ct+2);
\coordinate(B13)at(\Ct+1,\Ct+2);
\coordinate(B14)at(\Ct+1,\Ct+1);
\filldraw[fill=gray](B11)--(B12)--(B13)--(B14)--cycle;
%箱2行目２
\coordinate(B21)at(\Ct+1,\Ct+1);
\coordinate(B22)at(\Ct+1,\Ct+2);
\coordinate(B23)at(\Ct+2,\Ct+2);
\coordinate(B24)at(\Ct+2,\Ct+1);
\filldraw[fill=gray](B21)--(B22)--(B23)--(B24)--cycle;
%箱2行目３
\coordinate(B31)at(\Ct+4,\Ct+1);
\coordinate(B32)at(\Ct+4,\Ct+2);
\coordinate(B33)at(\Ct+5,\Ct+2);
\coordinate(B34)at(\Ct+5,\Ct+1);
\filldraw[fill=lightgray](B31)--(B32)--(B33)--(B34)--cycle;
%箱2行目4
\coordinate(B41)at(\Ct+3,\Ct+1);
\coordinate(B42)at(\Ct+3,\Ct+2);
\coordinate(B43)at(\Ct+4,\Ct+2);
\coordinate(B44)at(\Ct+4,\Ct+1);
\filldraw[fill=lightgray](B41)--(B42)--(B43)--(B44)--cycle;
%箱2行目5
\coordinate(B51)at(\Ct+2,\Ct+1);
\coordinate(B52)at(\Ct+2,\Ct+2);
\coordinate(B53)at(\Ct+3,\Ct+2);
\coordinate(B54)at(\Ct+3,\Ct+1);
\filldraw[fill=gray](B51)--(B52)--(B53)--(B54)--cycle;

%箱3行目１
\coordinate(C11)at(\Ct-1,\Ct+2);
\coordinate(C12)at(\Ct-1,\Ct+3);
\coordinate(C13)at(\Ct,\Ct+3);
\coordinate(C14)at(\Ct,\Ct+2);
\filldraw[fill=gray](C11)--(C12)--(C13)--(C14)--cycle;
%箱3行目２
\coordinate(C21)at(\Ct,\Ct+2);
\coordinate(C22)at(\Ct,\Ct+3);
\coordinate(C23)at(\Ct+1,\Ct+3);
\coordinate(C24)at(\Ct+1,\Ct+2);
\filldraw[fill=gray](C21)--(C22)--(C23)--(C24)--cycle;
%箱3行目３
\coordinate(C31)at(\Ct+3,\Ct+2);
\coordinate(C32)at(\Ct+3,\Ct+3);
\coordinate(C33)at(\Ct+4,\Ct+3);
\coordinate(C34)at(\Ct+4,\Ct+2);
\filldraw[fill=lightgray](C31)--(C32)--(C33)--(C34)--cycle;
%箱3行目4
\coordinate(C41)at(\Ct+3+1,\Ct+2);
\coordinate(C42)at(\Ct+3+1,\Ct+3);
\coordinate(C43)at(\Ct+4+1,\Ct+3);
\coordinate(C44)at(\Ct+4+1,\Ct+2);
\filldraw[fill=lightgray](C41)--(C42)--(C43)--(C44)--cycle;
%箱3行目３
\coordinate(C51)at(\Ct+1,\Ct+2);
\coordinate(C52)at(\Ct+1,\Ct+3);
\coordinate(C53)at(\Ct+2,\Ct+3);
\coordinate(C54)at(\Ct+2,\Ct+2);
\filldraw[fill=gray](C51)--(C52)--(C53)--(C54)--cycle;
%箱3行目4
\coordinate(C61)at(\Ct+2,\Ct+2);
\coordinate(C62)at(\Ct+2,\Ct+3);
\coordinate(C63)at(\Ct+3,\Ct+3);
\coordinate(C64)at(\Ct+3,\Ct+2);
\filldraw[fill=lightgray](C61)--(C62)--(C63)--(C64)--cycle;

%箱4行目１
\coordinate(D11)at(\Ct-1,\Ct+3);
\coordinate(D12)at(\Ct-1,\Ct+4);
\coordinate(D13)at(\Ct,\Ct+4);
\coordinate(D14)at(\Ct,\Ct+3);
\filldraw[fill=gray](D11)--(D12)--(D13)--(D14)--cycle;
%箱4行目２
\coordinate(D21)at(\Ct+4-2,\Ct+3);
\coordinate(D22)at(\Ct+4-2,\Ct+4);
\coordinate(D23)at(\Ct+5-2,\Ct+4);
\coordinate(D24)at(\Ct+5-2,\Ct+3);
\filldraw[fill=lightgray](D21)--(D22)--(D23)--(D24)--cycle;

%箱4行目３
\coordinate(D31)at(\Ct+1,\Ct+3);
\coordinate(D32)at(\Ct+1,\Ct+4);
\coordinate(D33)at(\Ct+2,\Ct+4);
\coordinate(D34)at(\Ct+2,\Ct+3);
\filldraw[fill=gray](D31)--(D32)--(D33)--(D34)--cycle;
%箱4行目4
\coordinate(D41)at(\Ct,\Ct+3);
\coordinate(D42)at(\Ct,\Ct+4);
\coordinate(D43)at(\Ct+1,\Ct+4);
\coordinate(D44)at(\Ct+1,\Ct+3);
\filldraw[fill=gray](D41)--(D42)--(D43)--(D44)--cycle;

%追加の箱4行目
\coordinate(D51)at(\Ct+4-1,\Ct+3);
\coordinate(D52)at(\Ct+4-1,\Ct+4);
\coordinate(D53)at(\Ct+5-1,\Ct+4);
\coordinate(D54)at(\Ct+5-1,\Ct+3);
\filldraw[fill=lightgray](D51)--(D52)--(D53)--(D54)--cycle;

%箱5行目１
\coordinate(E11)at(\Ct,\Ct+3+1);
\coordinate(E12)at(\Ct,\Ct+4+1);
\coordinate(E13)at(\Ct+1,\Ct+4+1);
\coordinate(E14)at(\Ct+1,\Ct+3+1);
\filldraw[fill=gray](E11)--(E12)--(E13)--(E14)--cycle;
%箱5行目２
\coordinate(E21)at(\Ct+1,\Ct+3+1);
\coordinate(E22)at(\Ct+1,\Ct+4+1);
\coordinate(E23)at(\Ct+2,\Ct+4+1);
\coordinate(E24)at(\Ct+2,\Ct+3+1);
\filldraw[fill=gray](E21)--(E22)--(E23)--(E24)--cycle;
%箱4行目３
%\coordinate(E31)at(\Ct+4,\Ct+3+1);
%\coordinate(E32)at(\Ct+4,\Ct+4+1);
%\coordinate(E33)at(\Ct+5,\Ct+4+1);
%\coordinate(E34)at(\Ct+5,\Ct+3+1);
%\filldraw[fill=lightgray](E31)--(E32)--(E33)--(E34)--cycle;
%箱5行目4
\coordinate(E41)at(\Ct+2,\Ct+3+1);
\coordinate(E42)at(\Ct+2,\Ct+4+1);
\coordinate(E43)at(\Ct+3,\Ct+4+1);
\coordinate(E44)at(\Ct+3,\Ct+3+1);
\filldraw[fill=lightgray](E41)--(E42)--(E43)--(E44)--cycle;

%\coordinate(E51)at(\Ct+3,\Ct+3+1);
%\coordinate(E52)at(\Ct+3,\Ct+4+1);
%\coordinate(E53)at(\Ct+4,\Ct+4+1);
%\coordinate(E54)at(\Ct+4,\Ct+3+1);
%\filldraw[fill=lightgray](E51)--(E52)--(E53)--(E54)--cycle;

%切り口
%\draw[-,thick, teal] (5,0)--(5,\Sp)node[right]{$x=5$}; 

%\draw[-, thick, teal] (6,0)--(6,\Sp)node[right]{$x=6$}; 

\draw[->,>=stealth,semithick] (0,0)--(\Ti,0)node[right]{$x$}; %x軸
\draw[->,>=stealth,semithick] (0,0)--(0,\Sp)node[right]{$t$}; %y軸
%方眼用紙設定
\draw[name path=X1,loosely dotted,  semithick, blue] (1,0) -- (1,\Ti);
\draw[name path=X2,loosely dotted, semithick, blue] (2,0) -- (2,\Ti);
\draw[name path=X3,loosely dotted, semithick, blue] (3,0) -- (3,\Ti);
\draw[name path=X4,loosely dotted, semithick, blue] (5,0) -- (5,\Ti);
\draw[name path=X5,loosely dotted, semithick, blue] (4,0) -- (4,\Ti);
\draw[name path=X6,loosely dotted, semithick, blue] (6,0) -- (6,\Ti);
\draw[name path=X7,loosely dotted, semithick, blue] (7,0) -- (7,\Ti);
\draw[name path=Y1,loosely dotted, semithick, blue] (0,1) -- (\Sp,1);
\draw[name path=Y2, loosely dotted, semithick, blue] (0,2) -- (\Sp,2);
\draw[name path=Y3, loosely dotted, semithick, blue] (0,3) -- (\Sp,3);
\draw[name path=Y4,loosely dotted, semithick, blue] (0,4) -- (\Sp,4);
\draw[name path=Y5,loosely dotted, semithick, blue] (0,5) -- (\Sp,5);
\draw[name path=Y6,loosely dotted, semithick, blue] (0,6) -- (\Sp,6);
\draw[name path=Y7,loosely dotted, semithick, blue] (0,7) -- (\Sp,7);

\foreach\Q in { 1,2,3, 4, 5, 6, 7}\foreach\P in { 1,2,3, 4, 5, 6, 7}\fill[blue](\P,\Q)circle(0.03);
%\foreach\Q in { 5, 6}\fill[red](2,\Q)circle(0.04);
%\foreach\Q in { 3,4, 5, 6, 7}\fill[red](3,\Q)circle(0.04);
%\foreach\Q in { 2,3, 7}\fill[red](4,\Q)circle(0.04);
%\foreach\Q in { 2,3, 4}\fill[red](5,\Q)circle(0.04);
%\foreach\Q in { 3, 4, 5}\fill[red](6,\Q)circle(0.04);
%\foreach\Q in { 4, 5, 6}\fill[red](7,\Q)circle(0.04);

%\foreach\P in { 2,3, 4, 5, 6, 7}\fill[teal](\P,0)circle(0.07);
%\foreach\Q in { 2,3, 4, 5, 6, 7}\fill[teal](0,\Q)circle(0.07);

\end{tikzpicture}
    \subcaption{}\label{a}
  \end{minipage}
  \begin{minipage}[b]{0.45\linewidth}
    \centering
     \begin{tikzpicture}[scale=0.6]
%箱0行目１
\coordinate(Z11)at(\Ct+1,\Ct-1);
\coordinate(Z12)at(\Ct+1,\Ct);
\coordinate(Z13)at(\Ct+2,\Ct);
\coordinate(Z14)at(\Ct+2,\Ct-1);
\filldraw[fill=gray](Z11)--(Z12)--(Z13)--(Z14)--cycle;
%箱0行目２
\coordinate(Z21)at(\Ct+2,\Ct-1);
\coordinate(Z22)at(\Ct+2,\Ct);
\coordinate(Z23)at(\Ct+3,\Ct);
\coordinate(Z24)at(\Ct+3,\Ct-1);
\filldraw[fill=teal,opacity=0.6](Z21)--(Z22)--(Z23)--(Z24)--cycle;

%箱１行目１
\coordinate(A11)at(\Ct,\Ct);
\coordinate(A12)at(\Ct,\Ct+1);
\coordinate(A13)at(\Ct+1,\Ct+1);
\coordinate(A14)at(\Ct+1,\Ct);
\filldraw[fill=gray](A11)--(A12)--(A13)--(A14)--cycle;
%箱１行目２
\coordinate(A21)at(\Ct+1,\Ct);
\coordinate(A22)at(\Ct+1,\Ct+1);
\coordinate(A23)at(\Ct+2,\Ct+1);
\coordinate(A24)at(\Ct+2,\Ct);
\filldraw[fill=gray](A21)--(A22)--(A23)--(A24)--cycle;
%箱１行目３
\coordinate(A31)at(\Ct+2,\Ct);
\coordinate(A32)at(\Ct+2,\Ct+1);
\coordinate(A33)at(\Ct+3,\Ct+1);
\coordinate(A34)at(\Ct+3,\Ct);
\filldraw[fill=teal, opacity=0.6](A31)--(A32)--(A33)--(A34)--cycle;
%箱１行目4
\coordinate(A41)at(\Ct+2+1,\Ct);
\coordinate(A42)at(\Ct+2+1,\Ct+1);
\coordinate(A43)at(\Ct+3+1,\Ct+1);
\coordinate(A44)at(\Ct+3+1,\Ct);
\filldraw[fill=lightgray](A41)--(A42)--(A43)--(A44)--cycle;

%箱2行目１
\coordinate(B11)at(\Ct,\Ct+1);
\coordinate(B12)at(\Ct,\Ct+2);
\coordinate(B13)at(\Ct+1,\Ct+2);
\coordinate(B14)at(\Ct+1,\Ct+1);
\filldraw[fill=gray](B11)--(B12)--(B13)--(B14)--cycle;
%箱2行目２
\coordinate(B21)at(\Ct+1,\Ct+1);
\coordinate(B22)at(\Ct+1,\Ct+2);
\coordinate(B23)at(\Ct+2,\Ct+2);
\coordinate(B24)at(\Ct+2,\Ct+1);
\filldraw[fill=gray](B21)--(B22)--(B23)--(B24)--cycle;
%箱2行目３
\coordinate(B31)at(\Ct+4,\Ct+1);
\coordinate(B32)at(\Ct+4,\Ct+2);
\coordinate(B33)at(\Ct+5,\Ct+2);
\coordinate(B34)at(\Ct+5,\Ct+1);
\filldraw[fill=lightgray](B31)--(B32)--(B33)--(B34)--cycle;
%箱2行目4
\coordinate(B41)at(\Ct+3,\Ct+1);
\coordinate(B42)at(\Ct+3,\Ct+2);
\coordinate(B43)at(\Ct+4,\Ct+2);
\coordinate(B44)at(\Ct+4,\Ct+1);
\filldraw[fill=lightgray](B41)--(B42)--(B43)--(B44)--cycle;
%箱2行目5
\coordinate(B51)at(\Ct+2,\Ct+1);
\coordinate(B52)at(\Ct+2,\Ct+2);
\coordinate(B53)at(\Ct+3,\Ct+2);
\coordinate(B54)at(\Ct+3,\Ct+1);
\filldraw[fill=teal, opacity=0.6](B51)--(B52)--(B53)--(B54)--cycle;

%箱3行目１
\coordinate(C11)at(\Ct-1,\Ct+2);
\coordinate(C12)at(\Ct-1,\Ct+3);
\coordinate(C13)at(\Ct,\Ct+3);
\coordinate(C14)at(\Ct,\Ct+2);
\filldraw[fill=gray](C11)--(C12)--(C13)--(C14)--cycle;
%箱3行目２
\coordinate(C21)at(\Ct,\Ct+2);
\coordinate(C22)at(\Ct,\Ct+3);
\coordinate(C23)at(\Ct+1,\Ct+3);
\coordinate(C24)at(\Ct+1,\Ct+2);
\filldraw[fill=gray](C21)--(C22)--(C23)--(C24)--cycle;
%箱3行目３
\coordinate(C31)at(\Ct+3,\Ct+2);
\coordinate(C32)at(\Ct+3,\Ct+3);
\coordinate(C33)at(\Ct+4,\Ct+3);
\coordinate(C34)at(\Ct+4,\Ct+2);
\filldraw[fill=lightgray](C31)--(C32)--(C33)--(C34)--cycle;
%箱3行目4
\coordinate(C41)at(\Ct+3+1,\Ct+2);
\coordinate(C42)at(\Ct+3+1,\Ct+3);
\coordinate(C43)at(\Ct+4+1,\Ct+3);
\coordinate(C44)at(\Ct+4+1,\Ct+2);
\filldraw[fill=lightgray](C41)--(C42)--(C43)--(C44)--cycle;
%箱3行目３
\coordinate(C51)at(\Ct+1,\Ct+2);
\coordinate(C52)at(\Ct+1,\Ct+3);
\coordinate(C53)at(\Ct+2,\Ct+3);
\coordinate(C54)at(\Ct+2,\Ct+2);
\filldraw[fill=gray](C51)--(C52)--(C53)--(C54)--cycle;
%箱3行目4
\coordinate(C61)at(\Ct+2,\Ct+2);
\coordinate(C62)at(\Ct+2,\Ct+3);
\coordinate(C63)at(\Ct+3,\Ct+3);
\coordinate(C64)at(\Ct+3,\Ct+2);
\filldraw[fill=magenta, opacity=0.5](C61)--(C62)--(C63)--(C64)--cycle;

%箱4行目１
\coordinate(D11)at(\Ct-1,\Ct+3);
\coordinate(D12)at(\Ct-1,\Ct+4);
\coordinate(D13)at(\Ct,\Ct+4);
\coordinate(D14)at(\Ct,\Ct+3);
\filldraw[fill=gray](D11)--(D12)--(D13)--(D14)--cycle;
%箱4行目２
\coordinate(D21)at(\Ct+4-2,\Ct+3);
\coordinate(D22)at(\Ct+4-2,\Ct+4);
\coordinate(D23)at(\Ct+5-2,\Ct+4);
\coordinate(D24)at(\Ct+5-2,\Ct+3);
\filldraw[fill=magenta, opacity=0.5](D21)--(D22)--(D23)--(D24)--cycle;

%箱4行目３
\coordinate(D31)at(\Ct+1,\Ct+3);
\coordinate(D32)at(\Ct+1,\Ct+4);
\coordinate(D33)at(\Ct+2,\Ct+4);
\coordinate(D34)at(\Ct+2,\Ct+3);
\filldraw[fill=gray](D31)--(D32)--(D33)--(D34)--cycle;
%箱4行目4
\coordinate(D41)at(\Ct,\Ct+3);
\coordinate(D42)at(\Ct,\Ct+4);
\coordinate(D43)at(\Ct+1,\Ct+4);
\coordinate(D44)at(\Ct+1,\Ct+3);
\filldraw[fill=gray](D41)--(D42)--(D43)--(D44)--cycle;

%追加の箱4行目
\coordinate(D51)at(\Ct+4-1,\Ct+3);
\coordinate(D52)at(\Ct+4-1,\Ct+4);
\coordinate(D53)at(\Ct+5-1,\Ct+4);
\coordinate(D54)at(\Ct+5-1,\Ct+3);
\filldraw[fill=lightgray](D51)--(D52)--(D53)--(D54)--cycle;

%箱5行目１
\coordinate(E11)at(\Ct,\Ct+3+1);
\coordinate(E12)at(\Ct,\Ct+4+1);
\coordinate(E13)at(\Ct+1,\Ct+4+1);
\coordinate(E14)at(\Ct+1,\Ct+3+1);
\filldraw[fill=gray](E11)--(E12)--(E13)--(E14)--cycle;
%箱5行目２
\coordinate(E21)at(\Ct+1,\Ct+3+1);
\coordinate(E22)at(\Ct+1,\Ct+4+1);
\coordinate(E23)at(\Ct+2,\Ct+4+1);
\coordinate(E24)at(\Ct+2,\Ct+3+1);
\filldraw[fill=gray](E21)--(E22)--(E23)--(E24)--cycle;
%箱4行目３
%\coordinate(E31)at(\Ct+4,\Ct+3+1);
%\coordinate(E32)at(\Ct+4,\Ct+4+1);
%\coordinate(E33)at(\Ct+5,\Ct+4+1);
%\coordinate(E34)at(\Ct+5,\Ct+3+1);
%\filldraw[fill=lightgray](E31)--(E32)--(E33)--(E34)--cycle;
%箱5行目4
\coordinate(E41)at(\Ct+2,\Ct+3+1);
\coordinate(E42)at(\Ct+2,\Ct+4+1);
\coordinate(E43)at(\Ct+3,\Ct+4+1);
\coordinate(E44)at(\Ct+3,\Ct+3+1);
\filldraw[fill=magenta, opacity=0.5](E41)--(E42)--(E43)--(E44)--cycle;

%\coordinate(E51)at(\Ct+3,\Ct+3+1);
%\coordinate(E52)at(\Ct+3,\Ct+4+1);
%\coordinate(E53)at(\Ct+4,\Ct+4+1);
%\coordinate(E54)at(\Ct+4,\Ct+3+1);
%\filldraw[fill=lightgray](E51)--(E52)--(E53)--(E54)--cycle;

%切り口
\draw[-,thick, teal] (5,0)--(5,\Sp)node[right]{$x=5$}; 

%\draw[-, thick, teal] (6,0)--(6,\Sp)node[right]{$x=6$}; 

\draw[->,>=stealth,semithick] (0,0)--(\Ti,0)node[right]{$x$}; %x軸
\draw[->,>=stealth,semithick] (0,0)--(0,\Sp)node[right]{$t$}; %y軸
%方眼用紙設定
\draw[name path=X1,loosely dotted,  semithick, blue] (1,0) -- (1,\Ti);
\draw[name path=X2,loosely dotted, semithick, blue] (2,0) -- (2,\Ti);
\draw[name path=X3,loosely dotted, semithick, blue] (3,0) -- (3,\Ti);
\draw[name path=X4,loosely dotted, semithick, blue] (5,0) -- (5,\Ti);
\draw[name path=X5,loosely dotted, semithick, blue] (4,0) -- (4,\Ti);
\draw[name path=X6,loosely dotted, semithick, blue] (6,0) -- (6,\Ti);
\draw[name path=X7,loosely dotted, semithick, blue] (7,0) -- (7,\Ti);
\draw[name path=Y1,loosely dotted, semithick, blue] (0,1) -- (\Sp,1);
\draw[name path=Y2, loosely dotted, semithick, blue] (0,2) -- (\Sp,2);
\draw[name path=Y3, loosely dotted, semithick, blue] (0,3) -- (\Sp,3);
\draw[name path=Y4,loosely dotted, semithick, blue] (0,4) -- (\Sp,4);
\draw[name path=Y5,loosely dotted, semithick, blue] (0,5) -- (\Sp,5);
\draw[name path=Y6,loosely dotted, semithick, blue] (0,6) -- (\Sp,6);
\draw[name path=Y7,loosely dotted, semithick, blue] (0,7) -- (\Sp,7);

\foreach\Q in { 1,2,3, 4, 5, 6, 7}\foreach\P in { 1,2,3, 4, 5, 6, 7}\fill[blue](\P,\Q)circle(0.03);
%\foreach\Q in { 5, 6}\fill[red](2,\Q)circle(0.04);
%\foreach\Q in { 3,4, 5, 6, 7}\fill[red](3,\Q)circle(0.04);
%\foreach\Q in { 2,3, 7}\fill[red](4,\Q)circle(0.04);
%\foreach\Q in { 2,3, 4}\fill[red](5,\Q)circle(0.04);
%\foreach\Q in { 3, 4, 5}\fill[red](6,\Q)circle(0.04);
%\foreach\Q in { 4, 5, 6}\fill[red](7,\Q)circle(0.04);

%\foreach\P in { 2,3, 4, 5, 6, 7}\fill[teal](\P,0)circle(0.07);
%\foreach\Q in { 2,3, 4, 5, 6, 7}\fill[teal](0,\Q)circle(0.07);

\end{tikzpicture}
    \subcaption{}\label{b}
  \end{minipage}
  \caption{The dark gray region in Figure (a) represents $\bs{D}_q$, while the light gray region represents $\bs{D}_c$. In Figure (b), the green region represents $\bs{D}_q(5)$, and the red region represents $\bs{D}_c(5)$.}\label{PicDs}
\end{figure}

 Let $B\subseteq \vLa$.
Suppose that we are given an electron configuration $\bn_{B, \rm e}=(n_{B, \rm e}(x))_{ x\in B } \in \mathcal{N}_{B, \rm e}$. 
We say that the site $x\in B$ is in the $\ell$-ground state with respect to the electronic configuration $\bn_{B, \rm e}$, 
if there is an $\ell$ such that $n_{B, \rm e}(y)=\bs{g}_{\rm e}^{(\ell)}(y)$ holds for all $y\in U(x)$.
If $x$ is not in the ground state, then $x$ is said to be in an excited state with respect to $\bn_{B, \rm e}$.
Similarly, considering a phonon configuration denoted by $\bn_{B, \rm p}=(n_{B, \rm p}(x))_{ x\in B } \in \mathcal{N}_{B, \rm p}$, we define a site $x\in B$ to be in the ground state with respect to the phonon  configuration $\bn_{B, \rm p}$ if $n_{B, \rm p}(x)=0$. If $x$ is not in the ground state, we say that it is in an excited state with respect to $\bn_{B, \rm p}$. Using the above terminology, we can define the following: 
\begin{align}
\mathcal{E}_{B,  \rm e}(x)&=\left\{
\bn_{B,  \rm e} : \mbox{$x$ is in an excited state with respect to  $\bn_{B, \rm e}$}
\right\},\\
\mathcal{E}_{B,  \rm p}(x)&=\left\{
\bn_{B,  \rm p} : \mbox{$x$ is in an excited state with respect to $\bn_{B,  \rm p}$}
\right\}.
\end{align}
Furthermore, 
 for each $\bnd=(\bn_{\bs{D}_{c},  {\rm e}}, \bn_{\bs{D}_{c},  \rm p})\in \bE_{\bs{D}_c}$, we define
\begin{align}
\mathscr{E}_{\rm e}(\bne)&=\left\{(x, t) \in \bs{D}_c : 
\bn^{(t)}_{D^{(t)}_c,  \rm e}\in \mathcal{E}_{D^{(t)}_c,  \rm e}(x)\right\},\\
\mathscr{E}_{\rm p}(\bnp)&=\left\{(x, t) \in \bs{D}_c : \bn^{(t)}_{D^{(t)}_c,  \rm p}\in \mathcal{E}_{D^{(t)}_c,  \rm p}(x)
 \right\}.
\end{align}
The set $\mathscr{E}_{\rm e}(\bne)$ consists of  space-time lattice sites that are in excited states with respect to the electron configuration $\bne$. Similarly, $\mathscr{E}_{\rm p}(\bnp)$ consists of  space-time lattice sites that are in excited states with respect to the phonon configuration $\bnp$. It holds that $\mathscr{E}_{\rm e}(\bne)\cup \mathscr{E}_{\rm p}(\bnp)=[\bs{D}_c]$. However, note that $\mathscr{E}_{\rm e}(\bne)\cap \mathscr{E}_{\rm p}(\bnp)\ne \varnothing$ in general.

For a  given site $(x, t)\in \mathbb{L}_{\vLa}$, we define an operator $u_{\bs{\vphi}}(x, t)$ as follows: 
Firstly, we consider the case where $(x,t)\in [\bs{D}_q]$.
In this case, we need to introduce some symbols to specify $u_{\bs{\vphi}}(x, t)$.
 Let $\bs{m}$, $\pi$, $s_1,\dots,s_{|\bs{m}|}$ be as introduced in Subsection \ref{Sec3.1}. For $A\in\supp \bs{m}$, we set $\varTheta_A = \exp(\mathrm{i}\vartheta_A)$, where $\vartheta_A$ is defined in \eqref{DefTht}. 
 If $x\in\supp A$, then we define $\varTheta_{A,x}=\prod_{\sigma,\kappa:(x,\sigma,\kappa)\in A}\ex^{ \im \vartheta_{(x, \sigma, \kappa)}}$. We readily confirm that  $\varTheta_A=\prod_{x\in\supp A}\varTheta_{A,x}$ holds. If  $x\notin \supp A$,  then we simply set  $\varTheta_{A,x}=\mathbbm{1}$.
Next,  for each 
 $\omega^{(t)}=(\bs{\tau}^{(t)}, \bs{m}^{(t)})\in \varXi_{D_q^{(t)}}$, we define
\be
(\tilde{\varTheta}_{1, x}, \dots, \tilde{\varTheta}_{|\bs{m}^{(t)}|, x})=\pi(
\varTheta_{A_1, x}, \dots, \varTheta_{A_1, x}, \dots, \varTheta_{A_k, x}, \dots, \varTheta_{A_k, x}
).
\ee
Recall here that $\supp \bs{m}^{(t)}$ is represented as $\{A_1, \dots, A_k\}$.
With the above setup,   $u_{\bs{\vphi}}(x, t)$ is defined as
\be
u_{\bs{\vphi}}(x, t)=\tilde{\varTheta}_{1, x}(s_1) \dots \tilde{\varTheta}_{|\bs{m}^{(t)}|, x}(s_{|\bs{m}^{(t)}|})\, 
\ex^{-\tilde{\beta} \omega_0 N_x}\Ket{n^{(t)}_{D^{(t)},  \rm p}(x)} \Bra{n^{(t)}_{D^{(t)},  \rm p}(x)}, 
\ee
where $ \tilde{\varTheta}_{A_i, x}(s)=\ex^{-s \omega_0 N_x}\tilde{\varTheta}_{A_i, x}\ex^{s \omega_0 N_x}$ ($N_x=b_x^*b_x$).

Secondly, let us consider the case where $(x, t)\in [\bs{D}_c]$. In this case, we define $u_{\bs{\vphi}}(x, t)$ by
\begin{align}
u_{\bs{\vphi}}(x, t)=
\Ket{n^{(t)}_{D^{(t)},  \rm p}(x)} \Bra{n^{(t)}_{D^{(t)},  \rm p}(x)}.
\end{align}
Finally, if $(x, t)\in [\mathbb{T}_{\vLa}\setminus \bs{D}]$, we define $u_{\bs{\vphi}}(x, t)=P_{\varnothing}:=\ket{\varnothing}\! \bra{\varnothing}$. To summarize the above, we have
\begin{align}
u_{{\bphi}}(x, t)=
\begin{cases}
\tilde{\varTheta}_{1, x}(s_1) \cdots \tilde{\varTheta}_{|\bs{m}^{(t)}|, x}(s_{|\bs{m}^{(t)}|})
\, \ex^{-\tilde{\beta} \omega_0 N_x}\Ket{n^{(t)}_{D^{(t)},  \rm p}(x)} \Bra{n^{(t)}_{D^{(t)},  \rm p}(x)} &  (x, t)\in [\bs{D}_{q}]\\
\Ket{n^{(t)}_{D^{(t)},  \rm p}(x)} \Bra{n^{(t)}_{D^{(t)},  \rm p}(x)}  & (x, t)\in [\bs{D}_c]\\
P_{\varnothing} & (x, t)\notin [\bs{D}].
\end{cases}\label{Deflittleu}
\end{align}

Using the operator $u_{{\bphi}}(x, t)$ thus defined, we can factorize $S_{\rm p}(\bphi)$ in the following manner:
\be
S_{\rm p}(\bs{\vphi})=\prod_{x\in [\bs{D}]_{\rm S}} S_{\rm p}(\bs{\vphi}(x)),\quad
S_{\rm p}(\bs{\vphi}(x))=\Braket{\varnothing | \prod_{t\in \mathbb{M}} u_{\bs{\vphi}}(x, t)|\varnothing},  \label{DefS1}
\ee
where $\prod_{t\in \mathbb{M}} A_t=A_1A_2\cdots A_M$ denotes the time-ordered product.
Note that this factorization representation is advantageous in analyzing contour activities, which is the focus of this section.

The summation over excited configurations can be expressed as follows: 
\begin{align}
\sum_{\bn_{\bs{D}_c}\in \bE_{\bs{D}_c}} =\sum_{{X_{\rm e}, X_{\rm p} \subseteq \bs{D}_c}\atop{X_{\rm e} \cup X_{\rm p}=\bs{D}_c}}
\sum_{\substack{\bn_{\bs{D}_c}:  \\ \mathscr{E}_{\rm e}(\bne)=[X_{\rm e}] \\ \mathscr{E}_{\rm p}(\bnp)=[X_{\rm p}]}}.
\end{align}
Here, $\sum_{{X_{\rm e}, X_{\rm p} \subseteq \bs{D}c}\atop{X{\rm e} \cup X_{\rm p}=\bs{D}c}}$ denotes the  sum over collections of elementary cubes $X_{\rm e}$ and $X_{\rm p}$ that satisfy the following conditions (i) and (ii): (i) $X_{\rm e}, X_{\rm p} \subseteq \bs{D}_c$; (ii) $X_{\rm e} \cup X_{\rm p}=\bs{D}_c$.

Using this representation, we get
\begin{align}
&\sum_{\bn_{\bs{D}_{q},  \rm p}\in \bN_{\bs{D}_{q},  \rm p}} \sum_{\bn_{\bs{D}_c}\in \bE_{\bs{D}_c}} \ex^{-\tilde{\beta}  \bs{E}(\bn_{\bs{D}_c })} S_{\rm e}(\bs{\vphi}_{\rm e})S_{\rm p}(\bs{\vphi}_{\rm p})\no
=&\sum_{\bn_{\bs{D}_{q},  \rm p}\in \bN_{\bs{D}_{q},  \rm p}}  \sum_{{X_{\rm e}, X_{\rm p} \subseteq \bs{D}_c}\atop{X_{\rm e} \cup X_{\rm p}=\bs{D}_c}}
\sum_{\substack{\bn_{\bs{D}_c}:  \\ \mathscr{E}_{\rm e}(\bne)=[X_{\rm e}]  \\ \mathscr{E}_{\rm p}(\bnp)=[X_{\rm p}]}}\ex^{-\tilde{\beta}  \bs{E}(\bn_{\bs{D}_c })} S_{\rm e}(\bs{\vphi}_{\rm e})S_{\rm p}(\bs{\vphi}_{\rm p})\no
=&\sum_{{X_{\rm e}, X_{\rm p} \subseteq \bs{D}_c}\atop{X_{\rm e} \cup X_{\rm p}=\bs{D}_c}}
\sum_{\substack{\bn_{\bs{D}_c,  \rm e}:  \\ \mathscr{E}_{\rm e}(\bn_{\bs{D}_c,  \rm e})=[X_{\rm e}]}}
 \ex^{-\tilde{\beta}  E_{\rm e}(\bn_{X_{\rm e},  \rm e})} S_{\rm e}(\bs{\vphi}_{\rm e})Q_{X_{\rm p}}(\bs{\vphi}_{ \rm e}),
\end{align}
where  $Q_{X_{\rm p}}(\bs{\vphi}_{ \rm e})$ is given by 
\begin{align}
Q_{X_{\rm p}}(\bs{\vphi}_{ \rm e})=\sum_{\bn_{\bs{D}_q,  \rm p}\in \bN_{\bs{D}_{q},  \rm p}} \sum_{\substack{\bn_{\bs{D}_c,  \rm p}:\\
\mathscr{E}_{\rm p}(\bn_{\bs{D}_c,  \rm p})=[X_{\rm p}]
}}\ex^{-\tilde{\beta}  E_{\rm p}(\bn_{\bs{D}_c,  \rm p })} S_{\rm p}(\bs{\vphi}_{\rm p})\label{DefQ1}
\end{align}
with 
\be
E_{\rm p}(\bn_{\bs{D}_c,  \rm p })= \sum_{(x, t)\in [\bs{D}_{c}]} \omega_0 n_{D_{c}^{(t)}, \rm p}^{(t)}(x),\quad
E_{\rm e}(\bn_{\bs{D}_c, \rm e })=\sum_{t=1}^M \sum_{x\in D^{(t)} \setminus \overline{D}_q^{(t)}}\varPhi^{(\ell)}_{{\rm eff}, x}\left(\bn^{(t)}_{D_c^{(t)}, \rm e}\right).
\ee
It is straightforward to verify that the equality
$\bs{E}(\bn_{\bs{D}_c})=E_{\rm e}(\bn_{\bs{D}_c, \rm e })+E_{\rm p}(\bn_{\bs{D}_c,  \rm p})$ holds. 
Summing up the above, the contour activity of $Y$ can be expressed as follows:
\begin{align}
\rho(Y)=\sum_{\bs{D}_q \subseteq \bs{D}} 
\sum_{{X_{\rm e}, X_{\rm p} \subseteq \bs{D}_c}\atop{X_{\rm e} \cup X_{\rm p}=\bs{D}_c}}
\sum_{\substack{\bn_{\bs{D}_c,  \rm e}:  \\ \mathscr{E}_{\rm e}(\bn_{\bs{D}_c,  \rm e})=[X_{\rm e}]}}
\sum_{\bn_{\bs{D}_q,  \rm e}\in \bN_{\bs{D}_q,  \rm e}} \int_{\bs{\varXi}_{\bs{D}_q}} \mathscr{D}(\bome)
\, \ex^{-\tilde{\beta} E_{\rm e}(\bn_{\bs{D}_{c}, \rm e })} S_{\rm e}(\bs{\vphi}_{\rm e})Q_{X_{\rm p}}(\bs{\vphi}_{\rm e}).
\label{ActRep}
\end{align}
Note that we will prove that $Q_{X_{\rm p}}$ converges absolutely, which justifies the exchange of the sum and integral that we used to derive \eqref{ActRep}.

To obtain an upper bound for $\rho(Y)$, we initially estimate $Q_{X_{\rm p}}$.

\begin{Lemm}\label{QInq1}
One obtains the following:
\be
\left| 
Q_{X_{\rm p}}\left(\bs{\vphi}_{\rm e}\right)
\right| \le \ex^{-\tilde{\beta} \omega_0 |X_{\rm p}|}.
\ee
\end{Lemm}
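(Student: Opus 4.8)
The plan is to reduce the bound to a product of single‑column bosonic vacuum expectations of contractions, exploiting the site factorisation of $S_{\rm p}$ in \eqref{DefS1} together with the rank‑one projections $\Ket{n}\Bra{n}$ carried by the operators $u_{\bphi}(x,t)$ in \eqref{Deflittleu}. First I would observe that in \eqref{DefQ1} the contour data $\bphi_{\rm e}$ (hence $\bome$, $\bs{D}_q$, $\bs{D}_c$ and $X_{\rm p}$) are fixed, and only the phonon occupations $\bn_{\bs{D}_q,\rm p}$ and $\bn_{\bs{D}_c,\rm p}$ are summed. Since the energy $E_{\rm p}(\bn_{\bs{D}_c,\rm p})=\sum_{(x,t)\in[\bs{D}_c]}\omega_0 n^{(t)}_{D_c^{(t)},\rm p}(x)$, the constraint $\mathscr{E}_{\rm p}=[X_{\rm p}]$, and the factorised amplitude $S_{\rm p}(\bphi_{\rm p})=\prod_{x\in[\bs{D}]_{\rm S}}S_{\rm p}(\bphi(x))$ all decouple over the spatial columns $\{x\}\times\mathbb{M}$, the whole sum factorises:
\[
Q_{X_{\rm p}}(\bphi_{\rm e})=\prod_{x\in[\bs{D}]_{\rm S}}Q_x ,
\]
where $Q_x$ collects the occupations on the single column over $x$ with its own energy weight, the amplitude $S_{\rm p}(\bphi(x))=\Braket{\varnothing|\prod_{t\in\mathbb{M}}u_{\bphi}(x,t)|\varnothing}$, and the per‑cube constraint (occupation $\ge 1$ on $X_{\rm p}$, occupation $0$ on $\bs{D}_c\setminus X_{\rm p}$, free over $\BbbZ_+$ on $\bs{D}_q$).

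Next, in each $Q_x$ I would perform the occupation sums factor by factor, using the rank‑one projection inside $u_{\bphi}(x,t)$. Because the summand is a time‑ordered product whose factors depend on independent occupation variables, the sums pass inside the vacuum expectation and act on each factor separately, replacing $u_{\bphi}(x,t)$ by $v_{\bphi}(x,t):=\sum_{n_t}(\text{weight})\,u_{\bphi}(x,t)$. For a quantum cube one gets $\sum_{n\ge0}A_t\Ket{n}\Bra{n}=A_t$, with $A_t:=\tilde{\varTheta}_{1,x}(s_1)\cdots\tilde{\varTheta}_{|\bs{m}^{(t)}|,x}(s_{|\bs{m}^{(t)}|})\,\ex^{-\tilde{\beta}\omega_0 N_x}$ (no separate energy weight appears, the damping $\ex^{-\tilde{\beta}\omega_0 N_x}$ being already inside $u_{\bphi}$). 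For a classical cube forced to be excited one gets $R:=\sum_{n\ge1}\ex^{-\tilde{\beta}\omega_0 n}\Ket{n}\Bra{n}=\ex^{-\tilde{\beta}\omega_0 N_x}(\one-P_{\varnothing})$, while the remaining (classical‑ground or ground) cubes contribute $P_{\varnothing}$. Hence $Q_x=\Braket{\varnothing|\prod_{t\in\mathbb{M}}v_{\bphi}(x,t)|\varnothing}$ and $|Q_x|\le\prod_{t\in\mathbb{M}}\|v_{\bphi}(x,t)\|$.

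The crux is the operator‑norm estimates. For the quantum factor I would prove $\|A_t\|\le1$ by telescoping $\tilde{\varTheta}_{i,x}(s)=\ex^{-s\omega_0 N_x}\tilde{\varTheta}_{i,x}\ex^{s\omega_0 N_x}$ into
\[
A_t=\ex^{-s_1\omega_0 N_x}\tilde{\varTheta}_{1,x}\,\ex^{-(s_2-s_1)\omega_0 N_x}\tilde{\varTheta}_{2,x}\cdots \tilde{\varTheta}_{|\bs{m}^{(t)}|,x}\,\ex^{-(\tilde{\beta}-s_{|\bs{m}^{(t)}|})\omega_0 N_x}:
\]
since $0\le s_1\le\cdots\le s_{|\bs{m}^{(t)}|}\le\tilde{\beta}$ and $N_x\ge0$, every exponential factor is a contraction and every $\tilde{\varTheta}_{i,x}=\ex^{\im\vartheta}$ is unitary, so $\|A_t\|\le1$. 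The operator $R$ is diagonal with largest eigenvalue attained at $n=1$, giving $\|R\|=\ex^{-\tilde{\beta}\omega_0}$, and $\|P_{\varnothing}\|=1$. Therefore $|Q_x|\le\ex^{-\tilde{\beta}\omega_0\,\#\{t:(x,t)\in[X_{\rm p}]\}}$, and multiplying over columns,
\[
|Q_{X_{\rm p}}|\le \prod_{x\in[\bs{D}]_{\rm S}}\ex^{-\tilde{\beta}\omega_0\,|X_{\rm p}\cap(\{x\}\times\mathbb{M})|}=\ex^{-\tilde{\beta}\omega_0|X_{\rm p}|},
\]
which is the asserted bound.

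The main obstacle is justifying the interchange of the infinite occupation sums with the vacuum expectation, i.e.\ the absolute convergence of $Q_x$, since the free quantum sum carries no a priori damping before the projections $\Ket{n}\Bra{n}$ are resummed. I would settle this using the exponential damping $\ex^{-\tilde{\beta}\omega_0 N_x}$ present in each $A_t$ together with the vacuum boundary conditions, so that the partial sums over occupations are dominated and converge in norm to the operators $v_{\bphi}(x,t)$; the requisite convergence and boundedness of these bosonic vacuum expectations are precisely the facts collected in Appendix \ref{AppCor}, which legitimise the term‑by‑term resummation used above.
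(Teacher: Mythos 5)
Your proof is correct and takes essentially the same route as the paper: both factorize $Q_{X_{\rm p}}$ over the spatial columns, resum the phonon occupation numbers so that each cube contributes an operator (your $v_{\bphi}(x,t)$ is the paper's $U_{\bphi}(x,t)$ from \eqref{DefU}), and bound the resulting single-column vacuum expectations by products of operator norms, with each cube of $X_{\rm p}$ supplying a factor $\ex^{-\tilde{\beta}\omega_0}$ and all remaining factors being contractions. The only deviations are cosmetic: you spell out the telescoping argument behind $\|A_t\|\le 1$, which the paper asserts without detail, and you assign $P_{\varnothing}$ (rather than the paper's $\ex^{-\tilde{\beta}\omega_0 N_x}$) to the phonon-ground classical cubes --- since both operators have norm at most one, this makes no difference to the final estimate.
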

\begin{proof}
Recall that we defined $N_x=b_x^*b_x$.
It is readily verified that the following operator inequality holds: 
$
\ex^{-\tilde{\beta} \omega_0 N_x} P_{\varnothing}^{\perp} \le \ex^{-\tilde{\beta} \omega_0 }P_{\varnothing}^{\perp}, 
$
where
$P^{\perp}_{\varnothing}=\mathbbm{1}-\ket{\varnothing}\!\bra{\varnothing}$.
From this,  we get the following:
\be
\|
\ex^{-\tilde{\beta} \omega_0 N_x} P_{\varnothing}^{\perp} 
\|\le \ex^{-\tilde{\beta} \omega_0 }.\label{NInq}
\ee
Note the following facts concerning  the harmonic oscillator:
\be
\sum_{n=0}^{\infty}\ex^{-\tilde{\beta} \omega_0 n}\ket{n}\!\bra{n}=\ex^{-\tilde{\beta} \omega_0 N_x}, \quad
\sum_{n=1}^{\infty}\ex^{-\tilde{\beta} \omega_0 n}\ket{n}\!\bra{n}=\ex^{-\tilde{\beta} \omega_0 N_x}P_{\varnothing}^{\perp}, 
\ee
where $\ket{n}$ is the eigenvector of the operator $N_x$ with the corresponding eigenvalue $n$, i.e., $N_x\ket{n}=n \ket{n}$. Utilizing the aforementioned facts, one can represent $Q_{X_{\rm p}}$ in the following manner:
\begin{align}
Q_{X_{\rm p}}\left(\bs{\vphi}_{\rm e}\right)
=&
\sum_{\bn_{\bs{D}_c, \rm p}\in \BbbN^{[X_{\rm p}]}}\sum_{\bn_{\bs{D}_{q},  \rm p}\in \bN_{\bs{D}_{q},  \rm p}}
\prod_{x\in [\bs{D}]_{\rm S}}  \ex^{-\tilde{\beta}E_{\rm p}(\bnp(x))}S_{\rm p}(\bphi(x))
\no
=&\prod_{x\in [\bs{D}]_{\rm S}}\Braket{
\varnothing| \prod_{ {t \in [\mathsf{S}_x(\bs{D}})]_{\rm T}}U_{\bphi}(x, t) |\varnothing
},  \label{QRep}
\end{align}
 where the operator $U_{\bphi}(x, t)$ is defined as follows:
\begin{align}
U_{{\bphi}}(x, t)=
\begin{cases}
\tilde{\varTheta}_{1, x}(s_1) \cdots \tilde{\varTheta}_{|\bs{m}^{(t)}|, x}(s_{|\bs{m}^{(t)}|})
\, \ex^{-\tilde{\beta} \omega_0 N_x} &  (x, t)\in [\bs{D}_{q}]\\
\ex^{-\tilde{\beta} \omega_0 N_x}P^{\perp}_{\varnothing}
& (x, t) \in [X_{\rm p}]\\
\ex^{-\tilde{\beta} \omega_0 N_x}  & (x, t)\in [\bs{D}_c\setminus X_{\rm p}]\\
P_{\varnothing} & (x, t)\notin [\bs{D}].
\end{cases}\label{DefU}
\end{align}
Additionally, the product $\prod_{t \in [\mathsf{S}_x(\bs{D})]_{\rm T}} A_t$ in \eqref{QRep}
  denotes the time-ordered product.\footnote{When $[\mathsf{S}_x(\bs{D})]_{\rm T}$ can be represented as 
$[\mathsf{S}_x(\bs{D})]_{\rm T}=\{t_1, t_2, \dots, t_n : t_1<t_2<\cdots<t_n\}$, then the time-ordered product is defined as
$\prod_{ {t \in [\mathsf{S}_x(\bs{D}})]_{\rm T}} A_T=A_{t_1}A_{t_2}\cdots A_{t_n}$.
}

The desired claim can be derived by utilizing the inequality \eqref{NInq}  and the fact that $\|U_{{\bphi}}(x, t)\|\le 1$ for $(x, t)\in [\bs{D}_{q}\cup (\bs{D}_c \setminus X_{\rm p})]$.
\end{proof}

\begin{Prop}\label{ActBd1}
Suppose that 
$\lambda\in \BbbR, \tilde{\beta}>0$ and $\gamma_Q\ge 0$ satisfy the following:
\be
(\ex-1) \tilde{\beta}|\lambda| \|t\|_{\gamma_Q} \le 1,\quad (\gamma_Q-1)R_0^{-d}-\gamma_{\rm e}-|e_{\rm e}|> 0. \label{ConvClC}
\ee
If the contour $Y$ does not wind around the torus  $\mathbb{T}_{\vLa}$, then the following holds:
\be
|\rho(Y)| \le \ex^{-(\tilde{\beta} e_{\rm e}+\gamma) |\supp Y|}, 
\ee
where $\gamma=\min\left\{
\min\{\gamma_{\rm e}, \omega_0\},\,  (\gamma_Q-1)R_0^{-d}-\gamma_{\rm e}-|e_{\rm e}|
\right\}-5 \log 2$.
\end{Prop}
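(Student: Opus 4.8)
The plan is to start from the representation \eqref{ActRep} of $\rho(Y)$ and to bound its constituents one factor at a time, assigning to every elementary cube of $\supp Y=\bs{D}$ a definite multiplicative gain; the product of these gains over the decomposition $[\bs{D}]=[\bs{D}_q]\sqcup[\bs{D}_c]$ will reproduce $\ex^{-(\tilde\beta e_{\rm e}+\gamma)|\supp Y|}$. The guiding principle is that the classically excited cubes $[\bs{D}_c]$ are controlled by the energy factors — the Peierls condition \hyperlink{A4}{\textbf{(A.\,4)}} on the electronic side and the phonon energy on the bosonic side — whereas the quantum cubes $[\bs{D}_q]$ are controlled by the smallness of the quantum perturbation encoded in the Sobolev norm $\|\bs t\|_{\gamma_Q}$ of \hyperlink{A6}{\textbf{(A.\,6)}}. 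The reference weight $\ex^{-\tilde\beta e_{\rm e}}$ per cube will be supplied by $\ex^{-\tilde\beta E_{\rm e}}$ on $[\bs{D}_c]$ and by the classical heat kernels inside $S_{\rm e}$ on $[\bs{D}_q]$, and the deviations from this reference are what generate the various constants in $\gamma$.

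First I would apply the triangle inequality to \eqref{ActRep} and insert Lemma \ref{QInq1}, replacing $|Q_{X_{\rm p}}(\bphi_{\rm e})|$ by $\ex^{-\tilde\beta\omega_0|X_{\rm p}|}$; this assigns the gain $\ex^{-\tilde\beta\omega_0}$ to every phonon-excited cube of $[X_{\rm p}]$. Next I would bound the electronic factor $|S_{\rm e}(\bphi_{\rm e})|$. Each hopping operator satisfies $\|h_{A,\rm e}\|\le 1$, and the conjugations $\tilde h_{i,\rm e}(s)=\ex^{-sH^{(0)}_{\ell,\rm e}}\tilde h_{i,\rm e}\ex^{sH^{(0)}_{\ell,\rm e}}$ recombine with the terminal heat kernel into a product of diagonal propagators $\ex^{-\Delta s\,H^{(0)}_{\ell,\rm e}}$ over the positive time increments of the ordered $s_1\le\cdots\le s_{|\bs m|}$. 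Since $H^{(0)}_{\rm e}$ need not be non-negative, these propagators contribute at worst a growth $\ex^{\tilde\beta|e_{\rm e}|}$ per quantum site, and this is among the terms offset by the $-|e_{\rm e}|$ in $\gamma$; for electron-excited cubes $(x,t)\in[X_{\rm e}]$, the Peierls bound $\varPhi^{(\ell)}_{{\rm eff},x}\ge e_{\rm e}+\gamma_{\rm e}$ instead yields, relative to the reference, the favorable gain $\gamma_{\rm e}$ on each such cube.

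The heart of the argument is the control of the Duhamel measure and the quantum cubes. Replacing $(-\lambda\bs t)^{\bs m}$ by $(|\lambda|\,|\bs t|)^{\bs m}$ and integrating $\int d\bs\tau^{\bs m}$ over $[0,\tilde\beta]^{|\bs m|}$ produces $\tilde\beta^{|\bs m|}$, while the denominators $\bs m!$ and the sum over $\bs m$ organize into a geometric series over the number of hopping insertions on each fiber $\{x\}\times\mathbb M$. For each inserted term $A$ I would write $\tilde\beta|\lambda t_A|=\bigl(\tilde\beta|\lambda|\,|t_A|\,\ex^{\gamma_Q|\supp A|}\bigr)\ex^{-\gamma_Q|\supp A|}$ and use $\sum_{A:x\in\supp A}|t_A|\ex^{\gamma_Q|\supp A|}\le\|\bs t\|_{\gamma_Q}$ together with the first inequality of \eqref{ConvClC}, $(\ex-1)\tilde\beta|\lambda|\|\bs t\|_{\gamma_Q}\le 1$, to secure convergence and leave a residual weight $\ex^{-\gamma_Q|\supp A|}$ per insertion. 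Because a hopping term with support of $|\supp A|$ sites can render only $O(R_0^d)$ cubes per site quantum-excited after the $R_0$-thickening, this residual weight converts into a gain of at least $(\gamma_Q-1)R_0^{-d}$ for each cube of $[\bs{D}_q]$, the $-1$ absorbing the $(\ex-1)$ normalization; subtracting the unfavorable $\gamma_{\rm e}+|e_{\rm e}|$ identified above leaves precisely the second quantity in the definition of $\gamma$, whose positivity is the second inequality of \eqref{ConvClC}.

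Finally I would carry out the combinatorial bookkeeping: the nested sums over $\bs{D}_q\subseteq\bs{D}$, over pairs $(X_{\rm e},X_{\rm p})$ with $X_{\rm e}\cup X_{\rm p}=\bs{D}_c$, and over the excited electron and phonon configurations at each site each contribute at most a bounded number of alternatives per cube, collectively bounded by $2^{5|[\bs{D}]|}$, which is the origin of the $-5\log 2$ correction. Assembling everything, each cube of $\bs{D}_c$ earns at least $\ex^{-(\tilde\beta e_{\rm e}+\min\{\gamma_{\rm e},\omega_0\})}$ and each cube of $\bs{D}_q$ at least $\ex^{-(\tilde\beta e_{\rm e}+(\gamma_Q-1)R_0^{-d}-\gamma_{\rm e}-|e_{\rm e}|)}$, and multiplying by $2^{5|[\bs{D}]|}$ yields the claimed bound. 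I expect the third step — converting the support-dependent weight $\ex^{\gamma_Q|\supp A|}$ into a \emph{uniform} per-cube gain for the quantum cubes — to be the main obstacle: it demands a careful geometric count relating the supports of the Duhamel-expanded hopping operators to the number of space-time cubes they excite (hence the $R_0^{-d}$), a delicate tracking of the $\tilde\beta$-powers through the energy and measure estimates, and the resummation of arbitrarily many time-ordered insertions along each fiber. Here the non-winding hypothesis is essential, since it lets one treat each fiber $\{x\}\times\mathbb M$ as a genuine interval, so that the factorization \eqref{DefS1} and the geometric series close without periodic wrap-around corrections, the winding case being deferred to Subsection \ref{Sect4.3}.
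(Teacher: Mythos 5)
Your proposal follows essentially the same route as the paper's proof: bound the phonon part by Lemma \ref{QInq1}, control $|S_{\rm e}(\bphi_{\rm e})|$ through the Duhamel expansion with the Sobolev norm $\|\bs t\|_{\gamma_Q}$ and the first inequality of \eqref{ConvClC} (the paper packages this step as a citation of \cite[Lemma 4.2]{Borgs1996} together with $|\overline B|\le R_0^d|B|$), distribute the gains $\min\{\gamma_{\rm e},\omega_0\}$ per classical cube and $(\gamma_Q-1)R_0^{-d}-\gamma_{\rm e}-|e_{\rm e}|$ per (thickened) quantum cube (the paper's Lemma \ref{EstGa}), and absorb the sums over $\bs D_q$, $(X_{\rm e},X_{\rm p})$ and the excited configurations into the factor $2^{5|\supp Y|}$ that produces $-5\log 2$. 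The only cosmetic difference is that the paper isolates the energy bookkeeping as the appendix inequality \eqref{TilGamma}, whereas you phrase it as a per-cube gain assignment; the accounting is the same.
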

\begin{proof}
Using Lemma \ref{QInq1}, we have
\begin{align}
&\left| \sum_{\bn_{\bs{D}_{q,  \rm p}}\in \bN_{\bs{D}_{q},  \rm p}} \sum_{\bn_{\bs{D}_c}\in \bE_{\bs{D}_c}} \ex^{-\tilde{\beta}  \bs{E}(\bn_{\bs{D}_c })} S_{\rm e}(\bs{\vphi}_{\rm e})S_{\rm p}\left(\bs{\vphi}_{\rm p}\right) \right|\no
\le & 
\sum_{{X_{\rm e}, X_{\rm p} \subseteq \bs{D}_c}\atop{X_{\rm e} \cup X_{\rm p}=\bs{D}_c}}
\sum_{\substack{\bn_{\bs{D}_c,  \rm e}:  \\ \mathscr{E}_{\rm e}(\bn_{\bs{D}_c,  \rm e})=[X_{\rm e}]}}
 \ex^{-\tilde{\beta}  
 E_{\rm e}(\bn_{\bs{D}_{c}, \rm e })
 } | S_{\rm e}(\bs{\vphi}_{\rm e})||Q_{X_{\rm p}}(\bs{\vphi}_{ \rm e})|
\no
\le &
\sum_{{X_{\rm e}, X_{\rm p} \subseteq \bs{D}_c}\atop{X_{\rm e} \cup X_{\rm p}=\bs{D}_c}}
\sum_{\substack{\bn_{\bs{D}_c,  \rm e}:  \\ \mathscr{E}_{\rm e}(\bn_{\bs{D}_c,  \rm e})=[X_{\rm e}]}}
 \ex^{-\tilde{\beta} E_{\rm e}(\bn_{\bs{D}_{c}, \rm e })} | S_{\rm e}(\bs{\vphi}_{ \rm e})|\, \ex^{-\tilde{\beta} \omega_0 |X_{\rm p}|}\no
= & \sum_{{X_{\rm e}, X_{\rm p} \subseteq \bs{D}_c}\atop{X_{\rm e} \cup X_{\rm p}=\bs{D}_c}}
\sum_{\substack{\bn_{\bs{D}_c,  \rm e}:  \\ \mathscr{E}_{\rm e}(\bn_{\bs{D}_c,  \rm e})=[X_{\rm e}]}}\ex^{-\tilde{\beta}  E_{\rm eff}(\bn_{\bs{D}_c,  \rm e }, X_{\rm p})} |S_{\rm e}(\bs{\vphi}_{\rm e})|,
\end{align}
where
\be
E_{\rm eff}(\bn_{\bs{D}_c,  \rm e }, X_{\rm p})=E_{\rm e}(\bn_{\bs{D}_c, \rm e })+\omega_0 \left|X_{\rm p}\right|. \label{DefEeff}
\ee
Therefore, one obtains 
\begin{align}
|\rho(Y)|\le 
\sum_{\bs{D}_q \subseteq \bs{D}} 
\sum_{{X_{\rm e}, X_{\rm p} \subseteq \bs{D}_c}\atop{X_{\rm e} \cup X_{\rm p}=\bs{D}_c}}
\sum_{\substack{\bn_{\bs{D}_c,  \rm e}:  \\ \mathscr{E}_{\rm e}(\bn_{\bs{D}_c,  \rm e})=[X_{\rm e}]}}
\sum_{\bn_{\bs{D}_q,  \rm e}\in \bN_{\bs{D}_q,  \rm e}} \int_{\bs{\varXi}_{\bs{D}_q}} \mathscr{D}(\bome)\ex^{-\tilde{\beta}  E_{\rm eff}(\bn_{\bs{D}_c,  \rm e }, X_{\rm p})} 
|S_{\rm e}(\bs{\vphi}_{\rm e})|.
\end{align}
We shall proceed to derive an upper bound for the right-hand side of this inequality.
Going back to the definition, we obtain the following:
\begin{align}
 \int_{\bs{\varXi}_{\bs{D}_q}} \mathscr{D}(\bome)
|S_{\rm e}(\bs{\vphi}_{\rm e})|
=
\prod_{t=1}^M\left|\Braket{\bs{g}^{(*)}_{\vLa\setminus D^{(t)}, \rm e}\times   \bn_{D^{(t)}, \rm e}^{(t)}| \mathcal{T}_{D_q^{(t+1)}, \rm e}\left(\bn^{(t+1)}_{\overline{\partial D}_q^{(t+1)}, \rm e}\right) |\bs{g}^{(*)}_{\vLa\setminus D^{(t+1)}, \rm e}\times  \bn_{D^{(t+1)}, \rm e}^{(t+1)}}\right|, 
\end{align}
where
\be
\mathcal{T}_{B, \rm e}(\bn_{\overline{\partial B}, \rm e})=\int_{\varXi_B} \mathscr{D}(\omega) \mathcal{T}_{B, \rm e}(\omega, \bn_{\overline{\partial B}, \rm e}).
\ee
By using  \cite[Lemma 4.2]{Borgs1996} and the fact that $|\overline{B}| \le R_0^d |B|$, we can estimate $\mathcal{T}_{B, \rm e}(\bn_{\overline{\partial B}, \rm e})$ as follows:
\begin{align}
\left\| \mathcal{T}_{B, \rm e} (\bn_{\overline{\partial B}}) \right\| &\le 
\ex^{-\tilde{\beta} e_{\rm e} |\overline{B}|}
\sum_{{\mathcal{A}=\{A_1, \dots, A_k\}}\atop{\cup_{j=1}^k A_j=B}} \prod_{A\in \mathcal{A}} \left[
\sum_{m_A=1}^{\infty} \frac{(\tilde{\beta} |\lambda| |t_A|)^{m_A}}{m_A!}
\right]\no
& \le \ex^{-\tilde{\beta} e_{\rm e} |\overline{B}|} \ex^{-(\gamma_Q-1)|B|}\no
&\le \ex^{-\tilde{\beta} e_{\rm e} |\overline{B}|} \ex^{-(\gamma_Q-1)|\overline{B}|R_0^{-d}}.
\end{align}
Setting $\tilde{\gamma}$ as  $\tilde{\gamma}=\min\left\{
\min\{\gamma_{\rm e}, \omega_0\},\,  (\gamma_Q-1)R_0^{-d}-\gamma_{\rm e}-|e_{\rm e}|
\right\}$ and using Lemma \ref{EstGa}, we obtain the following inequality:
\be
E_{\rm eff}(\bn_{\bs{D}_c,  \rm e }, X_{\rm p})+e_{\rm e}|\overline{\bs{D}}_q|+(\gamma_Q-1)R_0^{-d} |\overline{\bs{D}}_q|\ge (e_{\rm e}+\tilde{\gamma})|\bs{D}|,   \label{TilGamma}
\ee
which implies  that 
\begin{align}
|\rho(Y)|&\le 
\sum_{\bs{D}_q \subseteq \bs{D}} 
\sum_{{X_{\rm e}, X_{\rm p} \subseteq \bs{D}_c}\atop{X_{\rm e} \cup X_{\rm p}=\bs{D}_c}}
\sum_{\substack{\bn_{\bs{D}_c,  \rm e}:  \\ \mathscr{E}_{\rm e}(\bn_{\bs{D}_c,  \rm e})=[X_{\rm e}]}}
\sum_{\bn_{\bs{D}_q,  \rm e}\in \bN_{\bs{D}_q,  \rm e}}
\ex^{-\tilde{\beta} 
E_{\rm eff}(\bn_{\bs{D}_c,  \rm e }, X_{\rm p})
}
\ex^{-\tilde{\beta}e_{\rm e}|\overline{\bs{D}}_q|}
\ex^{-\tilde{\beta}(\gamma_Q-1)R_0^{-d} |\overline{\bs{D}}_q|}\no
&\le 
\sum_{\bs{D}_q \subseteq \bs{D}} 
\sum_{{X_{\rm e}, X_{\rm p} \subseteq \bs{D}_c}\atop{X_{\rm e} \cup X_{\rm p}=\bs{D}_c}}
\sum_{\substack{\bn_{\bs{D}_c,  \rm e}:  \\ \mathscr{E}_{\rm e}(\bn_{\bs{D}_c,  \rm e})=[X_{\rm e}]}}
\sum_{\bn_{\bs{D}_q,  \rm e}\in \bN_{\bs{D}_q,  \rm e}}
\ex^{-(\tilde{\beta}e_{\rm e}+\tilde{\gamma}) |\supp Y|}.
\end{align}
Additionally, utilizing the the following crude estimates: 
\begin{align}
\sum_{\substack{\bn_{\bs{D}_c,  \rm e}:  \\ \mathscr{E}_{\rm e}(\bn_{\bs{D}_c,  \rm e})=[X_{\rm e}]}}\sum_{\bn_{\bs{D}_q,  \rm e}\in \bN_{\bs{D}_q,  \rm e}}1
\le 3^{|\bs{D}_c|} 4^{|\bs{D}_{q}|} \le \ex^{ (\log4) |\supp Y|},\qquad
\sum_{\bs{D}_q \subseteq \bs{D}} 
\sum_{{X_{\rm e}, X_{\rm p} \subseteq \bs{D}_c}\atop{X_{\rm e} \cup X_{\rm p}=\bs{D}_c}}1
\le \ex^{ (\log 8) |\supp Y|}, 
\end{align}
we can estimate the right-hand side of the above inequality and obtain the desired claim for $\rho(Y)$.
\end{proof}

\subsection{The case where $Y$ winds around $\mathbb{T}_{\vLa}$}\label{Sect4.3}
In this subsection,  we consider the case where $\bs{D}=\supp Y$ winds around $\mathbb{T}_{\vLa}$. See Figure \ref{Grph5}.
In the analysis of contour activities in Subsection \ref{Sec4.2}, concerning the operators $U_{\bphi}(x, t)$, only the vacuum expectations of their products appeared.
However, when $\bs{D}=\supp Y$ winds around $\mathbb{T}_{\vLa}$, the evaluation of contour activities becomes more intricate as the vacuum expectations and traces of the products of the operators $U_{\bphi}(x, t)$ simultaneously come into play.

\begin{figure}[htbp]

    \centering
     \includegraphics[scale=0.5]{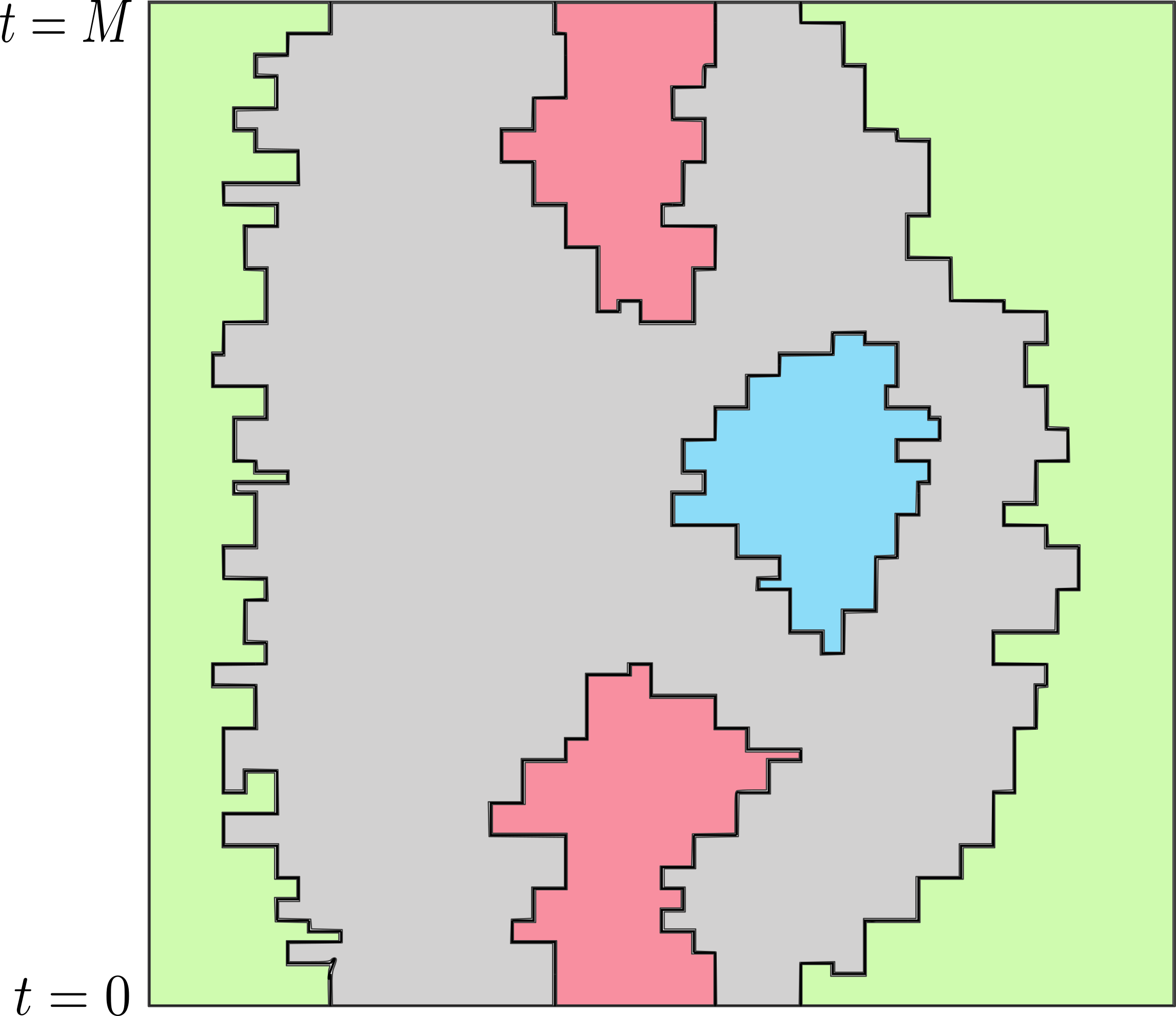}
    \caption{Image depicting a contour  winding around $\mathbb{T}_{\vLa}$. (To be precise,  
    section of the support of a contour  winding around $\mathbb{T}_{\vLa}$ on a specific plane defined by $x_i = \rm const.$)  }\label{Grph5}

\end{figure}

We define
\be
W(\bs{D})=\left\{x\in \vLa : [\mathsf{S}_x(\bs{D})]_{\rm T}=\mathbb{M}\right\}\quad (\mathbb{M}=\{1, \dots, M\}).
\ee
If $W(\bs{D})\neq \varnothing$, we say that $\bs{D}$ contains a {\it simple loop}. Furthermore, for $x\in W(\bs{D})$, we refer to $\bigcup_{t\in \mathbb{M}} C(x, t)\subseteq \bs{D}$ as a simple loop. See Figure \ref{SLoop}.

\begin{figure}[htbp]

    \centering
     \begin{tikzpicture}[scale=0.7]

\coordinate(X11)at(\Ct+1,\Ct-1.5);
\coordinate(X12)at(\Ct+1,\Ct-1);
\coordinate(X13)at(\Ct+2,\Ct-1);
\coordinate(X14)at(\Ct+2,\Ct-1.5);
\filldraw[fill=red,opacity=0.6](X11)--(X12)--(X13)--(X14)--cycle;
%箱0行目２
\coordinate(X21)at(\Ct+2,\Ct-1.5);
\coordinate(X22)at(\Ct+2,\Ct-1);
\coordinate(X23)at(\Ct+3,\Ct-1);
\coordinate(X24)at(\Ct+3,\Ct-1.5);
\filldraw[fill=red,opacity=0.6](X21)--(X22)--(X23)--(X24)--cycle;

%箱0行目１
\coordinate(Z11)at(\Ct+1,\Ct-1);
\coordinate(Z12)at(\Ct+1,\Ct);
\coordinate(Z13)at(\Ct+2,\Ct);
\coordinate(Z14)at(\Ct+2,\Ct-1);
\filldraw[fill=red,opacity=0.6](Z11)--(Z12)--(Z13)--(Z14)--cycle;
%箱0行目２
\coordinate(Z21)at(\Ct+2,\Ct-1);
\coordinate(Z22)at(\Ct+2,\Ct);
\coordinate(Z23)at(\Ct+3,\Ct);
\coordinate(Z24)at(\Ct+3,\Ct-1);
\filldraw[fill=red,opacity=0.6](Z21)--(Z22)--(Z23)--(Z24)--cycle;

%箱0行目１
\coordinate(Z31)at(\Ct,\Ct-1);
\coordinate(Z32)at(\Ct,\Ct);
\coordinate(Z33)at(\Ct+1,\Ct);
\coordinate(Z34)at(\Ct+1,\Ct-1);
\filldraw[fill=lightgray](Z31)--(Z32)--(Z33)--(Z34)--cycle;
%箱0行目２

%箱１行目１
\coordinate(A11)at(\Ct,\Ct);
\coordinate(A12)at(\Ct,\Ct+1);
\coordinate(A13)at(\Ct+1,\Ct+1);
\coordinate(A14)at(\Ct+1,\Ct);
\filldraw[fill=lightgray](A11)--(A12)--(A13)--(A14)--cycle;
%箱１行目２
\coordinate(A21)at(\Ct+1,\Ct);
\coordinate(A22)at(\Ct+1,\Ct+1);
\coordinate(A23)at(\Ct+2,\Ct+1);
\coordinate(A24)at(\Ct+2,\Ct);
\filldraw[fill=red,opacity=0.6](A21)--(A22)--(A23)--(A24)--cycle;
%箱１行目３
\coordinate(A31)at(\Ct+2,\Ct);
\coordinate(A32)at(\Ct+2,\Ct+1);
\coordinate(A33)at(\Ct+3,\Ct+1);
\coordinate(A34)at(\Ct+3,\Ct);
\filldraw[fill=red,opacity=0.6](A31)--(A32)--(A33)--(A34)--cycle;
%箱１行目4
\coordinate(A41)at(\Ct+2+1,\Ct);
\coordinate(A42)at(\Ct+2+1,\Ct+1);
\coordinate(A43)at(\Ct+3+1,\Ct+1);
\coordinate(A44)at(\Ct+3+1,\Ct);
\filldraw[fill=lightgray](A41)--(A42)--(A43)--(A44)--cycle;

%箱2行目１
\coordinate(B11)at(\Ct,\Ct+1);
\coordinate(B12)at(\Ct,\Ct+2);
\coordinate(B13)at(\Ct+1,\Ct+2);
\coordinate(B14)at(\Ct+1,\Ct+1);
\filldraw[fill=lightgray](B11)--(B12)--(B13)--(B14)--cycle;
%箱2行目２
\coordinate(B21)at(\Ct+1,\Ct+1);
\coordinate(B22)at(\Ct+1,\Ct+2);
\coordinate(B23)at(\Ct+2,\Ct+2);
\coordinate(B24)at(\Ct+2,\Ct+1);
\filldraw[fill=red,opacity=0.6](B21)--(B22)--(B23)--(B24)--cycle;
%箱2行目３
\coordinate(B31)at(\Ct+4,\Ct+1);
\coordinate(B32)at(\Ct+4,\Ct+2);
\coordinate(B33)at(\Ct+5,\Ct+2);
\coordinate(B34)at(\Ct+5,\Ct+1);
\filldraw[fill=lightgray](B31)--(B32)--(B33)--(B34)--cycle;
%箱2行目4
\coordinate(B41)at(\Ct+3,\Ct+1);
\coordinate(B42)at(\Ct+3,\Ct+2);
\coordinate(B43)at(\Ct+4,\Ct+2);
\coordinate(B44)at(\Ct+4,\Ct+1);
\filldraw[fill=lightgray](B41)--(B42)--(B43)--(B44)--cycle;
%箱2行目5
\coordinate(B51)at(\Ct+2,\Ct+1);
\coordinate(B52)at(\Ct+2,\Ct+2);
\coordinate(B53)at(\Ct+3,\Ct+2);
\coordinate(B54)at(\Ct+3,\Ct+1);
\filldraw[fill=red,opacity=0.6](B51)--(B52)--(B53)--(B54)--cycle;

%箱3行目１
\coordinate(C11)at(\Ct-1,\Ct+2);
\coordinate(C12)at(\Ct-1,\Ct+3);
\coordinate(C13)at(\Ct,\Ct+3);
\coordinate(C14)at(\Ct,\Ct+2);
\filldraw[fill=lightgray](C11)--(C12)--(C13)--(C14)--cycle;
%箱3行目２
\coordinate(C21)at(\Ct,\Ct+2);
\coordinate(C22)at(\Ct,\Ct+3);
\coordinate(C23)at(\Ct+1,\Ct+3);
\coordinate(C24)at(\Ct+1,\Ct+2);
\filldraw[fill=lightgray](C21)--(C22)--(C23)--(C24)--cycle;
%箱3行目３
\coordinate(C31)at(\Ct+3,\Ct+2);
\coordinate(C32)at(\Ct+3,\Ct+3);
\coordinate(C33)at(\Ct+4,\Ct+3);
\coordinate(C34)at(\Ct+4,\Ct+2);
\filldraw[fill=lightgray](C31)--(C32)--(C33)--(C34)--cycle;
%箱3行目4
\coordinate(C41)at(\Ct+3+1,\Ct+2);
\coordinate(C42)at(\Ct+3+1,\Ct+3);
\coordinate(C43)at(\Ct+4+1,\Ct+3);
\coordinate(C44)at(\Ct+4+1,\Ct+2);
\filldraw[fill=lightgray](C41)--(C42)--(C43)--(C44)--cycle;
%箱3行目３
\coordinate(C51)at(\Ct+1,\Ct+2);
\coordinate(C52)at(\Ct+1,\Ct+3);
\coordinate(C53)at(\Ct+2,\Ct+3);
\coordinate(C54)at(\Ct+2,\Ct+2);
\filldraw[fill=red,opacity=0.6](C51)--(C52)--(C53)--(C54)--cycle;
%箱3行目4
\coordinate(C61)at(\Ct+2,\Ct+2);
\coordinate(C62)at(\Ct+2,\Ct+3);
\coordinate(C63)at(\Ct+3,\Ct+3);
\coordinate(C64)at(\Ct+3,\Ct+2);
\filldraw[fill=red,opacity=0.6](C61)--(C62)--(C63)--(C64)--cycle;

%箱4行目１
\coordinate(D11)at(\Ct-1,\Ct+3);
\coordinate(D12)at(\Ct-1,\Ct+4);
\coordinate(D13)at(\Ct,\Ct+4);
\coordinate(D14)at(\Ct,\Ct+3);
\filldraw[fill=lightgray](D11)--(D12)--(D13)--(D14)--cycle;
%箱4行目２
\coordinate(D21)at(\Ct+4-2,\Ct+3);
\coordinate(D22)at(\Ct+4-2,\Ct+4);
\coordinate(D23)at(\Ct+5-2,\Ct+4);
\coordinate(D24)at(\Ct+5-2,\Ct+3);
\filldraw[fill=red,opacity=0.6](D21)--(D22)--(D23)--(D24)--cycle;

%箱4行目３
\coordinate(D31)at(\Ct+1,\Ct+3);
\coordinate(D32)at(\Ct+1,\Ct+4);
\coordinate(D33)at(\Ct+2,\Ct+4);
\coordinate(D34)at(\Ct+2,\Ct+3);
\filldraw[fill=red,opacity=0.6](D31)--(D32)--(D33)--(D34)--cycle;
%箱4行目4
\coordinate(D41)at(\Ct,\Ct+3);
\coordinate(D42)at(\Ct,\Ct+4);
\coordinate(D43)at(\Ct+1,\Ct+4);
\coordinate(D44)at(\Ct+1,\Ct+3);
\filldraw[fill=lightgray](D41)--(D42)--(D43)--(D44)--cycle;

%追加の箱4行目
\coordinate(D51)at(\Ct+4-1,\Ct+3);
\coordinate(D52)at(\Ct+4-1,\Ct+4);
\coordinate(D53)at(\Ct+5-1,\Ct+4);
\coordinate(D54)at(\Ct+5-1,\Ct+3);
\filldraw[fill=lightgray](D51)--(D52)--(D53)--(D54)--cycle;

%箱5行目１
%\coordinate(E11)at(\Ct,\Ct+3+1);
%\coordinate(E12)at(\Ct,\Ct+4+1);
%\coordinate(E13)at(\Ct+1,\Ct+4+1);
%\coordinate(E14)at(\Ct+1,\Ct+3+1);
%\filldraw[fill=lightgray](E11)--(E12)--(E13)--(E14)--cycle;
%箱5行目２
\coordinate(E21)at(\Ct+1,\Ct+3+1);
\coordinate(E22)at(\Ct+1,\Ct+4+0.5);
\coordinate(E23)at(\Ct+2,\Ct+4+0.5);
\coordinate(E24)at(\Ct+2,\Ct+3+1);
\filldraw[fill=red,opacity=0.6](E21)--(E22)--(E23)--(E24)--cycle;
%箱4行目３
%\coordinate(E31)at(\Ct+4,\Ct+3+1);
%\coordinate(E32)at(\Ct+4,\Ct+4+1);
%\coordinate(E33)at(\Ct+5,\Ct+4+1);
%\coordinate(E34)at(\Ct+5,\Ct+3+1);
%\filldraw[fill=lightgray](E31)--(E32)--(E33)--(E34)--cycle;
%箱5行目4
\coordinate(E41)at(\Ct+2,\Ct+3+1);
\coordinate(E42)at(\Ct+2,\Ct+4+0.5);
\coordinate(E43)at(\Ct+3,\Ct+4+0.5);
\coordinate(E44)at(\Ct+3,\Ct+3+1);
\filldraw[fill=red,opacity=0.6](E41)--(E42)--(E43)--(E44)--cycle;

%\coordinate(E51)at(\Ct+3,\Ct+3+1);
%\coordinate(E52)at(\Ct+3,\Ct+4+1);
%\coordinate(E53)at(\Ct+4,\Ct+4+1);
%\coordinate(E54)at(\Ct+4,\Ct+3+1);
%\filldraw[fill=lightgray](E51)--(E52)--(E53)--(E54)--cycle;

%切り口
%\draw[-,thick, teal] (5,0)--(5,\Sp)node[right]{$x=5$}; 

%\draw[-, thick, teal] (6,0)--(6,\Sp)node[right]{$x=6$}; 

\draw[->,>=stealth,semithick] (0, 1)--(\Ti, 1)node[right]{$x$}; %x軸
\draw[->,>=stealth,semithick] (0,1)--(0,\Sp)node[right]{$t$}; %y軸
\draw[-, semithick] (0, 7)node[left]{$t=M$}--(\Ti, 7); %x軸

%方眼用紙設定
\draw[name path=X1,loosely dotted,  semithick, blue] (1,1) -- (1,\Ti-1);
\draw[name path=X2,loosely dotted, semithick, blue] (2,1) -- (2,\Ti-1);
\draw[name path=X3,loosely dotted, semithick, blue] (3,1) -- (3,\Ti-1);
\draw[name path=X4,loosely dotted, semithick, blue] (5,1) -- (5,\Ti-1);
\draw[name path=X5,loosely dotted, semithick, blue] (4,1) -- (4,\Ti-1);
\draw[name path=X6,loosely dotted, semithick, blue] (6,1) -- (6,\Ti-1);
\draw[name path=X7,loosely dotted, semithick, blue] (7,1) -- (7,\Ti-1);
\draw[name path=Y1,loosely dotted, semithick, blue] (0,1) -- (\Sp,1);
\draw[name path=Y2, loosely dotted, semithick, blue] (0,2) -- (\Sp,2);
\draw[name path=Y3, loosely dotted, semithick, blue] (0,3) -- (\Sp,3);
\draw[name path=Y4,loosely dotted, semithick, blue] (0,4) -- (\Sp,4);
\draw[name path=Y5,loosely dotted, semithick, blue] (0,5) -- (\Sp,5);
\draw[name path=Y6,loosely dotted, semithick, blue] (0,6) -- (\Sp,6);
\draw[name path=Y7,loosely dotted, semithick, blue] (0,7) -- (\Sp,7);

\foreach\Q in { 1,2,3, 4, 5, 6, 7}\foreach\P in { 1,2,3, 4, 5, 6, 7}\fill[blue](\P,\Q)circle(0.03);
%\foreach\Q in { 5, 6}\fill[red](2,\Q)circle(0.04);
%\foreach\Q in { 3,4, 5, 6, 7}\fill[red](3,\Q)circle(0.04);
%\foreach\Q in { 2,3, 7}\fill[red](4,\Q)circle(0.04);
%\foreach\Q in { 2,3, 4}\fill[red](5,\Q)circle(0.04);
%\foreach\Q in { 3, 4, 5}\fill[red](6,\Q)circle(0.04);
%\foreach\Q in { 4, 5, 6}\fill[red](7,\Q)circle(0.04);

%\foreach\P in { 2,3, 4, 5, 6, 7}\fill[teal](\P,0)circle(0.07);
%\foreach\Q in { 2,3, 4, 5, 6, 7}\fill[teal](0,\Q)circle(0.07);

\end{tikzpicture}
    \caption{
Image of a contour winding around $\mathbb{T}_{\vLa}$. The red region represents two simple loops.
     }\label{SLoop}

\end{figure}
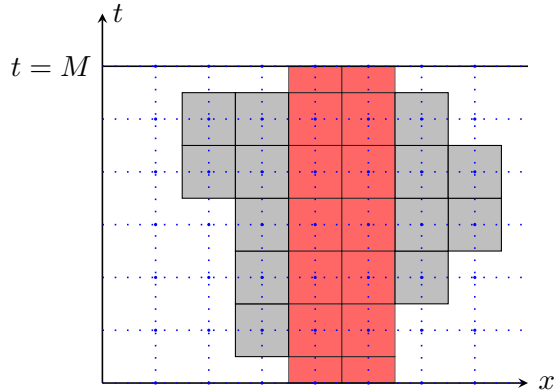
In the case where $\bs{D}$ has a simple loop, we must be careful in estimating contour activities, as seen below.

\begin{Lemm}\label{QwithL}
Let $z=(1-\ex^{-\tilde{\beta} \omega_0})^{-1}$ and $N=|W(\bs{D})|$.
Then, we have
\be
\left| 
Q_{X_{\rm p}}\left(\bs{\vphi}_{\rm e}\right)
\right| \le \ex^{-\tilde{\beta} \omega_0 |X_{\rm p}|}z^N.
\ee
\end{Lemm}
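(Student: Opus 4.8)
The plan is to refine the site-by-site factorization used in the proof of Lemma \ref{QInq1}, isolating the simple loops and treating them separately. Recall that $S_{\rm p}$ factorizes over $x\in[\bs D]_{\rm S}$ and that in \eqref{QRep} each site contributes a \emph{vacuum} expectation $\Braket{\varnothing | \prod_{t} U_{\bphi}(x,t)|\varnothing}$ precisely because, for a non-winding contour, there is at least one time slice at which the site is in the phonon vacuum; the corresponding factor $P_{\varnothing}=\ket{\varnothing}\!\bra{\varnothing}$ appearing in \eqref{Deflittleu}--\eqref{DefU} cuts the time-ordered product into a matrix element. For a simple-loop site $x\in W(\bs D)$ one has $[\mathsf{S}_x(\bs D)]_{\rm T}=\mathbb M$, so no $P_{\varnothing}$ occurs and the bosonic mode at $x$ is never projected onto the vacuum; propagating this factor through the trace $Z_{\ell,\vLa}=\Tr[T_{\vLa}^M]$ shows that $x$ contributes the full Fock-space trace of $\prod_{t=1}^M U_{\bphi}(x,t)$ rather than a vacuum expectation. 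I would therefore first establish the modified factorization
\[
Q_{X_{\rm p}}(\bphi_{\rm e})=\Bigl(\prod_{x\in[\bs D]_{\rm S}\setminus W(\bs D)}\Braket{\varnothing | \prod_{t\in[\mathsf{S}_x(\bs D)]_{\rm T}}U_{\bphi}(x,t)|\varnothing}\Bigr)\Bigl(\prod_{x\in W(\bs D)}\Tr\bigl[\textstyle\prod_{t=1}^M U_{\bphi}(x,t)\bigr]\Bigr),
\]
the operators $U_{\bphi}(x,t)$ being exactly those of \eqref{DefU}.

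For the non-winding sites nothing changes relative to Lemma \ref{QInq1}: using $\|U_{\bphi}(x,t)\|\le1$ on the quantum and classical non-$X_{\rm p}$ cubes and \eqref{NInq} on the $X_{\rm p}$ cubes bounds each vacuum expectation by $\ex^{-\tilde\beta\omega_0\,n_{\rm p}(x)}$, where $n_{\rm p}(x):=|\{t:(x,t)\in[X_{\rm p}]\}|$. The new ingredient is the per-loop estimate $|\Tr[\prod_{t=1}^M U_{\bphi}(x,t)]|\le z\,\ex^{-\tilde\beta\omega_0\,n_{\rm p}(x)}$ for $x\in W(\bs D)$. Granting it, multiplying the $N=|W(\bs D)|$ loop factors and all the site factors produces $\prod_x\ex^{-\tilde\beta\omega_0 n_{\rm p}(x)}=\ex^{-\tilde\beta\omega_0|X_{\rm p}|}$ together with $z^N$, which is the assertion.

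The hard part is the per-loop trace, because the quantum-cube factors contain imaginary-time-evolved Weyl operators $\tilde\varTheta_{i,x}(s)=\ex^{-s\omega_0 N_x}\tilde\varTheta_{i,x}\ex^{s\omega_0 N_x}$ that are unbounded in isolation, so extracting a single damping factor as the trace-class anchor fails (its residual damping may be arbitrarily small). The clean tool is Hölder's inequality for Schatten norms: with $\|A\|_M:=(\Tr|A|^M)^{1/M}$ and exactly $M$ factors one has $\Tr|\prod_{t=1}^M U_{\bphi}(x,t)|\le\prod_{t=1}^M\|U_{\bphi}(x,t)\|_M$. It then suffices to bound each Schatten-$M$ norm: on a classical non-$X_{\rm p}$ cube $U=\ex^{-\tilde\beta\omega_0 N_x}$ gives $\|U\|_M=(1-\ex^{-\beta\omega_0})^{-1/M}$; on an $X_{\rm p}$ cube $U=\ex^{-\tilde\beta\omega_0 N_x}P_{\varnothing}^{\perp}$ gives the same value times $\ex^{-\tilde\beta\omega_0}$; and on a quantum cube I would collect the imaginary-time evolutions to rewrite $U_{\bphi}(x,t)$ as an ordered product of damping factors $\ex^{-a\omega_0 N_x}$ ($a\ge0$, $\sum a=\tilde\beta$) interlaced with unitary Weyl operators, then apply Hölder once more with index $\beta/a$ on each damping factor and $\infty$ on each unitary; since $\sum a/\beta=\tilde\beta/\beta=1/M$ this yields $\|U_{\bphi}(x,t)\|_M\le(1-\ex^{-\beta\omega_0})^{-1/M}$ as well. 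Taking the product over the $M$ slices of the loop gives $(1-\ex^{-\beta\omega_0})^{-1}\ex^{-\tilde\beta\omega_0 n_{\rm p}(x)}$, and since $\beta\ge\tilde\beta$ forces $(1-\ex^{-\beta\omega_0})^{-1}\le(1-\ex^{-\tilde\beta\omega_0})^{-1}=z$, the per-loop bound follows and the proof is complete.
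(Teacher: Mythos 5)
Your proof is correct, and it reaches the lemma's bound by a genuinely different route at the key step. The reduction is the same as the paper's: you factorize $Q_{X_{\rm p}}$ into vacuum expectations over the non-winding sites $[\bs D]_{\rm S}\setminus W(\bs D)$ and full Fock-space traces over the simple-loop sites $W(\bs D)$ (the paper's $Q_{X_{\rm p}}=R_{[\bs D]_{\rm S}\setminus W(\bs D)}\,R_{W(\bs D)}$), and you handle the non-winding sites exactly as in Lemma \ref{QInq1}. The difference lies in the per-loop trace estimate. The paper argues by cases: if $\mathsf{S}_x(X_{\rm p})=\varnothing$, it recognizes the trace as a thermal Weyl-operator correlation function and invokes the exact Gaussian formula \eqref{CorF2} and the bound \eqref{TherCorr} from Appendix \ref{AppCor}, giving $|\Tr[\cdots]|\le Z_\beta\le z$; if $\mathsf{S}_x(X_{\rm p})\neq\varnothing$, it anchors on one factor $Z_2=\ex^{-\tilde{\beta}\omega_0 N_x}P_{\varnothing}^{\perp}$, which is trace class with $\Tr[Z_2]=\ex^{-\tilde{\beta}\omega_0}z$, and controls all remaining factors in operator norm via cyclicity and $|\Tr[AB]|\le\|A\|\,\Tr|B|$. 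You instead apply the generalized H\"older inequality for Schatten norms with all exponents equal to $M$, $\Tr\bigl|\prod_{t=1}^M U_{\bphi}(x,t)\bigr|\le\prod_{t=1}^M\|U_{\bphi}(x,t)\|_M$, and bound each factor by $(1-\ex^{-\beta\omega_0})^{-1/M}$ (times $\ex^{-\tilde{\beta}\omega_0}$ on $X_{\rm p}$ cubes), using a second H\"older application inside each quantum cube; the exponents match because $M\tilde{\beta}=\beta$. This treats both cases uniformly, avoids the appendix's correlation formula for this lemma, and is insensitive to where the Weyl factors sit inside a slice --- which is precisely the configuration (a loop consisting entirely of quantum cubes, with Weyl operators adjacent to the slice boundaries) where a naive ``extract one trace-class factor'' argument gives a non-uniform bound, as you correctly observe. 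The trade-off is mild: the paper re-uses machinery from Appendix \ref{AppCor} that it needs anyway for the derivative estimates of Lemma \ref{PartialQ}, while your argument is more self-contained and somewhat more robust; both yield the identical final bound $\ex^{-\tilde{\beta}\omega_0|X_{\rm p}|}z^N$.
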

\begin{proof}
Suppose that $\bphi\in \bs{\varPi}_Y$ is chosen in such a way $W(\bs{D})$ has a simple loop.
To simplify matters, let us assume that the portion of $\bs{D}$ with all simple loops removed does not intersect with the hyperplanes represented by $t=0$ and $t=\beta$ in the space-time\footnote{In the general case, it is possible to prove the claim by suitably translating certain portions of the proof in the time direction.}.
Again, in this case, $S_{\rm p}(\bphi)$ can be decomposed as follows:
\be
S_{\rm p}(\bs{\vphi})=\prod_{x\in [\bs{D}]_{\rm S}} S_{\rm p}(\bs{\vphi}(x)),
\ee
where
\be
S_{\rm p}(\bs{\vphi}(x))=\Tr\left[\prod_{t\in \mathbb{M}} u_{\bs{\vphi}}(x, t)\right].
\ee
Compare the expression obtained here with that of \eqref{DefS1}.
Recalling the definition of $Q_{X_{\rm p}}$ in \eqref{DefQ1}, $Q_{X_{\rm p}}\left(\bs{\vphi}_{\rm e}\right)$ can be expressed as follows:
\begin{align}
Q_{X_{\rm p}}\left(\bs{\vphi}_{ \rm e}\right)
=&
\sum_{\bn_{\bs{D}_c,   \rm p}\in \BbbN^{[X_{\rm p}]}}\sum_{\bn_{\bs{D}_{q},  \rm p}\in \bN_{\bs{D}_{q},  \rm p}}
 \ex^{-\tilde{\beta}E_{\rm p}(\bnp)}
\prod_{x\in [\bs{D}]_{\rm S}} S_{\rm p}(\bphi(x))\no
=&R_{[\bs{D}]_{\rm S}\setminus W(\bs{D})} R_{W(\bs{D})},
\end{align}
where $R_{W(\bs{D})}$ and $R_{[\bs{D}]_{\rm S}\setminus W(\bs{D})}$ are given by
\begin{align}
 R_{W(\bs{D})}=\prod_{x\in  W(\bs{D})} \Tr\left[\prod_{ {t  \in \mathbb{M}}}U_{\bphi}(x, t)\right],\ \
 R_{[\bs{D}]_{\rm S}\setminus W(\bs{D})} 
 =\prod_{x\in [\bs{D}]_{\rm S}\setminus W(\bs{D})}\Tr\left[
 \prod_{ {t \in  \mathbb{M}}}U_{\bphi}(x, t)\right]. \label{ProdRR}
\end{align}
Here, reall that $U_{\bphi}(x, t)$ is defined by \eqref{DefU}.

Regarding $R_{[\bs{D}]_{\rm S}\setminus W(\bs{D})}$, it can be expressed as follows:
\be
R_{[\bs{D}]_{\rm S}\setminus W(\bs{D})}=
\prod_{x\in [\bs{D}]_{\rm S}\setminus W(\bs D)}\Braket{
\varnothing| \prod_{ {t \in [\mathsf{S}_x(\bs{D}})]_{\rm T}}U_{\bphi}(x, t) |\varnothing
}.
\ee
Therefore, using a similar method to the proof of Lemma \ref{QInq1}, we can obtain the following upper bound:
\be
\left|R_{[\bs{D}]_{\rm S}\setminus W(\bs{D})}\right|\le \ex^{-\tilde{\beta} \omega_0 |X_{\rm p}\setminus  \mathbb{T}_{W(\bs{D})}|}. \label{EstR1}
\ee

Next, we aim to obtain an upper bound for $|R_{W(\bs{D})}|$. For this purpose, it suffices to provide an upper bound for  $\left|\Tr\left[
\prod_{ {t \in \mathbb{M}}}U_{\bphi}(x, t)\right]\right|$ for each $x\in W(\bs{D})$.

Let $x \in W(\bs{D})$. If $\mathsf{S}_x(X_{\rm p})=\varnothing$, then the product of operators $\prod_{ {t \in \mathbb{M}}}U_{\bphi}(x, t)$ does not contain $\ex^{-\tilde{\beta} \omega_0 N_x}P^{\perp}_{\varnothing}$. Therefore, $\Tr\left[\prod_{ {t \in \mathbb{M}}}U_{\bphi}(x, t)\right]$ can be expressed in the form of the right-hand side of \eqref{CorF2}.
Therefore, by using the inequality \eqref{TherCorr}, we obtain
\be
\left|\Tr\left[
 \prod_{ {t \in  \mathbb{M}}}U_{\bphi}(x, t)\right]\right| \le (1-\ex^{-\beta \omega_0})^{-1} \le z.
\ee

Let us consider the case where $\mathsf{S}_x(X_{\rm p})\ne\varnothing$. 
We employ the notation $Z_1$ to denote operators that take the form $\tilde{\varTheta}_{1,x}(s_1) \cdots \tilde{\varTheta}_{|\bs{m}^{(t)}|, x}(s_{|\bs{m}^{(t)}|})\, \ex^{-\tilde{\beta} \omega_0 N_x}$, or alternatively, the operator $\ex^{-\tilde{\beta} \omega_0 N_x}$. Similarly, we use $Z_2$ to refer to the operator $\ex^{- \tilde{\beta} \omega_0 N_x}P^{\perp}_{\varnothing}$.
 Consequently, the product $\prod_{ {t \in \mathbb{M}}}U_{\bphi}(x, t)$ contains $n_x=|\mathsf{S}_x(X_{\rm p})|$ operators of the $Z_2$ type and $M-n_x$ operators of the $Z_1$ type,  denoted by $Z_{1}^{(1)}, \dots, Z_1^{(M-n_x)}$.
Hence, by repeatedly applying  the cyclicity of the trace and the well-known inequality\footnote{See, for example, \cite{Conway1985}.}:
\be
 |\Tr[AB]| \le \Tr[|AB|] \le \|A\| \Tr[|B|], 
 \ee
we obtain
 \be
 \left|\Tr\left[
 \prod_{ {t \in  \mathbb{M}}}U_{\bphi}(x, t)\right]\right| \le
 \left[\prod_{i=1}^{M-n_x}\|Z_1^{(i)}\|\right]\|Z_2\|^{n_x-1} \Tr[Z_2]
 \le  \frac{\ex^{-\tilde{\beta} \omega_0 n_x}}{1-\ex^{-\tilde{\beta} \omega_0}}=\ex^{-\tilde{\beta} \omega_0 n_x} z, 
 \ee
where, in the second inequality, we have utilized the facts that $\|Z_1^{(i)}\|\le 1$, $\|Z_2\| \le \ex^{-\tilde{\beta} \omega_0}$, and $\Tr[Z_2]=\ex^{-\tilde{\beta} \omega_0}/(1-\ex^{-\tilde{\beta} \omega_0})$.

 To summarize, when we let $k=\left|\left[\mathbb{T}_{W(\bs{D})} \cap X_{\rm p}\right]_{\rm S} \right|$, we have
\be
|R_{W(\bs{D})}| \le 
z^k \ex^{-\tilde{\beta} \omega_0|X_{\rm p} \cap \mathbb{T}_{W(\bs{D})} |} z^{N-k}
= z^N \ex^{-\tilde{\beta} \omega_0|X_{\rm p} \cap \mathbb{T}_{W(\bs{D})} |}. \label{EstR2}
\ee
Combining  \eqref{EstR1} and \eqref{EstR2} leads to the desired assertion.
\end{proof}

\begin{Prop}\label{RhoEst2}
Assume that $\lambda\in \BbbR, \tilde{\beta}>0$ and $\gamma_Q\ge 0$ satisfy \eqref{ConvClC}.  Furthermore, assume that $\tilde{\beta} \ge \beta_0/2$. If the contour $Y$ winds around $\mathbb{T}_{\vLa}$, then the following holds: 
\be
|\rho(Y)| \le \ex^{-(\tilde{\beta} e_{\rm e}+\tilde{\beta} c +\gamma) |\supp Y|}.
\ee
Here, $\gamma$ is defined in Proposition \ref{ActBd1}, and $c$ is given by $c=-\omega_0f(\omega_0\beta_0/2)$, where $f(x)=-x^{-1}\log (1-\ex^{-x})$.
\end{Prop}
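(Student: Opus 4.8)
The plan is to run the argument of Proposition \ref{ActBd1} essentially verbatim, substituting the winding-case bound of Lemma \ref{QwithL} for Lemma \ref{QInq1}, and then to absorb the resulting extra factor $z^N$ into a decay factor of the advertised form. The starting representation \eqref{ActRep} for $\rho(Y)$ is valid whether or not $Y$ winds around $\mathbb{T}_{\vLa}$, so first I would apply Lemma \ref{QwithL} in place of Lemma \ref{QInq1}. The crucial structural observation is that $N=|W(\bs{D})|$ is determined by the contour support $\bs{D}=\supp Y$ alone, and is therefore independent of all the summation and integration variables $\bs{D}_q$, $X_{\rm e}$, $X_{\rm p}$, $\bn_{\bs{D}_c,\rm e}$, $\bn_{\bs{D}_q,\rm e}$ and $\bome$. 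Consequently the factor $z^N=(1-\ex^{-\tilde{\beta}\omega_0})^{-N}$ pulls out of every sum and integral, and the remaining estimate is word-for-word that of Proposition \ref{ActBd1}: one uses the operator-norm bound on $\mathcal{T}_{B,\rm e}$ from \cite[Lemma 4.2]{Borgs1996} together with $|\overline{B}|\le R_0^d|B|$, the inequality \eqref{TilGamma} coming from Lemma \ref{EstGa}, and the crude combinatorial factor $\ex^{(5\log 2)|\supp Y|}$. This yields
\[
|\rho(Y)|\le z^N\,\ex^{-(\tilde{\beta}e_{\rm e}+\gamma)|\supp Y|},
\]
with $\gamma$ exactly as in Proposition \ref{ActBd1}.

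It then remains to show $z^N\le \ex^{-\tilde{\beta}c\,|\supp Y|}$. Writing $\log z=-\log(1-\ex^{-\tilde{\beta}\omega_0})=\omega_0\tilde{\beta}\,f(\omega_0\tilde{\beta})$ with $f(x)=-x^{-1}\log(1-\ex^{-x})$ gives $z^N=\ex^{N\omega_0\tilde{\beta}f(\omega_0\tilde{\beta})}$. I would then record two elementary facts. First, $f$ is strictly decreasing on $(0,\infty)$: its derivative has numerator $-x/(\ex^x-1)+\log(1-\ex^{-x})$, which is a sum of two negative terms. Second, each simple loop occupies a full temporal fiber $\bigcup_{t\in\mathbb{M}}C(x,t)$, so $\bs{D}$ contains at least $M$ elementary cubes per site of $W(\bs{D})$, giving $|\supp Y|\ge M|W(\bs{D})|=MN\ge N$. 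Combining the monotonicity of $f$ with the hypothesis $\tilde{\beta}\ge \beta_0/2$ (hence $\omega_0\tilde{\beta}\ge \omega_0\beta_0/2$) yields $f(\omega_0\tilde{\beta})\le f(\omega_0\beta_0/2)$, and together with $N\le|\supp Y|$ this gives $N f(\omega_0\tilde{\beta})\le f(\omega_0\beta_0/2)|\supp Y|$. Multiplying by $\omega_0\tilde{\beta}$ produces $z^N\le \ex^{\tilde{\beta}\omega_0 f(\omega_0\beta_0/2)|\supp Y|}=\ex^{-\tilde{\beta}c\,|\supp Y|}$ for $c=-\omega_0 f(\omega_0\beta_0/2)$, and substituting into the display above gives the claim.

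The main obstacle here is bookkeeping rather than analysis. One must verify carefully that $z^N$ factors cleanly out of the entire multi-sum in \eqref{ActRep}, which hinges on $W(\bs{D})$ being a function of the contour support only, and that each simple loop really contributes the full temporal extent $M$ to $|\supp Y|$, so that the growing factor $z^N$ is dominated by the decay governed by $|\supp Y|$. The monotonicity of $f$ is precisely what lets us trade the configuration-dependent slice temperature $\tilde{\beta}=\beta/M$ for the fixed reference value $\beta_0/2$ appearing in the definition of $c$, uniformly over all admissible choices of $M$.
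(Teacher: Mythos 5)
Your proposal is correct and follows essentially the same route as the paper: invoke Lemma \ref{QwithL} in place of Lemma \ref{QInq1}, pull the configuration-independent factor $z^N$ out of \eqref{ActRep}, convert it via $\log z=\tilde{\beta}\omega_0 f(\omega_0\tilde{\beta})$, and use the monotonicity of $f$ together with $N\le|\supp Y|$ to dominate $z^N$ by $\ex^{-\tilde{\beta}c|\supp Y|}$, finishing exactly as in Proposition \ref{ActBd1}. The only difference is that you spell out details the paper leaves implicit (the sign computation showing $f'<0$ and the bound $|\supp Y|\ge MN\ge N$), which is harmless.
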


\begin{proof}
By using Lemma \ref{QwithL}, we have
\begin{align}
&\left| \sum_{\bn_{\bs{D}_c}\in \bE_{\bs{D}_c}} \ex^{-\tilde{\beta}  \bs{E}(\bn_{\bs{D}_c })} S_{\rm e}(\bs{\vphi}_{\rm e})S_{\rm p}\left(\bs{\vphi}_{\rm p}\right) \right|\no
\le & 
\sum_{\substack{X_{\rm e}, X_{\rm p} \subseteq \bs{D}_c\\ X_{\rm e} \cup X_{\rm p}=\bs{D}_c
 }}
\sum_{\substack{\bn_{\bs{D}_c}:  \\ \mathscr{E}_{\rm e}(\bne)=[X_{\rm e}] \\ \mathscr{E}_{\rm p}(\bnp)=[X_{\rm p}]}}
 \ex^{-\tilde{\beta}   E_{\rm e}(\bn_{\bs{D}_c,  \rm e})} | S_{\rm e}(\bs{\vphi}_{\rm e})||Q_{X_{\rm p}}(\bs{\vphi}_{ \rm e})|
\no
\le & z^N \sum_{\substack{X_{\rm e}, X_{\rm p} \subseteq \bs{D}_c\\ X_{\rm e} \cup X_{\rm p}=\bs{D}_c
 }}
\sum_{\substack{\bn_{\bs{D}_c,  \rm e}:  \\ \mathscr{E}_{\rm e}(\bn_{\bs{D}_c,  \rm e})=[X_{\rm e}]}}\ex^{-\tilde{\beta}  E_{\rm eff}(\bn_{\bs{D}_{c}, \rm e }, X_{\rm p})} |S_{\rm e}(\bs{\vphi}_{\rm e})|.  \label{nDcSum}
\end{align}
Here, recall that $E_{\rm eff}$ is defined by \eqref{DefEeff}.

For $x > 0$, the function $f(x)$ is a monotonically decreasing positive function, diverging at $x=0$ and approaching a limit of zero as $x$ approaches infinity.
Hence, for $\beta>\beta_0/2$, one obtains
\be
\ex^{N\log z}=\ex^{N\tilde{\beta} \omega_0 \frac{\log z }{\tilde{\beta} \omega_0}}\le \ex^{|\supp Y| \tilde{\beta}\omega_0f(\omega_0\tilde{\beta})}\le 
\ex^{|\supp Y| \tilde{\beta}\omega_0f(\omega_0\beta_0/2)},  
\ee
which implies that 

\begin{align}
\mbox{the R.H.S. of \eqref{nDcSum}}
\le 
\ex^{|\supp Y| \tilde{\beta}\omega_0f(\omega_0\beta_0)}
 \sum_{\substack{X_{\rm e}, X_{\rm p} \subseteq \bs{D}_c\\ X_{\rm e} \cup X_{\rm p}=\bs{D}_c
 }}
\sum_{\substack{ \bn_{\bs{D}_c,  \rm e}:  \\ \mathscr{E}_{\rm e}(\bn_{\bs{D}_c,  \rm e})=[X_{\rm e}]}} \ex^{-\tilde{\beta}  E_{\rm eff}(\bn_{\bs{D}_c })} |S_{\rm e}(\bs{\vphi}_{\rm e})|.\label{SeUpBd}
\end{align}
A commensurate upper bound for the term on the right-hand side of inequality \eqref{SeUpBd} can be obtained through a similar approach to the one used in proving Proposition \ref{ActBd1}.
\end{proof}

\subsection{Estimate of the derivatives of the contour activities}

\begin{Lemm}\label{PartialS}
Assume that $\lambda\in \BbbR, \tilde{\beta}>0$ and $\gamma_Q\ge 0$ satisfy \eqref{ConvClC}. Then one obtains
\be
\left|
\frac{\partial }{\partial \overline{\nu}_i}\int_{\bs{\varXi}_{\bs{D}_q}} \mathscr{D}(\bome) S_{\rm e}(\bphi_{\rm e})
\right| \le \left(C_0 \tilde{\beta}+\frac{\ex}{\ex-1}\right) |\overline{\bs{D}}_q|\, 
\ex^{-(\gamma_Q-1) R_0^{-d}|\overline{\bs{D}}_q|}\ex^{-\tilde{\beta} e_{\rm e}|\overline{\bs{D}}_q|}.
\ee
\end{Lemm}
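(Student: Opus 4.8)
The starting point is the factorization over time slices already used in Proposition \ref{ActBd1}: after carrying out the $\bome$-integration,
\be
\int_{\bs{\varXi}_{\bs{D}_q}} \mathscr{D}(\bome)\, S_{\rm e}(\bphi_{\rm e}) = \prod_{t=1}^M f_t, \quad f_t = \Braket{\bs{g}^{(*)}_{\vLa\setminus D^{(t)}, \rm e}\times \bn_{D^{(t)}, \rm e}^{(t)}| \mathcal{T}_{D_q^{(t+1)}, \rm e}\left(\bn_{\overline{\partial D}_q^{(t+1)}, \rm e}\right) |\bs{g}^{(*)}_{\vLa\setminus D^{(t+1)}, \rm e}\times \bn_{D^{(t+1)}, \rm e}^{(t+1)}},
\ee
where each $f_t$ is a scalar and the bra/ket vectors are $\overline{\nu}$-independent, the reference configurations $\bs{g}^{(\ell)}_{\rm e}$ being fixed. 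Since the $f_t$ are scalars I would apply the Leibniz rule $\partial_{\overline{\nu}_i}\prod_t f_t = \sum_{s=1}^M\big(\prod_{t\neq s}f_t\big)\,\partial_{\overline{\nu}_i}f_s$, bound the undifferentiated factors by the operator-norm estimate $|f_t|\le\|\mathcal{T}_{D_q^{(t+1)},\rm e}\|\le \ex^{-\tilde\beta e_{\rm e}|\overline{D}_q^{(t+1)}|}\ex^{-(\gamma_Q-1)R_0^{-d}|\overline{D}_q^{(t+1)}|}$ obtained in Proposition \ref{ActBd1} via \cite[Lemma 4.2]{Borgs1996}, and estimate the single differentiated factor by $|\partial_{\overline{\nu}_i}f_s|\le\|\partial_{\overline{\nu}_i}\mathcal{T}_{B,\rm e}(\bn_{\overline{\partial B},\rm e})\|$ with $B=D^{(s+1)}_q$. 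The problem thus reduces to bounding $\|\partial_{\overline{\nu}_i}\mathcal{T}_{B,\rm e}(\bn_{\overline{\partial B},\rm e})\|$ on a single slice.

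I would then split the $\overline{\nu}_i$-dependence of $\mathcal{T}_{B,\rm e}(\bn_{\overline{\partial B},\rm e})=\int_{\varXi_B}\mathscr{D}(\omega)\,\mathcal{T}_{B,\rm e}(\omega,\bn_{\overline{\partial B},\rm e})$ into two sources: (I) the amplitude weight $(-\lambda\bs{t})^{\bs{m}}=\prod_A(-\lambda t_A)^{m_A}$ inside $\mathscr{D}(\omega)$, and (II) the classical Hamiltonian $H^{(0)}_{\ell,B,\rm e}(\bn_{\overline{\partial B},\rm e})$ entering the imaginary-time factors $\ex^{\pm sH^{(0)}}$ and $\ex^{-\tilde\beta H^{(0)}}$; the fermionic monomials $\tilde h_{i,\rm e}$ carry no $\overline{\nu}$-dependence. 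For source (II) the clean device is to telescope the product into $\mathcal{T}_{B,\rm e}(\omega,\bn_{\overline{\partial B},\rm e}) = \ex^{-\Delta_0 H^{(0)}}\tilde h_1\ex^{-\Delta_1 H^{(0)}}\cdots \tilde h_{|\bs{m}|}\ex^{-\Delta_{|\bs{m}|}H^{(0)}}$ with $\Delta_j\ge0$ and $\sum_j\Delta_j=\tilde\beta$, apply Duhamel's formula $\partial_{\overline{\nu}_i}\ex^{-\Delta_j H^{(0)}} = -\int_0^{\Delta_j}\ex^{-(\Delta_j-u)H^{(0)}}(\partial_{\overline{\nu}_i}H^{(0)})\ex^{-uH^{(0)}}\,du$ to each factor, and use $\|\tilde h_{i,\rm e}\|\le1$ together with $\|\partial_{\overline{\nu}_i}H^{(0)}_{\ell,B,\rm e}(\bn_{\overline{\partial B},\rm e})\|\le C_0|\overline{B}|$ from \hyperlink{A5}{\bf (A. 5)}. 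The crucial point is that the Leibniz sum over $j$ telescopes, because $\sum_j\Delta_j=\tilde\beta$, into a bound proportional to $\tilde\beta C_0|\overline{B}|$ that is \emph{independent} of $|\bs{m}|$; re-inserting the undifferentiated amplitude weights and running the same covering count as in Proposition \ref{ActBd1} yields the contribution $C_0\tilde\beta|\overline{B}|\,\ex^{-\tilde\beta e_{\rm e}|\overline{B}|}\ex^{-(\gamma_Q-1)R_0^{-d}|\overline{B}|}$.

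For source (I) I would differentiate $(-\lambda\bs{t})^{\bs{m}}$, replacing one factor by $m_{A'}(-\lambda t_{A'})^{m_{A'}-1}(-\lambda\partial_{\overline{\nu}_i}t_{A'})$; absorbing $m_{A'}/m_{A'}!=1/(m_{A'}-1)!$ and resumming gives, for the differentiated bond, the factor $\tilde\beta|\lambda||\partial_{\overline{\nu}_i}t_{A'}|\,\ex^{\tilde\beta|\lambda||t_{A'}|}$. Bounding $\ex^{\tilde\beta|\lambda||t_{A'}|}\le\ex$ (valid since $\tilde\beta|\lambda||t_{A'}|\le1$ under \eqref{ConvClC}), summing over the anchoring site, and using that the Sobolev norm controls $\sum_{A'\ni x}|\partial_{\overline{\nu}_i}t_{A'}|\,\ex^{\gamma_Q|\supp A'|}$ together with $\tilde\beta|\lambda|\|\bs{t}\|_{\gamma_Q}\le(\ex-1)^{-1}$ produces the prefactor $\tfrac{\ex}{\ex-1}|\overline{B}|$ times the same exponential covering bound. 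Adding the two sources gives $\|\partial_{\overline{\nu}_i}\mathcal{T}_{B,\rm e}\|\le(C_0\tilde\beta+\tfrac{\ex}{\ex-1})|\overline{B}|\,\ex^{-\tilde\beta e_{\rm e}|\overline{B}|}\ex^{-(\gamma_Q-1)R_0^{-d}|\overline{B}|}$. Reassembling the Leibniz sum, the per-slice exponentials over all $t$ combine into $\ex^{-\tilde\beta e_{\rm e}|\overline{\bs{D}}_q|}\ex^{-(\gamma_Q-1)R_0^{-d}|\overline{\bs{D}}_q|}$ since $\sum_t|\overline{D}_q^{(t)}|=|\overline{\bs{D}}_q|$, and the prefactor sum $\sum_s|\overline{D}_q^{(s+1)}|=|\overline{\bs{D}}_q|$ produces exactly the asserted estimate. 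The main obstacle I anticipate is source (I): arranging the resummation of the differentiated amplitude so that the differentiated-bond count yields the linear factor $|\overline{B}|$ rather than a divergent one, and justifying that differentiation may be exchanged with the (absolutely convergent, by \eqref{ConvClC}) sum over $\bs{m}$ and integral over $\bs{\tau}$.
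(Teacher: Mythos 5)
Your proposal is correct and follows essentially the same route as the paper's proof: a Leibniz expansion over the time slices with the undifferentiated factors controlled by the norm bound from Proposition \ref{ActBd1}, and a per-slice estimate of $\|\partial_{\overline{\nu}_i}\mathcal{T}_{B,\rm e}\|$ obtained by splitting the $\overline{\nu}$-dependence into the classical Hamiltonian (yielding the $C_0\tilde{\beta}|\overline{B}|$ term, where your Duhamel integral collapses to the paper's exact formula since $H^{(0)}_{\ell,B,\rm e}$ and its $\overline{\nu}_i$-derivative commute) and the hopping amplitudes $t_A$ (yielding the $\frac{\ex}{\ex-1}|\overline{B}|$ term via the same factorial absorption, resummation, and Sobolev-norm covering count under \eqref{ConvClC}). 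The two treatments differ only in bookkeeping, so there is nothing substantive to add.
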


\begin{proof}
The proof below is almost identical to that in  \cite[Proposition 4.3]{Borgs1996}.
For the reader's convenience, we provide the details.
For each $B\subset \vLa$, we observe that
\begin{align}
\frac{\partial }{\partial \overline{\nu}_i} \mathcal{T}_{B, \rm e}(\omega,  \bn_{\overline{\partial B}, \rm e})
&=\sum_{j=1}^{|\bs{m}|}\tilde{h}_{{\rm e}, 1, \bn_{\overline{\partial B}}}(s_1)\cdots 
R_j
\cdots \tilde{h}_{{\rm e}, |\bs{m}|, \bn_{\overline{\partial B}}}(s_{|\bs{m}|})\, \ex^{-\tilde{\beta} H_{\ell, B, \rm e}^{(0)}(\bn_{\overline{\partial B}, \rm e})}, \\
R_j
&=\left\{ -(s_j-s_{j-1})\frac{\partial}{\partial \overline{\nu}_i}  H_{\ell, B, \rm e}^{(0)}(\bn_{\overline{\partial B}, \rm e})\right\} \tilde{h}_{{\rm e}, j, \bn_{\overline{\partial B}, \rm e }}(s_j)
+\left(
\frac{\partial}{\partial \overline{\nu}_i}\tilde{h}_{{\rm e}, j, \bn_{\overline{\partial B}, \rm e}}
\right)(s_j).
\end{align}
Hence, one obtains 
\begin{align}
\left\|
\frac{\partial }{\partial \overline{\nu}_i} \mathcal{T}_{B, \rm e}(\omega,  \bn_{\overline{\partial B}, \rm e})
\right\|
&\le 
\sum_{j=1}^{|\bs{m}|}\left\{
(s_j-s_{j-1}) C_0 |\overline{B}|\prod_{A\in \mathcal{A}}|t_{A}|^{m_A} 
+\left|\frac{\partial t_{A_{\pi(j)}}}{\partial \overline{\nu}_i}
\right|
\prod_{\substack{A\in \mathcal{A}
\\
A\neq A_{\pi(j)}
}}|t_{A}|^{m_A} 
\right\}
\ex^{-\tilde{\beta} e_{\rm e}|\overline{B}|}\no
&= \ex^{-\tilde{\beta} e_{\rm e}|\overline{B}|}
\left\{
\tilde{\beta} C_0 |\overline{B}|\prod_{A\in \mathcal{A}}|t_{A}|^{m_A} 
+\sum_{j=1}^{k} m_{A_j}\left|\frac{\partial t_{A_{j}}}{\partial \overline{\nu}_i}
\right|
\prod_{\substack{A\in \mathcal{A}
\\
A\neq A_{j}
}}|t_{A}|^{m_A} 
\right\},
\end{align}
 where $C_0$ is the constant introduced in \hyperlink{A5}{\bf (A. 5)}.
Using this bound, we get
\begin{align}
&\left\|
\frac{\partial }{\partial \overline{\nu}_i} \mathcal{T}_{B, \rm e}(\bn_{\overline{\partial B}, \rm e})
\right\|\no
\le& \sum_{{\mathcal{A}=\{A_1, \dots, A_k\}}\atop{\cup_{j=1}^k A_j=B}} 
\prod_{A\in \mathcal{A}}
\Bigg[
\sum_{m_A=1}^{\infty}
\frac{(|\lambda| |t_{A}| \tilde{\beta})^{m_{A}}}{m_{A}!} 
\Bigg]
\ex^{-\tilde{\beta} e_{\rm e}|\overline{B}|}\times \no
&\times 
\left\{
\tilde{\beta} C_0 |\overline{B}|\prod_{A\in \mathcal{A}}|t_{A}|^{m_A} 
+\sum_{j=1}^{k} m_{A_k}\left|\frac{\partial t_{A_{j}}}{\partial \overline{\nu}_i}
\right|
\prod_{\substack{A\in \mathcal{A}
\\
A\neq A_{j}
}}|t_{A}|^{m_A} 
\right\}\no
\le&\sum_{{\mathcal{A}=\{A_1, \dots, A_k\}}\atop{\cup_{j=1}^k A_j=B}} 
\ex^{-\tilde{\beta} e_{\rm e}|\overline{B}|}
\left\{
C_0 |\overline{B}|\prod_{A\in \mathcal{A}} 
\left\{
 \tilde{\beta}  (\ex-1)|t_{A}|
 \right\}
+\sum_{j=1}^k \ex\tilde{\beta} \left|\frac{\partial t_{A_{j}}}{\partial \overline{\nu}_i}
\right|\prod_{\substack{A\in \mathcal{A}
\\
A\neq A_{j}
}}
\left\{
\tilde{\beta}
(\ex-1)
|t_{A}|
\right\}
\right\}. \label{DiffEst}
\end{align}
The first term on the right-hand side of \eqref{DiffEst} can be evaluated as follows:
\begin{align}
\sum_{{\mathcal{A}=\{A_1, \dots, A_k\}}\atop{\cup_{j=1}^k A_j=B}} \prod_{A\in \mathcal{A}} \left\{ \tilde{\beta}  (\ex-1)|t_{A}|\right\}
&\le \ex^{-\gamma_Q |B|} \sum_{{\mathcal{A}=\{A_1, \dots, A_k\}}\atop{\cup_{j=1}^k A_j=B}} \prod_{A\in \mathcal{A}} 
\left\{ 
\tilde{\beta}   (\ex-1)|t_{A}|\ex^{\gamma_Q |A|}
\right\}\no
&\le \ex^{-\gamma_Q |B|} \sum_{k=1}^{\infty}\frac{1}{k!} 
\prod_{j=1}^k
\sum_{\substack{A_j\in \mathcal{A}_0\\
A_j\cap B\neq \varnothing}
}\left\{
 \tilde{\beta}   (\ex-1)|t_{A_j}|\ex^{\gamma_Q |A_j|}
 \right\}\no
&\le \ex^{-\gamma_Q |B|} \sum_{k=1}^{\infty}\frac{1}{k!} 
\prod_{j=1}^k\sum_{x\in B}
\sum_{\substack{A_j\in \mathcal{A}_0\\
x\in A_j}
}
\left\{
 \tilde{\beta}  (\ex-1)|t_{A_j}|\ex^{\gamma_Q |A_j|}
 \right\}\no
&\le\ex^{-\gamma_Q |B|} \sum_{k=1}^{\infty}\frac{1}{k!} 
\left\{
\tilde{\beta} (\ex-1)\|t\|_{\gamma_Q}|B|
\right\}^k
\no
&\le \ex^{-\gamma_Q |B|} \ex^{|B|}.
\end{align}
Setting $C_1=\sum_{j=1}^k \left|\frac{\partial t_{A}}{\partial \overline{\nu}_i}
\right|/ |t_{A_j}|$, the second term on the right-hand side of \eqref{DiffEst} can be evaluated in the similar way as above:
\begin{align}
& \sum_{{\mathcal{A}=\{A_1, \dots, A_k\}}\atop{\cup_{j=1}^k A_j=B}} 
\ex\tilde{\beta} \sum_{j=1}^k\left|\frac{\partial t_{A}}{\partial \overline{\nu}_i}
\right|\prod_{\substack{A\in \mathcal{A}
\\
A\neq A_{j}
}}
\left\{
\tilde{\beta}
(\ex-1)
|t_{A}|
\right\}
\no
&=\sum_{{\mathcal{A}=\{A_1, \dots, A_k\}}\atop{\cup_{j=1}^k A_j=B}} 
C_1\frac{\ex}{\ex-1} \prod_{A\in \mathcal{A}
}
\left\{
\tilde{\beta}
(\ex-1)
|t_{A}|
\right\}
\no
&\le \ex^{-\gamma_Q |B|} \sum_{{\mathcal{A}=\{A_1, \dots, A_k\}}\atop{\cup_{j=1}^k A_j=B}} C_1\frac{\ex}{\ex-1} 
\prod_{A\in \mathcal{A}} \left\{ \tilde{\beta}   (\ex-1)|t_{A}|\ex^{\gamma_Q |A|}\right\}\no
&\le \frac{\ex}{\ex-1} \ex^{-\gamma_Q |B|} \sum_{k=1}^{\infty}\frac{1}{k!} C_1
\prod_{j=1}^k
\sum_{\substack{A_j\in \mathcal{A}_0\\
A_j\cap B\neq \varnothing}
}\left\{ \tilde{\beta}   (\ex-1)|t_{A_j}|\ex^{\gamma_Q |A_j|}\right\}\no
&\le \frac{\ex}{\ex-1}\ex^{-\gamma_Q |B|} \sum_{k=1}^{\infty}\frac{1}{k!} C_1
\prod_{j=1}^k\sum_{x\in B}
\sum_{\substack{A_j\in \mathcal{A}_0\\
x\in A_j}
}\left\{ \tilde{\beta}   (\ex-1)|t_{A_j}|\ex^{\gamma_Q |A_j|}\right\}\no
&\le \frac{\ex}{\ex-1}\ex^{-\gamma_Q |B|} \sum_{k=1}^{\infty}\frac{1}{(k-1)!} 
\left\{
\tilde{\beta} (\ex-1) \|t\|_{\gamma_Q} |B|
\right\}^k
\no
&\le \frac{\ex}{\ex-1} |B|\, \ex^{-\gamma_Q |B|} \ex^{|B|}. 
\end{align}
Summing up the above, we obtain
\be
\left\|
\frac{\partial }{\partial \overline{\nu}_i} \mathcal{T}_{B, \rm e}(\bn_{\overline{\partial B}, \rm e})
\right\|
\le \ex^{-\tilde{\beta}e_{\rm e} |\overline{B}|} \ex^{-(\gamma_Q-1)|B|} \left(C_0 \tilde{\beta}+\frac{\ex}{\ex-1}\right)|\overline{B}|.\label{DiffEstT}
\ee
Recalling the definition of $S_{\rm e}$, we can see that
\begin{align}
&\frac{\partial }{\partial \overline{\nu}_i} \int_{\bs{\varXi}_{\bs{D}_q}} \mathscr{D}(\bome)S_{\rm e}(\bphi_{\rm e})\no
=&\frac{\partial }{\partial \overline{\nu}_i} \prod_{t=1}^M \Braket{\bs{g}^{(*)}_{\vLa\setminus D^{(t)},  \rm e} \times  \bn_{D^{(t)},  \rm e}^{(t)}| 
\mathcal{T}_{D_q^{(t+1)}, \rm e}\left(\bn_{\overline{\partial D}_q^{(t+1)}, \rm e}\right)
 |\bs{g}^{(*)}_{\vLa\setminus D^{(t+1)},  \rm e}\times  \bn_{D^{(t+1)},  \rm e}^{(t+1)}}.
\end{align}
Combining this with \eqref{DiffEstT}, we obtain the desired assertion in Lemma \ref{PartialS}.
\end{proof}

\begin{Lemm}\label{PartialQ}
Suppose that $\omega_0>\log 2/\tilde{\beta}$.
One obtains the following:\footnote{We consider $\alpha$ instead of $g$ as a parameter in what follows.}
\begin{align}
\left|
\frac{\partial}{\partial \alpha} Q_{X_{\rm p}}(\bs{\varphi}_{\rm e})
\right|
&\le \frac{5}{\alpha\,  \ex^{1/2}}z^N  |\supp Y| \, \ex^{-\tilde{\beta} \omega_{\dagger}|X_{\rm p}|+\tilde{\beta} \omega_0},  \label{PartialQ1}\\
\frac{\partial}{\partial \nu_i} Q_{X_{\rm p}}(\bs{\varphi}_{\rm e})
&=0, \label{DerQ=0}
\end{align}
where $\omega_{\dagger}$ is defined by $\omega_{\dagger}=\omega_0-\log 2/\tilde{\beta}$. Recall here that $N$ is defined by $N=|W(\bs{D})|$.
\end{Lemm}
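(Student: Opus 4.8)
The plan is to handle the two claims separately, disposing of the vanishing derivative \eqref{DerQ=0} first, since it is purely structural. Inspecting the definition \eqref{DefQ1} of $Q_{X_{\rm p}}(\bs{\vphi}_{\rm e})$, every ingredient — the phonon energy $E_{\rm p}(\bn_{\bs{D}_c,\rm p})$, the configuration sums, and the phonon amplitude $S_{\rm p}(\bs{\vphi}_{\rm p})$ assembled from the operators $U_{\bphi}(x,t)$ of \eqref{DefU} — depends on the parameter vector $\overline{\nu}=(g,\nu_1,\dots,\nu_{k-1})$ only through $\omega_0$ and through the phases $\vartheta_A$; by \eqref{DefTht} these phases involve $\alpha$ alone. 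As none of $\nu_1,\dots,\nu_{k-1}$ enters the phonon sector, each summand is $\nu_i$-independent, and differentiating term by term (the sum converging absolutely by Lemma \ref{QwithL}) yields \eqref{DerQ=0}.

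For the $\alpha$-derivative \eqref{PartialQ1} I would start from the factorized representation established in the proofs of Lemmas \ref{QInq1} and \ref{QwithL}, namely
\[
Q_{X_{\rm p}}(\bs{\vphi}_{\rm e}) = \prod_{x\in[\bs{D}]_{\rm S}} F_x,
\]
where $F_x=\Braket{\varnothing|\prod_{t}U_{\bphi}(x,t)|\varnothing}$ for the ordinary sites and $F_x=\Tr\big[\prod_{t}U_{\bphi}(x,t)\big]$ for the loop sites $x\in W(\bs{D})$. Applying the Leibniz rule over the sites,
\[
\frac{\partial}{\partial\alpha}Q_{X_{\rm p}}(\bs{\vphi}_{\rm e}) = \sum_{x_0\in[\bs{D}]_{\rm S}}\Big(\frac{\partial F_{x_0}}{\partial\alpha}\Big)\prod_{x\ne x_0}F_x,
\]
I would bound the undifferentiated factors exactly as before, reproducing the $z^{N}$ and the $\ex^{-\tilde\beta\omega_0|X_{\rm p}|}$ decay of Lemmas \ref{QInq1} and \ref{QwithL}, while the number of summands is at most $|[\bs{D}]_{\rm S}|\le|\supp Y|$, which supplies the $|\supp Y|$ prefactor.

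The crux is the estimate of a single differentiated factor $\partial_\alpha F_{x_0}$. Its $\alpha$-dependence sits entirely in the phase operators $\tilde\varTheta_{j,x_0}(s)$, each a factor $\ex^{-\im\kappa\alpha q_{x_0}}$, so differentiation inserts the field operator $q_{x_0}$ into the correlation function; since $q_{x_0}$ is unbounded, no naive operator-norm bound is available. Instead I would invoke the explicit Gaussian correlation-function identities of Appendix \ref{AppCor} (the same formulas \eqref{CorF2} and \eqref{TherCorr} used above), in which $F_{x_0}$ is a Gaussian functional of the phases and $\partial_\alpha F_{x_0}$ is computed exactly, the inserted $q_{x_0}$ being controlled by the Gaussian damping carried by each $\ex^{-\tilde\beta\omega_0 N_{x_0}}$. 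Writing the inserted $q_{x_0}$ as $\alpha^{-1}(\alpha q_{x_0})$ extracts the prefactor $\alpha^{-1}$, so that the rescaled insertion $\alpha q_{x_0}$ coincides with a phase argument and combines with the existing exponentials into a uniformly bounded expectation. The single extra field operator costs a factor $\sqrt{n+1}\le 2^{n}$ in the oscillator sum, which degrades one decay rate from $\omega_0$ to $\omega_\dagger=\omega_0-\log 2/\tilde\beta$ — positive precisely under the hypothesis $\omega_0>\log 2/\tilde\beta$ — while releasing one factor $\ex^{\tilde\beta\omega_0}$; the sharp Gaussian maximum $\max_{u}|u|\ex^{-u^2/4}=\sqrt2\,\ex^{-1/2}$ produces the $\ex^{-1/2}$, and the remaining crude counting collects into the constant $5$. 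Assembling these bounds gives \eqref{PartialQ1}.

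The main obstacle, as flagged above, is exactly the unbounded $q_{x_0}$-insertion generated by $\partial_\alpha$: it precludes any operator-norm argument and forces the use of the Gaussian machinery of Appendix \ref{AppCor}. The delicate point is to track the trade-off between the extracted $\alpha^{-1}$ and the degraded rate $\omega_\dagger$ uniformly in the occupation numbers, so that the sums over $\bn_{\bs{D}_q,\rm p}$ and $\bn_{\bs{D}_c,\rm p}$ still converge after the derivative has weakened the exponential damping.
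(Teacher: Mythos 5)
Your overall skeleton---disposing of \eqref{DerQ=0} structurally (the phonon sector depends on $\overline{\nu}$ only through $\alpha$ and $\omega_0$), factorizing $Q_{X_{\rm p}}$ over sites, applying the Leibniz rule, bounding the undifferentiated factors as in Lemmas \ref{QInq1} and \ref{QwithL}, and controlling each differentiated factor by the Gaussian correlation-function bounds of Appendix \ref{AppCor}---is the paper's strategy, and your argument for \eqref{DerQ=0} is the paper's argument.

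The genuine gap is at the point you yourself flag as the crux: a loop site $x\in W(\bs{D})$ with $\mathsf{S}_x(X_{\rm p})\neq\varnothing$. There the factor $\Tr\bigl[\prod_t U_{\bphi}(x,t)\bigr]$ contains, by \eqref{DefU}, the operators $Z_2=\ex^{-\tilde{\beta}\omega_0 N_x}P^{\perp}_{\varnothing}$, and the spectral projections $P^{\perp}_{\varnothing}$ mean this trace is \emph{not} of the form \eqref{CorF1} or \eqref{CorF2}: the Gaussian identities you invoke apply only to products of heat semigroups and Weyl phases, so ``$\partial_\alpha F_{x_0}$ is computed exactly'' is simply not available at such a site, nor is the $\alpha^{-1}(\alpha q_{x_0})$ rescaling trick. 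The paper's resolution is a combinatorial device you do not supply: write $Z_2=\tilde{Z}_2-\tilde{Z}_2P_{\varnothing}$ with $\tilde{Z}_2=\ex^{-\tilde{\beta}\omega_0N_x}$ and expand the time-ordered product into $2^{n_x}$ terms $J_i$, where $n_x=|\mathsf{S}_x(X_{\rm p})|$; the unique projection-free term is a genuine thermal trace controlled by \eqref{PartialG2}, while every other term factors through the rank-one projections $P_{\varnothing}$ into a product of at most $n_x+1$ vacuum correlation functions, each controlled by \eqref{PartialG1}. It is this $2^{n_x}$ count---with $n_x$ the number of time-slices of $X_{\rm p}$ at the site, \emph{not} an oscillator occupation number---that is the true origin of the degradation $\omega_0\to\omega_{\dagger}=\omega_0-\log 2/\tilde{\beta}$ and of the hypothesis $\omega_0>\log 2/\tilde{\beta}$. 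Your substitute mechanism (``the single extra field operator costs $\sqrt{n+1}\le 2^{n}$ in the oscillator sum'') conflates these two different integers and cannot be executed as stated: the phases $\ex^{\pm\im\alpha q_x}$ are not diagonal in the number basis, so no simple oscillator sum in which a $\sqrt{n+1}$ matrix element appears multiplicatively exists, and no bound on the Leibniz term at such a site results. A smaller bookkeeping slip: the $|\supp Y|$ prefactor does not come merely from the number of sites; each vacuum factor at an ordinary site is itself a product of up to $|\mathsf{S}_x(X_{\rm e})|+1$ correlation functions, and it is the count $\sum_x(|\mathsf{S}_x(X_{\rm e})|+1)\le 2|\supp Y|$, together with the analogous loop-site count, that produces the constants $2$ and $3$ in \eqref{PartialR1} and \eqref{PartialR2} summing to the $5$ in \eqref{PartialQ1}.
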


\begin{proof}
To simplify matters, let us assume that the portion of $\bs{D}$ with all simple loops removed does not intersect with the hyperplanes represented by $t=0$ and $t=\beta$ in the space-time\footnote{In the general case, it is possible to prove the claim by suitably translating certain portions of the proof in the time direction.}.

By using \eqref{ProdRR}, we have
\begin{align}
\frac{\partial}{\partial \alpha}Q_{X_{\rm p}}\left(\bs{\vphi}_{ \rm e}\right)
=\left( \frac{\partial}{\partial \alpha}R_{[\bs{D}]_{\rm S}\setminus W(\bs{D})}\right) R_{W(\bs{D})}+R_{[\bs{D}]_{\rm S}\setminus W(\bs{D})} \left( \frac{\partial}{\partial \alpha} R_{W(\bs{D})} \right). \label{PartialR0}
\end{align}

First, we evaluate an upper bound for  $|\partial R_{[\bs{D}]_{\rm S}\setminus W(\bs{D})}/\partial \alpha|$. 
Let $0\le s_1 \le \cdots \le s_n \le \beta$ and $\vepsilon_1, \dots, \vepsilon_n\in \{-1, 1\}$. We define the $n$-point correlation function as follows:
\begin{align}
&G_n(s_1, \dots, s_n; \vepsilon_1, \dots, \vepsilon_n)\no
=&\big\la\varnothing\big|\ex^{-s_1 \omega_0 N_{x}} \ex^{\im   \vepsilon_1 \alpha \phi_x} \ex^{-(s_2-s_1) \omega_0 N_x} \cdots \ex^{-(s_n-s_{n-1})\omega_0 N_x} \ex^{\im \vepsilon_n \alpha \phi_x} \ex^{-(\beta-s_n) \omega_0 N_x}\big|\varnothing\big\ra.
\end{align}
The expectation value $\braket{\varnothing| \prod_{ {t \in [\mathsf{S}_x(\bs{D}})]_{\rm T}}U_{\bphi}(x, t) |\varnothing}$ can be expressed as a product of several correlation functions $G_{n_1}, G_{n_2}, \dots$, and the maximum number of correlation functions appearing in this product is $|\mathsf{S}_x(X_{\rm e})|+1$. 
By virtue of this fact and Lemma \ref{ProCor1}, we obtain the following:
\begin{align}
\left|
\frac{\partial}{\partial \alpha}
\Braket{
\varnothing| \prod_{ {t \in [\mathsf{S}_x(\bs{D}})]_{\rm T}}U_{\bphi}(x, t) |\varnothing
}
\right|
\le \frac{1}{\alpha\, \ex^{1/2}}(|\mathsf{S}_x(X_{ \rm e})|+1). \label{DiffR}
\end{align}
Thus, utilizing the inequality \eqref{DiffR} obtained above and recalling the definition \eqref{ProdRR} of $R_{[\bs{D}]_{\rm S}\setminus W(\bs{D})}$, we obtain 
\begin{align}
\left|
\frac{\partial}{\partial \alpha}
R_{[\bs{D}]_{\rm S}\setminus W(\bs{D})}
\right|
\le& \frac{1}{\alpha\, \ex^{1/2}} \, \ex^{-\tilde{\beta} \omega_0(|X_{\rm p}\cap (\bs{D}\setminus \mathbb{T}_{W(\bs{D})})|-1)}
\sum_{x\in [\bs{D}]_{\rm S}\setminus W(\bs{D})}(|\mathsf{S}_x(X_{ \rm e})|+1)\no
\le& \frac{2}{\alpha\, \ex^{1/2}} |\supp Y| \, \ex^{-\tilde{\beta} \omega_0(|X_{\rm p}\cap (\bs{D}\setminus \mathbb{T}_{W(\bs{D})})|-1)}.
\label{PartialR1}
\end{align}

Next, we will evaluate an upper bound for $|\partial R_{W(\bs{D})}/\partial \alpha |$. To do so, it suffices to evaluate $\left|
\partial
\Tr\left[
\prod_{ {t \in \mathbb{M}}}U_{\bphi}(x, t)\right]/\partial \alpha\right|$ for each $x\in W(\bs{D})$.

Let  $x\in W(\bs{D})$. If  $\mathsf{S}_x(X_{\rm p})=\varnothing$, then the operator $\ex^{-\tilde{\beta} \omega_0 N_x}P^{\perp}_{\varnothing}$ does not appear in the product of operators $\prod_{t \in \mathbb{M}}U_{\bphi}(x, t)$. As a result, we can represent $\Tr\left[\prod_{t \in \mathbb{M}}U_{\bphi}(x, t)\right]$ in the form of the right-hand side of \eqref{CorF2}. 
Therefore, by using the inequality  \eqref{PartialG2}, we obtain
\be
\left|
\frac{\partial}{\partial \alpha}
\Tr\left[
 \prod_{ {t \in  \mathbb{M}}}U_{\bphi}(x, t)\right]\right| \le \frac{1}{\alpha\, \ex^{1/2}}z.
\ee

Let us consider the case where $\mathsf{S}_x(X_{\rm p})\neq \varnothing$. We will employ the symbols $Z_1$ and $Z_2$ introduced in the proof of Lemma \ref{QwithL}. The product of operators $\prod_{t\in \mathbb{M}}U_{\bphi}(x,t)$ contains $n_x=|\mathsf{S}_x(X_{\rm p})|$  operators of  the type $Z_2$ and $M-n_x$ operators of the  type $Z_1$, denoted by $Z_1^{(1)},\dots,Z_1^{(M-n_x)}$. 
By decomposing $Z_2$ as $Z_2=\tilde{Z}_2-\tilde{Z}_2P_{\varnothing}$, where  $\tilde{Z}_2=\exp(-\tilde{\beta}\omega_0 N_x)$, we can represent the product of operators $\prod_{t\in \mathbb{M}}U_{\bphi}(x,t)$ as a sum of $2^{n_x}$ operators:
\be
\prod_{ {t \in  \mathbb{M}}}U_{\bphi}(x, t)=\sum_{i=1}^{2^{n_x}} J_i.
\ee
Each $J_i$ can be expressed as a product of $M-n_x$ operators of the type $Z_1$ and a total of $n_x$ operators, which are a combination of repeated appearances of two operators: $\tilde{Z}_2$ and $\tilde{Z}_2P_{\varnothing}$.
Note that $J_1$ is defined as the only term that does not include $P_{\varnothing}$. For example, we have
\be
Z_1Z_2=Z_1\tilde{Z}_2-Z_1 \tilde{Z}_2P_{\varnothing}=J_1+J_2.
\ee
Under the above setup, we have
\be
 \Tr\left[
 \prod_{ {t \in  \mathbb{M}}}U_{\bphi}(x, t)\right]=
 \sum_{i=1}^{2^{n_x}}\Tr[J_i].
 \ee
 Since $\Tr[J_1]$ does not include $P_{\varnothing}$, it can be expressed in the form of the right-hand side of \eqref{CorF2}. Therefore, using \eqref{PartialG2}, we obtain
 \be
 \left|
\frac{\partial}{\partial \alpha}
\Tr\left[
 J_1
 \right]\right| \le  \frac{1}{\alpha\, \ex^{1/2}}z.
 \ee
 For $i\geq 2$, $J_i$ contains at least one $P_{\varnothing}$, hence $\Tr[J_i]$ can be expressed as a product of correlation functions:
 \be
\Tr[J_i]=G_{n^{(i)}_1} \cdots G_{n^{(i)}_k}, 
\ee
 where $k\le n_x+1$. Therefore, using \eqref{PartialG1}, we obtain
 \be
\left|
\frac{\partial}{\partial \alpha}
\Tr\left[
 J_i
 \right]\right| \le  \frac{k}{\alpha\, \ex^{1/2}} \le \frac{n_x+1}{\alpha\, \ex^{1/2}}.
\ee
 Putting everything together, we have
 \be
\left|
\frac{\partial}{\partial \alpha}
\Tr\left[ 
\prod_{ {t \in  \mathbb{M}}}U_{\bphi}(x, t)
\right]
\right|\le \frac{n_x+1}{\alpha\, \ex^{1/2}} (2^{n_x}-1)+ \frac{1}{\alpha\, \ex^{1/2}}z\le  \frac{n_x+1}{\alpha\, \ex^{1/2}} 2^{n_x} z.
\ee
Therefore, by setting $k=|W(\bs{D})|-\left|\left[\mathbb{T}_{W(\bs{D})} \cap X_{\rm p}\right]_{\rm S} \right|$, one obtains 
\begin{align}
\left|
\frac{\partial}{\partial \alpha}
R_{ W(\bs{D})}
\right|
\le &
 k z^{k-1}\frac{1}{\alpha\, \ex^{1/2}}\, \ex^{-\tilde{\beta}\omega_0 (\left|\mathbb{T}_{W(\bs{D})} \cap X_{\rm p}\right|-1)}\no
& +\sum_{x\in
  \left[\mathbb{T}_{W(\bs{D})} \cap X_{\rm p}\right]_{\rm S}
 }z^{k+1}\frac{n_x+1}{\alpha\, \ex^{1/2}} 2^{n_x} \, \ex^{-\tilde{\beta}\omega_0 (\left|\mathbb{T}_{W(\bs{D})} \cap X_{\rm p}\right|-1)}\no
 \le &\frac{3}{\alpha\, \ex^{1/2}}
 z^N|\supp Y|\, 
 \ex^{-\tilde{\beta} \omega_{\dagger}|\mathbb{T}_{W(\bs{D})}\cap X_{\rm p}|+\tilde{\beta} \omega_0}. \label{PartialR2}
\end{align}
Combining \eqref{PartialR0}, \eqref{PartialR1}, and \eqref{PartialR2} yields \eqref{PartialQ1}.

Equation \eqref{DerQ=0} is evident from the fact that $Q_{X_{\rm p}}(\bs{\varphi}_{\rm e})$ does not depend on the $\nu_i$.
\end{proof}

\begin{Prop}\label{DelRhoEst}
Assume that $\lambda\in \BbbR, \tilde{\beta}>0$ and $\gamma_Q\ge 0$ satisfy \eqref{ConvClC}.  Furthermore, assume that $\tilde{\beta} \ge \beta_0/2$ and $(\gamma_Q-1-\omega_0) R_0^{-d}-\gamma_{\rm e}-|e_{\rm e}|>0$. Then one obtains 
\be
\left|
\frac{\partial}{\partial \overline{\nu}_i} \rho(Y)
\right|
\le \left(
2\tilde{\beta} C_0+\frac{\ex}{\ex-1}+\frac{5}{\alpha \, \ex^{1/2}}
\right) |\supp Y|\,  \ex^{-(\tilde{\beta}e_{\rm e}+\tilde{\beta}c+\gamma_{\dagger})|\supp Y|},
\ee
where 
$\gamma_{\dagger}$ is defined as  
$\gamma_{\dagger}=\min\left\{(\gamma_Q-1-\omega_0) R_0^{-d}-\gamma_{\rm e}-|e_{\rm e}|,\,  \tilde{\beta} \min\{\gamma_{\rm e}, \omega_{\dagger}R_0^{-d}\}\right\}-5 \log 2$, and $c$ is given in Proposition \ref{RhoEst2}.
\end{Prop}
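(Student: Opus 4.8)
The plan is to differentiate the representation \eqref{ActRep} of $\rho(Y)$ term by term and control each resulting piece with the lemmas already established, then collect the exponential weights exactly as in the proofs of Propositions \ref{ActBd1} and \ref{RhoEst2}. Writing the integrand of \eqref{ActRep} as the product $\ex^{-\tilde\beta E_{\rm e}(\bn_{\bs{D}_c,\rm e})}S_{\rm e}(\bphi_{\rm e})Q_{X_{\rm p}}(\bphi_{\rm e})$, the Leibniz rule produces three contributions: one where $\partial_{\overline\nu_i}$ hits the classical weight $\ex^{-\tilde\beta E_{\rm e}}$, one where it hits $\int_{\bs{\varXi}_{\bs{D}_q}}\mathscr{D}(\bome)S_{\rm e}$, and one where it hits $Q_{X_{\rm p}}$. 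For the first, since $E_{\rm e}(\bn_{\bs{D}_c,\rm e})=\sum_{t}\sum_{x\in D^{(t)}\setminus\overline{D}_q^{(t)}}\varPhi^{(\ell)}_{{\rm eff},x}$, assumption \hyperlink{A5}{\bf (A. 5)} gives $|\partial_{\overline\nu_i}E_{\rm e}|\le C_0|\bs{D}_c|\le C_0|\supp Y|$, so this contribution is at most $\tilde\beta C_0|\supp Y|$ times the original integrand. For the second I would invoke Lemma \ref{PartialS}, and for the third Lemma \ref{PartialQ}; note \eqref{DerQ=0} shows $Q_{X_{\rm p}}$ is independent of the $\nu_i$, so the $Q$-derivative only survives in the $\alpha$ (equivalently $g$) direction. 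The three prefactors $\tilde\beta C_0$, then $(C_0\tilde\beta+\tfrac{\ex}{\ex-1})$ from Lemma \ref{PartialS}, and $\tfrac{5}{\alpha\ex^{1/2}}$ from Lemma \ref{PartialQ}, combine into the displayed constant $2\tilde\beta C_0+\tfrac{\ex}{\ex-1}+\tfrac{5}{\alpha\ex^{1/2}}$.

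First I would record the elementary pointwise bounds that feed the collection step: $|S_{\rm e}(\bphi_{\rm e})|$ and $\|\mathcal{T}_{B,\rm e}(\bn_{\overline{\partial B},\rm e})\|$ are controlled exactly as in \eqref{DiffEst}--\eqref{DiffEstT} and in the display preceding \eqref{TilGamma}, while $|Q_{X_{\rm p}}|\le\ex^{-\tilde\beta\omega_0|X_{\rm p}|}z^N$ is Lemma \ref{QwithL}. The subtle point is the third contribution: Lemma \ref{PartialQ} replaces the clean weight $\ex^{-\tilde\beta\omega_0|X_{\rm p}|}$ by $\ex^{-\tilde\beta\omega_\dagger|X_{\rm p}|+\tilde\beta\omega_0}$ with $\omega_\dagger=\omega_0-\tilde\beta^{-1}\log 2<\omega_0$, and it carries an extra factor $z^N|\supp Y|/(\alpha\ex^{1/2})$ together with a $2^{n_x}$-type counting loss. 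This degradation of the per-cube phonon gap from $\omega_0$ to $\omega_\dagger$ is precisely what forces the strengthened hypothesis $(\gamma_Q-1-\omega_0)R_0^{-d}-\gamma_{\rm e}-|e_{\rm e}|>0$: one must keep enough room in the quantum Sobolev budget $\gamma_Q$ to absorb a full $\omega_0$ per quantum cube so that, after replacing $\omega_0$ by $\omega_\dagger$, the Peierls-type energy estimate of Lemma \ref{EstGa} still yields a positive exponent, now measured by $\gamma_\dagger=\min\{(\gamma_Q-1-\omega_0)R_0^{-d}-\gamma_{\rm e}-|e_{\rm e}|,\ \tilde\beta\min\{\gamma_{\rm e},\omega_\dagger R_0^{-d}\}\}-5\log 2$ in place of the $\gamma$ of Proposition \ref{ActBd1}.

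Next I would assemble the bound. After inserting the three pointwise estimates, each contribution is a sum over $\bs{D}_q\subseteq\bs{D}$, over admissible $(X_{\rm e},X_{\rm p})$, and over electron configurations, of a weight of the form $\ex^{-\tilde\beta E_{\rm eff}(\bn_{\bs{D}_c,\rm e},X_{\rm p})}\ex^{-\tilde\beta e_{\rm e}|\overline{\bs{D}}_q|}\ex^{-(\gamma_Q-1)R_0^{-d}|\overline{\bs{D}}_q|}$, times the polynomial prefactor $(2\tilde\beta C_0+\tfrac{\ex}{\ex-1}+\tfrac{5}{\alpha\ex^{1/2}})|\supp Y|$ and, in the $Q$-derivative term, the extra $z^N$. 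Exactly as in \eqref{TilGamma} I would apply Lemma \ref{EstGa} to turn the combination of $E_{\rm eff}$, the ground-state energy on $\overline{\bs{D}}_q$, and the quantum suppression into a single factor $\ex^{-(\tilde\beta e_{\rm e}+\gamma_\dagger+5\log 2)|\supp Y|}$; the winding factor $z^N$ is converted into $\ex^{\tilde\beta\omega_0 f(\omega_0\beta_0/2)|\supp Y|}=\ex^{-\tilde\beta c|\supp Y|}$ via the monotonicity of $f$ and $\tilde\beta\ge\beta_0/2$, precisely as in Proposition \ref{RhoEst2}. Finally the crude combinatorial counts $3^{|\bs{D}_c|}4^{|\bs{D}_q|}\le\ex^{(\log 4)|\supp Y|}$ and $\sum_{\bs{D}_q}\sum_{X_{\rm e},X_{\rm p}}1\le\ex^{(\log 8)|\supp Y|}$ contribute $\ex^{(5\log 2)|\supp Y|}$, which is canceled by the $-5\log 2$ built into $\gamma_\dagger$, leaving the claimed estimate.

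The hard part will be the bookkeeping in the winding case: one must simultaneously track the loss $z^N$, the reduced phonon gap $\omega_\dagger$ on the loop cubes, and the fact that the $\alpha$-derivative of $Q_{X_{\rm p}}$ acts differently on $R_{[\bs{D}]_{\rm S}\setminus W(\bs{D})}$ (vacuum expectations) and on $R_{W(\bs{D})}$ (traces over simple loops), which is where the $2^{n_x}$ counting enters. All of these are already isolated in Lemma \ref{PartialQ}, so the remaining work is to verify that the energy accounting closes under the strengthened assumption and that the various polynomial-in-$|\supp Y|$ prefactors combine into the single factor displayed in the statement; the non-winding case is recovered by setting $N=0$, where the stronger bound without the $\tilde\beta c$ term holds a fortiori since $c<0$.
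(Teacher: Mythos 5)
Your proposal is correct and follows essentially the same route as the paper's own proof: differentiate the representation \eqref{ActRep} by the Leibniz rule into the three terms (classical weight, $S_{\rm e}$, $Q_{X_{\rm p}}$), control them by assumption \hyperlink{A5}{\bf (A. 5)}, Lemma \ref{PartialS}, and Lemma \ref{PartialQ} respectively, and then close the energy accounting exactly as in Propositions \ref{ActBd1} and \ref{RhoEst2}, with the combinatorial factors absorbed by the $-5\log 2$ in $\gamma_{\dagger}$. The prefactor $2\tilde{\beta}C_0+\frac{\ex}{\ex-1}+\frac{5}{\alpha\,\ex^{1/2}}$ and the role of the strengthened hypothesis $(\gamma_Q-1-\omega_0)R_0^{-d}-\gamma_{\rm e}-|e_{\rm e}|>0$ are identified just as in the paper.
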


\begin{proof}
Using the expression for $\rho(Y)$ given in \eqref{ActRep}, we obtain\footnote{
It should be noted that the justification for exchanging the order of derivatives and integrals here is provided by the inequalities \eqref{Inq1}-\eqref{Inq3}.
}
\begin{align}
&\frac{\partial}{\partial \overline{\nu}_i} \rho(Y)\no
=&\sum_{\bs{D}_q \subseteq \bs{D}} 
\sum_{{X_{\rm e}, X_{\rm p} \subseteq \bs{D}_c}\atop{X_{\rm e} \cup X_{\rm p}=\bs{D}_c}}
\sum_{\substack{\bn_{\bs{D}_c,  \rm e}:  \\ \mathscr{E}_{\rm e}(\bn_{\bs{D}_c,   \rm e})=X_{\rm e}}}
\sum_{\bn_{\bs{D}_q,  \rm e}\in \bN_{\bs{D}_q,  \rm e}} \int_{\bs{\varXi}_{\bs{D}_q}} \mathscr{D}(\bome)
\, \ex^{-\tilde{\beta} E_{\rm e}(\bn_{\bs{D}_c, \rm e})} \times \no 
&\times \left\{ 
\left(-\tilde{\beta} \frac{\partial}{\partial \overline{\nu}_i} \bs{E}(\bn_{\bs{D}_c})\right)
S_{\rm e}(\bs{\vphi}_{\rm e})Q_{X_{\rm p}}(\bs{\vphi}_{ \rm e})
+
\left( \frac{\partial}{\partial \overline{\nu}_i} S_{\rm e}(\bs{\vphi}_{\rm e})\right)Q_{X_{\rm p}}(\bs{\vphi}_{ \rm e})
+
S_{\rm e}(\bs{\vphi}_{\rm e})\left( \frac{\partial}{\partial \overline{\nu}_i} Q_{X_{\rm p}}(\bs{\vphi}_{ \rm e})\right)
\right\}.
\end{align} 
Using the inequality obtained from the assumption \hyperlink{A5}{\bf (A. 5)}:
$|\frac{\partial}{\partial \overline{\nu}_i} \bs{E}(\bn_{\bs{D}_c})|\le C_0 |\overline{\bs{D}}_c|$
and Lemmas \ref{PartialS} and \ref{PartialQ}, we can follow a similar argument as in the proof of Proposition \ref{ActBd1} to obtain the following:
\begin{align}
&\int_{\bs{\varXi}_{\bs{D}_q}} \mathscr{D}(\bome)
\, \ex^{-\tilde{\beta} E_{\rm e}(\bn_{\bs{D}_c, \rm e })} \left|
\left(\tilde{\beta} \frac{\partial}{\partial \overline{\nu}_i} \bs{E}(\bn_{\bs{D}_c})\right)
S_{\rm e}(\bs{\vphi}_{\rm e})Q_{X_{\rm p}}(\bs{\vphi}_{ \rm e})
\right|\label{Inq1}\no
\le & c_0\tilde{\beta}|\supp Y| \ex^{-(\tilde{\beta}e_{\rm e}+\tilde{\beta}c+\gamma+5 \log 2)|\supp Y|},\\
&\int_{\bs{\varXi}_{\bs{D}_q}} \mathscr{D}(\bome)
\, \ex^{-\tilde{\beta} E_{\rm e}(\bn_{\bs{D}_c, \rm e })}
\left|
\left( \frac{\partial}{\partial \overline{\nu}_i} S_{\rm e}(\bs{\vphi}_{\rm e})\right)Q_{X_{\rm p}}(\bs{\vphi}_{ \rm e})
\right|\no
\le & 
\left(
c_0 \tilde{\beta}+\frac{\ex}{\ex-1}
\right)|\supp Y|
\ex^{-(\tilde{\beta}e_{\rm e}+\tilde{\beta}c+\gamma+5 \log 2)|\supp Y|},\\
&\int_{\bs{\varXi}_{\bs{D}_q}} \mathscr{D}(\bome)
\, \ex^{-\tilde{\beta} E_{\rm e}(\bn_{\bs{D}_c, \rm e })} \left|
S_{\rm e}(\bs{\vphi}_{\rm e})\left( \frac{\partial}{\partial \overline{\nu}_i} Q_{X_{\rm p}}(\bs{\vphi}_{ \rm e})\right)
\right|\no
\le & 
\frac{5}{\alpha \, \ex^{1/2}} |\supp Y\, |\ex^{-(\tilde{\beta}e_{\rm e}+\tilde{\beta}c+\gamma_{\dagger}+5\log 2)|\supp Y|}.\label{Inq3}
\end{align}
Putting everything together, we obtain the desired claim.
\end{proof}

\subsection{Contour representation of thermal expectations}
For a given observable $\varPsi$, its thermal expectation value is given by
\be
\la \vPsi \ra_{\beta, \vLa}^{(\ell)}=\frac{\Tr\left[\, \vPsi\,  \ex^{-\beta H_{\ell, \vLa}} \right]}{Z_{\ell, \vLa}}
=\frac{Z_{\ell, \vLa}^{\vPsi}}{Z_{\ell, \vLa}}.
\ee
Considering the Lang--Firsov transformation, we can assume, without loss of generality, that $\varPsi$ takes a simple form as follows:
\be
\varPsi=\varPsi_{\rm e}\otimes \varPsi_{\rm p},\label{ProdPsi}
\ee
 where $\varPsi_{\rm e} \in \mathfrak{A}_{\rm e}$ with $\supp \varPsi_{\rm e}$  being a finite and connected set, and $\varPsi_{\rm p}=\exp(\im \sum_{x\in \supp \varPsi_{\rm e}} \mu_x q_x)$ with $\mu_x\in \BbbR$.
Here, recall that $q_x$ is defined by \eqref{Defpq}.

We will discuss the contour representation of $Z_{\ell, \vLa}^{\vPsi}$ here, while the contour representation of the partition function $Z_{\ell, \vLa}$ has been extensively discussed in the previous subsections.
By using \eqref{TSigma}, we have the following expression:
\be
Z_{\ell, \vLa}^{\vPsi}=\left[
\prod_{t=1}^M \sum_{\vSi_t}
\right] W_{\vPsi}(\vSi_1, \dots, \vSi_M),\quad W_{\vPsi}(\vSi_1, \dots, \vSi_M)=\Tr\left[
\vPsi K(\vSi_1)\cdots K(\vSi_M)
\right].
\ee
We define $D(\vPsi)=\bigcup_{x\in \supp \vPsi_{\rm e}}C(x, 1)\in \mathsf{E}$.
Given $\bs{\vSi}_{\mathbb{L}_{\vLa}}$, the supports of the corresponding contours are defined by the connected components of the set $\mathbb{D} \cup D(\vPsi)$.
At least one contour corresponding to $\bs{\vSi}_{\mathbb{L}_{\vLa}}$ must necessarily include $D(\vPsi)$ in its support. We shall denote this contour by $Y_{\vPsi}$. Using a procedure similar to the one used to obtain the contour representation of the partition function, we can derive the following contour representation of  $Z_{\ell, \vLa}^{\vPsi}$:
\be
Z_{\ell, \vLa}^{\vPsi}=\sum_{\{Y_{\vPsi}, Y_1, \dots, Y_n\}} \ex^{-\tilde{\beta} \sum_{m} e_{m} |V_{m}|}\rho_{\vPsi}(Y_{\vPsi}) \rho(Y_1)\cdots \rho(Y_n).
\ee
We provide   the  definition of $\rho_{\vPsi}(Y_{\vPsi})$  below:
Suppose that $\supp Y_{\vPsi}$ can be expressed as $\supp Y_{\vPsi}=\bs{D}\cup D(\vPsi)$.
By defining
\be
D_{\vPsi}^{(t)}=
\begin{cases}
D^{(t)}\cup \supp \varPsi_{} & t=1, M\\
D^{(t)} & \mbox{otherwise},
\end{cases}
\ee
we can express $\rho_{\vPsi}(Y_{\vPsi})$ concretely as follows:
\begin{align}
\rho_{\vPsi}(Y_{\vPsi})
=&
\sum_{\bs{D}_q \subseteq \bs{D}} \sum_{\bn_{\bs{D}_q}\in \bN_{\bs{D}_q}} \sum_{\bn_{\bs{D}_c}\in \bE_{\bs{D}_c}} 
\sum_{\bn^{(1)}_{ \supp \vPsi_{\rm e}\setminus D_q^{(1)}}\in \mathcal{N}_{\supp \vPsi_{\rm e}\setminus D_q^{(1)}}}
\sum_{
\bn^{(M)}_{\supp \vPsi_{\rm e}  \setminus D_q^{(M)}} \in \mathcal{N}_{\supp \vPsi_{\rm e}\setminus D_q^{(M)}}
}
\ex^{-\tilde{\beta}  \bs{E}(\bn_{\bs{D}_c })}
\times  \no
&\times 
\Braket{\bs{g}^{(*)}_{\vLa\setminus D_{\vPsi}^{(M)}}\times   \bn_{D_{\vPsi}^{(M)}}^{(M)}| \, \vPsi\,  \mathcal{T}_{D_q^{(1)}}\left(\bn_{\overline{\partial D}_q^{(1)}}\right) |\bs{g}^{(*)}_{\vLa\setminus D_{\vPsi}^{(1)}}\times  \bn_{D_{\vPsi}^{(1)}}^{(1)}}\times \no
&\times \prod_{t=1}^{M-1}\Braket{\bs{g}^{(*)}_{\vLa\setminus D_{\vPsi}^{(t)}}\times   \bn_{D_{\vPsi}^{(t)}}^{(t)}| \mathcal{T}_{D_q^{(t+1)}}\left(\bn_{\overline{\partial D}_q^{(t+1)}}\right) |\bs{g}^{(*)}_{\vLa\setminus D_{\vPsi}^{(t+1)}}\times  \bn_{D_{\vPsi}^{(t+1)}}^{(t+1)}}. \label{DefRhoP1}
\end{align}
When deriving an upper bound for $\rho_{\vPsi}(Y_{\vPsi})$, it is convenient to use the following representation that separates the electron and phonon parts, as derived in \eqref{RhoEP}:
\begin{align}
&\rho_{\vPsi}(Y_{\vPsi})\no
=&\sum_{\substack{
\bs{D}_q \subseteq \bs{D}
 }}
\sum_{{X_{\rm e}, X_{\rm p} \subseteq \bs{D}_c}\atop{X_{\rm e} \cup X_{\rm p}=\bs{D}_c}}
\sum_{\substack{\bn_{\bs{D}_c,  \rm e}:  \\ \mathscr{E}_{\rm e}(\bn_{\bs{D}_c,  \rm e})=[X_{\rm e}]}}
\sum_{\bn_{\bs{D}_q,  \rm e}\in \bN_{\bs{D}_q,  \rm e}}
\sum_{\bn^{(1)}_{ \supp \vPsi_{\rm e} \setminus D_q^{(1)}, \rm e}\in \mathcal{N}_{\supp \vPsi_{\rm e} \setminus D_q^{(1)}, \rm e}}
\sum_{
\bn^{(M)}_{\supp \vPsi_{\rm e}  \setminus D_q^{(M)}, \rm e} \in \mathcal{N}_{\supp \vPsi_{\rm e} \setminus D_q^{(M)}, \rm e}
}\times \no
 &\times \int_{\bs{\varXi}_{\bs{D}_q}} \mathscr{D}(\bome)\,  \ex^{-\tilde{\beta} E_{\rm e}(\bn_{\bs{D}_c, \rm e })} S_{\vPsi_{\rm e}, \rm e}(\bs{\vphi}_{\rm e})Q_{\varPsi_{\rm p}, X_{\rm p}}(\bs{\vphi}_{\rm e}), 
\end{align}
where $S_{\vPsi_{\rm e}, \rm e }(\bphi_{\rm e })$ is defined by 
\begin{align}
S_{\vPsi_{\rm e}, \rm e }(\bphi_{\rm e })&=
\Braket{\bs{g}^{(*)}_{\vLa\setminus D_{\vPsi}^{(M)},  \rm \rm e } \times  \bn_{D_{\vPsi}^{(M)},  \rm \rm e }^{(M)}| \, \vPsi_{\rm e}\, \mathcal{T}_{D_q^{(1)}, \rm e}\left(\omega^{(1)},  \bn_{\overline{\partial D}_q^{(1)}, \rm e}\right)
 |\bs{g}^{(*)}_{\vLa\setminus D_{\vPsi}^{(1)},  \rm \rm e }\times  \bn_{D_{\vPsi}^{(1)},  \rm \rm e }^{(1)}}\times \no
&\times \prod_{t=1}^{M-1} 
\Braket{\bs{g}^{(*)}_{\vLa\setminus D_{\vPsi}^{(t)},  \rm \rm e } \times  \bn_{D_{\vPsi}^{(t)},  \rm \rm e }^{(t)}| \mathcal{T}_{D_q^{(t+1)}, \rm e}\left(\omega^{(t+1)},  \bn_{\overline{\partial D}_q^{(t+1)}, \rm e}\right)
 |\bs{g}^{(*)}_{\vLa\setminus D_{\vPsi}^{(t+1)},  \rm \rm e }\times  \bn_{D_{\vPsi}^{(t+1)},  \rm \rm e }^{(t+1)}}.
\end{align}
In addition, $Q_{\varPsi_{\rm p}, X_{\rm p}}(\bs{\vphi}_{\rm e})$ is defined by replacing $S_{\rm p}(\bs{\vphi}(x))$ in the definition \eqref{DefQ1} of $Q_{X_{\rm p}}$ with the following $S_{\varPsi_{\rm p}, \rm p}(\bs{\vphi}(x))$:
\begin{align}
S_{\varPsi_{\rm p}, \rm p}(\bs{\vphi}(x))=
\begin{cases}
\Tr\left[\ex^{\im \mu_x q_x}\prod_{t\in \mathbb{M}} u_{\bs{\vphi}}(x, t)\right] & x\in \supp \varPsi_{\rm e}\\
\Tr\left[\prod_{t\in \mathbb{M}} u_{\bs{\vphi}}(x, t)\right] & x\in [\bs{D}]_{\rm S} \setminus \supp \varPsi_{\rm e}.
\end{cases}
\end{align}

Performing the similar argument as in Proposition \ref{ActBd1}, we obtain the following proposition:
\begin{Prop}\label{ExpRhoPsi}
Assume that $\lambda\in \BbbR, \tilde{\beta}>0$ and $\gamma_Q\ge 0$ satisfy \eqref{ConvClC}.  Furthermore, assume that $\tilde{\beta} \ge \beta_0/2$.
Then one obtains
\be
\left|\rho_{\vPsi}(Y_{\vPsi})\right|
\le \|\vPsi\| \ex^{-(\tilde{\beta} e_{\rm e}+\tilde{\beta} c)} \ex^{-\gamma |\supp Y \setminus D(\vPsi)|}
(1+\ex^{-\gamma})^{|D(\vPsi)|}.
\ee
\end{Prop}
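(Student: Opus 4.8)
The plan is to retrace the proof of Proposition \ref{ActBd1}, supplemented by the winding analysis of Proposition \ref{RhoEst2}, starting from the electron--phonon separated representation of $\rho_\vPsi(Y_\vPsi)$ displayed immediately before the statement, and to isolate the two places where the insertion of $\vPsi=\vPsi_{\rm e}\otimes\vPsi_{\rm p}$ and the forced cubes $D(\vPsi)$ alter the estimate. Concretely, I would bound the summand $\ex^{-\tilde\beta E_{\rm e}(\bn_{\bs D_c,\rm e})}S_{\vPsi_{\rm e},\rm e}(\bphi_{\rm e})Q_{\vPsi_{\rm p},X_{\rm p}}(\bphi_{\rm e})$ factor by factor and then perform the crude combinatorial sums over $\bs D_q$ and over $X_{\rm e},X_{\rm p}$ exactly as before, absorbing the counting into $\gamma$ via the same $-5\log 2$ shift.

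For the phonon part I would first use that $\vPsi_{\rm p}=\exp(\im\sum_{x}\mu_x q_x)$ is unitary, so inserting $\ex^{\im\mu_x q_x}$ into the trace defining $S_{\vPsi_{\rm p},\rm p}(\bphi(x))$ leaves the operator-norm bounds $\|U_{\bphi}(x,t)\|\le 1$ and $\|\ex^{-\tilde\beta\omega_0 N_x}P_\varnothing^\perp\|\le\ex^{-\tilde\beta\omega_0}$ intact. Because $\vPsi$ sits inside the trace, every phonon factor $S_{\vPsi_{\rm p},\rm p}(\bphi(x))$ is now a genuine trace rather than a vacuum expectation, which is precisely the winding situation of Subsection \ref{Sect4.3}; repeating the argument of Lemma \ref{QwithL} — cyclicity of the trace together with $|\Tr[AB]|\le\|A\|\Tr[|B|]$ — then yields $|Q_{\vPsi_{\rm p},X_{\rm p}}(\bphi_{\rm e})|\le\ex^{-\tilde\beta\omega_0|X_{\rm p}|}z^N$ with $z=(1-\ex^{-\tilde\beta\omega_0})^{-1}$. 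As in Proposition \ref{RhoEst2}, the hypothesis $\tilde\beta\ge\beta_0/2$ converts the winding factor produced by the observable at the seam of the trace into the single contribution $\ex^{-\tilde\beta c}$ with $c=-\omega_0 f(\omega_0\beta_0/2)$, matching the claim.

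The electron part is handled verbatim as in Proposition \ref{ActBd1}: I would estimate $\|\mathcal{T}_{B,\rm e}(\bn_{\overline{\partial B},\rm e})\|$ through \cite[Lemma 4.2]{Borgs1996} and $|\overline B|\le R_0^d|B|$, extract the single scalar factor $\|\vPsi_{\rm e}\|=\|\vPsi\|$ from the one matrix element carrying $\vPsi_{\rm e}$, and combine the effective energy $E_{\rm eff}(\bn_{\bs D_c,\rm e},X_{\rm p})$ with the quantum suppression through the Peierls condition \hyperlink{A4}{\bf (A. 4)} and Lemma \ref{EstGa}, exactly as in \eqref{TilGamma}. The one genuinely new point, and the main obstacle, is that the cubes of $D(\vPsi)$ belong to $\supp Y_\vPsi$ by construction, irrespective of whether the corresponding sites are locally in a ground state or excited, so they carry no Peierls gain a priori. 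The remedy is to split the configuration sum over $D(\vPsi)$ into the ground-state-like contributions, each costing a factor $1$ after the $\ex^{-\tilde\beta e_{\rm e}}$ normalisation, and the excited ones, each costing $\ex^{-\gamma}$; this produces the product $(1+\ex^{-\gamma})^{|D(\vPsi)|}$ in place of the clean suppression $\ex^{-\gamma|D(\vPsi)|}$, while the cubes of $\supp Y\setminus D(\vPsi)$ retain the usual $\ex^{-\gamma|\supp Y\setminus D(\vPsi)|}$. Collecting $\|\vPsi\|$, the energy normalisation $\ex^{-\tilde\beta e_{\rm e}}$, the winding factor $\ex^{-\tilde\beta c}$, and the two excitation factors gives the stated bound.
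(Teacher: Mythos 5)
Your proposal is correct and takes essentially the same route as the paper, whose entire proof consists of the remark that one performs "the similar argument as in Proposition \ref{ActBd1}" on the electron--phonon separated representation of $\rho_{\vPsi}(Y_{\vPsi})$; you supply exactly the needed supplements, namely that the unitarity of $\vPsi_{\rm p}$ leaves the phonon bounds of Lemmas \ref{QInq1} and \ref{QwithL} intact, that the single matrix element carrying $\vPsi_{\rm e}$ yields the factor $\|\vPsi_{\rm e}\|=\|\vPsi\|$, and that the forced cubes of $D(\vPsi)$, which enjoy no Peierls gain, each contribute $1+\ex^{-\gamma}$ rather than $\ex^{-\gamma}$. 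One caveat: what the argument honestly produces is the energy/winding factor with exponent proportional to the contour size, $\ex^{-\tilde{\beta}(e_{\rm e}+c)|\supp Y_{\vPsi}|}$ (exactly as in Proposition \ref{RhoEst2}), rather than the single scalar factor $\ex^{-(\tilde{\beta}e_{\rm e}+\tilde{\beta}c)}$ written in the statement — this appears to be a typo in the paper, so in your final collection step you should keep that exponent instead of treating the energy normalisation and the winding factor as exponent-one contributions.
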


\section{Proof of Theorem \ref{MainThm}}\label{Sec5}
The proof of Theorem \ref{MainThm} is almost similar to that of \cite{Borgs2000}. For the reader's convenience, we give a brief outline.
Choose $\beta_0$ and $\gamma_Q$ sufficiently large and set $\tilde{\beta}\in (\frac{1}{2}\beta_0, \beta_0]$. This determines $M$. Choose $\lambda$ such that $|\lambda|\le \lambda_0$. Then, by following a similar procedure as in the proof of \cite[Theorem 2.1]{Borgs1996}, we can show that  (i)-(iii) of Theorem \ref{MainThm} hold as long as the condition $\mathrm{Re}f_{\ell}(\mu)-\min_m \mathrm{Re} f_m(\mu)=0$ is satisfied. By applying the method of \cite[Theorem 3.2]{Borgs2000}, we obtain  (vii) of Theorem \ref{MainThm}.
We can prove (viii) of Theorem \ref{MainThm} using a similar argument as the proof of \cite[Theorem 3.4]{Borgs2000}.  In what follows, we assume all the conditions listed above.

To examine the ground state expectation value, we introduce some notations. First, we define
\be
\mathcal{Z}_{\ell, \vLa}=\braket{\bs{g}^{(\ell)}_{\vLa}|\, \ex^{-\beta H_{\ell, \vLa}}|\bs{g}^{(\ell)}_{\vLa}}
\ee
and
\be
\ket{\varOmega_{\ell, \vLa}}=\frac{1}{\sqrt{\mathcal{Z}_{\ell, \vLa}}} \ex^{-\frac{\beta}{2}H_{\ell, \vLa}} \ket{\bs{g}_{\vLa}^{(\ell)}}. 
\ee
Given any local observable $\vPsi$ expressed in the form of \eqref{ProdPsi}, we now consider the following expectation value:
\be
\la\!\la \vPsi\ra\!\ra_{\beta, \vLa}^{(\ell)}=\braket{\varOmega_{\ell, \vLa}|\vPsi|\varOmega_{\ell, \vLa} }.
\ee
By using a similar argument to the proof of \cite[Lemma 3.3]{Borgs2000}, we can obtain  
 \be
\left| \frac{1}{\beta |\vLa|} \log\mathcal{Z}_{\ell, \vLa} +f_{\ell}\right| \le \mathcal{O}\left(
\frac{1}{\beta}+\frac{1}{\beta_0} \frac{|\partial \vLa|}{|\vLa|}
\right)
\ee
and
\be
\la\!\la \vPsi\ra\!\ra_{\beta, \vLa}^{(\ell)}=\la \vPsi\ra_{\rm gs}^{(\ell)}+\mathcal{O}\left(
\ex^{-c\gamma \min\{\mathrm{dist}(\supp \vPsi, \partial \vLa), M/2\}}
\right).
\ee
 
 Furthermore, comparing the convergence cluster expansions of $\la\!\la \vPsi\ra\!\ra_{\beta, \vLa}^{(\ell)}$ and $\la \varPsi\ra_{\beta}^{(\ell)}$, the only difference between them comes from the contributions of contours crossing the boundary $t=M$ and those crossing $\partial \vLa$, which can be estimated to be of the order  $\mathcal{O}\left(
\ex^{-c\gamma \min\{\mathrm{dist}(\supp \vPsi, \partial \vLa), M/2\}}
\right)$. By utilizing the abovementioned results, we can obtain (iv) of Theorem \ref{MainThm}. Moreover, (v) of Theorem \ref{MainThm} follows directly from (iv) of Theorem \ref{MainThm}.
\qed

\appendix
\section{Bounds on   correlation functions for a boson system}\label{AppCor}

Let $\h$ denote a separable complex Hilbert space. We assume that $\h$ is of finite dimension, which is sufficient for the purpose of this paper.
The bosonic Fock space over $\h$ is defined as $\mathfrak{F}(\h):=\bigoplus_{n=0}^{\infty} \otimes^n_{\rm s} \h$. The annihilation and creation operators on $\mathfrak{F}(\h)$ are denoted as $b(f)$ and $b(f)^*$, respectively. It is a well-known fact that these operators satisfy the standard commutation relations:\footnote{
More precisely, these commutation relations are valid on suitable subspaces of $\mathfrak{F}(\h)$. For example, it is known that they hold on finite particle spaces of $\mathfrak{F}(\h)$.  For more details, see \cite{Arai2016}.
}
\be
[b(f), b(g)^*]=\braket{f| g},\quad [b(f), b(g)]=0 \quad (f, g\in \h).
\ee
For any $f\in \h$, we define the operator $\phi(f)$ as
\be
\phi(f)=b(f)+b(f)^*.
\ee
It should be noted that  the operator $\phi(f)$ is essentially self-adjoint, and hence, we shall denote its closure by the same symbol.

Let $\omega$ denote a positive self-adjoint operator acting on $\h$. We define a positive self-adjoint operator $H_0$ by setting $H_0=\dG(\omega)$, where $\dG(\omega)$ represents the second quantization of $\omega$.\footnote{
The precise definition is as follows:
\be
\dG(\omega): =0\oplus \left[ \bigoplus_{n=1}^{\infty} \sum_{j=1}^n \one\otimes \cdots \otimes \overbrace{\omega}^{j^{\rm th}} \otimes  \cdots \otimes \one\right].
\ee
For the basic properties of the second quantization operators, see \cite{Arai2016}.
}
For each $s_1, \dots, s_n\in \BbbR$ with  $0\le s_1\le \cdots \le s_n \le \beta$ and $f_1, \dots, f_n\in \h$, we  define the $n$-point correlation function $G_n(s_1, \dots, s_n; f_1, \dots, f_n)$ as follows:
\begin{align}
&G_n(s_1, \dots, s_n; f_1, \dots, f_n)\no
=&\big\la\varnothing\big|\ex^{-s_1 H_0} \ex^{\im   \phi(f_1)} \ex^{-(s_2-s_1) H_0} \cdots \ex^{-(s_n-s_{n-1}) H_0} \ex^{\im \phi(f_n)} \ex^{-(\beta-s_n) H_0}\big|\varnothing\big\ra,
\end{align}
where $\ket{\varnothing}$ denotes the Fock vacuum.

\begin{Lemm}\label{ProCor1}
Suppose that $0\le s_1\le \cdots \le s_n \le \beta$.
For every $f_1, \dots, f_n\in \h$, the following holds:
\begin{align}
G_n(s_1, \dots, s_n; f_1, \dots, f_n)
=\exp\bigg[
- \frac{1}{2}\sum_{i, j=1}^n \la f_i|\ex^{-|s_i-s_j| \omega} f_j\ra
\bigg]. \label{CorF1}
\end{align}
As a consequence, we obtain the following:
\begin{itemize}
\item[\rm (i)]
$0\le G_n(s_1, \dots, s_n; f_1, \dots, f_n) \le 1$.
\item[\rm (ii)]
The function $G_n(s_1, \dots, s_n; \alpha f_1, \dots, \alpha f_n)$ is differentiable with respect to $\alpha>0$, and the following holds:
\begin{align}
\left|\frac{\partial }{\partial \alpha} G_n(s_1, \dots, s_n; \alpha f_1, \dots, \alpha f_n)\right| \le \frac{1}{\alpha \, \ex^{1/2}}. \label{PartialG1}
\end{align}
\end{itemize}
\end{Lemm}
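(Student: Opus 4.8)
The plan is to reduce $G_n$ to a vacuum expectation of a product of displacement (Weyl) operators and to evaluate it in closed form by exploiting that the Fock vacuum is quasi-free. First, since $H_0\ket{\varnothing}=0$, the outermost factors $\ex^{-s_1H_0}$ and $\ex^{-(\beta-s_n)H_0}$ act trivially on $\bra{\varnothing}$ and $\ket{\varnothing}$, so that
\be
G_n=\Braket{\varnothing|\ex^{\im\phi(f_1)}\ex^{-(s_2-s_1)H_0}\ex^{\im\phi(f_2)}\cdots \ex^{-(s_n-s_{n-1})H_0}\ex^{\im\phi(f_n)}|\varnothing}.
\ee
Two ingredients then drive the evaluation: the normal-ordering identity $\ex^{\im\phi(f)}\ket{\varnothing}=\ex^{-\frac12\|f\|^2}\ex^{\im b(f)^*}\ket{\varnothing}$, a one-line consequence of the Baker--Campbell--Hausdorff formula together with $b(f)\ket{\varnothing}=0$; and the imaginary-time evolution law $\ex^{-sH_0}b(g)^*\ex^{sH_0}=b(\ex^{-s\omega}g)^*$, which follows from $[H_0,b(g)^*]=b(\omega g)^*$ and yields $\ex^{-sH_0}\ex^{\im b(g)^*}\ket{\varnothing}=\ex^{\im b(\ex^{-s\omega}g)^*}\ket{\varnothing}$.

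Equivalently, and more transparently, I would conjugate each Weyl factor to imaginary time by inserting telescoping factors $\ex^{\pm s_kH_0}$, collapsing the intermediate propagators via $\ex^{s_kH_0}\ex^{-(s_{k+1}-s_k)H_0}=\ex^{s_kH_0}\ex^{-s_{k+1}H_0}\ex^{s_kH_0}\cdots$; this rewrites $G_n=\Braket{\varnothing|\prod_{k=1}^n\ex^{\im\Phi_k}|\varnothing}$ with $\Phi_k=b(\ex^{s_k\omega}f_k)+b(\ex^{-s_k\omega}f_k)^*$. Normal-ordering each $\ex^{\im\Phi_k}$ via BCH produces the diagonal factors $\ex^{-\frac12\|f_k\|^2}$; commuting each annihilation exponential rightward past the creation exponentials of strictly larger time index then produces exactly one contraction $\ex^{-\Braket{f_k|\ex^{-(s_l-s_k)\omega}f_l}}$ per pair $k<l$, using $\Braket{\ex^{s_k\omega}f_k|\ex^{-s_l\omega}f_l}=\Braket{f_k|\ex^{(s_k-s_l)\omega}f_l}$ with $s_k\le s_l$. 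Collecting the diagonal and off-diagonal factors and reorganizing the resulting sum with the symmetry of the kernel $\ex^{-|s_i-s_j|\omega}$ yields the closed form \eqref{CorF1}.

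With \eqref{CorF1} in hand, (i) reduces to showing that the exponent is real and nonpositive, i.e.\ that the quadratic form $\sum_{i,j}\Braket{f_i|\ex^{-|s_i-s_j|\omega}f_j}$ is nonnegative. Diagonalizing the finite-dimensional positive operator $\omega=\sum_a\lambda_a\ket{e_a}\!\bra{e_a}$ and writing $f_i=\sum_a c_{ia}e_a$ reduces this to the nonnegativity, for each fixed $\lambda_a\ge0$, of $\sum_{i,j}\overline{c_{ia}}c_{ja}\,\ex^{-\lambda_a|s_i-s_j|}$; this holds because the scalar kernel $(s,s')\mapsto\ex^{-\lambda|s-s'|}$ is positive definite (it is the covariance of the Ornstein--Uhlenbeck process, or one may invoke the representation $\ex^{-\lambda|s-s'|}=\tfrac{\lambda}{\pi}\int_{\BbbR}\tfrac{\cos(\xi(s-s'))}{\xi^2+\lambda^2}\,d\xi$, the degenerate case $\lambda=0$ giving the all-ones matrix). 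Hence the exponent is $\le 0$ and $0\le G_n\le1$.

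For (ii), the substitution $f_i\mapsto\alpha f_i$ gives $G_n=\ex^{-\alpha^2 Q/2}$ with $Q:=\sum_{i,j}\Braket{f_i|\ex^{-|s_i-s_j|\omega}f_j}\ge0$ independent of $\alpha$, whence $\partial_\alpha G_n=-\alpha Q\,\ex^{-\alpha^2Q/2}$, and the bound \eqref{PartialG1} follows from elementary calculus applied to $\alpha Q\,\ex^{-\alpha^2Q/2}$ (maximizing $u\mapsto u\,\ex^{-u/2}$ in the variable $u=\alpha^2 Q$). The main obstacle is the bookkeeping in the normal-ordering step that produces \eqref{CorF1}: one must verify that each annihilation exponential contracts only with its own diagonal BCH factor and with the creation exponentials of strictly larger time index, and that the ordering $s_1\le\cdots\le s_n$ is exactly what keeps every factor $\ex^{(s_k-s_l)\omega}$ bounded. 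Once this identity is secured, parts (i) and (ii) are routine.
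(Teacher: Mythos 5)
Your strategy is sound and, for the identity \eqref{CorF1} and for part (i), genuinely different from the paper's. The paper offers no derivation of \eqref{CorF1} at all: it records the formula as well known, obtainable as the $\beta\to\infty$ limit of the thermal formula \eqref{CorF2}, which is in turn cited from Arai's book. You instead reconstruct it by conjugating the Weyl factors to imaginary time and normal-ordering via BCH; your telescoping is the correct one ($\Phi_k=b(\ex^{s_k\omega}f_k)+b(\ex^{-s_k\omega}f_k)^*$ is indeed $\ex^{-s_kH_0}\phi(f_k)\,\ex^{s_kH_0}$, and the outer propagators drop since $H_0\ket{\varnothing}=0$), and the contraction bookkeeping is right: each ordered pair $k<l$ produces exactly one factor $\ex^{-\la f_k|\ex^{-(s_l-s_k)\omega}f_l\ra}$, with $s_k\le s_l$ guaranteeing boundedness. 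For (i) the paper is silent (the nonnegativity of the relevant quadratic form is only proved in the thermal setting of Lemma \ref{ProCor2}, via a norm identity in the space $\h_\beta$); your diagonalization of $\omega$ combined with positive definiteness of the scalar kernel $\ex^{-\lambda|s-s'|}$ is a correct, more elementary substitute. The trade-off: the paper's route is short and leans on the literature; yours is self-contained and makes the quasi-free mechanism explicit.

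Two concrete points need repair, and both are defects you share with the paper rather than introduce. First, your normal ordering actually produces $\exp[-\frac12\sum_k\|f_k\|^2-\sum_{k<l}\la f_k|\ex^{-|s_k-s_l|\omega}f_l\ra]$; passing to the symmetric form \eqref{CorF1} requires $\la f_k|\ex^{-|s_k-s_l|\omega}f_l\ra=\la f_l|\ex^{-|s_k-s_l|\omega}f_k\ra$, i.e.\ that these pairings be real, since in general they are only complex conjugates of each other. For example, $n=2$, $s_1=s_2$, $f_1=e$, $f_2=\im e$ with $\|e\|=1$ gives $G_2=\ex^{-1-\im}$, whereas the right-hand side of \eqref{CorF1} equals $\ex^{-1}$. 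So your ``symmetry of the kernel'' step (and the lemma as stated ``for every $f_1,\dots,f_n\in\h$'') is valid only on a real subspace where these inner products are real --- which is all the paper ever uses, the $f_i$ there being real multiples of one fixed real vector. Second, the calculus in (ii) does not deliver the stated constant: $\max_{u\ge 0}u\,\ex^{-u/2}=2/\ex$, attained at $u=2$, so your maximization in the variable $u=\alpha^2Q$ yields $|\partial_\alpha G_n|\le 2/(\alpha\,\ex)$, and since $2/\ex>\ex^{-1/2}$ the bound \eqref{PartialG1} does not follow this way; in fact it fails as stated (take $n=1$ and $\|f_1\|^2=2/\alpha^2$). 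The paper's own proof makes the same slip --- it locates the critical point of $\alpha x\,\ex^{-\alpha^2x/2}$ in $\alpha$ at fixed $x$ (giving $x=1/\alpha^2$) instead of in $x$ at fixed $\alpha$ (giving $x=2/\alpha^2$). Nothing structural is lost: Lemma \ref{PartialQ} and Proposition \ref{DelRhoEst} go through with $2/\ex$ in place of $\ex^{-1/2}$, only numerical constants change.
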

\begin{proof}
The formula \eqref{CorF1} is a well-known result, and it can also be obtained by taking the limit $\beta\to\infty$ in \eqref{CorF2} given below.

For $x>0$, consider the function $f(x)=\ex^{-\alpha^2 x/2}$. We can show that $|\partial f(x)/\partial \alpha|$ attains its maximum value of $1/(\alpha \, \mathrm{e}^{1/2})$, when $x=1/\alpha^2$. Using this result, we can prove (ii).
\end{proof}

In what follows, we assume that $\omega$ is a strictly positive operator.
We define the partition function as $Z_{\beta}=\mathrm{Tr}[\mathrm{e}^{-\beta H_0}]$. It is well-known that the following formula  holds:
\begin{align}
Z_{\beta}=\frac{1}{\det(\one-\ex^{-\beta \omega})}. \label{part-func}
\end{align}

Now, the thermal correlation function is defined as follows:
For $0\le s_1\le \cdots \le s_n < \beta$ and $f_1, \dots, f_n\in \h$, we define
\begin{align}
&G_{\beta, n}(s_1, \dots, s_n; f_1, \dots, f_n)\no
=&\frac{1}{Z_{\beta}}\Tr\left[
\ex^{-s_1H_0}\ex^{\im \phi(f_1)}\ex^{-(s_2-s_1)H_0} \ex^{\im \phi(f_2)}\cdots \ex^{\im \phi(f_n)} \ex^{-(\beta-s_n)H_0}
\right].
\end{align}

\begin{Lemm}\label{ProCor2}
Suppose that $0\le s_1\le \cdots \le s_n < \beta$.
For every $f_1, \dots, f_n\in \h$, the following holds:
\begin{align}
&G_{\beta, n}(s_1, \dots, s_n; f_1, \dots, f_n)\no
=& \exp
\left[
-\frac{1}{2} \sum_{i, j=1}^n \Big\la f_i\Big| \left(\one-\ex^{-\beta \omega}\right)^{-1}
\left( \ex^{-|s_i-s_j| \omega }+\ex^{-(\beta-|s_i-s_j|)\omega}\right)
 f_j\Big\ra
\right].\label{CorF2}
\end{align}
As a consequence, we  obtain the following:
\begin{itemize}
\item[\rm (i)]
\be
\left|
\Tr\left[
\ex^{-s_1H_0}\ex^{\im \phi(f_1)}\ex^{-(s_2-s_1)H_0} \ex^{\im \phi(f_2)}\cdots \ex^{\im \phi(f_n)} \ex^{-(\beta-s_n)H_0}
\right]
\right|
\le Z_{\beta}. \label{TherCorr}
\ee

\item[\rm (ii)]
The function $G_{\beta, n}(s_1, \dots, s_n; \alpha f_1, \dots, \alpha f_n)$ is differentiable with respect to $\alpha>0$, and the following holds:
\begin{align}
\left|\frac{\partial }{\partial \alpha} G_{\beta,n }(s_1, \dots, s_n; \alpha f_1, \dots, \alpha f_n)\right| \le \frac{1}{\alpha\,  \ex^{1/2}}.
\label{PartialG2}
\end{align}

\end{itemize}

\end{Lemm}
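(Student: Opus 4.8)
The substantive content of the lemma is the Gaussian identity \eqref{CorF2}; once it is in hand, both \eqref{TherCorr} and \eqref{PartialG2} follow quickly. Since $\h$ is finite-dimensional and $\omega$ is strictly positive, the plan is to begin by diagonalizing $\omega$, writing $\omega e_k=\omega_k e_k$ with $\omega_k>0$ and $\{e_k\}$ orthonormal. Under the canonical factorization $\mathfrak{F}(\h)\cong\bigotimes_k\mathfrak{F}(\BbbC e_k)$ one has $H_0=\sum_k\omega_k N_k$ (with $N_k=b_k^*b_k$, $b_k=b(e_k)$), $\ex^{\im\phi(f)}=\bigotimes_k\ex^{\im\phi_k}$ where $\phi_k=\overline{f^{(k)}}b_k+f^{(k)}b_k^*$ and $f^{(k)}=\la e_k|f\ra$, and $Z_\beta=\prod_k(1-\ex^{-\beta\omega_k})^{-1}$, in agreement with \eqref{part-func}. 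Hence the trace defining $G_{\beta,n}$ factorizes over modes, $G_{\beta,n}=\prod_k G_{\beta,n}^{(k)}$, while the exponent on the right of \eqref{CorF2} is a sum over $k$ of the corresponding single-mode exponents. It therefore suffices to prove \eqref{CorF2} for a single oscillator of frequency $a:=\omega_k>0$.

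For one mode I would write $\phi_j=\overline{c_j}b+c_jb^*$, $W_j=\ex^{\im\phi_j}$, and proceed in three steps. First, using the imaginary-time conjugations $\ex^{-saN}b\,\ex^{saN}=\ex^{sa}b$ and $\ex^{-saN}b^*\ex^{saN}=\ex^{-sa}b^*$, a telescoping insertion of $\ex^{\pm s_jaN}$ pulls every heat kernel to the right and turns the trace into $\Tr[\hat W_1\cdots\hat W_n\,\ex^{-\beta aN}]$, where $\hat W_j=\exp(\im\hat\phi_j)$ and $\hat\phi_j=\overline{c_j}\ex^{s_ja}b+c_j\ex^{-s_ja}b^*$. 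Second, since each commutator $[\hat\phi_i,\hat\phi_j]=\overline{c_i}c_j\ex^{(s_i-s_j)a}-c_i\overline{c_j}\ex^{-(s_i-s_j)a}$ is a scalar, the Baker--Campbell--Hausdorff identity collapses the product into one displacement operator, $\prod_j\hat W_j=\exp(\im\sum_j\hat\phi_j)\exp(-\tfrac12\sum_{i<j}[\hat\phi_i,\hat\phi_j])$. Third, I would evaluate the single thermal average by the Gaussian formula $Z^{-1}\Tr[\ex^{\eta b+\xi b^*}\ex^{-\beta aN}]=\exp(\tfrac12\eta\xi\coth(\beta a/2))$ (valid for all complex $\eta,\xi$; it follows from normal ordering together with $\tfrac12+\la N\ra_{\mathrm{th}}=\tfrac12\coth(\beta a/2)$). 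Adding the BCH phase to this Gaussian exponent and using the ordering $s_1\le\cdots\le s_n$ to recombine the factors $\ex^{\pm(s_i-s_j)a}$ into $(1-\ex^{-\beta a})^{-1}(\ex^{-|s_i-s_j|a}+\ex^{-(\beta-|s_i-s_j|)a})$ yields the single-mode case of \eqref{CorF2}; reinstating the mode sum gives the general formula. This last repackaging --- tracking the time-ordering, the central phases, and the $\coth$ factor without sign errors --- is the \emph{main obstacle}; the $n=1$ case, which reduces by cyclicity to $\exp(-\tfrac12\la f_1|\coth(\beta\omega/2)f_1\ra)$, provides a useful consistency check.

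For the two consequences I would argue as follows. The bound \eqref{TherCorr} is cleanest to obtain directly, without the formula: after cyclically rearranging the integrand into a product of $n$ factors $W_j\,\ex^{-\sigma_jH_0}$ with $\sigma_j\ge0$ and $\sum_j\sigma_j=\beta$, the generalized Hölder inequality for Schatten norms together with the unitarity of each $W_j$ gives $|\Tr[\cdots]|\le\prod_j\|\ex^{-\sigma_jH_0}\|_{\beta/\sigma_j}=\prod_jZ_\beta^{\sigma_j/\beta}=Z_\beta$, i.e. $|G_{\beta,n}|\le1$. Comparing this with \eqref{CorF2} shows, as a by-product, that the quadratic form $Q:=\sum_{i,j}\la f_i|(1-\ex^{-\beta\omega})^{-1}(\ex^{-|s_i-s_j|\omega}+\ex^{-(\beta-|s_i-s_j|)\omega})f_j\ra$ is nonnegative: it is real, because its kernel is symmetric in $i,j$ and self-adjoint in the mode variable, and $\ex^{-Q/2}=G_{\beta,n}\le1$ then forces $Q\ge0$. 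Finally, for \eqref{PartialG2}, the substitution $f_j\mapsto\alpha f_j$ multiplies $Q$ by $\alpha^2$, so $G_{\beta,n}(s_\bullet;\alpha f_\bullet)=\ex^{-\alpha^2Q/2}$ with $Q\ge0$; differentiating and invoking the same elementary one-variable estimate used in the proof of Lemma \ref{ProCor1}(ii) bounds $|\partial_\alpha G_{\beta,n}|$ by $(\alpha\,\ex^{1/2})^{-1}$, which completes the proof.
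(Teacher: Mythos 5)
Your proposal is correct in its essential logic, but it takes a genuinely different route from the paper's proof, so a comparison is in order. For the Gaussian identity \eqref{CorF2} the paper gives no derivation at all: it is quoted from \cite[Theorem 4.15]{Arai2022}, and all of the paper's effort goes into the positivity of the quadratic form $Q=\sum_{i,j}\la f_i|(\one-\ex^{-\beta \omega})^{-1}(\ex^{-|s_i-s_j|\omega}+\ex^{-(\beta-|s_i-s_j|)\omega})f_j\ra$, proved constructively: on $\h_\beta=L^2(0,\beta)\otimes\h$, equipped with the norm built from the Green function of the periodic Laplacian, one checks that the kernel above is $2\la \delta_{s}\otimes f|\delta_{t}\otimes g\ra_\beta$, whence $Q=2\,\|\sum_i\delta_{s_i}\otimes f_i\|_\beta^2\ge 0$; item (i) is then read off from $G_{\beta,n}=\ex^{-Q/2}\le 1$, and (ii) as in Lemma \ref{ProCor1}. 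You invert this order: you prove (i) unconditionally by the Schatten--H\"older argument ($|\Tr[\prod_j W_j\ex^{-\sigma_j H_0}]|\le\prod_j\|\ex^{-\sigma_j H_0}\|_{\beta/\sigma_j}=Z_\beta$ with $\sigma_j\ge 0$, $\sum_j\sigma_j=\beta$), which is correct and does not appear in the paper, and you then deduce $Q\ge 0$ from (i) combined with \eqref{CorF2} and the reality of $Q$ (your symmetry argument for reality is valid). Your route is more self-contained --- it needs neither the citation nor the auxiliary space $\h_\beta$ --- while the paper's buys a conceptual identification of the kernel as a covariance and is shorter given the citation; you also attempt what the paper never does, namely a from-scratch proof of \eqref{CorF2} by mode decomposition, imaginary-time conjugation, BCH, and the thermal Gaussian formula, all of whose individual steps are sound.

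The one step that is not airtight is precisely the one you flagged as the main obstacle: the final recombination. What your computation actually produces is the \emph{time-ordered} kernel: for $s_i\le s_j$ the cross contribution to the exponent is $-(1-\ex^{-\beta a})^{-1}\bigl[\overline{c_i}c_j\,\ex^{-(s_j-s_i)a}+c_i\overline{c_j}\,\ex^{-(\beta-(s_j-s_i))a}\bigr]$, whereas \eqref{CorF2} assigns the symmetrized factor $\frac{1}{2}\bigl(\ex^{-(s_j-s_i)a}+\ex^{-(\beta-(s_j-s_i))a}\bigr)$ to both $\overline{c_i}c_j$ and $c_i\overline{c_j}$. These agree exactly when every $\overline{c_i}c_j$ (equivalently every $\la f_i|g(\omega)f_j\ra$) is real; for genuinely complex data they differ by a pure phase (take $n=2$, $f_2=\im f_1$: the true correlation carries a nontrivial phase, while the right-hand side of \eqref{CorF2} is positive). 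So your recombination claim requires this reality hypothesis --- which does hold in the paper's application, where every $f_j$ is a real multiple of one fixed real vector, but not for arbitrary $f_j\in\h$ as the lemma is phrased. The damage is limited: your proof of (i) is independent of the formula, your deduction $Q\ge 0$ uses only the modulus $|G_{\beta,n}|=\ex^{-Q/2}$, which your computation does establish, and (ii) follows once \eqref{CorF2} is granted, which is exactly how the paper itself proceeds (by granting it via citation). Still, you should either add the reality hypothesis to your step three, or state that the derivation yields \eqref{CorF2} up to a phase of unit modulus.
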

\begin{proof}
\eqref{CorF2} is a well-known formula. For the proof, see, for example, \cite[Theorem 4.15]{Arai2022}.
A key property to keep in mind when applying this formula is the following:
\be
\sum_{i, j=1}^n \Big\la f_i\Big| \left(\one-\ex^{-\beta \omega}\right)^{-1}
\left( \ex^{-|s_i-s_j| \omega }+\ex^{-(\beta-|s_i-s_j|)\omega}\right)
 f_j\Big\ra \ge 0.
\ee
A concise explanation for why this holds is as follows. Let $L^2(0, \beta)$ be the Hilbert space of square-integrable functions on $(0, \beta)$, and let $\h_{\beta}=L^2(0, \beta)\otimes \h$.
Consider the periodic Laplacian $\Delta_{\rm P}$ acting on  $L^2(0, \beta)$. We also introduce a norm on $\h_{\beta}$ defined as follows:
\be
\|
f
\|_{\beta}^2=
\frac{1}{2\beta} \left\| 
\sqrt{\frac{\one\otimes \omega}{(-\Delta_{\rm P})\otimes \one+\one\otimes \omega^2}} f
\right\|^2\quad (f\in \h_{\beta}).
\ee
We use the same symbol $\h_{\beta}$ to denote the completion of $\h_{\beta}$ with respect to the norm $\|\cdot \|_{\beta}$.
Then, we can easily verify that the following identity holds:
\be
\la \delta_s\otimes f|\delta_t\otimes g\ra_{\beta}
=
\frac{1}{2}
\Big\la f \Big| \left(\one-\ex^{-\beta \omega}\right)^{-1}
\left( \ex^{-|s-t| \omega }+\ex^{-(\beta-|s-t|)\omega}\right)
 g\Big\ra,
\ee
where $\delta_s$ is the Dirac delta function.
Using this identity, we obtain the following inequality:
\be
\frac{1}{2}\sum_{i, j=1}^n \Big\la f_i\Big| \left(\one-\ex^{-\beta \omega}\right)^{-1}
\left( \ex^{-|s_i-s_j| \omega }+\ex^{-(\beta-|s_i-s_j|)\omega}\right)
 f_j\Big\ra
= \left\| \sum_{i=1}^n \delta_{s_i}\otimes f_i\right\|_{\beta}^2\ge 0,
\ee
which gives the desired property.

The property (i) is readily derived from the fact established above.
(ii) can be shown in the same way as the proof of Lemma \ref{ProCor1}.
\end{proof}

\section{Proof of \eqref{TilGamma}}

\begin{Lemm}\label{EstGa}

Suppose that $(\gamma_Q-1)R_0^{-d}-\gamma_{\rm e}-|e_{\rm e}|>0$. Then, under the setup of the proof of Proposition \ref{ActBd1}, the following holds:
\be
E_{\rm eff}(\bn_{\bs{D}_c,  \rm e }, X_{\rm p})+e_{\rm e}|\overline{\bs{D}}_q|+(\gamma_Q-1)R_0^{-d} |\overline{\bs{D}}_q|\ge (e_{\rm e}+\tilde{\gamma})|\bs{D}|.
\ee
\end{Lemm}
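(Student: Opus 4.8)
The plan is to prove the bound by a purely combinatorial bookkeeping argument over the elementary cubes of $\supp Y$, showing that each cube of $\bs{D}$ can be charged an energy of at least $e_{\rm e}+\tilde{\gamma}$. First I would rewrite the left-hand side using $E_{\rm eff}(\bn_{\bs{D}_c,\rm e},X_{\rm p})=E_{\rm e}(\bn_{\bs{D}_c,\rm e})+\omega_0|X_{\rm p}|$ from \eqref{DefEeff}, so that the quantity to be bounded becomes
\be
E_{\rm e}(\bn_{\bs{D}_c,\rm e})+\omega_0|X_{\rm p}|+\big(e_{\rm e}+(\gamma_Q-1)R_0^{-d}\big)\,|\overline{\bs{D}}_q|.
\ee
The crucial geometric observation is that the sum defining $E_{\rm e}$ runs only over the sites $x\in D^{(t)}\setminus\overline{D}_q^{(t)}$, i.e.\ over the classical cubes lying \emph{outside} the thickened quantum region $\overline{\bs{D}}_q$. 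Accordingly I would partition $[\bs{D}]$ into the ``exterior'' set $[\bs{D}_c]\setminus[\overline{\bs{D}}_q]$ (exactly the cubes appearing in $E_{\rm e}$) and the ``interior'' set $[\bs{D}]\cap[\overline{\bs{D}}_q]=[\bs{D}_q]\sqcup\big([\bs{D}_c]\cap[\overline{\bs{D}}_q]\big)$, the second equality using $\bs{D}_q\subseteq\overline{\bs{D}}_q$.

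For the exterior cubes I would use that every classical cube is excited either electronically or phononically, i.e.\ $\mathscr{E}_{\rm e}(\bn_{\bs{D}_c,\rm e})\cup\mathscr{E}_{\rm p}(\bn_{\bs{D}_c,\rm p})=[\bs{D}_c]$. If $(x,t)$ is electron-excited (in $[X_{\rm e}]$), the Peierls condition \hyperlink{A4}{\bf (A. 4)} gives $\varPhi^{(\ell)}_{{\rm eff},x}\ge e_{\rm e}+\gamma_{\rm e}$; if it is not electron-excited it must lie in $X_{\rm p}$, contributing a summand $e_{\rm e}$ to $E_{\rm e}$ (since $\varPhi^{(\ell)}_{{\rm eff},x}\ge e_{\rm e}$ always) together with a term $\omega_0$ drawn from $\omega_0|X_{\rm p}|$. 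In either case the cube is charged at least $e_{\rm e}+\min\{\gamma_{\rm e},\omega_0\}\ge e_{\rm e}+\tilde{\gamma}$. Dropping the (nonnegative) contributions of the $X_{\rm p}$-cubes sitting inside $\overline{\bs{D}}_q$, this yields $E_{\rm e}+\omega_0|X_{\rm p}|\ge(e_{\rm e}+\tilde\gamma)\,\big|[\bs{D}_c]\setminus[\overline{\bs{D}}_q]\big|$.

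For the interior cubes I would absorb them into the budget $\big(e_{\rm e}+(\gamma_Q-1)R_0^{-d}\big)|\overline{\bs{D}}_q|$. Writing $m=\big|[\bs{D}]\cap[\overline{\bs{D}}_q]\big|\le|\overline{\bs{D}}_q|$, I must show $\big(e_{\rm e}+(\gamma_Q-1)R_0^{-d}\big)|\overline{\bs{D}}_q|\ge(e_{\rm e}+\tilde\gamma)\,m$. This is where the hypothesis $(\gamma_Q-1)R_0^{-d}-\gamma_{\rm e}-|e_{\rm e}|>0$ enters twice, and it is the only genuinely delicate point: it forces $e_{\rm e}+(\gamma_Q-1)R_0^{-d}>0$ (since $(\gamma_Q-1)R_0^{-d}>|e_{\rm e}|\ge-e_{\rm e}$) and $(\gamma_Q-1)R_0^{-d}\ge\tilde\gamma$, so that $\big(e_{\rm e}+(\gamma_Q-1)R_0^{-d}\big)|\overline{\bs{D}}_q|\ge\big(e_{\rm e}+(\gamma_Q-1)R_0^{-d}\big)m\ge(e_{\rm e}+\tilde\gamma)m$ regardless of the sign of $e_{\rm e}+\tilde\gamma$. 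The positivity of the bracket is exactly what lets me replace $|\overline{\bs{D}}_q|$ by the possibly smaller $m$; the main obstacle is precisely this mismatch, caused by the thickening $\overline{\bs{D}}_q$ reaching into ground-state cubes not belonging to $\bs{D}$ and by $e_{\rm e}$ being possibly negative. Finally I would add the exterior and interior estimates and use $\big|[\bs{D}_c]\setminus[\overline{\bs{D}}_q]\big|+m=|\bs{D}|$ to conclude $(e_{\rm e}+\tilde\gamma)|\bs{D}|$, as claimed.
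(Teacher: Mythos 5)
Your proposal is correct and takes essentially the same approach as the paper's proof: both exploit the dichotomy $X_{\rm e}\cup X_{\rm p}=\bs{D}_c$ to charge each classical cube away from $\overline{\bs{D}}_q$ at rate $e_{\rm e}+\min\{\gamma_{\rm e},\omega_0\}$ (Peierls gap if the cube is electron-excited, phonon gap $\omega_0$ otherwise), and both invoke the hypothesis $(\gamma_Q-1)R_0^{-d}-\gamma_{\rm e}-|e_{\rm e}|>0$ so that the budget $\bigl(e_{\rm e}+(\gamma_Q-1)R_0^{-d}\bigr)|\overline{\bs{D}}_q|$ dominates the contribution required from the cubes of $\bs{D}$ meeting $\overline{\bs{D}}_q$. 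The only difference is bookkeeping: you charge per cube over the disjoint partition $[\bs{D}]=\bigl([\bs{D}_c]\setminus[\overline{\bs{D}}_q]\bigr)\sqcup\bigl([\bs{D}]\cap[\overline{\bs{D}}_q]\bigr)$, while the paper counts globally (charging $\min\{\gamma_{\rm e},\omega_0\}$ on all of $\bs{D}_c$) and then absorbs the resulting error terms $-\gamma_{\rm e}\left|X_{\rm e}\cap\overline{\bs{D}}_q\right|$ and $-|e_{\rm e}|\left|\overline{\bs{D}}_q\right|$ into the same budget, which is exactly why the combination $\gamma_{\rm e}+|e_{\rm e}|$ appears in the definition of $\tilde{\gamma}$.
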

\begin{proof}
First, we note the following:
\begin{align}
E_{\rm e}(\bn_{\bs{D}_c, \rm e})
\ge &(\gamma_{\rm e}+e_{\rm e})\left|
X_{\rm e} \setminus (X_{\rm e}\cap \overline{\bs{D}}_q)
\right|
+e_{\rm e} \left|
(\bs{D}_c\setminus X_{\rm e})\setminus [
(\bs{D}_c \setminus X_{\rm e}) \cap \overline{\bs{D}}_q
]
\right|\no
= &(\gamma_{\rm e}+e_{\rm e})\left(
|X_{\rm e}| -\left|
X_{\rm e} \cap \overline{\bs{D}}_q
\right|
\right)
+e_{\rm e}\left(
|\bs{D}_c|-| X_{\rm e}|-\left|
\bs{D}_c\cap \overline{\bs{D}}_q
\right|
+
\left|
\overline{\bs{D}}_q\cap X_{\rm e}
\right|
\right)\no
=& \gamma_{\rm e}\left(
|X_{\rm e}| -\left|
X_{\rm e} \cap \overline{\bs{D}}_q
\right|
\right)
+e_{\rm e}\left(
|\bs{D}_c|-\left|
\bs{D}_c\cap \overline{\bs{D}}_q
\right|
\right).
\end{align}
Hence, one obtains that 
\begin{align}
&E_{\rm eff}(\bn_{\bs{D}_c,  \rm e })+e_{\rm e}|\overline{\bs{D}}_q|+(\gamma_Q-1)R_0^{-d} |\overline{\bs{D}}_q|\no
=& E_{\rm e}(\bn_{\bs{D}_c,  \rm e }, X_{\rm p})
+\omega_0 |X_{\rm p}|
+e_{\rm e}|\overline{\bs{D}}_q|+(\gamma_Q-1)R_0^{-d} |\overline{\bs{D}}_q|\no
\ge &
 \gamma_{\rm e}\left(
|X_{\rm e}| -\left|
X_{\rm e} \cap \overline{\bs{D}}_q
\right|
\right)
+e_{\rm e}\left(
|\bs{D}_c|-\left|
\bs{D}_c\cap \overline{\bs{D}}_q
\right|
\right)+\omega_0|X_{\rm p}|\no
&+e_{\rm e} \left(|\bs{D}_q|+\left|
\overline{\bs{D}}_q \setminus \bs{D}_q
\right|\right)+(\gamma_Q-1)R_0^{-d} |\overline{\bs{D}}_q|\no
\ge &
e_{\rm e}|\bs{D}|+\min\{\gamma_{\rm e}, \omega_0\} |\bs{D}_c|
-\gamma_{\rm e}\left|
X_{\rm e} \cap \overline{\bs{D}}_q
\right|
+e_{\rm e} \left(
\left|
\overline{\bs{D}}_q \setminus \bs{D}_q
\right|
-\left|
\overline{\bs{D}}_q \cap \bs{D}_c
\right|
\right)+(\gamma_Q-1)R_0^{-d} \left|
\overline{\bs{D}}_q
\right|
\no
\ge &e_{\rm e}|\bs{D}|+\min\{\gamma_{\rm e}, \omega_0\} |\bs{D}_c|
-\gamma_{\rm e} \left|
\overline{\bs{D}}_q
\right|
-|e_{\rm e}| \left|
\overline{\bs{D}}_q
\right|
+(\gamma_Q-1)R_0^{-d} \left|
\overline{\bs{D}}_q
\right|\no
\ge &
e_{\rm e}|\bs{D}|+\min\{\gamma_{\rm e}, \omega_0\} |\bs{D}_c|
+\left\{
(\gamma_Q-1) R_0^{-d}-\gamma_{\rm e}-|e_{\rm e}|
\right\}\left|
\overline{\bs{D}}_q
\right|\no
\ge &
(e_{\rm e}+\tilde{\gamma})|\bs{D}|.
\end{align}
\end{proof}

%\bibliographystyle{abbrvurl}

%\bibliography{Reference}

%\end{document}

\end{document}